\newcommand{\lowcaption}{%
\setlength{\abovecaptionskip}{10pt}%
\setlength{\belowcaptionskip}{0pt}%
\caption}
\newtheorem{theorem}{Theorem}[section]
\newtheorem{proposition}[theorem]{Proposition}
\newtheorem{lemma}[theorem]{Lemma}
\theoremstyle{definition}
\newtheorem{definition}[theorem]{Definition}
\newtheorem{assumption}[theorem]{Assumption}
\theoremstyle{remark}
\newtheorem*{theorem*}{Theorem}
\newtheorem*{lemma*}{Lemma}
\newtheorem*{proposition*}{Proposition}
\newtheorem*{definition*}{Definition}
\newtheorem*{assumption*}{Assumption}
\def\trans{^\mathsf{T}}  \def\zero{\mathbf{0}}   \def\tr{{\mathbf{tr}}}   \def\fro {\mathsf{F}}
\def\Rn{\mathbb{R}}\def\vec{{\mathbf{vec}}}
\def\ts{{\textstyle}}
\newcommand{\beq}{\begin{eqnarray}}
\newcommand{\eeq}{\end{eqnarray}}
\newcommand{\la}{\langle}
\newcommand{\ra}{\rangle}
\newcommand{\noi}{\noindent}
\newcommand{\nn}{\nonumber}
\newcommand{\bfit}[1]{\textit{\textbf{#1}}}
\def\noi{\noindent}
\def\nn{\nonumber}
\def\la{\langle}
\def\ra{\rangle}
\def\Q{\mathbf{Q}}
\def\x{\mathbf{x}}\def\X{\mathbf{X}}
\def\I{\mathbf{I}}\def\J{\mathbf{J}}
\def\ts{\textstyle}
\def\B{\texttt{\textup{B}}} 
\def\ts{\textstyle}
\def\JJ{\mathcal{J}}
\def\XX{\mathcal{X}}
\def\GG{\mathcal{G}}
\def\V{\mathbf{V}}
\def\VVV{\bar{\mathbf{V}}}
\def\P{\mathbf{P}}
\def\Q{\mathbf{Q}}
\def\G{\mathbf{G}}
\def\zero{\mathbf{0}}
\newcommand{\step}[1]{\text{\ding{\numexpr#1+171\relax}}}
\def\UB{\mathbf{U}_{\B}}
\def\vec{\operatorname{vec}}
\def\mat{\operatorname{mat}}
\def\Diag{\operatorname{Diag}}
\def\sh{\operatorname{\tilde{s}}}
\def\ch{\operatorname{\tilde{c}}}
\def\th{\operatorname{\tilde{t}}}
\def\vec{\operatorname{vec}}
\def\mat{\operatorname{mat}}
\def\st{\operatorname{s.t.}}
\def\GGG{{\overline{\textup{G}}}}
\def\XXX{{\overline{\textup{X}}}}
\def\VVV{{\overline{\textup{V}}}}
\def\SA{\texttt{{\textup{S}}}_{\star}}
\def\SB{\texttt{{\textup{S}}}_{\circ}}
\def\SA{\texttt{{\textup{S}}}_{+}}
\def\SB{\texttt{{\textup{S}}}_{*}}
\title{Block Coordinate Descent Methods for Optimization under J-Orthogonality Constraints with Applications}
\author{%
  Di He\thanks{Also Correspondence to: hed@pcl.ac.cn} \\
  Shenzhen Institutes of Advanced Technology\\
  Peng Cheng Laboratory \\
  Shenzhen, China \\
  \texttt{hedi23@mails.ucas.ac.cn} 
  \And
  Ganzhao Yuan \\
  Peng Cheng Laboratory \\
  Shenzhen, China \\
  \texttt{yuangzh@pcl.ac.cn} \\
  \And
  Xiao Wang \\
  Peng Cheng Laboratory \\
  Shenzhen, China \\
  \texttt{wangx07@pcl.ac.cn} \\
  \And
  Pengxiang Xu \\
  Peng Cheng Laboratory \\
  Shenzhen, China \\
  \texttt{xupx@pcl.ac.cn} \\
}
\begin{document}

\maketitle
\begin{abstract}

The J-orthogonal matrix, also referred to as the hyperbolic orthogonal matrix, is a class of special orthogonal matrix in hyperbolic space, notable for its advantageous properties. These matrices are integral to optimization under J-orthogonal constraints, which have widespread applications in statistical learning and data science. However, addressing these problems is generally challenging due to their non-convex nature and the computational intensity of the constraints. Currently, algorithms for tackling these challenges are limited. This paper introduces \textbf{JOBCD}, a novel Block Coordinate Descent method designed to address optimizations with J-orthogonality constraints. We explore two specific variants of \textbf{JOBCD}: one based on a Gauss-Seidel strategy (\textbf{GS-JOBCD}), the other on a variance-reduced and Jacobi strategy (\textbf{VR-J-JOBCD}). Notably, leveraging the parallel framework of a Jacobi strategy, \textbf{VR-J-JOBCD} integrates variance reduction techniques to decrease oracle complexity in the minimization of finite-sum functions. For both \textbf{GS-JOBCD} and \textbf{VR-J-JOBCD}, we establish the oracle complexity under mild conditions and strong limit-point convergence results under the Kurdyka-Lojasiewicz inequality. To demonstrate the effectiveness of our method, we conduct experiments on hyperbolic eigenvalue problems, hyperbolic structural probe problems, and the ultrahyperbolic knowledge graph embedding problem. Extensive experiments using both real-world and synthetic data demonstrate that \textbf{JOBCD} consistently outperforms state-of-the-art solutions, by large margins.
 

\end{abstract}

\vspace{-8pt}

\section{Introduction}
\vspace{-8pt}

A matrix $\X \in \Rn^{n \times n}$ is a J-orthogonal matrix if $\ts\X\trans\J\X=\J$, where $\ts\J=[\begin{smallmatrix}
\I_p & \mathbf{0} \\
\mathbf{0} & -\I_{n-p}
\end{smallmatrix}]$, and $\I_p$ is a $p\times p$ identity matrix. Here, $\J\in\Rn^{n\times n}$ is the signature matrix with signature $(p,n-p)$. In this paper, we mainly focus on the following optimization problem under J-orthogonality constraints:
\begin{align}  \label{eq:main}
{ \textstyle
\min_{\X \in \Rn^{n \times n}} f(\X)  \triangleq  \frac{1}{N} \sum_{i=1}^N f_i(\mathbf{X}),\,\st\, \X\trans \J\X=\J.}
\end{align}
\noi Here, $f(\X)$ could have a finite-sum structure, each component function $f_i(\X)$ is assumed to be differentiable, and $N$ is the number of data points. For brevity, the J-orthogonality constraint $\X\trans \J\X=\J$ in Problem (\ref{eq:main}) is rewritten as $\X\in \JJ$.

We impose the following assumptions on Problem (\ref{eq:main}) throughout this paper. ($\mathbb{A}$-i) For any matrices $\mathbf{X}$ and $\mathbf{X}^{+}$, we assume $f_i: \mathbb{R}^{n \times n} \mapsto \mathbb{R}$ is continuously differentiable for some symmetric positive semidefinite matrix $\mathbf{H} \in \mathbb{R}^{n n \times n n}$ that:
\beq\label{Lipschitz contiu}
\ts f_i (\mathbf{X}^{+}) \leq  f_i(\mathbf{X})+ \la \X^+-\X, \nabla f_i(\X)\ra +\tfrac{1}{2} \|\X^+ -\X\|_{\mathbf{H}}^2, 
\eeq
\noi for all $i\in[N]$, where $\|\mathbf{H}\| \leq L_f$ for some constant $L_f>0$ and $ { \textstyle \|\mathbf{X}\|_{\mathbf{H}}^2 }\triangleq  { \textstyle \operatorname{vec}(\mathbf{X})^{\top} \mathbf{H} \operatorname{vec}(\mathbf{X})}$. This further implies that: $\|\nabla f_i(\mathbf{X})-\nabla f_i(\mathbf{X}^+)\|_{\fro}\leq L_f\| \X-\X^+\|_{\fro}$ for all $i\in[N]$. Importantly, the function $ { \textstyle f(\mathbf{X})=\frac{1}{2} \operatorname{tr} (\mathbf{X}^{\top} \mathbf{C X D} )=\frac{1}{2}\|\mathbf{X}\|_{\mathbf{H}}^2}$ with $\mathbf{H}=\mathbf{D} \otimes \mathbf{C}$ satisfies the equality $\forall \mathbf{X}, \mathbf{X}^{+},  { \textstyle f (\mathbf{X}^{+} )=\mathcal{Q} (\mathbf{X}^{+} ; \mathbf{X} )}$ in (\ref{Lipschitz contiu}), where $ { \textstyle \mathbf{C} \in \mathbb{R}^{n \times n}}$ and $ { \textstyle \mathbf{D} \in \mathbb{R}^{n \times n}}$ are arbitrary symmetric matrices. ($\mathbb{A}$-ii) The function $f_i(\X)$ is coercive for all $i\in N$, that is, $ { \textstyle \lim_{\|\X\|_{\fro}  \rightarrow \infty} f_i(\X)=\infty},\,\forall i$.

Problem (\ref{eq:main}) defines an optimization framework that is fundamental to a wide range of models in statistical learning and data science, including hyperbolic eigenvalue problem \cite{HoroPCA,Tabaghi2023PrincipalCA,SLAPNICAR200057}, hyperbolic structural probe problem \cite{hewitt2019structural,chen2021probing}, and ultrahyperbolic knowledge graph embedding \cite{xiong2022ultrahyperbolic}. Additionally, it is closely related to machine learning in hyperbolic spaces, including Lorentz model learning \cite{nickel2018learning,yu2019numerically,chen2021fully} and ultrahyperbolic neural networks \cite{law2021ultrahyperbolic,zhang2021lorentzian,tabaghi2020hyperbolic}. It also intersects with hyperbolic linear algebra \cite{bojanczyk2003solving,higham2003j}, addressing problems such as the indefinite least squares problem, hyperbolic QR factorization, and indefinite polar decomposition.

 
\vspace{-8pt}

\subsection{Related Work}
\vspace{-8pt}

$\blacktriangleright$ \textbf{Block Coordinate Descent Methods}. Block Coordinate Descent (BCD) is a well-established iterative algorithm that sequentially minimizes along block coordinate directions. Its simplicity and efficiency have led to its widespread adoption in structured convex applications \cite{nutini2022let}. Recently, BCD has gained traction in non-convex problems due to its robust optimality guarantees and/or excellent empirical performance in areas including optimal transport \cite{huang2021riemannian}, matrix optimization \cite{fawzi2021faster}, fractional minimization \cite{yuan2023fractional}, deep neural networks \cite{cai2023cyclic,zeng2019global,ma2020hsic}, federated learning\cite{wu2021federated}, black-box optimization \cite{cai2021zeroth}, and optimization with orthogonality constraints \cite{yuan2023block,gao2019parallelizable}. To our knowledge, this is the first application of BCD methods to optimization under J-orthoginality constraints, with a focus on analyzing their theoretical guarantees and empirical efficacy.
 
$\blacktriangleright$ \textbf{Minimizing Smooth Functions under J-Orthogonality Constraints}. The J-orthogonal matrix belongs to a subset of generalized orthogonal matrices \cite{golub2013matrix,novakovic2022kogbetliantz,hui2022low}. However, projecting onto the J-orthogonality constraint poses challenges, complicating the extension of conventional optimization algorithms to address optimization problems under these constraints \cite{absil2008optimization,golub2013matrix}. This contrasts with computing orthogonal projections using methods such as polar or SVD decomposition, or approximating them via QR factorization. Existing methods for addressing Problem (\ref{eq:main}) can be categorized into three classes. (\bfit{i}) CS-Decomposition Based Methods. These approaches involve parameterizing four orthogonal matrices (as described in Proposition \ref{Hyperbolic CS Decomposition}) and subsequently minimizing a smooth function over these matrices in an alternating fashion. The involvement of $3\times3$ block matrices makes the implementation of these methods very challenging. Consequently, the work of \cite{xiong2022ultrahyperbolic} focuses on optimizing a reduced subspace of the CS decomposition parameters, albeit at the expense of losing some degrees of freedom. (\bfit{ii}) Unconstrained Multiplier Correction Methods \cite{xiao2020class,Gau2018,gao2019parallelizable}. These methods leverage the symmetry and explicit closed-form expression of the Lagrangian multiplier at the first-order optimality condition. Consequently, they address an unconstrained problem, resulting in efficient first-order infeasible approaches. (\bfit{iii}) Alternating Direction Method of Multipliers \cite{HeY12}. This method reformulates the original problem into a bilinear constrained optimization problem by introducing auxiliary variables. It employs dual variables to handle bilinear constraints, iteratively optimizing primal variables while keeping other primal and dual variables fixed, and using a gradient ascent strategy to update the dual variables. This approach has become widely adopted for solving general nonconvex and nonsmooth composite optimization problems. Notably, all the aforementioned methods solely identify critical points of Problem (\ref{eq:main}).

$\blacktriangleright$ \textbf{Finite-Sum Problems via Stochastic Gradient Descent}. The finite-sum structure is prevalent in machine learning and statistical modeling, facilitating decomposition into smaller, more manageable components. This property is advantageous for developing efficient algorithms for large-scale problems, such as Stochastic Gradient Descent (SGD). Reducing variance is crucial in SGD because it can lead to more stable and faster convergence. Various techniques, such as mini-batch SGD, momentum methods, and variance reduction methods like SAGA \cite{defazio2014saga}, SVRG \cite{NIPS2013TongZ}, SARAH \cite{nguyen2017sarah}, SPIDER \cite{fang2018spider,wang2019spiderboost}, SNVRG \cite{zhou2020stochastic}, and PAGE \cite{li2021page}, have been developed to address this issue. Additionally, SGD for minimizing composite functions has also been investigated by the authors \cite{ghadimi2016mini,j2016proximal,li2017convergence}.


\vspace{-8pt}

\subsection{Contributions} 
\vspace{-8pt}

This paper makes the following contributions. (\bfit{i}) Algorithmically: We introduce the \textbf{JOBCD} algorithm, a novel Block Coordinate Descent method specifically designed to tackle optimizations constrained by J-orthogonality. We explore two specific variants of \textbf{JOBCD}, one based on a Gauss-Seidel strategy (\textbf{GS-JOBCD}), the other on a variance-reduced and Jacobi strategy (\textbf{VR-J-JOBCD}). Notably, \textbf{VR-J-JOBCD} incorporates a variance-reduction technique into a parallel framework to reduce oracle complexity in the minimization of finite-sum functions (See Section \ref{sect:proposed:JOBCD}). (\bfit{ii}) Theoretically: We provide comprehensive optimality and convergence analyses for both algorithms (see Sections \ref{sect:opt:analysis} and \ref{sect:CA}). (\bfit{iii}) Empirically: Extensive experiments across hyperbolic eigenvalue problems, structural probe problems, and ultrahyperbolic knowledge graph embedding, using both real-world and synthetic data, consistently show the significant superiority of \textbf{JOBCD} over state-of-the-art solutions (see Section \ref{sect:experiment}).
\vspace{-8pt}

\section{The Proposed \textbf{JOBCD} Algorithm} \label{sect:proposed:JOBCD}
\vspace{-8pt}

This section proposes \textbf{JOBCD} for solving optimization problems under J-orthogonality constraints in Problem (\ref{eq:main}), which is based on randomized block coordinate descent. Two variants of \textbf{JOBCD} are explored, one based on a Gauss-Seidel strategy (\textbf{GS-JOBCD}), the other on a variance-reduced and Jocobi strategy (\textbf{VR-J-JOBCD}).

\textbf{Notations}. We define $[n]\triangleq \{1,2,\ldots,n\}$. We denote $\Omega \triangleq {\textstyle  \{\mathcal{B}_1, \mathcal{B}_2, \ldots, \mathcal{B}_{\mathrm{C}_n^2} \}}$ as all the possible combinations of the index vectors choosing $2$ items from $n$ without repetition. For any $\B\in \Omega$, we define $\mathbf{U}_{\B}\in \Rn^{n \times 2}$ as $(\mathbf{U}_{\B})_{ji} = 1 \text{ if } \B_i = j, \text{ else } 0$ for all $j$ and $i$, leading to $\mathbf{U}_{\B}\trans \X=\X(\B,:)\in\Rn^{2\times n}$. We denote $\JJ_{\B} \triangleq \{\V\,|\, \V\trans\J_{\B\B} \V = \J_{\B\B}\}$, where $\mathbf{J}_{\B\B}\in\Rn^{2\times 2}$ is the sub-matrix of $\mathbf{J}$ indexed by $\B$. Further notations are provided in Appendix \ref{app:sect:notations}. 

\vspace{-8pt}

\subsection{Gauss-Seidel Block Coordinate Descent Algorithm} \label{sect:JOBCD:GS:JOBCD}
\vspace{-8pt}

This subsection describes the proposed \textbf{GS-JOBCD} algorithm. We consider Problem (\ref{eq:main}) with $N=1$ only, without utilizing its finite-sum structure.


\textbf{GS-JOBCD} is an iterative algorithm that, in each iteration $t$, randomly and uniformly (with replacement) selects a coordinate $\B$ from the set $\Omega$ and then solves a small-sized subproblem. The row index $[n]$ of the decision variable $\X$ are separated to two sets $\B$ and $\B^c$, where $\B\in\Omega$ with $|\B|=2$ is the working set and $\B^c = [n] \setminus \B$. For simplicity, we use $\B$ instead of $\B^t$. Following \cite{yuan2023block}, we consider the following block coordinate update rule: $[\X^{t+1}(\B,:)=\V\X^t(\B,:)]  \Leftrightarrow [\X^{t+1}=\X^t+\mathbf{U}_{\B} (\V-\I) \mathbf{U}_{\B}\trans \X^t]$, where $\V\in\Rn^{2\times2}$ is some suitable matrix.

The following lemma illustrates matrix selection for enforcing J-orthogonality constraints via the update rule $\X^+  \Leftarrow \XX_{\B}(\V) \triangleq \X + \mathbf{U}_{\B} (\V-\I) \mathbf{U}_{\B}\trans \X$, and presents associated properties.

\begin{lemma}\label{binding:jorth:theorem}
(Proof in Section \ref{app:binding:jorth:theorem}) For any $\B\in\Omega$, we define $\X^+\triangleq \mathcal{X}_{\B}(\mathbf{V})\triangleq \X + \mathbf{U}_{\B} (\V-\I) \mathbf{U}_{\B}\trans \X$. We have: (\bfit{a}) If $\mathbf{V} \in \JJ_{\B}$ and $\X \in \JJ$, then $\X^+ \in\JJ$. (\bfit{b}) $\|\X^{+}-\X\|^2_{\fro} \leq \|\X\|_{\fro}^2\cdot\|\mathbf{V}-\I\|_{\fro}^2$. (\bfit{c}) $\|\X^{+}-\X\|_{\mathbf{H}}^2 \leq \|\V-\I\|_{\mathbf{Q}}^2$ for all $\mathbf{Q} \succcurlyeq \underline{\mathbf{Q}} \triangleq (\mathbf{Z}^{\top} \otimes \mathbf{U}_{\B})^{\top} \mathbf{H}(\mathbf{Z}^{\top} \otimes \mathbf{U}_{\B})$, $ { \textstyle \mathbf{Z} \triangleq \mathbf{U}_{\B}^{\top} \mathbf{X}\in\mathbb{R}^{k \times n}}$.
 \end{lemma}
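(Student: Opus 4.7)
The plan is to exploit the block-structured decomposition of the update $\X\mapsto \X^+$ induced by the complementary selection matrices $\mathbf{U}_{\B}$ and $\mathbf{U}_{\B^c}$. Since together they form a column permutation of a subset of the identity, I can rely on the elementary identities $\mathbf{U}_{\B}\trans\mathbf{U}_{\B}=\I_2$, $\mathbf{U}_{\B^c}\trans\mathbf{U}_{\B^c}=\I_{n-2}$, $\mathbf{U}_{\B}\trans\mathbf{U}_{\B^c}=\zero$, and $\mathbf{U}_{\B}\mathbf{U}_{\B}\trans+\mathbf{U}_{\B^c}\mathbf{U}_{\B^c}\trans=\I_n$, combined with the fact that the diagonality of $\J$ yields $\mathbf{U}_{\B}\trans\J\mathbf{U}_{\B}=\J_{\B\B}$, $\mathbf{U}_{\B^c}\trans\J\mathbf{U}_{\B^c}=\J_{\B^c\B^c}$, and $\mathbf{U}_{\B}\trans\J\mathbf{U}_{\B^c}=\zero$. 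These produce the common rewrites $\X^+=(\mathbf{U}_{\B}\V\mathbf{U}_{\B}\trans+\mathbf{U}_{\B^c}\mathbf{U}_{\B^c}\trans)\X$ and $\X^+-\X=\mathbf{U}_{\B}(\V-\I)\mathbf{U}_{\B}\trans\X$, which serve as the starting point for all three parts.

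For part (\emph{a}), I would expand $(\X^+)\trans\J\X^+$ using the first rewrite; the two cross terms vanish by $\mathbf{U}_{\B}\trans\J\mathbf{U}_{\B^c}=\zero$, leaving $\X\trans\mathbf{U}_{\B}\V\trans\J_{\B\B}\V\mathbf{U}_{\B}\trans\X+\X\trans\mathbf{U}_{\B^c}\J_{\B^c\B^c}\mathbf{U}_{\B^c}\trans\X$. Invoking $\V\trans\J_{\B\B}\V=\J_{\B\B}$ (the hypothesis $\V\in\JJ_{\B}$) collapses this to $\X\trans(\mathbf{U}_{\B}\J_{\B\B}\mathbf{U}_{\B}\trans+\mathbf{U}_{\B^c}\J_{\B^c\B^c}\mathbf{U}_{\B^c}\trans)\X=\X\trans\J\X$, which equals $\J$ since $\X\in\JJ$, thereby proving $\X^+\in\JJ$.

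For part (\emph{b}), the identity $\mathbf{U}_{\B}\trans\mathbf{U}_{\B}=\I_2$ makes left-multiplication by $\mathbf{U}_{\B}$ a Frobenius isometry, so $\|\X^+-\X\|_{\fro}=\|(\V-\I)\mathbf{U}_{\B}\trans\X\|_{\fro}$; the submultiplicative bound $\|AB\|_{\fro}\leq\|A\|_{\fro}\|B\|_2$ together with $\|\mathbf{U}_{\B}\trans\X\|_2\leq\|\X\|_2\leq\|\X\|_{\fro}$ then yields the claim after squaring. For part (\emph{c}), I would apply the Kronecker--$\vec$ identity $\vec(AYB)=(B\trans\otimes A)\vec(Y)$ with $A=\mathbf{U}_{\B}$, $Y=\V-\I$, $B=\mathbf{Z}=\mathbf{U}_{\B}\trans\X$ to obtain $\vec(\X^+-\X)=(\mathbf{Z}\trans\otimes\mathbf{U}_{\B})\vec(\V-\I)$; substituting into $\|\X^+-\X\|_{\H}^2=\vec(\X^+-\X)\trans\H\vec(\X^+-\X)$ reproduces the quadratic form $\|\V-\I\|_{\underline{\Q}}^2$ exactly with $\underline{\Q}$ as defined, after which monotonicity of the quadratic form in $\Q\succcurlyeq\underline{\Q}$ closes the bound.

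None of the three parts involves a hard calculation; the main conceptual step---and the only place where the structural assumption $\V\in\JJ_{\B}$ enters---is part (\emph{a}), where recognizing the update as left-multiplication by a matrix that is identity outside the $\B\times\B$ block and equals $\V$ inside it reduces J-orthogonality preservation on $\Rn^{n\times n}$ to the $2\times2$ hyperbolic-orthogonality relation on $\J_{\B\B}$. The only minor technical care needed is in part (\emph{c}), where one must match the $\vec$-identity conventions so that the resulting Kronecker expression coincides with the stated $\underline{\Q}$.
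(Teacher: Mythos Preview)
Your proposal is correct and follows essentially the same approach as the paper for all three parts. The only cosmetic difference is in part~(\emph{a}): the paper expands the difference $(\X^+)\trans\J\X^+-\X\trans\J\X$ directly from $\X^+=\X+\mathbf{U}_{\B}(\V-\I)\mathbf{U}_{\B}\trans\X$ using the commutation identity $\J\mathbf{U}_{\B}=\mathbf{U}_{\B}\J_{\B\B}$ (so it never introduces $\mathbf{U}_{\B^c}$), whereas you rewrite $\X^+$ via the complementary selector and expand $(\X^+)\trans\J\X^+$ as a four-term sum whose cross terms vanish; both routes collapse to the same $2\times 2$ relation $\V\trans\J_{\B\B}\V=\J_{\B\B}$. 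Parts~(\emph{b}) and~(\emph{c}) are identical to the paper's argument, with your isometry observation for $\mathbf{U}_{\B}$ being a slightly cleaner phrasing of what the paper writes as a norm inequality followed by $\|\mathbf{U}_{\B}\|=1$.
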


$\blacktriangleright$ \textbf{The Main Algorithm}. Using the above update rule, we consider the following iterative procedure: $\X^{t+1}\Leftarrow \XX_{\B}^t(\bar{\V}^t)$, where $\bar{\V}^t\in \arg \min_{\V} f(\XX_{\B}^t(\V))$. However, the resulting subproblem could be still difficult to solve. This inspires us to use sequential majorization minimization \cite{RazaviyaynHL13,Mairal13} to address it. This technique iteratively constructs a surrogate function that upper-bounds the objective function, allowing for effective optimization and gradual reduction of the objective function. We derive: 
\beq
f(\XX_\B^t(\V))  &\stackrel{\step{1}}{\leq}& \ts f(\X^t)+\tfrac{1}{2}\| \XX_{\B}^t(\V)-\X^t \|_{\mathbf{H}}^2 + \la \XX_{\B}^t(\V) - \X^t,  \nabla f(\X^t)\ra\nn\\
&\stackrel{\step{2}}{\leq}& \ts f(\X^t)+ \tfrac{1}{2}\| \V-\I \|_{\mathbf{Q}+\theta\I}^2 + \la \V - \I, [ \nabla f(\X^t) (\X^t)\trans ]_{\B\B} \ra \triangleq \GG(\V;\X^t,\B^t) \label{function:MM:GS},
\eeq
\noi where step \step{1} uses Inequality (\ref{Lipschitz contiu}); step \step{2} uses Claim (\bfit{c}) of Lemma \ref{binding:jorth:theorem}, $\theta \geq 0$ and the fact that $\la\UB (\V-\I) \UB\trans \X, \nabla f(\X) \ra=\la\V-\I,[\nabla f(\X) \X\trans]_{\B\B}\ra$, and the choice of $\mathbf{Q} \in \mathbb{R}^{4 \times 4}$ that:
\begin{align} \label{eq:matrix:Q}
{ \textstyle \mathbf{Q} =\underline{\mathbf{Q}}},
\text { or }  { \textstyle \mathbf{Q} =\varsigma \mathbf{I}_2}, \text { with } { \textstyle \|\underline{\mathbf{Q}}\| \leq \varsigma \leq L_f}.
\end{align}
\noi Therefore, the function $\GG(\V;\X^t,\B^t)$ becomes a majorization function of $f(\X)$ at $\X^t\in\JJ$ for all $\B^t\in\Omega$. We can consider the following optimization problem to find $\VVV^t$: $\VVV^t \in \arg \min_{\mathbf{V}} \GG(\V;\X^t,\B^t)$.

We summarize the proposed \textbf{GS-JOBCD} in Algorithm \ref{alg:main:GS:JOBCD}. 

\begin{algorithm}
\DontPrintSemicolon
\caption{ \textbf{GS-JOBCD}: Block Coordinate
Descent Methods using a Gauss-Seidel Strategy for Solving Problem (\ref{eq:main})} \textbf{Init.:} Set $\X^0$ to satisfy J-orthogonality constraints (e.g., via Hyperbolic CS Decomposition).\\
\For{$t$ from 0 to $T$}
{
(S1) Choose a coordinate $\B^t$ with $|\B^t|=2$ from the set $\Omega$ randomly and uniformly (with replacement) for the $t$-th iteration. Denote $\B=\B^t$. \\
(S2) Choose a matrix $\Q \in \Rn^{4\times 4}$ using Formula (\ref{eq:matrix:Q}).\\
(S3) Solve the following small-size subproblem globally.
\beq
\ts \VVV^t&\in& \arg \min_{\V\in \JJ_{\B}} ~ \tfrac{1}{2}\|\V-\I\|_{\mathbf{Q}+\theta \I}^2+\la\V-\I, [\nabla f (\mathbf{X}^t)(\mathbf{X}^t)\trans]_{\B\B}\ra + f(\X^t) \label{eq:J:opt:multi:dim} \\
&=&\arg \min_{\V\in \JJ_{\B} \in\Rn^{2\times 2}} ~ \tfrac{1}{2}\|\V\|_{\dot{\Q}}^2+\la\V, \P\ra + c \label{eq:J:opt:one:dim}
\eeq
\noi where $\P\triangleq [\nabla f (\mathbf{X}^t)(\mathbf{X}^t)\trans]_{\B\B}-\mat(\dot{\Q}\vec(\I_2))$, $\dot{\Q} = \Q + \theta \I$ and $c \triangleq f(\X^t) - \la \I_2, [\nabla f (\mathbf{X}^t)(\mathbf{X}^t)\trans]_{\B\B} \ra + \tfrac{1}{2}\|\I\|_{\dot{\Q}}^2$ is a constant.

(S4) $\X^{t+1}(\B,:)=\VVV^t\mathbf{X}^t(\B,:)$
}
\label{alg:main:GS:JOBCD}

\end{algorithm}



Although the J-orthogonality constraint typically has a sorted diagonal with $\operatorname{diag}(\J)\in \{-1, +1\}^n$, \textbf{GS-JOBCD} is also applicable to problems with more general constraints $\X\trans\J\X=\J$ where $\operatorname{diag}(\mathbf{J}) \in \{\pm 1\}^n$ is unsorted.

$\blacktriangleright$ \textbf{Solving the Small-Sized Subproblem}. We now elaborate on how to find the global optimal solution of Problem (\ref{eq:J:opt:one:dim}). We notice that $\V \in \JJ_{\B}
\triangleq \{\V\,|\, \V\trans\J_{\B\B} \V = \J_{\B\B}\}$, where $\mathbf{J}_{\B\B} \in \{(\begin{smallmatrix} 1 & 0 \\ 0 & -1\end{smallmatrix}), (\begin{smallmatrix} 1 & 0 \\ 0 & 1\end{smallmatrix}), (\begin{smallmatrix} -1 & 0 \\ 0 & -1\end{smallmatrix})\}$. We now concentrate on the first case where $\mathbf{J}_{\B\B} =(\begin{smallmatrix} 1 & 0 \\ 0 & -1\end{smallmatrix})$. The following proposition provides a strategy to decompose any J-orthogonal matrix.

\begin{proposition} \label{Hyperbolic CS Decomposition} (\rm{Hyperbolic CS Decomposition} \cite{Stewart2005})
Let $\V$ be J-orthogonal with signature $(p,n-p)$. Assume that $n-p \leq p$. Then there exist vectors $\dot{c},\dot{s}\in\Rn^{n-p}$ with $\dot{c} \odot \dot{c} - \dot{s}\odot\dot{s}=\mathbf{1}$, and orthogonal matrices $\mathbf{U}_1, \mathbf{V}_1 \in \Rn^{p \times p}$ and $\mathbf{U}_2, \mathbf{V}_2 \in \mathbb{R}^{(n-p) \times (n-p)}$ such that: ${\tiny { \textstyle \V= [\begin{array}{cc}
\mathbf{U}_1 & 0 \\
0 & \mathbf{U}_2
\end{array} ] [\begin{array}{ccc}
\Diag(\dot{c}) & 0 & \Diag(\dot{s}) \\
0 & I_{p-(n-p)} & 0 \\
\Diag(\dot{s}) & 0 & \Diag(\dot{c})
\end{array} ] [\begin{array}{cc}
\mathbf{V}_1\trans & 0 \\
0 & \mathbf{V}_2\trans
\end{array} ]}}$.

\end{proposition}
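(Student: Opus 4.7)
The plan is to mirror the classical proof of the trigonometric CS decomposition for orthogonal matrices, replacing the circular identities $c^2+s^2=1$ with the hyperbolic identities $\dot c^2-\dot s^2=1$. The starting observation is that the J-orthogonality relations $\V^\top\J\V=\J$ and $\V\J\V^\top=\J$ (the latter following from $\V^{-1}=\J\V^\top\J$) force the diagonal blocks of a conformal partition of $\V$ to have singular values bounded \emph{below} by $1$ rather than above by $1$, which is exactly what makes hyperbolic rather than trigonometric parameters natural.

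Concretely, I would first partition $\V=\begin{pmatrix} V_{11} & V_{12}\\ V_{21} & V_{22}\end{pmatrix}$ conformally with $\J$, yielding the identities $V_{11}^\top V_{11}-V_{21}^\top V_{21}=\I_p$, $V_{22}^\top V_{22}-V_{12}^\top V_{12}=\I_{n-p}$, $V_{11}^\top V_{12}=V_{21}^\top V_{22}$, together with their row-wise analogues coming from $\V\J\V^\top=\J$. Next I would compute an SVD $V_{11}=\mathbf{U}_1\Sigma\mathbf{V}_1^\top$; since $V_{11}^\top V_{11}\succeq\I_p$ and $\operatorname{rank}(V_{21})\leq n-p$, at most $n-p$ singular values can exceed $1$, so after a permutation absorbed into $\mathbf{U}_1,\mathbf{V}_1$ we have $\Sigma=\diag(\Diag(\dot c),\I_{p-(n-p)})$ with $\dot c_i\geq 1$. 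Absorbing these outer orthogonal factors, the identity $V_{11}^\top V_{11}-V_{21}^\top V_{21}=\I_p$ then forces the last $p-(n-p)$ columns of the transformed $V_{21}$ to vanish, while its first $n-p$ columns take the form $\mathbf{U}_2\Diag(\dot s)$ for some orthogonal $\mathbf{U}_2$, where $\dot s_i^2=\dot c_i^2-1$; choosing signs yields $\dot c\odot\dot c-\dot s\odot\dot s=\mathbf{1}$.

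A symmetric argument on $V_{22}$, constrained through the cross identities $V_{11}V_{21}^\top=V_{12}V_{22}^\top$ and $V_{11}^\top V_{12}=V_{21}^\top V_{22}$, yields $V_{22}=\Diag(\dot c)\mathbf{V}_2^\top$ and the matching form for $V_{12}$, producing the central block matrix in the claim. The principal obstacle is the bookkeeping in the asymmetric case $n-p<p$: one must show that $V_{11}$ has precisely $p-(n-p)$ singular values equal to $1$, that the corresponding rows of $V_{21}$ and columns of $V_{12}$ vanish, and crucially that the SVDs of $V_{11}$ and $V_{22}$ can be compatibly aligned so that the \emph{same} hyperbolic parameters $(\dot c,\dot s)$ appear in both diagonal blocks as well as in both off-diagonal blocks. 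This compatibility is precisely what the cross identities enforce, but verifying it requires a careful sign and permutation argument analogous to the subtle step in the orthogonal CS decomposition where two separate SVDs must agree on common singular subspaces; the cleanest route is therefore to invoke the construction of Stewart cited in the statement rather than reproduce the full case analysis.
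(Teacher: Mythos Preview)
The paper does not actually prove this proposition; it is stated as a known result with a citation to \cite{Stewart2005} and then applied (with $n=2$, $p=1$) to parametrize the $2\times2$ J-orthogonal matrices arising in the subproblems. So there is no ``paper's own proof'' to compare against.

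That said, your outline is the standard route to such a decomposition and is essentially what one finds in Stewart's construction: partition $\V$ conformally with $\J$, use the block identities from $\V^\top\J\V=\J$ and $\V\J\V^\top=\J$ to bound the singular values of $V_{11}$ from below by $1$, diagonalize $V_{11}$ via an SVD, and then use the cross identities to force the remaining blocks into hyperbolic form with the same parameters $(\dot c,\dot s)$. Your identification of the delicate step---aligning the SVDs of $V_{11}$ and $V_{22}$ so that the \emph{same} $(\dot c,\dot s)$ appear in all four blocks, and handling the $p-(n-p)$ unit singular values in the rectangular case---is accurate; this is indeed where the bookkeeping lives, and it is why the cited reference is the appropriate place to send a reader who wants full details. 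Your sketch is correct in spirit and matches the cited source's approach; since the paper treats the result as a black box, no further work is required on your part.
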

\noi Applying Proposition \ref{Hyperbolic CS Decomposition} with $n=2$, $p=1$, and $\mathbf{U}_1=\mathbf{U}_2=\mathbf{V}_1=\mathbf{V}_2 = \pm 1$, $\ch^2-\sh^2=1$ with $\ch,\sh\in\Rn$, we parametrize $\V$ as: $\V= (\begin{smallmatrix}
\pm 1 & 0 \\
0 & \pm 1
\end{smallmatrix}) \cdot (\begin{smallmatrix}
\ch & \sh \\
\sh & \ch
\end{smallmatrix})\cdot (\begin{smallmatrix}
\pm 1 & 0 \\
0 & \pm 1
\end{smallmatrix})$, where we denote $\sh$ as $\sinh(\mu)$, $\ch$ as $\cosh(\mu)$, and $\th$ as $\tanh(\mu)$for some $\mu\in \Rn$, for simplicity of notation. It is not difficult to show that Problem (\ref{eq:J:opt:one:dim}) reduces to the following one-dimensional search problem:
\beq \label{eq:V:mu}
\bar{\mu}\in\min_{\mu} \tfrac{1}{2} \vec(\V)\trans \dot{\Q} \vec(\V) + \la \mathbf{V}, \mathbf{P}\ra,\,\st\,\V \in \{ (\begin{smallmatrix}  \ch & \sh \\ ~\sh &  \ch\end{smallmatrix}) , (\begin{smallmatrix}  \ch & ~- \sh \\ - \sh &  ~\ch\end{smallmatrix}) , (\begin{smallmatrix} - \ch & ~- \sh \\  \sh &  ~\ch\end{smallmatrix}) ,  (\begin{smallmatrix}  \ch & ~-\sh \\  \sh & ~-\ch\end{smallmatrix})  \}.
\eeq
We apply a breakpoint search method to solve Problem (\ref{eq:V:mu}). For simplicity, we provide an analysis only for the first case. A detailed discussion of all four cases can be found in Appendix Section \ref{app:sect:complete of V}. For the case where $\V = (\begin{smallmatrix}  \ch & \sh \\ \sh &  \ch\end{smallmatrix})$, Problem (\ref{eq:V:mu}) reduces to the following problem:
\begin{align} \label{aaequ}
\min_{\ch,\sh} a \ch+b \sh+c \ch^2+d \ch \sh+e \sh^2,
\end{align} 
\noi where $a=\mathbf{P}_{11}+\mathbf{P}_{22}$, $b=\mathbf{P}_{12}+\mathbf{P}_{21}$, $c=\tfrac{1}{2}(\dot{\Q}_{11}+\dot{\Q}_{41}+\dot{\Q}_{14}+\dot{\Q}_{44})$, $d=\tfrac{1}{2}(\dot{\Q}_{21}+\dot{\Q}_{31}+\dot{\Q}_{12}+\dot{\Q}_{42}+\dot{\Q}_{13}+\dot{\Q}_{43}+\dot{\Q}_{24}+\dot{\Q}_{34})$, and $e=\tfrac{1}{2}(\dot{\Q}_{22}+\dot{\Q}_{32}+\dot{\Q}_{23}+\dot{\Q}_{33})$.
Then we perform a substitution to convert Problem (\ref{aaequ}) into an equivalent problem that depends on the trigonometric functions: (\bfit{i}) $\ch^2=\tfrac{1}{1-\th^2}$; (\bfit{ii}) $\sh^2=\tfrac{\th^2}{1-\th^2}$; (\bfit{iii}) $\th=\tfrac{\sh}{\ch}$. The following lemma provides a characterization of the global optimal solution for Problem (\ref{aaequ}).

\begin{lemma} \label{lemma equ subproblem (1) of JOBCD} (Proof in Section \ref{BP.4}) We let $ { \textstyle \breve{F}(\tilde{c}, \tilde{s}) }\triangleq  { \textstyle a \tilde{c}+b \tilde{s}+c \tilde{c}^2+d \tilde{c} \tilde{s}+e \tilde{s}^2}$. The optimal solution $\bar{\mu}$ to Problem (\ref{aaequ})  can be computed as: $[\cosh(\tilde{\mu}), \sinh(\tilde{\mu})] \in \arg\min_{[c, s]}\,\breve{F}(c, s),\,\st\,[c, s] \in\{ [ { \textstyle \tfrac{1}{\sqrt{1- (\bar{t}_{+} )^2}}},  { \textstyle \tfrac{\bar{t}_{+}}{\sqrt{1- (\bar{t}_{+} )^2}}} ], [\tfrac{-1}{\sqrt{1- (\bar{t}_{-} )^2}}, \tfrac{-\bar{t}_{-}}{\sqrt{1- (\bar{t}_{-} )^2}} ] \}$, where $
\bar{t}_{+} \in \arg \min_t \, p(t) \triangleq \tfrac{a+b t}{\sqrt{1-t^2}}+\tfrac{w+d t}{1-t^2}$; $\bar{t}_{-} \in \arg \min_t \tilde{p}(t) \triangleq  \tfrac{-a-b t}{\sqrt{1-t^2}}+\tfrac{w+d t}{1-t^2}$. Here $w = c + e $.
\end{lemma}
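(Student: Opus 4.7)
The plan is to exploit the constraint $\ch^2-\sh^2=1$ directly rather than treating $\ch,\sh$ as hyperbolic functions of a parameter $\mu$. Observe that the feasible set $\{(\ch,\sh)\in\Rn^2:\ch^2-\sh^2=1\}$ consists of two disjoint branches determined by $\sign(\ch)\in\{+1,-1\}$ (the case $\ch=0$ is infeasible since $-\sh^2=1$ is unsolvable over $\Rn$). On each branch, the ratio $t\triangleq \sh/\ch$ is well defined and lies in $(-1,1)$, because $\ch^2-\sh^2=1$ forces $|\sh|<|\ch|$. Moreover the map $(\ch,\sh)\mapsto t$ is a bijection between each branch and the open interval $(-1,1)$, so optimizing over $(\ch,\sh)$ under the constraint is equivalent to optimizing over $t\in(-1,1)$ on each branch and then taking the better of the two.

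Next I would carry out the explicit substitution. From $\ch^2-\sh^2=1$ and $t=\sh/\ch$, one gets $\ch^2=\tfrac{1}{1-t^2}$ and $\sh^2=\tfrac{t^2}{1-t^2}$. On the $\ch>0$ branch this yields $\ch=\tfrac{1}{\sqrt{1-t^2}}$, $\sh=\tfrac{t}{\sqrt{1-t^2}}$, and on the $\ch<0$ branch it yields $\ch=\tfrac{-1}{\sqrt{1-t^2}}$, $\sh=\tfrac{-t}{\sqrt{1-t^2}}$. Inserting these into $\breve{F}(\ch,\sh)=a\ch+b\sh+c\ch^2+d\ch\sh+e\sh^2$ and using the identity $c\ch^2+e\sh^2=\tfrac{c+et^2}{1-t^2}=\tfrac{c+e}{1-t^2}-e$ collapses the linear and quadratic parts into the claimed form, giving $\breve F=p(t)-e$ on the positive branch and $\breve F=\tilde p(t)-e$ on the negative branch, with $w\triangleq c+e$. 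Minimizing each gives $\bar t_+$ and $\bar t_-$; back-substituting through the two corresponding formulas for $(\ch,\sh)$ recovers the two candidate pairs in the statement, and the global optimum is obtained by selecting whichever candidate makes $\breve F$ smaller.

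The only genuine obstacle is verifying that the parametrization is both exhaustive and faithful: one must confirm that $t\in(-1,1)$ is necessary and sufficient for feasibility (follows from $\ch^2=1+\sh^2\ge 1>0$), that each branch is in one-to-one correspondence with $(-1,1)$, and that no feasible $(\ch,\sh)$ is lost by excluding $|t|=1$ (the limits $t\to\pm 1$ correspond to $|\ch|,|\sh|\to\infty$, which are not attained on the constraint set). After these bookkeeping checks, the lemma follows directly from the substitution. The actual one-dimensional minimizations defining $\bar t_+$ and $\bar t_-$ are handled by the breakpoint-search routine referenced in Appendix \ref{app:sect:complete of V} and are not part of what this lemma asserts; the lemma merely reduces the two-variable constrained problem to those two scalar problems.
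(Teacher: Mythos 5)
Your proposal is correct and follows essentially the same route as the paper's proof: the paper also substitutes $t=\tilde{s}/\tilde{c}$ (written as $\tilde{t}=\tanh$-type ratio with $|\tilde{t}|<1$), splits into the two sign branches $\tilde{c}=\pm 1/\sqrt{1-t^2}$, reduces $\breve{F}$ to $p(t)-e$ and $\tilde{p}(t)-e$ respectively, and takes the better of the two candidate pairs. Your additional bookkeeping (branch exhaustiveness, bijectivity onto $(-1,1)$) is a harmless refinement of the same argument, not a different approach.
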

\noi We now describe how to find the optimal solution $\bar{t}_{+}$, where $ \bar{t}_{+}\in \arg \min_t \, p(t) \triangleq \tfrac{a+b t}{\sqrt{1-t^2}}+\tfrac{w+d t}{1-t^2}$; this strategy can naturally be extended to find $\bar{t}_{-}$. Initially, we have the following first-order optimality conditions for the problem: $ { \textstyle 0=\nabla p(t)=} { \textstyle  [b (1-t^2 )+(a+b t) t ]}  { \textstyle \sqrt{1-t^2}}+ { \textstyle  [d (1-t^2 )+(w+d t)(2 t) ]}  \Leftrightarrow d t^2+2 w t+d=-[b+a t]  { \textstyle \sqrt{1-t^2}}$. Squaring both sides yields the following quartic equation: $c_4 t^4+c_3 t^3+c_2 t^2+c_1 t+c_0=0$, where $c_4=d^2+a^2$, $c_3=4 w d+2 a b$, $c_2=4 w^2+2 d^2-a^2+b^2$, $c_1=4 w d-2 a b$, $c_0=d^2-b^2$. This equation can be solved analytically by Lodovico Ferrari's method \cite{Quaequation}, resulting in all its real roots $\{\bar{t}_1, \bar{t}_2, \ldots, \bar{t}_j \}$ with $1 \leq j \leq 4$.

For the second and third cases, Problem (\ref{eq:J:opt:one:dim}) essentially boils down to optimization under orthogonality constraints. The work of \cite{yuan2023block} derives a breakpoint search method for finding the optimal solution for Problem (\ref{eq:J:opt:one:dim}) with $\mathbf{J}_{\B\B} \in \{(\begin{smallmatrix} 1 & 0 \\ 0 & 1\end{smallmatrix}),(\begin{smallmatrix} -1 & 0 \\ 0 & -1\end{smallmatrix})\}$ using the Givens rotation and Jacobi reflection matrices.


\subsection{Variance-Reduced Jacobi Block Coordinate Descent Algorithm}

This subsection proposes the \textbf{VR-J-JOBCD} algorithm, a randomized block coordinate descent method derived from \textbf{GS-JOBCD}. Importantly, by leveraging the parallel framework of a Jacobi strategy \cite{hansen1963cyclic,cutkosky2019momentum}, \textbf{VR-J-JOBCD} integrates variance reduction techniques \cite{schmidt2017minimizing,li2021page,hari2021convergence} to decrease oracle complexity in the minimization of finite-sum functions. This makes the algorithm effective for minimizing large-scale problems under J-orthogonality constraints.

\textbf{Notations}. We assume $n$ is an even number in this paper. We create $(n/2)$ pairs by non-overlapping grouping of the numbers in any arbitrary combination, with each pair containing two distinct numbers from the set $[n]$. It is not hard to verify that such grouping yields ${\textstyle \mathrm{C}_J = (n!)/(2^{{n}/{2}}\tfrac{n}{2}!})$ possible combinations. The set of these combinations is denoted as $\Upsilon \triangleq \{ \tilde{\mathcal{B}}_i\}_{i=1}^{\mathrm{C}_J} \triangleq {\textstyle \{\tilde{\mathcal{B}}_1, \tilde{\mathcal{B}}_2, \ldots, \tilde{\mathcal{B}}_{\mathrm{C}_J}\}}$ \footnote{ Taking $n=4$ for example, we have: $\Upsilon = \{ \{(1,2),(3,4)\}$, $\{(1,3),(2,4)\}$, $\{(1,4),(2,3)\} \}$. }.

$\blacktriangleright$ \textbf{Variance Reduction Strategy}. We incorporate state-of-the-art variance reduction strategies from the literature \cite{li2021page, cai2023cyclic} into our algorithm to solve Problem (\ref{eq:main}). These methods iteratively generate a stochastic gradient estimator as follows:
\begin{align} \label{eq:VR's gradient update}
\tilde{\mathbf{G}}^t =
\left\{
  \begin{array}{ll}
      \tfrac{1}{b} \sum_{i \in \SA^t} \nabla f_i (\mathbf{X}^t ), & \hbox{\text { with probability } $p$;} \\
    \tilde{\mathbf{G}}^{t-1}+\tfrac{1}{b^{\prime}} \sum_{i \in \SB^t} (\nabla f_i (\mathbf{X}^t )-\nabla f_i (\mathbf{X}^{t-1} ) ), & \hbox{\text { with probability } $1-p$.}
  \end{array}
\right.
\end{align}
\noi Here, $\{\SA^t,\SB^t\}$ are uniform random minibatch samples with $|\SA^t|=b$, $|\SB^t|=b^{\prime}$, and $\tilde{\mathbf{G}}^0=\tfrac{1}{b}\sum_{i\in\SA^0}\nabla f_i(\mathbf{X}^0)$. We drop the superscript $t$ for $\{\SA^t,\SB^t\}$ as $t$ can be inferred from context. We only focus on the default setting that \cite{li2021page,cai2023cyclic}: $b=N$, $b^{\prime}=\sqrt{b}$ and ${ \textstyle p=\frac{b^{\prime}}{b+b^{\prime}}}$. 

$\blacktriangleright$ \textbf{Jacobi Block Coordinate Descent Method}. The proposed algorithm is built upon the parallel framework of a Jacobi strategy. In each iteration $t$, we randomly and uniformly (with replacement) select a coordinate set $\B^t\triangleq \{ \B_{(1)}^t, \B_{(2)}^t,\cdots, \B_{(\frac{n}{2})}^t \}$ from the set $\Upsilon$ with $\B^t \in \mathbb{N}^{\frac{n}{2} \times 2}$ and $ \textstyle \B^t_{(i)}\in \mathbb{N}^{2}$. For all $t$, we have: $ { \textstyle \B^t_{(i)} \cap \B^t_{(j)}} = \emptyset$ and $\cup_{i=1}^{n/2} (\B_{(i)}^t) = [n]$. We drop the superscript $t$ if $t$ can be inferred from context. 

The following lemma shows how to choose a suitable matrix $\Q$ so that the Jacobi strategy can be applied.

\begin{lemma} \label{lemma:ind:subp}
(Proof in Section \ref{BP.6}) We let $\B^t\triangleq \{ \B_{(1)}^t, \B_{(2)}^t,\cdots, \B_{(\frac{n}{2})}^t \} \in \Upsilon$ for all $t$. We let $\Q=\varsigma \I_4$, where $\varsigma$ is some suitable constant with $\varsigma\leq L_f$. For any $\B_{(i)}^t$ and $\B_{(j)}^t$ with $i \neq j$, their corresponding objective functions as in Equation (\ref{function:MM:GS}) are independent.
\end{lemma}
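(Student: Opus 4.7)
The plan is to show that, under the isotropic choice $\Q = \varsigma \I_4$, the joint majorization of $f$ associated with the simultaneous Jacobi update separates into a sum of $(n/2)$ single-pair majorizations, each involving only one block $\V^{(i)}$. The independence claim then follows immediately because the resulting summands share no decision variables and can be minimized in parallel.

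First I would write the Jacobi joint update as $\X^+ = \X^t + \sum_{i=1}^{n/2} \mathbf{U}_{\B_{(i)}^t}(\V^{(i)} - \I)\mathbf{U}_{\B_{(i)}^t}\trans \X^t$ and apply the descent inequality (\ref{Lipschitz contiu}) to obtain $f(\X^+) \leq f(\X^t) + \la \nabla f(\X^t), \X^+ - \X^t\ra + \tfrac{1}{2}\|\X^+ - \X^t\|_\H^2$. The linear term decouples cleanly: since $\{\B_{(i)}^t\}$ partitions $[n]$, we have $\mathbf{U}_{\B_{(i)}^t}\trans \mathbf{U}_{\B_{(j)}^t} = \zero$ whenever $i \neq j$, so by the same identity $\la \UB(\V-\I)\UB\trans \X, \nabla f(\X)\ra = \la \V - \I, [\nabla f(\X)\X\trans]_{\B\B}\ra$ used in the derivation of (\ref{function:MM:GS}), the inner product collapses to $\sum_i \la \V^{(i)} - \I, [\nabla f(\X^t)(\X^t)\trans]_{\B_{(i)}^t \B_{(i)}^t}\ra$, a sum of terms each depending on only one block.

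Next I would treat the quadratic term. Because the nonzero rows of $\mathbf{U}_{\B_{(i)}^t}(\V^{(i)} - \I)\mathbf{U}_{\B_{(i)}^t}\trans \X^t$ are supported on the disjoint row set $\B_{(i)}^t$, the Frobenius norms add: $\|\X^+ - \X^t\|_\fro^2 = \sum_{i=1}^{n/2} \|(\V^{(i)} - \I)\X^t(\B_{(i)}^t, :)\|_\fro^2$. Combining this with the operator-norm relaxation $\|\X^+ - \X^t\|_\H^2 \leq \|\H\|\,\|\X^+ - \X^t\|_\fro^2$ and choosing $\varsigma$ large enough that $\varsigma \I_4 \succcurlyeq \underline{\Q}^{(i)}$ for every pair (which is permitted by (\ref{eq:matrix:Q}) and by the bound $\|\H\| \leq L_f$), I would deduce $\tfrac{1}{2}\|\X^+ - \X^t\|_\H^2 \leq \sum_i \tfrac{\varsigma}{2}\|\V^{(i)} - \I\|_\fro^2$. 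Together with the linear decomposition, the joint majorization reads $f(\X^t) + \sum_i \bigl\{\tfrac{\varsigma + \theta}{2}\|\V^{(i)} - \I\|_\fro^2 + \la \V^{(i)} - \I, [\nabla f(\X^t)(\X^t)\trans]_{\B_{(i)}^t \B_{(i)}^t}\ra\bigr\}$, which, up to an additive constant in the $\V^{(i)}$'s, coincides with $\sum_i \GG(\V^{(i)}; \X^t, \B_{(i)}^t)$. No cross-block terms appear, so the subproblems for distinct $i, j$ are independent.

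I expect the main obstacle to be the quadratic step: a priori, the form $\vec(\X^+ - \X^t)\trans \H \vec(\X^+ - \X^t)$ carries cross-block interactions that couple $\V^{(i)}$ and $\V^{(j)}$ through off-diagonal blocks of $\H$, and these must be eliminated in the majorization. The isotropic scalar choice $\Q = \varsigma \I_4$—rather than the tighter pair-dependent $\Q = \underline{\Q}$ used in the Gauss-Seidel variant—is precisely what permits the crude relaxation $\|\cdot\|_\H^2 \leq \|\H\|\,\|\cdot\|_\fro^2$ to absorb these cross terms, at the price of some majorization tightness but in exchange for full parallelism across the $(n/2)$ pairs.
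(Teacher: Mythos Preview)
Your proof is correct, but it establishes considerably more than the lemma literally claims and more than the paper's own argument does. The paper's proof is nearly trivial: it simply inspects the formula $\GG(\V_i;\X^t,\B_{(i)}^t) = f(\X^t) + \tfrac{1}{2}\|\V_i-\I\|_{\Q+\theta\I}^2 + \la \V_i-\I, [\nabla f(\X^t)(\X^t)\trans]_{\B_{(i)}^t\B_{(i)}^t}\ra$ and notes that, read as a function of $\V_i$, it involves no variable or data from any other block---the quadratic term depends only on $\V_i$, and the linear coefficient factors as $[\nabla f(\X^t)(\X^t)\trans]_{\B_{(i)}^t\B_{(i)}^t} = \nabla f(\X^t)(\B_{(i)}^t,:)\,[\X^t(\B_{(i)}^t,:)]\trans$, which uses only rows indexed by $\B_{(i)}^t$. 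Independence is then immediate.

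What you prove instead is that the \emph{joint} majorization of $f$ at the simultaneous Jacobi update decomposes additively into $\sum_i \GG(\V_i;\X^t,\B_{(i)}^t)$, and it is precisely here that the isotropic choice $\Q=\varsigma\I_4$ is needed to kill the cross-block couplings in $\|\X^+-\X^t\|_\H^2$. This stronger fact is what actually licenses the parallel algorithm, and the paper establishes it separately---in the derivation of~(\ref{VR-J-JOBCD lipschitz}) via Lemma~\ref{lemma: properties of V update of JJOBCD}(\textit{c})---rather than in the proof of the present lemma. So your route is sound and arguably tidier in that it bundles the algorithmic justification into a single argument; the paper's route is shorter for the lemma as literally stated but defers the substantive decoupling step to later.
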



We consider the following block coordinate update rule in \textbf{VR-J-JOBCD}: $\mathbf{X}^{t+1} \Leftarrow \tilde{\mathcal{X}}_{\B}^t(\mathbf{V}_:)\triangleq \mathbf{X}^t+ { \textstyle [\sum_{i=1}^{n/2}
\mathbf{U}_{\B_{(i)}}(\mathbf{V}_i-\mathbf{I}_2) \mathbf{U}_{\B_{(i)}}^{\top} ]\mathbf{X}^t}$. The following lemma provides properties of this rule.

\begin{lemma}\label{lemma: properties of V update of JJOBCD}
(Proof in Section \ref{BP.8}) We let $\B \in \Upsilon$, $\mathbf{V}_i \in \JJ_{\B_{(i)}}$, $\mathbf{X} \in \JJ$, and $i \in [\frac{n}{2}]$. We define $\mathbf{X}^{+} \triangleq \tilde{\mathcal{X}}_{\B}(\mathbf{V}_:) \triangleq \mathbf{X}+ [\sum_{i=1}^{n/2}
\mathbf{U}_{\B_{(i)}}(\mathbf{V}_i-\mathbf{I}_2) \mathbf{U}_{\B_{(i)}}^{\top} ]\mathbf{X}$. We have:
(\bfit{a}) $ { \textstyle \sum  _{i=1}^{\frac{n}{2}}\|\mathbf{U}_{\B_{(i)}}(\mathbf{V}_i-\mathbf{I}_2) \mathbf{U}_{\B_{(i)}}^{\top} \mathbf{X}\|_{\fro}^2=\|\sum  _{i=1}^{\frac{n}{2}}\mathbf{U}_{\B_{(i)}}(\mathbf{V}_i-\mathbf{I}_2) \mathbf{U}_{\B_{(i)}}^{\top} \mathbf{X}\|_{\fro}^2}$.
(\bfit{b}) $\|\X^{+}-\X \|^2_{\fro} \leq \|\X\|^2_{\fro}\cdot \sum_{i=1}^{n/2} \|\V_i-\I_2 \|_{\fro}^2$.
(\bfit{c}) $ \|\mathbf{X}^{+}-\mathbf{X} \|_{\mathbf{H}}^2 \leq { \textstyle\sum_{i=1}^{n/2} \|\mathbf{V}_i-\mathbf{I}_2 \|_{\mathbf{Q}}^2}$ with ${ \textstyle\mathbf{Q} = \varsigma \mathbf{I}_4}$.
(\bfit{d}) For all $\tilde{\mathbf{G}}\in\Rn^{n\times n}$, it follows that: $2 \sum_{i=1}^{n/2} \la \V_i-\I_2, [ (\nabla f(\mathbf{X}) - \tilde{\mathbf{G}}) {\mathbf{X}}^{\top}]_{\B_{(i)} \B_{(i)}}\ra \leq
 \|\X\|_{\fro}^2 \sum_{i=1}^{n/2} \|\V_i-\I_2 \|_{\fro}^2 + \|[\nabla f(\mathbf{X}) - \tilde{\mathbf{G}}  ]\|^2_{\fro}$.
\end{lemma}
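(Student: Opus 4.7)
My plan is to exploit throughout that the blocks $\{\mathcal{B}_{(i)}\}_{i=1}^{n/2}$ partition $[n]$ into disjoint pairs. Writing $\Delta_i \triangleq \mathbf{U}_{\mathcal{B}_{(i)}}(\mathbf{V}_i-\mathbf{I}_2)\mathbf{U}_{\mathcal{B}_{(i)}}^{\top}\mathbf{X}$, each $\Delta_i$ is supported only on rows indexed by $\mathcal{B}_{(i)}$, and $\mathbf{X}^{+}-\mathbf{X}=\sum_{i=1}^{n/2}\Delta_i$. This disjointness is the workhorse for all four claims.

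For claim (\textit{\textbf{a}}), I would expand $\|\sum_i \Delta_i\|_\fro^2=\sum_i \|\Delta_i\|_\fro^2+2\sum_{i<j}\langle \Delta_i,\Delta_j\rangle_\fro$ and observe that the cross terms vanish: since $\mathcal{B}_{(i)}\cap\mathcal{B}_{(j)}=\emptyset$ for $i\neq j$, the matrices $\Delta_i$ and $\Delta_j$ have disjoint nonzero rows, so $\langle \Delta_i,\Delta_j\rangle_\fro=0$. For claim (\textit{\textbf{b}}), I first invoke (\textit{\textbf{a}}) and then apply sub-multiplicativity of the Frobenius norm: $\|\Delta_i\|_\fro=\|(\mathbf{V}_i-\mathbf{I}_2)\mathbf{U}_{\mathcal{B}_{(i)}}^{\top}\mathbf{X}\|_\fro\leq \|\mathbf{V}_i-\mathbf{I}_2\|_\fro\,\|\mathbf{U}_{\mathcal{B}_{(i)}}^{\top}\mathbf{X}\|_\fro\leq \|\mathbf{V}_i-\mathbf{I}_2\|_\fro\,\|\mathbf{X}\|_\fro$; squaring and summing yields the stated inequality.

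For claim (\textit{\textbf{c}}), with $\mathbf{Q}=\varsigma\mathbf{I}_4$, I would first apply Lemma \ref{binding:jorth:theorem}(\textit{\textbf{c}}) per block to obtain $\|\Delta_i\|_\mathbf{H}^2\leq\|\mathbf{V}_i-\mathbf{I}_2\|_\mathbf{Q}^2$ (using $\mathbf{Q}\succeq\underline{\mathbf{Q}}_i$ for the appropriately chosen $\varsigma$, where $\underline{\mathbf{Q}}_i=(\mathbf{Z}_i^{\top}\otimes\mathbf{U}_{\mathcal{B}_{(i)}})^{\top}\mathbf{H}(\mathbf{Z}_i^{\top}\otimes\mathbf{U}_{\mathcal{B}_{(i)}})$ with $\mathbf{Z}_i=\mathbf{U}_{\mathcal{B}_{(i)}}^{\top}\mathbf{X}$). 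To pass from these per-block bounds to a bound on $\|\sum_i\Delta_i\|_\mathbf{H}^2$, I would invoke Lemma \ref{lemma:ind:subp}: the isotropic choice $\mathbf{Q}=\varsigma\mathbf{I}_4$ ensures that the majorization function of $f$ decouples over the blocks, which translates directly to the summation bound $\|\mathbf{X}^{+}-\mathbf{X}\|_\mathbf{H}^2\leq\sum_i\|\mathbf{V}_i-\mathbf{I}_2\|_\mathbf{Q}^2$. I expect this step to be the main technical obstacle, since the cross terms $\langle \Delta_i,\Delta_j\rangle_\mathbf{H}$ in the naive expansion do not vanish a priori; the argument must combine the disjoint-support structure of the $\Delta_i$ with the isotropy of $\mathbf{Q}$ and the ambient bound $\|\mathbf{H}\|\leq L_f$ to absorb them.

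For claim (\textit{\textbf{d}}), the plan is to rewrite each block inner product as a full one using $\langle \mathbf{V}_i-\mathbf{I}_2,[\mathbf{A}\mathbf{X}^{\top}]_{\mathcal{B}_{(i)}\mathcal{B}_{(i)}}\rangle=\langle \mathbf{U}_{\mathcal{B}_{(i)}}(\mathbf{V}_i-\mathbf{I}_2)\mathbf{U}_{\mathcal{B}_{(i)}}^{\top},\mathbf{A}\mathbf{X}^{\top}\rangle=\langle \Delta_i,\mathbf{A}\rangle$, with $\mathbf{A}=\nabla f(\mathbf{X})-\tilde{\mathbf{G}}$. Summing over $i$ collapses the sum to $2\langle \mathbf{X}^{+}-\mathbf{X},\mathbf{A}\rangle$. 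I would then apply Young's inequality $2\langle u,v\rangle\leq \|u\|_\fro^2+\|v\|_\fro^2$ and invoke claim (\textit{\textbf{b}}) to bound $\|\mathbf{X}^{+}-\mathbf{X}\|_\fro^2\leq \|\mathbf{X}\|_\fro^2\sum_i\|\mathbf{V}_i-\mathbf{I}_2\|_\fro^2$, which yields the stated inequality immediately.
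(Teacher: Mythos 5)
Your treatments of claims (\bfit{a}), (\bfit{b}), and (\bfit{d}) coincide with the paper's proof: the blocks $\B_{(i)}$ are disjoint, so the matrices $\Delta_i\triangleq\mathbf{U}_{\B_{(i)}}(\V_i-\I_2)\mathbf{U}_{\B_{(i)}}^{\top}\X$ have disjoint nonzero rows and the Frobenius cross terms vanish for (\bfit{a}); submultiplicativity per block (exactly the Part (\bfit{b}) argument of Lemma \ref{binding:jorth:theorem}) gives (\bfit{b}); and collapsing the block inner products to $\la \X^{+}-\X,\nabla f(\X)-\tilde{\mathbf{G}}\ra$, applying Young's inequality, and then (\bfit{b}) gives (\bfit{d}).

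The gap is in claim (\bfit{c}), which you correctly identify as the hard step but do not close, and the tool you propose will not close it. Lemma \ref{lemma:ind:subp} only states that the surrogate objective in (\ref{function:MM:GS}) decouples over the blocks when $\Q=\varsigma\I_4$ -- its quadratic part $\|\V_i-\I\|^2_{\Q+\theta\I}$ is separable by construction -- and it says nothing about the $\mathbf{H}$-seminorm of a sum of block updates, which is precisely where the problematic cross terms $\la\Delta_i,\Delta_j\ra_{\mathbf{H}}=\operatorname{vec}(\Delta_i)^{\top}\mathbf{H}\operatorname{vec}(\Delta_j)$ live. The paper's proof does not route through Lemma \ref{lemma:ind:subp} at all: it asserts that the Part (\bfit{a}) identity carries over to the $\mathbf{H}$-seminorm, i.e. $\|\sum_i\Delta_i\|_{\mathbf{H}}^2=\sum_i\|\Delta_i\|_{\mathbf{H}}^2$, and then repeats the Part (\bfit{c}) argument of Lemma \ref{binding:jorth:theorem} block by block with $\Q=\varsigma\I_4$; so what is needed to match the paper is a statement that the block decomposition is orthogonal with respect to $\mathbf{H}$ as well as with respect to the Frobenius inner product. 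Your alternative suggestion of absorbing the cross terms through $\|\mathbf{H}\|\leq L_f$ produces a bound of the form $L_f\sum_i\|\Delta_i\|_{\fro}^2\leq L_f\sum_i\|\mathbf{U}_{\B_{(i)}}^{\top}\X\|^2\,\|\V_i-\I_2\|_{\fro}^2$, which is not the stated bound $\sum_i\|\V_i-\I_2\|^2_{\varsigma\I_4}$ with $\varsigma\leq L_f$ unless one additionally controls $\|\mathbf{U}_{\B_{(i)}}^{\top}\X\|$, so that route does not recover (\bfit{c}) as written.
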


$\blacktriangleright$ \textbf{The Main Algorithm}. Using the update rule above, we consider the following iterative procedure: $\mathbf{X}^{t+1} \Leftarrow \tilde{\mathcal{X}}_{\B}^t(\mathbf{V}_:)$, where $\bar{\V}_:^t\in \arg \min_{\V_:} f(\tilde{\XX}_\B^t(\V_:))$. We establish the majorization function for $f(\tilde{\XX}_\B^t(\V_:))$, as follows:
\beq\label{VR-J-JOBCD lipschitz}
f(\tilde{\XX}_\B^t(\V_:)) & \stackrel{\step{1}}{\leq}& \ts f(\X^t) + \la \tilde{\XX}_{\B}^t(\V_:) - \X^t,  \nabla f(\X^t)\ra+\tfrac{1}{2}\| \tilde{\XX}_{\B}^t(\V_:)-\X^t \|_{\mathbf{H}}^2  \nn\\
&\stackrel{\step{2}}{\leq} & f(\X^t)+
{ \textstyle \sum_{i=1}^{n/2} \{\la \V_i-\I_2,[\nabla f(\X) (\X )^{\top}]_{\B_{(i)}\B_{(i)}}\ra}+
{ \textstyle \frac{1}{2}\|\V_i-\I_2 \|_{(\theta +\varsigma ) \I}^2}\} 
\eeq
\noi where step \step{1} uses the results of telescoping Inequality (\ref{Lipschitz contiu}) over $i$ from $1$ to $N$; step \step{2} uses $\mathbf{X}^{t+1} - \mathbf{X}^t=  \textstyle [\sum_{i=1}^{n/2}
\mathbf{U}_{\B_{(i)}}(\mathbf{V}_i-\mathbf{I}_2) \mathbf{U}_{\B_{(i)}}^{\top} ]\mathbf{X}^t$, Claim (\bfit{c}) of Lemma \ref{lemma: properties of V update of JJOBCD}, $\theta\geq0$, and $\Q = \varsigma \I$.


Instead of computing the exact Euclidean gradient $\nabla f(\X^t)$ as \textbf{GS-JOBCD}, \textbf{VR-J-JOBCD} maintains and updates a recursive gradient estimator $\tilde{\mathbf{G}}^t$ using a variance-reduced strategy as in Formula (\ref{eq:VR's gradient update}). We consider minimizing the following function instead of the one on the right-hand side of Inequality (\ref{VR-J-JOBCD lipschitz}):
\beq \label{function:MM:VR}
\mathcal{T}(\mathbf{V}_:; \mathbf{X}^t, \B^t) \triangleq f(\X^t)+
{ \textstyle \sum_{i=1}^{n/2}\la \V_i-\I_2,[\tilde{\mathbf{G}}^t (\X^t )^{\top}]_{\B_{(i)}\B_{(i)}}\ra}+
{ \textstyle \frac{1}{2}\|\V_i-\I_2 \|_{\ddot{\Q}}^2}.
\eeq
\noi Here, $\mathcal{T}(\mathbf{V}_:; \mathbf{X}^t, \B^t)$ can be termed as a stochastic majorization function of $f(\tilde{\XX}_\B^t(\V_:))$ at the current solution $\X^t$. Therefore, we can consider the following optimization problem to find $\{\V_:\}$ using: $\bar{\V}_{:}^t \in \arg \min_{\V_:} \mathcal{T}(\mathbf{V}_:; \mathbf{X}^t, \B^t)$, which can be decomposed into $(n/2)$ independent subproblems and solved in parallel. It is important to note that each $\V_i$ in Problem (\ref{eq:VR:J:opt:one:dim}) is identical to Problem (\ref{eq:J:opt:one:dim}), which can be efficiently solved in $\mathcal{O}(1)$ using the breakpoint search method, as in \textbf{GS-JOBCD}.

We summarize the proposed \textbf{VR-J-JOBCD} in Algorithm \ref{alg:main:VR:J:JOBCD}. Notably, when \( N=1 \), \textbf{VR-J-JOBCD} simplifies to a direct Jacobi strategy for solving Problem (\ref{eq:main}), which we refer to as \textbf{J-JOBCD}.

\begin{algorithm}[htbp]
\DontPrintSemicolon
\caption{\textbf{VR-J-JOBCD}: Block Coordinate Descent Methods using a variance-reduced and Jacobi strategy for Solving Problem \ref{eq:main}} \label{alg:main:VR:J:JOBCD}
\textbf{Init.:} Set $\X^0$ to satisfy J-orthogonality constraints (e.g., via Hyperbolic CS Decomposition).

\For{$t$ from 0 to $T$}
{
(S1) Choose a coordinate $\B^t$ from the set $\Upsilon$ randomly and uniformly (with replacement) for the $t$-th iteration. Denote $\B=\B^t$. In our implementation, we simply randomly permute the set $\{1,2,...,n\}$ and then output the grouping $\{[1,2], [3,4], [5,6],\cdots, , [n-1,n]\}$. \\

(S2) Use a variance-reduced strategy (\ref{eq:VR's gradient update}) to obtain $\tilde{\mathbf{G}}^t$. 

(S3) Solve small-sized subproblems in parallel with $\mathbf{Q} = \varsigma \mathbf{I} \in \mathbb{R}^{4 \times 4}$.
\SetAlgoVlined
\begin{minipage}{\dimexpr\linewidth-2\fboxsep-2\fboxrule-\algoheightrule}
\For{$i=1$ \KwTo $n/2$ \textbf{in parallel}}
{
\begin{align}
\hspace{-1.5cm}\ts \VVV_i^t \in ~& \arg \min_{\V_i \in \JJ_{\B_{(i)}}} ~ \tfrac{1}{2}\|\V_i-\I\|_{\ddot{\Q}}^2+\la\V_i-\I, [\nabla f (\mathbf{X}^t)(\mathbf{X}^t)\trans]_{\B_{(i)}\B_{(i)}}\ra + f(\X^t) \notag \\
\hspace{-1.5cm} =~ & \arg \min_{\V_i \in \JJ_{\B_{(i)}}} ~ \tfrac{1}{2}\|\V_i\|_{\ddot{\Q}}^2+\la\V_i, \P_i \ra  \label{eq:VR:J:opt:one:dim}
\end{align}
}
\end{minipage}
where $\P_i \triangleq [\nabla f (\mathbf{X}^t)(\mathbf{X}^t)\trans]_{\B_{(i)}\B_{(i)}}-\mat(\ddot{\Q} \vec(\I_2))-\theta\I_2$, $\ddot{\Q} = (\zeta + \theta )\I$.

(S4) Update the solution $\mathbf{X}^{t+1}$ in parallel as follows:
\SetAlgoVlined
\begin{minipage}{\dimexpr\linewidth-2\fboxsep-2\fboxrule-\algoheightrule}
\For{$i=1$ \KwTo $n/2$ \textbf{in parallel}}{
${ \textstyle \mathbf{X}^{t+1}(\B_{(i)}, :)=\bar{\mathbf{V}}^t_i \mathbf{X}^t(\B_{(i)},:)}$}
\end{minipage}
}
\end{algorithm}

\vspace{-8pt}

\section{Optimality Analysis} \label{sect:opt:analysis}
\vspace{-8pt}

This section provides an optimality analysis for the proposed algorithms. 

Initially, we define the first-order optimality condition for Problem (\ref{eq:main}). Since the matrix $\X\trans \J\X$ is symmetric, the Lagrangian multiplier $\Lambda$ corresponding to the constraints $\X\trans\J\X=\J$ is also a symmetric matrix. The Lagrangian function of problem (\ref{eq:main}) is $\mathcal{L}(\mathbf{X}, \Lambda) = f(\mathbf{X})-\tfrac{1}{2} \la \Lambda, \X\trans\J\X - \J \ra$.

We obtain the following lemma for the first-order optimality condition for Problem (\ref{eq:main}).

\begin{lemma} \label{lemma foptim for Jorth}
(Proof in Section \ref{O.2}, \rm{First-Order Optimality Condition}) We let $\JJ\triangleq\{\X\,|\,\X\trans\J\X=\J\}$. We have (\bfit{a}) A solution $\check{\X}\in\JJ$ is a critical point of problem (\ref{eq:main}) if and only if: $\zero = \nabla_{\JJ} f(\check{\X}) \triangleq \nabla f(\check{\X}) -\mathbf{J} \check{\X} [\nabla f(\check{\X})]^{\top}\check{\X}\mathbf{J}$. The associated Lagrangian multiplier can be computed as $\Lambda=\J\check{\X}\trans\nabla f(\check{\X})$. (\bfit{b}) The critical point condition is equivalent to the requirement that the matrix $\X \nabla f(\check{\X}) \trans \J$ is symmetric, which is expressed as $\X\G\trans\J=[\X\G\trans\J]\trans$.
\end{lemma}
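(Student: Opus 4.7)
The plan is to derive the KKT conditions for Problem (\ref{eq:main}), treating $\X\trans\J\X - \J = \zero$ as a symmetric matrix equation, and then simplify them using the two J-orthogonality identities $\X\trans\J\X = \J$ and $\X\J\X\trans = \J$. The latter follows from the former: combining $\X\trans\J\X = \J$ with $\J^2 = \I$ gives $\X^{-1} = \J\X\trans\J$, and then $\I = \X\X^{-1} = \X\J\X\trans\J$ yields $\X\J\X\trans = \J$. I will use both identities freely.

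For part (\bfit{a}), I take $\Lambda$ symmetric (which is natural because the constraint is a symmetric matrix equation) and differentiate $\mathcal{L}(\X,\Lambda) = f(\X) - \tfrac{1}{2}\la \Lambda, \X\trans\J\X - \J\ra$ with respect to $\X$. Standard matrix calculus using the symmetry of $\Lambda$ and $\J$ yields the stationarity condition $\G - \J\X\Lambda = \zero$, where $\G \triangleq \nabla f(\check{\X})$. Left-multiplying by $\X\trans$ and using $\X\trans\J\X = \J$ gives $\X\trans\G = \J\Lambda$, whence $\Lambda = \J\X\trans\G$. Plugging this back into $\G = \J\X\Lambda$ is automatic since $\J\X\J\X\trans = \J\cdot\J = \I$, so the genuine KKT content is the symmetry constraint $\Lambda = \Lambda\trans$, i.e., $\J\X\trans\G = \G\trans\X\J$. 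To pass to the stated form $\G = \J\X\G\trans\X\J$, I left-multiply by $\X$ and use $\X\J\X\trans = \J$ to obtain $\J\G = \X\G\trans\X\J$, then left-multiply by $\J$. The converse runs similarly: from $\G = \J\X\G\trans\X\J$, left-multiplication by $\X\trans$ and the identity $\X\trans\J\X = \J$ yield $\X\trans\G = \J\G\trans\X\J$, and a further left multiplication by $\J$ restores $\J\X\trans\G = \G\trans\X\J$.

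For part (\bfit{b}), I start from $\G = \J\X\G\trans\X\J$ and transpose to obtain $\G\trans = \J\X\trans\G\X\trans\J$; then
\[
\X\G\trans\J = \X\cdot(\J\X\trans\G\X\trans\J)\cdot\J = (\X\J\X\trans)\G\X\trans = \J\G\X\trans,
\]
using $\X\J\X\trans = \J$ and $\J^2 = \I$. Since $(\X\G\trans\J)\trans = \J\G\X\trans$, this is exactly the symmetry of $\X\G\trans\J$. Conversely, from $\X\G\trans\J = \J\G\X\trans$, right-multiplying by $\J\X$ and using $\X\trans\J\X = \J$ gives $\X\G\trans\X = \J\G\J$, and sandwiching by $\J$ restores $\G = \J\X\G\trans\X\J$. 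The main subtlety, though shallow, is that the stationarity equation is automatically solved by $\Lambda = \J\X\trans\G$, so what truly encodes criticality is the \emph{symmetry} of the multiplier rather than the vanishing of the Lagrangian gradient; once this is recognized, both claims reduce to routine manipulations with $\X\trans\J\X = \J$ and $\X\J\X\trans = \J$.
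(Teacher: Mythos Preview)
Your proof is correct and follows essentially the same approach as the paper's: both set up the Lagrangian, obtain $\G - \J\X\Lambda = \zero$, solve for $\Lambda = \J\X\trans\G$ via $\X\trans\J\X = \J$, and then use the symmetry of $\Lambda$ to reach the stated equivalent forms. Your version is slightly more streamlined because you derive and use $\X\J\X\trans = \J$ up front and explicitly argue both directions of the ``if and only if,'' whereas the paper reaches Part~(\bfit{b}) through a longer chain of left/right multiplications, but the underlying argument is the same.
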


\textbf{Remarks}. While our results in Lemma \ref{lemma foptim for Jorth} show similarities to existing works focusing on problems under orthogonality constraints \cite{Wen2012AFM}, this study marks the first investigation into the first-order optimality condition for optimization problems under J-orthogonality constraints.

The following definition is useful in our subsequent analysis of the proposed algorithms.

\begin{definition} \label{BS def}
(\rm{Block Stationary Point, abbreviated as BS-point}) Let $\theta>0$. A solution $\ddot{\X} \in \JJ$ is termed as a block stationary point if, for all $\B \in \Omega \triangleq {\textstyle  \{\mathcal{B}_1, \mathcal{B}_2, \ldots, \mathcal{B}_{\mathrm{C}_n^2} \}}$, the following condition is satisfied: 
$\I_2 \in \arg \min _{\V \in \JJ_{\B}} \GG(\V;\ddot{\X},\B).$
\end{definition}

The following theorem shows the relation between critical points and BS-points.

\begin{theorem} \label{theorem:stronger point} (Proof in Section \ref{O.3}) Any \rm{BS-point} is a \rm{critical point}, while the reverse is not necessarily true.
\end{theorem}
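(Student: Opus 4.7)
The plan is to prove the two directions of Theorem \ref{theorem:stronger point} separately. For the forward implication (BS-point $\Rightarrow$ critical point), the idea is to read off the first-order necessary optimality condition of each small-sized subproblem at $\V = \I_2$ and then aggregate the resulting pairwise symmetry conditions into the global criterion of Lemma \ref{lemma foptim for Jorth}(b). The reverse implication is refuted by a concrete minimal counterexample that exploits the disconnected structure of the constraint set $\JJ_{\B}$.

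For the forward direction, fix any $\B \in \Omega$. By the BS-point hypothesis, $\V = \I_2$ is a global, hence local, minimizer of $\GG(\V;\ddot{\X},\B)$ on the smooth manifold $\JJ_{\B} = \{\V : \V\trans \J_{\B\B}\V = \J_{\B\B}\}$, so its Euclidean gradient at $\V = \I_2$ must lie in the normal space of $\JJ_{\B}$ at $\I_2$. Because the quadratic term $\tfrac{1}{2}\|\V-\I\|^2_{\Q+\theta\I}$ contributes no gradient at $\V = \I_2$, this Euclidean gradient reduces to $\M_{\B} := [\nabla f(\ddot{\X})(\ddot{\X})\trans]_{\B\B}$. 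Linearising the constraint at $\V=\I_2$ yields the tangent space $T_{\I_2}\JJ_{\B} = \{\V' : \J_{\B\B}\V' \text{ is skew-symmetric}\}$, so every tangent admits the form $\V' = \J_{\B\B}\K$ with $\K\trans = -\K$. The normal condition $\la \M_{\B}, \V'\ra = 0$ for every such tangent becomes $\tr((\J_{\B\B}\M_{\B})\trans \K) = 0$ for every skew-symmetric $\K$, which is equivalent to $\J_{\B\B}\M_{\B}$ being symmetric. Since $\J$ is diagonal, $\J_{\B\B}\M_{\B} = [\J \nabla f(\ddot{\X})(\ddot{\X})\trans]_{\B\B}$. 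As this symmetry holds for every pair $\B = \{i,j\} \in \Omega$, every $2 \times 2$ sub-block of $\J \nabla f(\ddot{\X})(\ddot{\X})\trans$ is symmetric, so the full matrix is symmetric; transposing yields that $\ddot{\X}(\nabla f(\ddot{\X}))\trans\J$ is symmetric, which is precisely the criterion of Lemma \ref{lemma foptim for Jorth}(b).

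For the reverse direction, I would exhibit a minimal counterexample. Take $n = 2$, $\J = \diag(1,-1)$, and $f(\X) = \la \X, \diag(1,-1)\ra + \tfrac{\epsilon}{2}\|\X\|_{\fro}^2$ with a small $\epsilon > 0$, so that ($\mathbb{A}$-i) and ($\mathbb{A}$-ii) are satisfied with $L_f = \epsilon$. At $\ddot{\X} = \I_2 \in \JJ$, a direct computation gives $\ddot{\X}(\nabla f(\ddot{\X}))\trans\J = \diag(1+\epsilon, 1-\epsilon)$, which is symmetric, so $\ddot{\X}$ is a critical point. Yet Proposition \ref{Hyperbolic CS Decomposition} identifies a disconnected branch $\V = (\begin{smallmatrix} c & -s \\ s & -c \end{smallmatrix})$ of $\JJ_{\{1,2\}}$ with $c^2 - s^2 = 1$; choosing $c = -1$, $s = 0$ gives $\V = \diag(-1,1) \in \JJ_{\{1,2\}}$, and a short calculation (with $\Q = \varsigma \I_4$) yields $\GG(\V;\ddot{\X},\{1,2\}) = 2(\varsigma + \theta) - 2 - \epsilon$, which is strictly less than $\GG(\I_2; \ddot{\X}, \{1,2\}) = \epsilon$ whenever $\varsigma + \theta < 1 + \epsilon$. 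Hence $\I_2 \notin \arg\min_{\V \in \JJ_{\B}} \GG(\V;\ddot{\X},\B)$, so $\ddot{\X}$ is not a BS-point. The main subtlety is in the forward step: translating the block-wise first-order conditions into a global symmetry statement hinges on the diagonality of $\J$ together with the fact that every unordered pair of distinct indices appears as some block in $\Omega$; the reverse direction is quantitatively straightforward once one recognises that the multiple components of $\JJ_{\B}$ allow ``jumping'' to a value below that at $\V = \I_2$.
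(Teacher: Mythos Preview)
Your forward implication is correct and follows essentially the same route as the paper: extract the first-order necessary condition of each subproblem at $\V=\I_2$, observe that it forces $\J_{\B\B}\,[\nabla f(\ddot{\X})\ddot{\X}^\top]_{\B\B}$ to be symmetric for every $\B$, then aggregate over all pairs and invoke Lemma~\ref{lemma foptim for Jorth}(b). The paper packages the first step through its Riemannian-gradient formula $\ddot{\G}-\J_{\B\B}\V\ddot{\G}^{\top}\V\J_{\B\B}=\mathbf{0}$ evaluated at $\V=\I_2$, whereas you work directly with the tangent space $T_{\I_2}\JJ_{\B}=\{\J_{\B\B}\K:\K^\top=-\K\}$; these are equivalent formulations of the same stationarity condition.

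Where you genuinely differ is in the reverse implication. The paper's argument is only the observation that the Riemannian first-order condition is necessary but not sufficient for $\I_2$ to be a \emph{global} minimizer of $\GG(\cdot;\ddot{\X},\B)$; no explicit witness is given. Your concrete $n=2$ counterexample is both valid and stronger: the computations check (in particular $\operatorname{diag}(-1,1)\in\JJ_{\{1,2\}}$, $\ddot{\X}=\I_2$ is a critical point, and $\GG(\operatorname{diag}(-1,1);\ddot{\X},\{1,2\})<\GG(\I_2;\ddot{\X},\{1,2\})$ whenever $\varsigma+\theta<1+\epsilon$, which holds since $\varsigma\le L_f=\epsilon$ and $\theta$ can be taken below $1$). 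Your construction also makes explicit \emph{why} the reverse fails---the disconnectedness of $\JJ_{\B}$ lets one jump across components---which the paper's proof leaves implicit.
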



\vspace{-8pt}

\section{Convergence Analysis} \label{sect:CA}
\vspace{-8pt}

This section provides a convergence analysis for \textbf{GS-JOBCD} and \textbf{VR-J-JOBCD}. 

For \textbf{GS-JOBCD}, the randomness of output $(\VVV^t, \mathbf{X}^{t+1})$ for all $t$ are influenced by the random variable $\xi^{t} \triangleq {(\B^1;\B^2;\cdots;\B^t)}$. For \textbf{VR-J-JOBCD}, the randomness of output $(\bar{\mathbf{V}}_:^t, \mathbf{X}^{t+1})$ are influenced by the random variables $\iota^{t} \triangleq {(\B^1,\SA^1,\SB^1;\B^2,\SA^2,\SB^2;\cdots;\B^t,\SA^t,\SB^t)}$.

We denote $\bar{\X}$ as the global optimal solution of Problem (\ref{eq:main}). To simplify notations, we define: $u^t=\|\tilde{\mathbf{G}}^t-\nabla f (\mathbf{X}^t )\|_{\fro}^2$, and $\Delta_i = f(\X^i)-f(\bar{\X})$.

We impose the following additional assumptions on the proposed algorithms.

\begin{assumption} \label{ass:XV} 
There exists constants $\{\XXX,\VVV\}$ that: $\|\mathbf{X}^t\|_{\fro} \leq \XXX$, and $\|\mathbf{V}^t\|_{\fro} \leq \VVV$ for all $t$.
\end{assumption}

\begin{assumption} \label{ass:Bound:G}
There exists a constant $\GGG$ that: $\|\nabla f(\mathbf{X}^t)\|_{\fro} \leq \GGG$, and $\|\tilde{\mathbf{G}}^t\|_{\fro} \leq \GGG$ for all $t$.
\end{assumption}

\begin{assumption} \label{ass:bound:var}
For any ${ \textstyle \mathbf{X} \in \mathbb{R}^{n \times n}}$, $ { \textstyle \mathbb{E}_{i}[\|\nabla f_i(\mathbf{X}^t)-\nabla f(\mathbf{X}^t)\|_{\fro}^2]} \leq { \textstyle \sigma^2}$, where $i$ is drawn uniformly at random from $[N]$.
\end{assumption}

\textbf{Remarks}. (\bfit{i}) Assumption \ref{ass:XV} is satisfied as the function $f_i(\X)$ is coercive for all $i$. (\bfit{ii}) Assumption \ref{ass:Bound:G} imposes a bound on the (stochastic) gradient, a fairly moderate condition frequently employed in nonconvex optimization \cite{KingmaB14}. (\bfit{iii}) Assumption \ref{ass:bound:var} ensures that the variance of the stochastic gradient is bounded, which is a common requirement in stochastic optimization \cite{li2021page,cai2023cyclic}.


\subsection{Global Convergence}

We define the $\epsilon$-BS-point as follows.
\begin{definition}
($\epsilon$-BS-point) Given any constant $\epsilon>0$, a point $\ddot{\X}$ is called an $\epsilon$-BS-point if: $\mathcal{E}(\ddot{\mathbf{X}}) \leq \epsilon$. Here, $\mathcal{E}(\mathbf{X})$ is defined as $\mathcal{E}(\mathbf{X}) \triangleq \frac{1}{C_{n}^{2}} \sum_{i=1}^{C_{n}^{2}} \operatorname{dist}  (\mathbf{I}_2, \arg \min _{\mathbf{V}} \GG (\mathbf{V}; \mathbf{X}, \mathcal{B}_{i} ) )^2$ for For \rm{\textbf{GS-JOBCD}}, while it is defined as $\mathcal{E}(\mathbf{X}) \triangleq \frac{1}{\mathrm{C}_J} \sum_{i=1}^{\mathrm{C}_J} \mathbb{E}_{\iota^{t}} [\operatorname{dist}(\mathbf{I}_2, \arg \min _{\mathbf{V}_:} \mathcal{T}(\mathbf{V}_:; \mathbf{X}, \tilde{\mathcal{B}}_{i}))^2]$ for \rm{\textbf{VR-J-JOBCD}}, where the expectation is with respect to the randomness inherent in the algorithm \cite{li2021page}.

\end{definition}


We have the following useful lemma for \textbf{VR-J-JOBCD}.

\begin{lemma} \label{lemma: relation betw u_k and U_k-1}
(Proof in Section \ref{VRJJOBCD.1}) Suppose Assumption \ref{ass:bound:var} holds, then the variance $\mathbb{E}_{\iota^{t}} [u_k ]$ of the gradient estimators $\{\tilde{\mathbf{G}}^t\}$ of Algorithm \ref{alg:main:VR:J:JOBCD} is bounded by: ${ \textstyle \mathbb{E}_{\iota^{t}} [u^t ]} \leq  { \textstyle \frac{p(N-b)}{b(N-1)} \sigma^2+(1-p)\mathbb{E}_{\iota^{t-1}} [u^{t-1} ]}+ { \textstyle \frac{L_f^2 \XXX^2(1-p)}{b^{\prime}} \mathbb{E}_{\iota^{t-1}}[\sum_{i=1}^{n / 2}\|\mathbf{V}^{t-1}_i-\mathbf{I}_2\|_{\fro}^2]}$
\end{lemma}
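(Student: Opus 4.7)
\medskip

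The plan is to condition on the Bernoulli choice in the update rule (\ref{eq:VR's gradient update}) and then apply standard mini-batch variance bounds in each branch. Specifically, with probability $p$ the estimator is a fresh mini-batch estimate and the bias is controlled by a sampling-without-replacement variance formula, while with probability $1-p$ the estimator is a \emph{control-variate} correction of $\tilde{\mathbf{G}}^{t-1}$ whose bias decomposes into the previous bias plus an unbiased mean-zero correction term.

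First I would rewrite $\tilde{\mathbf{G}}^t - \nabla f(\mathbf{X}^t)$ in each of the two branches. In the first branch,
\[
\tilde{\mathbf{G}}^t - \nabla f(\mathbf{X}^t) = \tfrac{1}{b}\sum_{i\in\SA}\nabla f_i(\mathbf{X}^t)-\nabla f(\mathbf{X}^t),
\]
which is a zero-mean average of uniformly sampled terms, so by Assumption \ref{ass:bound:var} and the standard identity for sampling without replacement,
\[
\mathbb{E}_{\SA}\|\tilde{\mathbf{G}}^t-\nabla f(\mathbf{X}^t)\|_{\fro}^2 \leq \tfrac{N-b}{b(N-1)}\sigma^2.
\]
In the second branch, define $\boldsymbol{\delta}_i\triangleq\nabla f_i(\mathbf{X}^t)-\nabla f_i(\mathbf{X}^{t-1})$ and $\bar{\boldsymbol{\delta}}\triangleq\nabla f(\mathbf{X}^t)-\nabla f(\mathbf{X}^{t-1})$. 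Then
\[
\tilde{\mathbf{G}}^t-\nabla f(\mathbf{X}^t) = \bigl(\tilde{\mathbf{G}}^{t-1}-\nabla f(\mathbf{X}^{t-1})\bigr)+\Bigl(\tfrac{1}{b'}\sum_{i\in\SB}\boldsymbol{\delta}_i-\bar{\boldsymbol{\delta}}\Bigr).
\]

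Next I would take the squared Frobenius norm and expectation with respect to $\SB$. Because $\SB$ is drawn independently of $\iota^{t-1}$ and $\mathbb{E}_{\SB}\bigl[\tfrac{1}{b'}\sum_{i\in\SB}\boldsymbol{\delta}_i\bigr]=\bar{\boldsymbol{\delta}}$, the cross term vanishes, yielding
\[
\mathbb{E}_{\SB}\|\tilde{\mathbf{G}}^t-\nabla f(\mathbf{X}^t)\|_{\fro}^2 = u^{t-1}+\mathbb{E}_{\SB}\bigl\|\tfrac{1}{b'}\sum_{i\in\SB}\boldsymbol{\delta}_i-\bar{\boldsymbol{\delta}}\bigr\|_{\fro}^2 \leq u^{t-1} + \tfrac{1}{b'}\mathbb{E}_i\|\boldsymbol{\delta}_i\|_{\fro}^2.
\]
The Lipschitz property derived from Inequality (\ref{Lipschitz contiu}) gives $\|\boldsymbol{\delta}_i\|_{\fro}^2 \leq L_f^2\|\mathbf{X}^t-\mathbf{X}^{t-1}\|_{\fro}^2$, and then Claim (\textit{b}) of Lemma \ref{lemma: properties of V update of JJOBCD} together with Assumption \ref{ass:XV} turns this into $L_f^2\XXX^2\sum_{i=1}^{n/2}\|\mathbf{V}_i^{t-1}-\mathbf{I}_2\|_{\fro}^2$. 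Combining the two branches by the law of total expectation and taking the outer expectation over $\iota^{t-1}$ will give exactly the stated bound.

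The main obstacle will be the careful bookkeeping of which $\sigma$-algebras each expectation is taken over, especially ensuring that the cross term in the second branch truly vanishes (requiring that $\SB$ is sampled independently of everything in $\iota^{t-1}$ and that the mini-batch estimator is conditionally unbiased), and translating the sampling-without-replacement factor $\tfrac{N-b}{b(N-1)}$ from the first branch without over-simplifying to the with-replacement $\tfrac{1}{b}$. Everything else is a routine application of Lemma \ref{lemma: properties of V update of JJOBCD}(\textit{b}) and the $L_f$-smoothness of each $f_i$.
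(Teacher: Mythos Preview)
Your proposal is correct and follows essentially the same route as the paper's proof: split on the Bernoulli choice in (\ref{eq:VR's gradient update}), use the sampling-without-replacement variance factor $\tfrac{N-b}{b(N-1)}$ in the first branch, and in the second branch decompose the error as $(\tilde{\mathbf{G}}^{t-1}-\nabla f(\mathbf{X}^{t-1}))$ plus a conditionally unbiased correction, then bound the latter via $\mathbb{E}\|X-\mathbb{E}X\|^2\le\mathbb{E}\|X\|^2$, $L_f$-smoothness, and Lemma~\ref{lemma: properties of V update of JJOBCD}(\textit{b}). The only cosmetic difference is that the paper invokes its auxiliary Lemma~\ref{lemma: The expected difference between the mean of the sample and the population gradient} explicitly for the sampling factor, while you state it directly.
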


The following two theorems establish the iteration complexity (or oracle complexity) for \textbf{GS-JOBCD} and \textbf{VR-J-JOBCD}.

\begin{theorem} \label{Thm:GS-JOBCD global coverage}
(Proof in Section \ref{app:sect:CA:GC:GS-JJOBCD}) \rm{\textbf{GS-JOBCD}} finds an \textup{$\epsilon$-BS-point} of Problem (\ref{eq:main}) within ${ \textstyle \mathcal{O}(\tfrac{\Delta_0 N}{\epsilon})}$ arithmetic operations.
\end{theorem}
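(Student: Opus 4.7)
The plan is to establish a per-iteration descent inequality of the form $f(\mathbf{X}^t)-f(\mathbf{X}^{t+1})\geq\tfrac{\theta}{2}\|\bar{\mathbf{V}}^t-\mathbf{I}_2\|_{\fro}^2$, take expectation over the random block choice to convert the right-hand side into $\mathcal{E}(\mathbf{X}^t)$, telescope the descent over $t$, and then multiply the resulting iteration count by the $\mathcal{O}(N)$ per-iteration cost of computing the full gradient.

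First I would combine two observations at iteration $t$. The Lipschitz-type inequality (\ref{Lipschitz contiu}), together with claim (\textit{c}) of Lemma \ref{binding:jorth:theorem} applied to $\mathbf{X}^{t+1}=\mathcal{X}_{\mathcal{B}^t}^t(\bar{\mathbf{V}}^t)$, gives
\begin{equation*}
f(\mathbf{X}^{t+1})\leq f(\mathbf{X}^t)+\langle\bar{\mathbf{V}}^t-\mathbf{I}_2,\,[\nabla f(\mathbf{X}^t)(\mathbf{X}^t)^{\top}]_{\mathcal{B}\mathcal{B}}\rangle+\tfrac{1}{2}\|\bar{\mathbf{V}}^t-\mathbf{I}_2\|_{\mathbf{Q}}^2.
\end{equation*}
Meanwhile, since $\mathbf{I}_2\in\mathcal{J}_{\mathcal{B}^t}$ is feasible for the subproblem (\ref{eq:J:opt:multi:dim}), the optimality of $\bar{\mathbf{V}}^t$ yields $\mathcal{G}(\bar{\mathbf{V}}^t;\mathbf{X}^t,\mathcal{B}^t)\leq\mathcal{G}(\mathbf{I}_2;\mathbf{X}^t,\mathcal{B}^t)=f(\mathbf{X}^t)$, which rearranges to
\begin{equation*}
\langle\bar{\mathbf{V}}^t-\mathbf{I}_2,\,[\nabla f(\mathbf{X}^t)(\mathbf{X}^t)^{\top}]_{\mathcal{B}\mathcal{B}}\rangle\leq-\tfrac{1}{2}\|\bar{\mathbf{V}}^t-\mathbf{I}_2\|_{\mathbf{Q}+\theta\mathbf{I}}^2.
\end{equation*}
Substituting the second inequality into the first causes the $\tfrac{1}{2}\|\cdot\|_{\mathbf{Q}}^2$ terms to cancel exactly and leaves $f(\mathbf{X}^{t+1})\leq f(\mathbf{X}^t)-\tfrac{\theta}{2}\|\bar{\mathbf{V}}^t-\mathbf{I}_2\|_{\fro}^2$; the proximal regularization $\theta\mathbf{I}$ in the subproblem is precisely what furnishes quantitative progress.

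Next, since $\bar{\mathbf{V}}^t\in\arg\min_{\mathbf{V}\in\mathcal{J}_{\mathcal{B}^t}}\mathcal{G}(\mathbf{V};\mathbf{X}^t,\mathcal{B}^t)$, I would use $\|\bar{\mathbf{V}}^t-\mathbf{I}_2\|_{\fro}^2\geq\operatorname{dist}(\mathbf{I}_2,\arg\min_{\mathbf{V}}\mathcal{G}(\mathbf{V};\mathbf{X}^t,\mathcal{B}^t))^2$. Taking conditional expectation over the uniform draw of $\mathcal{B}^t\in\Omega$, which is independent of the $\xi^{t-1}$-measurable state $\mathbf{X}^t$, averages across all $\mathrm{C}_n^2$ blocks and produces $\mathbb{E}_{\mathcal{B}^t}[f(\mathbf{X}^t)-f(\mathbf{X}^{t+1})]\geq\tfrac{\theta}{2}\mathcal{E}(\mathbf{X}^t)$. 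Taking total expectation and telescoping over $t=0,\ldots,T-1$ yields $\sum_{t=0}^{T-1}\mathbb{E}[\mathcal{E}(\mathbf{X}^t)]\leq 2\Delta_0/\theta$, so $\min_{0\leq t<T}\mathbb{E}[\mathcal{E}(\mathbf{X}^t)]\leq 2\Delta_0/(\theta T)$, and choosing $T=\mathcal{O}(\Delta_0/\epsilon)$ produces an $\epsilon$-BS-point in expectation. Each iteration of \textbf{GS-JOBCD} must compute $\nabla f(\mathbf{X}^t)=\tfrac{1}{N}\sum_i\nabla f_i(\mathbf{X}^t)$ via $\Theta(N)$ component-gradient evaluations, while the $\mathcal{O}(1)$ closed-form breakpoint search of Lemma \ref{lemma equ subproblem (1) of JOBCD} and the two-row update of $\mathbf{X}^t$ are dominated; the total arithmetic cost is therefore $\mathcal{O}(N\Delta_0/\epsilon)$.

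The only delicate point is the cancellation in the first step: because $\mathcal{J}_{\mathcal{B}}$ is a non-convex manifold, one cannot invoke strong convexity of $\mathcal{G}$ together with first-order optimality on the manifold to lower-bound $\mathcal{G}(\mathbf{I}_2)-\mathcal{G}(\bar{\mathbf{V}}^t)$ in the usual way. The workaround is to avoid strong convexity entirely and instead use only that $\mathbf{I}_2$ is feasible (automatic since $\mathbf{I}_2^{\top}\mathbf{J}_{\mathcal{B}\mathcal{B}}\mathbf{I}_2=\mathbf{J}_{\mathcal{B}\mathcal{B}}$) together with the explicit quadratic form of $\mathcal{G}$, so that the $\mathbf{Q}$-terms from the Lipschitz majorization and the subproblem optimality exactly cancel and only the $\theta\mathbf{I}$-term survives. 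The remaining steps are standard telescoping for nonconvex randomized descent and, unlike the strong limit-point results elsewhere in the paper, do not require the Kurdyka--Lojasiewicz inequality.
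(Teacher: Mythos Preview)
Your proposal is correct and follows essentially the same route as the paper: you derive the per-iteration descent $f(\mathbf{X}^t)-f(\mathbf{X}^{t+1})\geq\tfrac{\theta}{2}\|\bar{\mathbf{V}}^t-\mathbf{I}_2\|_{\fro}^2$ by combining the majorization bound (via Lemma~\ref{binding:jorth:theorem}(c)) with the feasibility of $\mathbf{I}_2$ in the subproblem, take expectation over the uniform block choice to obtain $\mathcal{E}(\mathbf{X}^t)$, telescope, and multiply by the $\mathcal{O}(N)$ per-iteration gradient cost. The paper's proof in Section~\ref{app:sect:CA:GC:GS-JJOBCD} is identical in structure; the only cosmetic difference is that the paper writes $\mathcal{E}(\mathbf{X}^t)=\mathbb{E}_{\xi^t}[\|\bar{\mathbf{V}}^t-\mathbf{I}_2\|_{\fro}^2]$ as an equality rather than the inequality you (more carefully) state.
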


\begin{theorem} \label{Thm:VR-JJOBCD global coverage}
(Proof in Section \ref{app:sect:CA:GC:VR-JJOBCD}) Let $b = N$, $b^{\prime} = \sqrt{N}$, and ${ \textstyle p = \frac{b^{\prime}}{b+b^{\prime}}}$. \rm{\textbf{VR-J-JOBCD}} finds an \textup{$\epsilon$-BS-point} of Problem (\ref{eq:main}) within ${ \textstyle \mathcal{O}(nN+\tfrac{\Delta_0 \sqrt{N}}{\epsilon})}$ arithmetic operations.
\end{theorem}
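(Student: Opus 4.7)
The plan splits the argument into (a) a stochastic one-step descent, (b) a Lyapunov-function coupling with the variance recursion of Lemma~\ref{lemma: relation betw u_k and U_k-1}, and (c) telescoping plus oracle-cost accounting that exploits the PAGE-style expected cost $pb + (1-p)b' = \Theta(\sqrt{N})$. Throughout I abbreviate $D^t \triangleq \sum_{i=1}^{n/2}\|\bar{\V}_i^t - \I_2\|_{\fro}^2$.

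For (a), I would evaluate the majorization inequality (\ref{VR-J-JOBCD lipschitz}) at $\V_: = \bar{\V}_:^t$ and split $\nabla f(\X^t) = \tilde{\mathbf{G}}^t + (\nabla f(\X^t) - \tilde{\mathbf{G}}^t)$. The $\tilde{\mathbf{G}}^t$-part can be compared against the feasible probe $(\I_2,\ldots,\I_2)$, which makes $\mathcal{T}$ equal to $f(\X^t)$, to extract a quantitative descent $\mathcal{T}(\bar{\V}_:^t; \X^t, \B^t) \leq f(\X^t) - \tfrac{c_0}{2}D^t$, while the noise cross-term is bounded via Claim (d) of Lemma~\ref{lemma: properties of V update of JJOBCD} as $\tfrac{\lambda}{2}D^t + \tfrac{\XXX^2}{2\lambda}u^t$ for any Young parameter $\lambda>0$. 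Combining yields the one-step inequality $f(\X^{t+1}) - f(\X^t) \leq -c_1 D^t + c_2 u^t$ with $c_1, c_2 > 0$ for $\lambda < c_0$.

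For (b)--(c), specialising Lemma~\ref{lemma: relation betw u_k and U_k-1} with $b = N$ (which kills the $\sigma^2$ term), $b' = \sqrt{N}$, $p = \sqrt{N}/(N+\sqrt{N})$ gives $\mathbb{E}[u^{t+1}] \leq (1-p)\mathbb{E}[u^t] + O(1/\sqrt{N})\cdot\mathbb{E}[D^t]$. Defining the Lyapunov function $\Phi^t \triangleq f(\X^t) + \tfrac{c_2}{p}u^t$ to absorb the $u^t$ residue from (a), and choosing $\theta$ large enough that $c_1$ dominates the $O(1/\sqrt{N})$ spillover, produces $\mathbb{E}[\Phi^{t+1}] - \mathbb{E}[\Phi^t] \leq -c\,\mathbb{E}[D^t]$ for an absolute $c > 0$. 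Telescoping from $t=0$ to $T-1$, with $\Phi^0 - \mathbb{E}[\Phi^T] \leq \Delta_0$ (since $b = N$ forces $u^0 = 0$), gives $\tfrac{1}{T}\sum_{t=0}^{T-1}\mathbb{E}[D^t] \leq O(\Delta_0/T)$. Because $\B^t$ is drawn uniformly from $\Upsilon$, $\mathbb{E}_{\B^t}[D^t]$ coincides with $\mathcal{E}(\X^t)$ by the very definition of the $\epsilon$-BS-point, so an $\epsilon$-BS-point is attained at a uniformly random stopping time after $T = O(\Delta_0/\epsilon)$ iterations. The expected per-iteration oracle cost from (\ref{eq:VR's gradient update}) is $\Theta(\sqrt{N})$, yielding $O(\Delta_0\sqrt{N}/\epsilon)$ gradient evaluations during iteration; adding the $O(nN)$ up-front cost of the initial full-batch gradient $\tilde{\mathbf{G}}^0$ together with the Hyperbolic CS warm-start delivers the claimed $\mathcal{O}(nN + \Delta_0 \sqrt{N}/\epsilon)$.

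\textbf{Main obstacle.} The delicate step is the quantitative descent $\mathcal{T}(\bar{\V}_:^t) \leq f(\X^t) - \tfrac{c_0}{2}D^t$ in (a): since $\JJ_{\B_{(i)}}$ is non-convex, the textbook ``strong convexity on a convex segment'' trick fails because the segment $\{(1-\beta)\I_2 + \beta \bar{\V}_i^t\}$ generally leaves the feasible set, so a bare comparison with $(\I_2,\ldots,\I_2)$ only gives $\mathcal{T}(\bar{\V}_:^t) \leq f(\X^t)$. I would resolve this by pulling back along the one-parameter family $\V_i = \V_i(\mu)$ supplied by the $2\times 2$ Hyperbolic CS parametrization of Proposition~\ref{Hyperbolic CS Decomposition}, which furnishes a smooth curve inside $\JJ_{\B_{(i)}}$ joining $\I_2 = \V_i(0)$ to $\bar{\V}_i^t = \V_i(\bar{\mu})$; along this curve the pulled-back quadratic $\mathcal{T}$ is a one-dimensional smooth function whose breakpoint-search optimality from Lemma~\ref{lemma equ subproblem (1) of JOBCD}, combined with the boundedness of the range of $\bar{\mu}$ implied by Assumption~\ref{ass:XV}, delivers the required quantitative descent by a standard one-variable second-order argument. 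Once this lower bound is secured, the remaining balancing of $\theta$, $\lambda$, and the Lyapunov weight is routine algebra.
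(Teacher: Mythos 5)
Your overall skeleton is the paper's: a one-step descent extracted from the subproblem optimality, the noise cross-term handled by Claim (\bfit{d}) of Lemma \ref{lemma: properties of V update of JJOBCD}, a Lyapunov-style coupling with the variance recursion of Lemma \ref{lemma: relation betw u_k and U_k-1} (whose $\sigma^2$ term vanishes because $b=N$, and $u^0=0$), telescoping, the identification of $\mathbb{E}[D^t]$ with $\mathcal{E}(\X^t)$, and the cost accounting $pb+(1-p)b'=\Theta(\sqrt{N})$ per block per iteration plus the $\mathcal{O}(nN)$ initialization. The genuine problem is step (a) and the ``main obstacle'' you construct around it. The inequality you aim for, $\mathcal{T}(\bar{\V}_:^t;\X^t,\B^t)\le f(\X^t)-\tfrac{c_0}{2}D^t$ with $D^t\triangleq\sum_{i=1}^{n/2}\|\bar{\V}_i^t-\I_2\|_{\fro}^2$, is neither needed nor implied by optimality, and with a uniform $c_0$ it can fail: the global minimum of the constrained quadratic $\sum_i\la\V_i-\I_2,[\tilde{\mathbf{G}}^t(\X^t)\trans]_{\B_{(i)}\B_{(i)}}\ra+\tfrac{\varsigma+\theta}{2}\sum_i\|\V_i-\I_2\|_{\fro}^2$ can be arbitrarily close to $0$ while the minimizer sits far from $\I_2$, so $\mathcal{T}(\bar{\V}_:^t)$ need not lie $\Omega(D^t)$ below $f(\X^t)$. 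Your proposed repair via the CS-curve parametrization would at best give local second-order information at $\bar{\mu}$ and would not deliver a global margin proportional to $\|\bar{\V}_i-\I_2\|_{\fro}^2$ with uniform constants.

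More importantly, no such detour is needed, because the margin does not come from convexity or curvature at all. The bare comparison $\mathcal{T}(\bar{\V}_:^t;\X^t,\B^t)\le\mathcal{T}(\I_2,\dots,\I_2;\X^t,\B^t)=f(\X^t)$, once the definition of $\mathcal{T}$ is substituted, rearranges into the linear-term bound $\sum_i\la\bar{\V}_i^t-\I_2,[\tilde{\mathbf{G}}^t(\X^t)\trans]_{\B_{(i)}\B_{(i)}}\ra\le-\tfrac{\varsigma+\theta}{2}D^t$; plugging this into the majorization (\ref{VR-J-JOBCD lipschitz}), which only charges $\tfrac{\varsigma}{2}D^t$ because there $\Q=\varsigma\I$, leaves the surplus $\tfrac{\theta}{2}D^t$ as the sufficient decrease, after which the $(\nabla f(\X^t)-\tilde{\mathbf{G}}^t)$ cross-term is absorbed exactly as you propose. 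Only feasibility of $\I_2$ and global optimality of $\bar{\V}_i^t$ over the (non-convex) set $\JJ_{\B_{(i)}}$ are used; the quantitative descent is produced by the deliberate gap $\theta$ between the subproblem weight $\varsigma+\theta$ in $\ddot{\Q}$ and the constant $\varsigma$ in the upper bound on $f$. With step (a) repaired this way, your (b)--(c) coincide with the paper's argument, which forms $\varpi=\tfrac{\theta-\XXX^2}{2}-\tfrac{L_f^2\XXX^2(1-p)}{2pb'}$ by adding $\tfrac{1}{2p}$ times Lemma \ref{lemma: relation betw u_k and U_k-1}, needs $\theta$ large enough that $\varpi>0$, and counts $\mathcal{O}(n(pb+(1-p)b'))$ operations per iteration plus $\mathcal{O}(bn)$ for $\tilde{\mathbf{G}}^0$ to reach $\mathcal{O}(nN+\tfrac{\Delta_0\sqrt{N}}{\epsilon})$.
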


\textbf{Remark.} Theorems \ref{Thm:GS-JOBCD global coverage} and \ref{Thm:VR-JJOBCD global coverage} demonstrate that the arithmetic operation complexity of \textbf{GS-JOBCD} is linearly dependent on ${\textstyle N}$, while \textbf{VR-J-JOBCD} is linearly dependent on ${\textstyle \sqrt{N}}$. Therefore, \textbf{VR-J-JOBCD} reduces the iteration complexity significantly.

\subsection{Strong Convergence under KL Assumption}
We prove algorithms achieve strong convergence based on a non-convex analysis tool called Kurdyka-Łojasiewicz inequality\cite{attouch2010proximal}.

We impose the following assumption on Problem (\ref{eq:main}).

\begin{assumption} \label{KL property} (\rm{Kurdyka-Łojasiewicz Property}). Assume that ${ \textstyle f^{\circ}(\mathbf{X})=f(\mathbf{X})+\mathcal{I}_{\JJ}(\mathbf{X})}$ is a KL function. For all $\mathbf{X} \in \operatorname{dom} f^{\circ}$, there exists $\sigma \in[0,1), \eta \in(0,+\infty]$ a neighborhood $\Upsilon$ of $\mathbf{X}$ and a concave and continuous function $\varphi(t)=c t^{1-\sigma}, c>0$, $t \in[0, \eta)$ such that for all ${ \textstyle \mathbf{X}^{\prime} \in \Upsilon}$ and satisfies $f^{\circ} (\mathbf{X}^{\prime} ) \in$ $ (f^{\circ}(\mathbf{X}), f^{\circ}(\mathbf{X})+\eta )$, the following holds: $\operatorname{dist} (\mathbf{0}, \nabla f^{\circ} (\mathbf{X}^{\prime} ) ) \varphi^{\prime} (f^{\circ} (\mathbf{X}^{\prime} )-f^{\circ}(\mathbf{X}) ) \geq 1 .$
\end{assumption}

We establish strong limit-point convergence for \textbf{VR-J-JOBCD} and \textbf{GS-JOBCD}.

\begin{theorem}
(Proof in Section \ref{app:sect:CA:SC:GS-JJOBCD}, a Finite Length Property). The sequence ${ \textstyle  \{\mathbf{X}^t \}_{t=0}^{\infty}}$ of \textup{\textbf{GS-JOBCD}} has finite length property that: $\forall t, { \textstyle \sum_{i=1}^t \mathbb{E}_{\xi^{t}}[\|\mathbf{X}^{t+1}-\mathbf{X}^t \|_{\fro} ]} \leq \mathcal{O} (\varphi (\Delta_1 ) ) <+\infty$, where $\varphi(\cdot)$ is the desingularization function defined in Proposition \ref{KL property}.
\label{Thm:Strong CON for GS-JOBCD} \end{theorem}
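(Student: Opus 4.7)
The plan is to follow the standard three-step Kurdyka-Łojasiewicz recipe adapted to the randomized block-coordinate setting. I will prove a sufficient-decrease inequality, then a relative-error (safeguard) bound between $\operatorname{dist}(\mathbf{0},\nabla f^{\circ}(\mathbf{X}^{t+1}))$ and $\|\mathbf{X}^{t+1}-\mathbf{X}^t\|_{\fro}$, and finally apply the KL inequality, together with concavity of $\varphi$, to produce a telescoping bound on the path length.

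First, I would derive the sufficient-decrease inequality. Because $\VVV^t$ is the global minimizer of the majorization $\GG(\V;\X^t,\B^t)$ on $\JJ_{\B^t}$ and $\I_2$ is feasible with $\GG(\I_2;\X^t,\B^t)=f(\X^t)$, the $\theta\I$ regularizer in the surrogate yields
\begin{equation*}
f(\X^{t+1})\leq\GG(\VVV^t;\X^t,\B^t)\leq f(\X^t)-\tfrac{\theta}{2}\|\VVV^t-\I\|_{\fro}^2.
\end{equation*}
Combining with Claim (\textit{b}) of Lemma \ref{binding:jorth:theorem} and the boundedness $\|\X^t\|_{\fro}\leq\XXX$ from Assumption \ref{ass:XV} gives $f(\X^t)-f(\X^{t+1})\geq c_1\|\X^{t+1}-\X^t\|_{\fro}^2$ for an explicit constant $c_1>0$.

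Second, I would establish the relative-error bound. Starting from the first-order optimality of $\VVV^t$ for the subproblem and invoking the symmetric characterization of critical points in Lemma \ref{lemma foptim for Jorth}(\textit{b}) together with the $L_f$-smoothness of $\nabla f$, I would show, after averaging over $\B^t\sim\mathrm{Unif}(\Omega)$, that $\mathbb{E}_{\B^t}[\mathcal{E}(\X^{t+1})]\leq c_2\|\X^{t+1}-\X^t\|_{\fro}^2$, so that $\operatorname{dist}(\mathbf{0},\nabla f^{\circ}(\X^{t+1}))\lesssim \|\X^{t+1}-\X^t\|_{\fro}$ in an averaged sense. I then apply Assumption \ref{KL property} to $f^{\circ}$ at $\X^{t+1}$ and use the concavity identity $\varphi(a)-\varphi(b)\geq\varphi'(a)(a-b)$ together with Young's inequality to deduce
\begin{equation*}
\|\X^{t+1}-\X^t\|_{\fro}\leq \alpha\bigl[\varphi(f(\X^t)-f(\bar\X))-\varphi(f(\X^{t+1})-f(\bar\X))\bigr]+\tfrac{1}{2}\|\X^t-\X^{t-1}\|_{\fro}.
\end{equation*}
Taking expectation over $\xi^t$, telescoping $i=1,\ldots,t$, and using $\varphi\geq 0$ yields the desired $\mathcal{O}(\varphi(\Delta_1))$ bound on the expected path length and hence summability.

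The main obstacle will be Step 2. At iteration $t+1$ only two rows of $\X$ have moved, so the subproblem optimality only certifies that the $\B\B$-block of the projected gradient $\nabla f(\X^{t+1})-\J\X^{t+1}(\nabla f(\X^{t+1}))^{\top}\X^{t+1}\J$ is small. Upgrading this block-wise control to control of the full KL residual requires averaging over the uniform choice of $\B$ and carefully coupling the deterministic KL inequality (which is pathwise) with the expectation over the $\sigma$-algebra generated by $\xi^t$. I expect to need an auxiliary monotonicity argument to pull the $\varphi'$ factor out of the expectation, analogous to the treatment in block-coordinate KL analyses for orthogonality-constrained problems, and this coupling between randomness and the pathwise KL geometry is the step where the cleanest bookkeeping is required.
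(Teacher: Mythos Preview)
Your proposal is correct and follows essentially the same three-step KL route as the paper. The obstacle you flag is handled there by working in the $\V$-variable $\|\VVV^t-\I_2\|_{\fro}$ throughout (converting to $\|\X^{t+1}-\X^t\|_{\fro}$ only at the end) and by factoring the relative-error bound through the Riemannian gradient of the $2\times2$ surrogate, $\nabla_{\JJ}\GG(\I_2;\cdot,\B)$, rather than $\mathcal{E}(\cdot)$: a deterministic combinatorial identity for $\sum_{\B\in\Omega}\|\mathbf{W}_{\B\B}\|_{\fro}^2$ controls $\|\nabla_{\JJ}f(\X^t)\|_{\fro}$ by the block average, subproblem optimality plus $L_f$-smoothness then controls the block gradient at $\X^{t+1}$ by $\|\VVV^t-\I_2\|_{\fro}$, and KL is applied pathwise inside $\mathbb{E}_{\xi^t}[\cdot]$ before Young's inequality and telescoping, so nothing needs to be pulled outside the expectation.
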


\begin{theorem} \label{Thm: Strong CON for VR-JJOBCD} (Proof in Section \ref{app:sect:CA:SC:VR-JJOBCD}, a Finite Length Property). Choosing $b = N$, ${ \textstyle b^{\prime} = \sqrt{N}}$ and ${ \textstyle p = \frac{b^{\prime}}{b+b^{\prime}}}$, then the sequence $ \{\mathbf{X}^t \}_{t=0}^{\infty}$ of \textup{\textbf{VR-J-JOBCD}} has finite length property that: $\forall t, \sum_{i=1}^t \mathbb{E}_{\iota^{t}}[\|\mathbf{X}^{t+1}-\mathbf{X}^t\|_{\fro}] \leq  \mathcal{O}({ \textstyle  \frac{\varphi(\Delta_1)}{N^{1/4}}} ) <+\infty$, where $\varphi(\cdot)$ is the desingularization function defined in Assumption \ref{KL property}.
\end{theorem}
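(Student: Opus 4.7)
The plan is to follow the standard Attouch-Bolte-Redont-Soubeyran template for KL-based finite length, but executed in expectation to accommodate the stochastic variance-reduced gradient. Three ingredients must be assembled before the KL inequality is invoked: (i) a sufficient decrease inequality in expectation, (ii) a relative error bound on an appropriate subgradient, and (iii) a Lyapunov function that incorporates the gradient-estimator error $u^t$ so that these two inequalities can coexist.

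First I would derive a stochastic sufficient decrease. Using Lemma \ref{lemma: properties of V update of JJOBCD}(c)-(d) together with the fact that $\bar{\mathbf{V}}_:^t$ is a global minimizer of the majorization $\mathcal{T}(\cdot;\mathbf{X}^t,\mathcal{B}^t)$ evaluated against $\mathbf{V}_i=\mathbf{I}_2$, and passing through Inequality~(\ref{VR-J-JOBCD lipschitz}), I would obtain
\begin{equation*}
\mathbb{E}_{\iota^{t}}[f(\mathbf{X}^{t+1})] \leq \mathbb{E}_{\iota^{t-1}}[f(\mathbf{X}^t)] - c_1\,\mathbb{E}_{\iota^{t}}\!\Big[\textstyle\sum_{i=1}^{n/2}\|\bar{\mathbf{V}}_i^t-\mathbf{I}_2\|_{\fro}^2\Big] + c_2\,\mathbb{E}_{\iota^{t-1}}[u^t].
\end{equation*}
Then I would plug in Lemma \ref{lemma: relation betw u_k and U_k-1} with the parameters $b=N$, $b'=\sqrt{N}$, $p=b'/(b+b')$: the contraction $(1-p)\approx 1-\tfrac{1}{\sqrt{N}+1}$ together with the coupling coefficient $L_f^2\XXX^2(1-p)/b'=\mathcal{O}(1/\sqrt{N})$ lets me build a Lyapunov function $\Phi^t = f(\mathbf{X}^t)+\gamma\,u^t$ for which the variance is absorbed, yielding a clean descent $\mathbb{E}[\Phi^{t+1}]\leq\mathbb{E}[\Phi^t]-c_3\,\mathbb{E}\|\mathbf{X}^{t+1}-\mathbf{X}^t\|_{\fro}^2$ with $c_3$ scaling like $1/\sqrt{N}$ under the chosen batch sizes.

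Next I would establish the relative error condition. The first-order optimality of the subproblem (\ref{eq:VR:J:opt:one:dim}) combined with the J-orthogonal KKT characterization of Lemma \ref{lemma foptim for Jorth} gives an identity expressing $\nabla_{\JJ} f(\mathbf{X}^{t+1})$ in terms of (a) the difference $\nabla f(\mathbf{X}^{t+1})-\tilde{\mathbf{G}}^t$, which by Lipschitz continuity and the triangle inequality is bounded by $L_f\|\mathbf{X}^{t+1}-\mathbf{X}^t\|_{\fro}+\sqrt{u^t}$, and (b) a residual of the form $[(\tilde{\mathbf{G}}^t)(\mathbf{X}^t)^\top-(\tilde{\mathbf{G}}^t)(\mathbf{X}^{t+1})^\top]_{\mathcal{B}_{(i)}\mathcal{B}_{(i)}}$, which under Assumptions \ref{ass:XV}-\ref{ass:Bound:G} is at most $\XXX\GGG\|\mathbf{X}^{t+1}-\mathbf{X}^t\|_{\fro}$. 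Taking square roots and expectations gives the desired bound $\mathrm{dist}(\mathbf{0},\partial f^\circ(\mathbf{X}^{t+1})) \leq \kappa\,\big(\mathbb{E}\|\mathbf{X}^{t+1}-\mathbf{X}^t\|_{\fro}^2+\mathbb{E}[u^t]\big)^{1/2}$, again consistent with the Lyapunov function $\Phi^t$.

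Finally I would invoke the KL inequality in Assumption~\ref{KL property} on $\Phi^t-\Phi^\star$. Concavity of the desingularizer $\varphi$ combined with the descent and relative-error estimates yields, after an AM-GM step, the familiar one-step inequality
\begin{equation*}
2\,\mathbb{E}\|\mathbf{X}^{t+1}-\mathbf{X}^t\|_{\fro} \leq \mathbb{E}\|\mathbf{X}^t-\mathbf{X}^{t-1}\|_{\fro} + \tfrac{C}{\sqrt{c_3}}\big[\varphi(\Phi^t-\Phi^\star)-\varphi(\Phi^{t+1}-\Phi^\star)\big],
\end{equation*}
which telescopes to $\sum_{i=1}^{t}\mathbb{E}\|\mathbf{X}^{i+1}-\mathbf{X}^i\|_{\fro}\leq \mathcal{O}(\varphi(\Delta_1)/\sqrt{c_3})=\mathcal{O}(\varphi(\Delta_1)/N^{1/4})$, since $c_3$ scales like $1/\sqrt{N}$. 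The main obstacle I expect is step three: the KL inequality is pointwise and deterministic, while $\tilde{\mathbf{G}}^t$ is stochastic, so conditional expectations must be taken in the right order and the variance $u^t$ must be folded into the Lyapunov function \emph{before} KL is applied; otherwise the summability and the $N^{-1/4}$ rate are not preserved. Additionally, showing that $f^\circ$ is a KL function on the J-orthogonal manifold (needed so that Assumption~\ref{KL property} is non-vacuous) requires invoking semi-algebraicity of the J-orthogonal constraint set, which I would address by noting that $\JJ$ is defined by polynomial equalities.
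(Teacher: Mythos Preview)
Your template is standard, but two steps fail as written.

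\textbf{KL on the Lyapunov.} You correctly flag that applying KL in a stochastic setting is the delicate step, but your proposed fix---fold $u^t$ into a Lyapunov $\Phi^t=f(\mathbf{X}^t)+\gamma u^t$ and then invoke KL on $\Phi^t-\Phi^\star$---does not work: Assumption~\ref{KL property} supplies KL for $f^\circ(\mathbf{X})=f(\mathbf{X})+\mathcal{I}_{\JJ}(\mathbf{X})$, a function of $\mathbf{X}$ alone, whereas $\Phi^t$ depends on the estimator state $\tilde{\mathbf{G}}^t$ through $u^t=\|\tilde{\mathbf{G}}^t-\nabla f(\mathbf{X}^t)\|_{\fro}^2$ and is not a function on $\JJ$ at all. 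The paper takes the opposite route: it applies KL directly to $f^\circ$ (so the desingularizer acts on $f(\mathbf{X}^t)-f(\bar{\mathbf{X}})$, never on $\Phi^t$), and then carries the stray $\sqrt{u^t}$ terms through a \emph{separate} square-root recursion derived from Lemma~\ref{lemma: relation betw u_k and U_k-1}, telescoping $\sqrt{\mathbb{E}[u^t]}$ alongside $\sqrt{\aleph^t}:=(\sum_i\|\bar{\mathbf{V}}_i^t-\mathbf{I}_2\|_{\fro}^2)^{1/2}$. No Lyapunov that mixes $f$ and $u^t$ is formed before KL is invoked.

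\textbf{The $N^{-1/4}$ rate.} Your scaling argument points the wrong way. With $b=N$, $b'=\sqrt{N}$, $p=b'/(b+b')$ one has $(1-p)/(pb')=b/(b')^2=1$ exactly, so the descent constant you call $c_3$ equals $(\theta-\XXX^2-L_f^2\XXX^2)/2$, a constant independent of $N$; it is not $1/\sqrt{N}$. Even if it were, $1/\sqrt{c_3}$ would scale like $N^{1/4}$, giving a path-length bound that \emph{grows} with $N$, the opposite of the theorem. In the paper the $N^{-1/4}$ enters through a completely different mechanism: the relative-error chain (Lemmas~\ref{Riemannian gradient Lower Bound for the Iterates Gap} and~\ref{norm bet F and J}) introduces $p$-dependent coefficients on the $\sqrt{u^t}$ terms, and after an AM--GM split a composite constant $\mathfrak{A}$ multiplies $\varphi(\Delta_t)$; the auxiliary parameter $\theta'$ is then chosen as a function of $p$ and $b'$ so that $\mathfrak{A}=\mathcal{O}(N^{-1/4})$. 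Your outline, which locates the $N$-dependence in the descent constant rather than in the relative-error constant, cannot reproduce this.
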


\begin{theorem} \label{Thm:Strong CON Rate for GS-JOBCD} (Proof in Section \ref{app:sect:CA:SC:Rate:GS-JOBCD}). 
We define $d_t \triangleq \sum_{i=t}^{\infty} \|\V^i - \I\|_\fro$. There exists $t^{\prime}$ such that for all $t \geq t^{\prime}$, \textbf{\textup{GS-JOBCD}} have:\\
(\bfit{a}) If $\sigma \in (\frac{1}{4},\frac{1}{2}]$, then we have $\|\X^t-\X^\infty\|_\fro \leq \mathcal{O}(\tfrac{1}{\operatorname{exp}(t)})$.\\
(\bfit{b}) If $\sigma \in (\frac{1}{2},1)$, then we have $\|\X^t-\X^\infty\|_\fro \leq \mathcal{O}(\frac{1}{t^{\frac{1}{\tau}}})$, where $\tau = \frac{2\sigma-1}{1-\sigma} \in (0,\infty)$.
\end{theorem}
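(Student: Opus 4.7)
The plan is to combine the finite-length property already established in Theorem \ref{Thm:Strong CON for GS-JOBCD} with the KL inequality of Assumption \ref{KL property}, in the style of the Attouch-Bolte rate analysis. First I would reduce the claim to a rate on $d_t$. Since the update rule gives $\X^{t+1} = \X^t + \mathbf{U}_{\B}(\V^t-\I)\mathbf{U}_{\B}^\top \X^t$, Lemma \ref{binding:jorth:theorem}(b) yields $\|\X^{t+1}-\X^t\|_\fro \leq \XXX\|\V^t-\I\|_\fro$, and a telescoping bound using the existence of $\X^\infty$ (guaranteed by the finite-length property) gives
\begin{align*}
\|\X^t - \X^\infty\|_\fro \leq \sum_{i=t}^\infty \|\X^{i+1}-\X^i\|_\fro \leq \XXX\, d_t.
\end{align*}
Thus it is enough to bound $d_t$ by $\mathcal{O}(\exp(-t))$ or $\mathcal{O}(t^{-1/\tau})$ respectively.

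Next I would derive the two standard one-step estimates. From the majorization surrogate $\GG(\V;\X^t,\B^t)$ in (\ref{function:MM:GS}), the $\tfrac{1}{2}\|\V-\I\|^2_{\mathbf{Q}+\theta\I}$ term enforces a sufficient-descent bound $f(\X^t)-f(\X^{t+1}) \geq \tfrac{\theta}{2}\|\V^t-\I\|_\fro^2$. From the first-order optimality of the subproblem (\ref{eq:J:opt:multi:dim}) combined with Lemma \ref{lemma foptim for Jorth}, one obtains a ``relative error'' bound $\operatorname{dist}(\zero, \partial f^\circ(\X^{t+1})) \leq C_1\|\V^t-\I\|_\fro$ with $C_1$ depending on $\XXX,\GGG,L_f,\theta$ via Assumptions \ref{ass:XV} and \ref{ass:Bound:G}. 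Setting $r_t \triangleq f(\X^t)-f(\X^\infty)$ and substituting both into the KL inequality $c(1-\sigma) r_{t+1}^{-\sigma}\,\operatorname{dist}(\zero, \partial f^\circ(\X^{t+1})) \geq 1$ produces the pair of estimates $r_{t+1}^{2\sigma} \leq C_2(r_t - r_{t+1})$ and $r_{t+1}^{1-\sigma} \leq C_3\|\V^t-\I\|_\fro$.

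From these, the standard real-sequence lemmas from \cite{attouch2010proximal} deliver the rates. When $\sigma \in (1/2, 1)$, $r_{t+1}^{2\sigma} \leq C_2(r_t-r_{t+1})$ with $2\sigma > 1$ gives $r_t = \mathcal{O}(t^{-1/(2\sigma-1)})$; combining with $\|\V^t-\I\|_\fro \geq C_3^{-1}r_{t+1}^{1-\sigma}$ and summing yields $d_t = \mathcal{O}(t^{-(1-\sigma)/(2\sigma-1)}) = \mathcal{O}(t^{-1/\tau})$. When $\sigma \in (1/4, 1/2]$, $2\sigma \leq 1$ and $r_t$ decays geometrically, $r_t \leq \rho^t r_0$ for some $\rho \in (0,1)$; the tail $\sum_{i \geq t} r_{i+1}^{1-\sigma}$ then geometrically sums to $\mathcal{O}(\exp(-t))$, and multiplying through by $\XXX$ delivers both parts of the theorem.

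The main obstacle is the threshold $\sigma > 1/4$ in part (a), which is stricter than the classical $\sigma \in (0,1/2]$ regime for linear convergence in deterministic KL analyses. The extra room seems to be needed because \textbf{GS-JOBCD} picks $\B^t$ uniformly at random, so the descent and relative-error inequalities hold only in conditional expectation; squaring the KL inequality to match the quadratic descent then produces a recursion whose contraction constant stays strictly below one only when $\sigma > 1/4$. Carefully tracking the constants $\{\XXX, \GGG, L_f, \theta, C_n^2\}$ through this coupling, and ensuring they remain finite under Assumptions \ref{ass:XV}--\ref{ass:Bound:G}, is the technical crux; the remainder reduces to applying the real-sequence lemmas from \cite{attouch2010proximal}.
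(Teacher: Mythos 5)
There is a genuine gap in the step that converts your function-value analysis into a rate on $d_t$. Your two ingredients are a sufficient-descent bound $r_t-r_{t+1}\geq \tfrac{\theta}{2}\|\V^t-\I\|_\fro^2$ and a relative-error bound $\operatorname{dist}(\zero,\partial f^{\circ}(\X^{t+1}))\leq C_1\|\V^t-\I\|_\fro$, which via KL give $r_{t+1}^{2\sigma}\leq C_2(r_t-r_{t+1})$ and $r_{t+1}^{1-\sigma}\leq C_3\|\V^t-\I\|_\fro$. The second inequality is a \emph{lower} bound on $\|\V^t-\I\|_\fro$, so "combining with $\|\V^t-\I\|_\fro\geq C_3^{-1}r_{t+1}^{1-\sigma}$ and summing" cannot upper-bound $d_t=\sum_{i\geq t}\|\V^i-\I\|_\fro$; the inequality points the wrong way. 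The only upper bound on the step size you have is $\|\V^i-\I\|_\fro\leq\sqrt{\tfrac{2}{\theta}(r_i-r_{i+1})}$, and a one-sided rate $r_t=\mathcal{O}(t^{-1/(2\sigma-1)})$ does not let you control $r_i-r_{i+1}$ term by term (the decrements could concentrate), so the sublinear case (b) does not go through as sketched. Case (a) is repairable without the relative-error bound, since geometric decay of $r_t$ plus $\|\V^t-\I\|_\fro\leq\sqrt{2r_t/\theta}$ gives a geometric tail, but as written both cases rely on the flawed summation. Your heuristic explanation of the $\sigma>\tfrac14$ threshold (squaring the KL inequality because of the random block choice) is also not what drives the restriction in the paper.

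The paper avoids this difficulty by never passing through a rate on $r_t$: it works directly with the tail length $d_t$. Concretely, it reuses the finite-length estimate from Theorem \ref{Thm:Strong CON for GS-JOBCD} (inequality (\ref{temp4})), which bounds the tail sum by $\tfrac{4\gamma\phi}{\theta}\varphi(\Delta_t)+\|\bar{\V}^{t-1}-\I_2\|_\fro$, then chains the KL inequality with Lemma \ref{GS norm bet F and J} and Lemma \ref{GS Riemannian gradient Lower Bound for the Iterates Gap} to replace $\varphi(\Delta_t)$ by a power of the last step, arriving at the recursion $d_{t+1}\leq \mathcal{K}(d_{t-1}-d_t)^{\frac{1-\sigma}{\sigma}}+(d_{t-1}-d_t)$, where $d_{t-1}-d_t=\|\bar{\V}^{t-1}-\I_2\|_\fro$. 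For $\sigma\in(\tfrac14,\tfrac12]$ the exponent exceeds one, so once the steps are small the recursion gives $d_t\leq(\mathcal{K}+1)(d_{t-1}-d_t)$ and hence geometric decay; for $\sigma\in(\tfrac12,1)$ the standard integral-comparison argument with $h(s)=s^{-(\tau+1)}$ yields $d_T^{-\tau}-d_0^{-\tau}\geq cT$ and thus $d_T=\mathcal{O}(T^{-1/\tau})$. If you want to salvage your route, you would need exactly this kind of recursion on $d_t$ (or the classical Attouch--Bolte step inequality $2\|\V^t-\I\|_\fro\leq\|\V^{t-1}-\I\|_\fro+C(\varphi(r_t)-\varphi(r_{t+1}))$, which telescopes), rather than summing a consequence of the relative-error bound.
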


\begin{theorem} \label{Thm:Strong CON Rate for VR-J-JOBCD} (Proof in Section \ref{app:sect:CA:SC:Rate:VR-J-JOBCD}). 
We define $d_t \triangleq \sum_{i=t}^{\infty} { \textstyle \sqrt{\sum_{j=1}^{n/2}\|\bar{\mathbf{V}}_j^i-\mathbf{I}_2\|_{\fro}^2}}$. There exists $t^{\prime}$ such that for all $t \geq t^{\prime}$, \textbf{\textup{VR-J-JOBCD}} have:\\
(\bfit{a}) If $\sigma \in (\frac{1}{4},\frac{1}{2}]$, then we have $\|\X^t-\X^\infty\|_\fro \leq \mathcal{O}(\frac{1}{\operatorname{exp}(t N^{1/4})})$.\\
(\bfit{b}) If $\sigma \in (\frac{1}{2},1)$, then we have $\|\X^t-\X^\infty\|_\fro \leq \mathcal{O}(\frac{1}{t^{(1+\frac{1}{\tau})} N^{(1/4+\frac{1}{4\tau})}})$, where $\tau = \frac{2\sigma-1}{1-\sigma} \in (0,\infty)$.
\end{theorem}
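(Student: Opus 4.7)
\textbf{Proof proposal for Theorem \ref{Thm:Strong CON Rate for VR-J-JOBCD}.}

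The plan is to adapt the classical Attouch--Bolte--Svaiter rate machinery for KL functions to the variance-reduced stochastic recursion of \textbf{VR-J-JOBCD}, building on the finite-length property already established in Theorem \ref{Thm: Strong CON for VR-JJOBCD}. First I would recall that the finite-length bound $\sum_{i=1}^{\infty}\mathbb{E}_{\iota^{i}}[\|\X^{i+1}-\X^i\|_\fro]<\infty$ guarantees the existence of a (random) limit $\X^\infty\in\JJ$, and by the triangle inequality
\begin{equation*}
\mathbb{E}\|\X^t-\X^\infty\|_\fro \leq \sum_{i=t}^{\infty}\mathbb{E}\|\X^{i+1}-\X^i\|_\fro \leq \XXX\cdot d_t,
\end{equation*}
using Claim (\bfit{b}) of Lemma \ref{lemma: properties of V update of JJOBCD}. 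Hence the whole task reduces to controlling the tail sum $d_t$.

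Next I would derive the two structural ingredients that feed the KL recursion. The \emph{sufficient-decrease} ingredient follows from inequality (\ref{VR-J-JOBCD lipschitz}) corrected with the variance-reduced estimator $\tilde{\mathbf{G}}^t$: taking conditional expectation and combining with the variance bound of Lemma \ref{lemma: relation betw u_k and U_k-1} and the default choice $b=N$, $b'=\sqrt{N}$, $p=\tfrac{b'}{b+b'}\sim 1/\sqrt{N}$ yields
\begin{equation*}
\mathbb{E}[\Delta_{t+1}] \leq \mathbb{E}[\Delta_t] - c_1\sqrt{N}\,\mathbb{E}\bigl[\textstyle\sum_{j=1}^{n/2}\|\bar{\V}_j^t-\I_2\|_\fro^2\bigr],
\end{equation*}
where the $\sqrt{N}$-factor is produced by the variance-reduction probability and is the source of the $N^{1/4}$ correction in the statement. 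The \emph{subgradient-bound} ingredient comes from writing the first-order optimality of the subproblem (\ref{eq:VR:J:opt:one:dim}) using Lemma \ref{lemma foptim for Jorth}, which gives
\begin{equation*}
\operatorname{dist}(\zero,\partial f^{\circ}(\X^{t+1})) \leq c_2\,\sqrt{\textstyle\sum_{j=1}^{n/2}\|\bar{\V}_j^t-\I_2\|_\fro^2} + c_3\,u^t,
\end{equation*}
after absorbing the gradient estimator error $u^t=\|\tilde{\G}^t-\nabla f(\X^t)\|_\fro^2$ through Assumption \ref{ass:bound:var} and Lemma \ref{lemma: relation betw u_k and U_k-1}.

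Applying the KL inequality in Assumption \ref{KL property} at the limit $\X^\infty$ with exponent $\sigma$ gives $\Delta_t^\sigma \leq C\,\operatorname{dist}(\zero,\partial f^{\circ}(\X^t))$, and feeding this into the descent inequality produces, in expectation, the canonical recursion
\begin{equation*}
\Delta_t^{2\sigma} \leq C'\,\tfrac{1}{\sqrt{N}}\,(\Delta_t - \Delta_{t+1}),
\end{equation*}
i.e.\ the usual Attouch--Bolte recursion but with a $1/\sqrt{N}$ prefactor. A direct integration yields: if $\sigma\in(\tfrac14,\tfrac12]$ then $2\sigma\leq 1$ and the recursion is contractive, giving $\Delta_t\leq\mathcal{O}(\exp(-t\sqrt{N}))$; if $\sigma\in(\tfrac12,1)$ then $2\sigma>1$ and the standard sequence lemma of Attouch--Bolte gives $\Delta_t\leq\mathcal{O}(t^{-1/\tau}N^{-1/(2\tau)})$ with $\tau=(2\sigma-1)/(1-\sigma)$. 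Finally, converting the $\Delta_t$-rate into a $d_t$-rate via the sufficient-decrease inequality (summation and Cauchy--Schwarz, which takes another square root, transforming $\sqrt{N}$ into $N^{1/4}$ and $1/(2\tau)$ into $1/(4\tau)$) delivers (\bfit{a}) and (\bfit{b}) after multiplying by $\XXX$.

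The main obstacle I expect is the bookkeeping of the stochastic error $u^t$ inside the KL recursion: unlike the deterministic case, the subgradient bound contains the variance term, which is itself controlled only in expectation through Lemma \ref{lemma: relation betw u_k and U_k-1}. Getting the variance-reduction contraction to cleanly combine with the KL decay (so that $u^t$ does not inflate the exponent) is the delicate step, and is where the choice $p\sim 1/\sqrt{N}$ must be tracked carefully to produce exactly the $N^{1/4}$ and $N^{1/(4\tau)}$ factors in the final bound.
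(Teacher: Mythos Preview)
Your high-level plan is the right Attouch--Bolte template, but there is a concrete error in where you locate the $N$-dependence, and this throws off everything downstream.

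You claim the sufficient-decrease inequality carries a factor $c_1\sqrt{N}$ in front of $\sum_j\|\bar{\V}_j^t-\I_2\|_\fro^2$. It does not. With the default choice $b=N$, $b'=\sqrt{N}$, $p=b'/(b+b')$, the coefficient $\varpi=\tfrac{\theta-\XXX^2}{2}-\tfrac{L_f^2\XXX^2(1-p)}{2pb'}$ in the descent step satisfies $(1-p)/(pb')\approx 1$, so $\varpi$ is an $N$-independent constant (cf.\ the computation at the end of the proof of Theorem~\ref{Thm:VR-JJOBCD global coverage}). Consequently your recursion $\Delta_t^{2\sigma}\le (C'/\sqrt{N})(\Delta_t-\Delta_{t+1})$ is not available, and the advertised $N^{1/4}$ does not emerge from the descent side.

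In the paper the $N^{1/4}$ enters from the \emph{other} ingredient: the chain from the Riemannian subgradient bound (Lemmas~\ref{Riemannian gradient Lower Bound for the Iterates Gap} and~\ref{norm bet F and J}) through the KL inequality to the tail-sum recursion produces a constant $\mathfrak{A}$ satisfying $\mathfrak{A}=\mathcal{O}(N^{-1/4})$ (this comes from the auxiliary parameter $\theta'$, which absorbs a term $\sqrt{(1-p)/b'}/(1-\sqrt{1-p})$). The paper therefore does \emph{not} pass through a rate on $\Delta_t$ at all; it works directly with $d_t$ via the finite-length inequality (\ref{VR:V:for conver}), which already reads $d_t\le \tfrac{2}{\theta-\XXX^2}\bigl(\mathfrak{A}\,\varphi_t+\text{variance terms}\bigr)$, and then bounds $\varphi_t=c\Delta_t^{1-\sigma}$ by a combination of $(d_{t-1}-d_t)^{(1-\sigma)/\sigma}$ and $(\sqrt{u^t})^{(1-\sigma)/\sigma}$ (Lemma in Section~\ref{app:sect:CA:SC:Rate:VR-J-JOBCD}). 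The $N$-dependence is then read off from $K_5,K_8=\mathcal{O}(N^{-1/4})$ in the resulting recursion $d_{t+1}\le K_5(d_{t-1}-d_t)^{(1-\sigma)/\sigma}+K_8(d_{t-1}-d_t)+\text{(telescoping }u\text{ terms)}$.

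The obstacle you correctly flagged---getting rid of $u^t$---is handled in the paper not by absorption but by \emph{adding scaled copies of the variance recursion} (Lemma~\ref{lemma: relation betw u_k and U_k-1}) at two different exponents (once as $\sqrt{u^t}$ and once as $(\sqrt{u^t})^{(1-\sigma)/\sigma}$, via a separate power inequality) so that the $u$-contributions telescope and vanish when $b=N$. Your proposal does not contain a mechanism for this cancellation, and without it the recursion on $d_t$ (or on $\Delta_t$) retains an uncontrolled stochastic residual.
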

\vspace{-8pt}

\section{Applications and Numerical Experiments} \label{sect:experiment}
\vspace{-8pt}

This section demonstrates the effectiveness and efficiency of \textbf{JOBCD} on three optimization tasks: (\bfit{i}) the hyperbolic eigenvalue problem, (\bfit{ii}) structural probe problem, and (\bfit{iii}) Ultra-hyperbolic Knowledge Graph Embedding problem. We provide experiments for the last problem in Section \ref{Additional application problem: Ultra-hyperbolic space Word enbedding problem}.


$\blacktriangleright$ \textbf{Application to the Hyperbolic Eigenvalue Problem (HEVP)}. The hyperbolic eigenvalue problem refers to the generalized eigenvalue problem in hyperbolic spaces \cite{SLAPNICAR200057}. This problem is a fundamental component in machine learning models, such as Hyperbolic PCA \cite{Tabaghi2023PrincipalCA,HoroPCA}. Given a data matrix $\mathbf{D}\in\Rn^{m\times n}$ and a signature matrix $\J$ with signature $(p,n-p)$, HEVP can be formulated as the following optimization problem: $\min_{\X} ~- \tr(\X\trans\mathbf{D}\trans\mathbf{D}\X),\,\st\,\X\trans \J\X = \J$.

$\blacktriangleright$ \textbf{Application to the Hyperbolic Structural Probe Problem (HSPP)}. The Structure Probe (SP) is a metric learning model aimed at understanding the intrinsic semantic information of large language models \cite{hewitt2019structural} \cite{chen2021probing}. Given a data matrix $\mathbf{D} \in \mathbb{R}^{m \times n}$ and its associated Euclidean distance metric matrix $\mathbf{T} \in \mathbb{R}^{m \times m}$, HSPP employs a smooth homeomorphic mapping function $\varphi(\cdot)$ to project the data $\mathbf{D}$ into ultra-hyperbolic space. Subsequently, it seeks an appropriate linear transformation $\mathbf{X} \in \mathbb{R}^{n \times n}$ constrained within a specific structure $\X\in\JJ$, such that the resulting transformed data $\mathbf{Q} \triangleq \varphi (\mathbf{D}) \mathbf{X}\in\Rn^{m\times n}$ exhibits similarity to the original distance metric matrix $\mathbf{T}$ under the ultra-hyperbolic geodesic distance $d_\alpha(\mathbf{Q}_{i:},\mathbf{Q}_{j:})$, expressed as $\mathbf{T}_{i,j} \thickapprox d_\alpha(\mathbf{Q}_{i:},\mathbf{Q}_{j:})$ for all $i,j\in[m]$, where $\mathbf{Q}_{i:}$ is $i$-th row of the matrix $\Q\in\Rn^{m\times n}$. This can be formulated as the following optimization problem: $\min_{\X}  \tfrac{1}{m^2} \sum_{i,j \in m} (\mathbf{T}_{i,j} - d_{\alpha} (\mathbf{Q}_{i:}, \mathbf{Q}_{j:} ))^2,~\st~\mathbf{Q} \triangleq \varphi (\mathbf{D}) \mathbf{X}$, $\X\in\JJ$. For more details on the functions $\varphi(\cdot)$ and $d_{\alpha} (\cdot, \cdot)$, please refer to Appendix Section \ref{app:sect:Ultrahyperbolic manifold}.

$\blacktriangleright$ \textbf{Datasets}. To generate the matrix $\mathbf{D} \in \mathbb{R}^{m \times n}$, we use 8 real-world or synthetic data sets for both HEVP and HSPP tasks: `Cifar', `CnnCaltech', `Gisette', `Mnist', `randn', `Sector', `TDT2', `w1a'. We randomly extract a subset from the original data sets for the experiments. 

$\blacktriangleright$ \textbf{Compared Methods}. We compare \textbf{GS-JOBCD} and \textbf{VR-J-JOBCD} with 3 state-of-the-art optimization algorithms under J-orthogonality constraints. (\bfit{i}) The CS Decomposition Method (\textbf{CSDM}) \cite{xiong2022ultrahyperbolic}. (\bfit{ii}) Stardard ADMM (\textbf{ADMM}) \cite{HeY12}. \textbf{UMCM}: Unconstrained Multiplier Correction Method \cite{xiao2020class,Gau2018}.



$\blacktriangleright$ \textbf{Experiment Settings}. All methods are implemented using Pytorch on an Intel 2.6 GHz processor with an A40 (48GB). For \textbf{HSPP}, we fix $\alpha$ to 1. Each method employs the same random J-orthogonal matrix. The built-in solver \textit{Admm} is used to solve the unconstrained minimization problem in \textbf{CSDM}. We provide our code in the supplemental material.


\begin{table}[!t]
\lowcaption{ Comparisons of the objectives for HEVP across all the compared methods. The time limit is set to 90s. The notation `(+)' indicates that \textbf{GS-JOBCD} significantly improves upon the initial solution provided by \textbf{CSDM}. The $1^{s t}, 2^{\text {nd }}$, and $3^{\text {rd }}$ best results are colored with {\color[HTML]{FF0000}red}, {\color[HTML]{70AD47}green} and {\color[HTML]{5B9BD5}blue}, respectively. The value in $(\cdot)$ stands for $\sum_{ij}^n |\mathbf{X}^{\top}\mathbf{J}\mathbf{X}-\mathbf{J}|_{ij}.$}
\label{table:Experiment:eigen}
\scalebox{0.6}{\begin{tabular}{|c|c|c|c|c|c|c|}
\hline
dataname(m-n-p) & UMCM & ADMM & CSDM & GS-JOBCD & J-JOBCD & CSDM+GS-JOBCD \\ \hline
cifar(1000-100-50) & -1.05e+04(3.0e-09) & -1.05e+04(3.0e-09) & -5.28e+04(5.4e-09) & {\color[HTML]{5B9BD5} -1.03e+05(2.6e-08)} & {\color[HTML]{70AD47} -1.11e+05(1.4e-07)} & {\color[HTML]{FF0000} -1.24e+05(2.6e-08)(+)} \\
CnnCal(2000-1000-500) & -5.89e+02(2.9e-08) & -5.89e+02(3.1e-10) & {\color[HTML]{5B9BD5} -1.11e+03(5.2e-10)} & -1.07e+03(1.3e-09) & {\color[HTML]{FF0000} -9.16e+03(6.9e-08)} & {\color[HTML]{70AD47} -1.15e+03(6.9e-10)(+)} \\
gisette(3000-1000-500) & -3.22e+06(3.1e-10) & -3.22e+06(3.1e-10) & -8.53e+06(4.9e-10) & {\color[HTML]{5B9BD5} -9.49e+06(1.2e-09)} & {\color[HTML]{FF0000} -1.36e+07(2.6e-08)} & {\color[HTML]{70AD47} -9.65e+06(7.9e-10)(+)} \\
mnist(1000-780-390) & -8.65e+04(4.1e-10) & -8.65e+04(4.1e-10) & -2.56e+05(5.6e-10) & {\color[HTML]{70AD47} -3.14e+05(1.2e-09)} & {\color[HTML]{FF0000} -1.20e+06(4.1e-08)} & {\color[HTML]{5B9BD5} -3.06e+05(7.6e-10)(+)} \\
randn(10-10-5) & 1.29e+02(9.7e-02) & 1.29e+02(9.7e-02) & 2.45e+02(2.3e-01) & {\color[HTML]{70AD47} -3.96e+01(9.7e-02)} & {\color[HTML]{FF0000} -3.97e+02(9.7e-02)} & {\color[HTML]{5B9BD5} 1.55e+01(2.3e-01)(+)} \\
randn(100-100-50) & -1.03e+04(3.0e-09) & -1.03e+04(2.5e-07) & -1.98e+04(4.4e-09) & {\color[HTML]{5B9BD5} -2.28e+04(5.6e-08)} & {\color[HTML]{FF0000} -4.37e+04(2.6e-07)} & {\color[HTML]{70AD47} -2.41e+04(4.2e-08)(+)} \\
randn(1000-1000-500) & -1.16e+06(3.1e-10) & -1.16e+06(3.1e-10) & {\color[HTML]{5B9BD5} -1.93e+06(5.0e-10)} & -1.22e+06(6.9e-10) & {\color[HTML]{FF0000} -1.04e+07(2.3e-07)} & {\color[HTML]{70AD47} -1.95e+06(6.7e-10)(+)} \\
sector(500-1000-500) & -3.61e+03(3.1e-10) & -3.61e+03(3.1e-10) & -7.90e+03(4.9e-10) & {\color[HTML]{70AD47} -9.24e+03(1.3e-09)} & {\color[HTML]{FF0000} -1.06e+04(2.0e-08)} & {\color[HTML]{5B9BD5} -8.51e+03(6.4e-10)(+)} \\
TDT2(1000-1000-500) & -4.25e+06(3.1e-10) & -4.25e+06(3.1e-10) & -9.39e+06(4.8e-10) & {\color[HTML]{70AD47} -1.05e+07(1.1e-09)} & {\color[HTML]{FF0000} -1.42e+07(2.1e-08)} & {\color[HTML]{5B9BD5} -1.04e+07(6.5e-10)(+)} \\
w1a(2470-290-145) & -3.02e+04(1.1e-04) & -3.02e+04(1.1e-04) & -5.72e+04(2.7e-05) & {\color[HTML]{70AD47} -9.21e+04(1.1e-04)} & {\color[HTML]{FF0000} -9.32e+06(1.1e-04)} & {\color[HTML]{5B9BD5} -7.94e+04(2.7e-05)(+)} \\ \hline
	\end{tabular}}
\end{table}


\begin{figure}[!t]
	\centering
	\begin{minipage}{0.32\linewidth}
		\includegraphics[width=\textwidth,height=2.25cm]{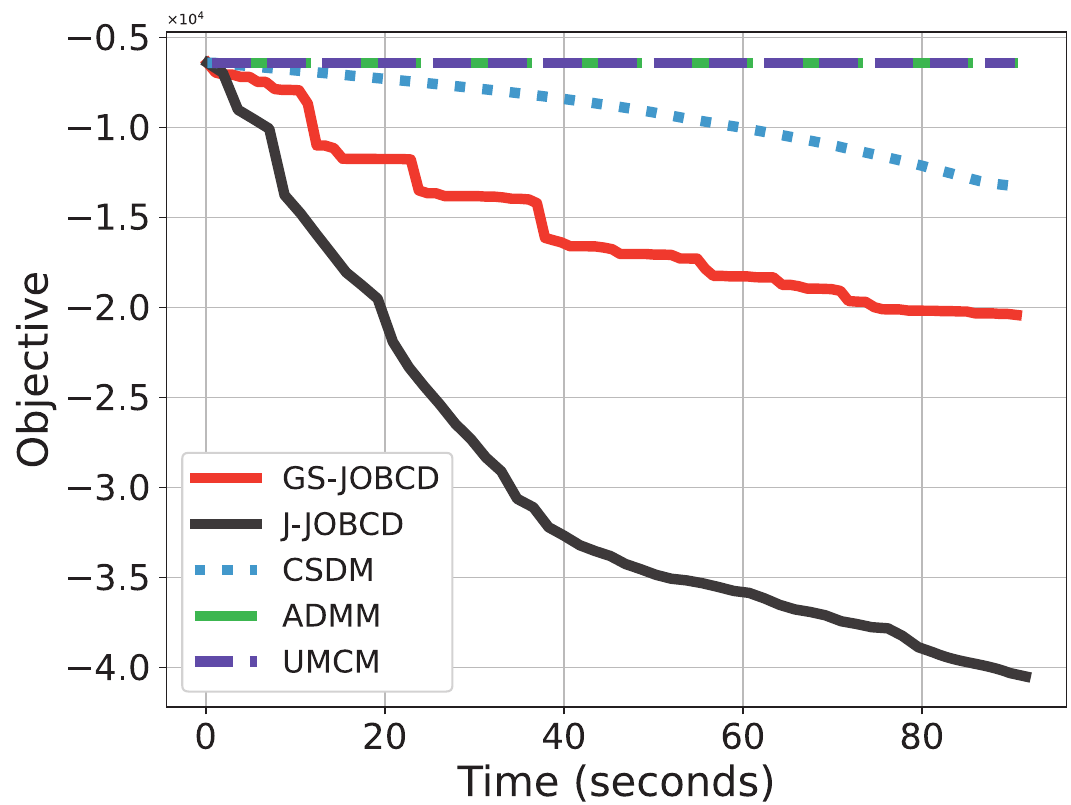} \subcaption{Gisette (3000-100-50)}
	\end{minipage}
	\begin{minipage}{0.32\linewidth}
		\includegraphics[width=\textwidth,height=2.25cm]{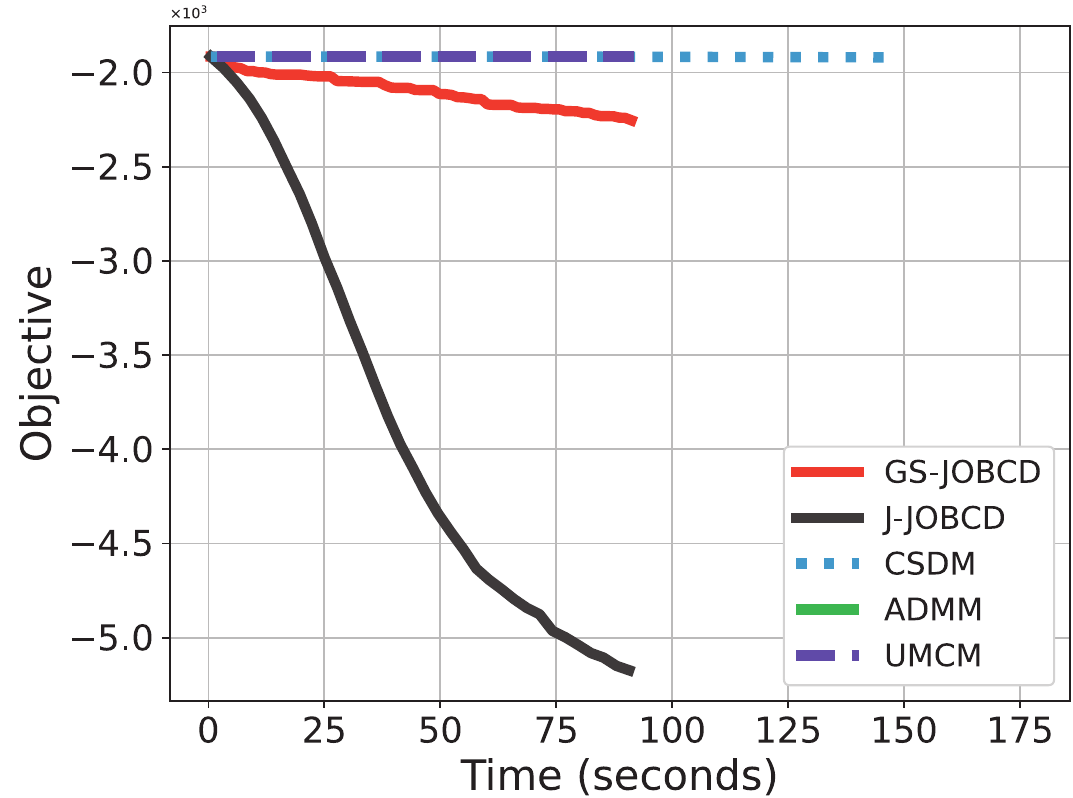} \subcaption{Sector (500-1000-500)}
	\end{minipage}
	\begin{minipage}{0.32\linewidth}
		\includegraphics[width=\textwidth,height=2.25cm]{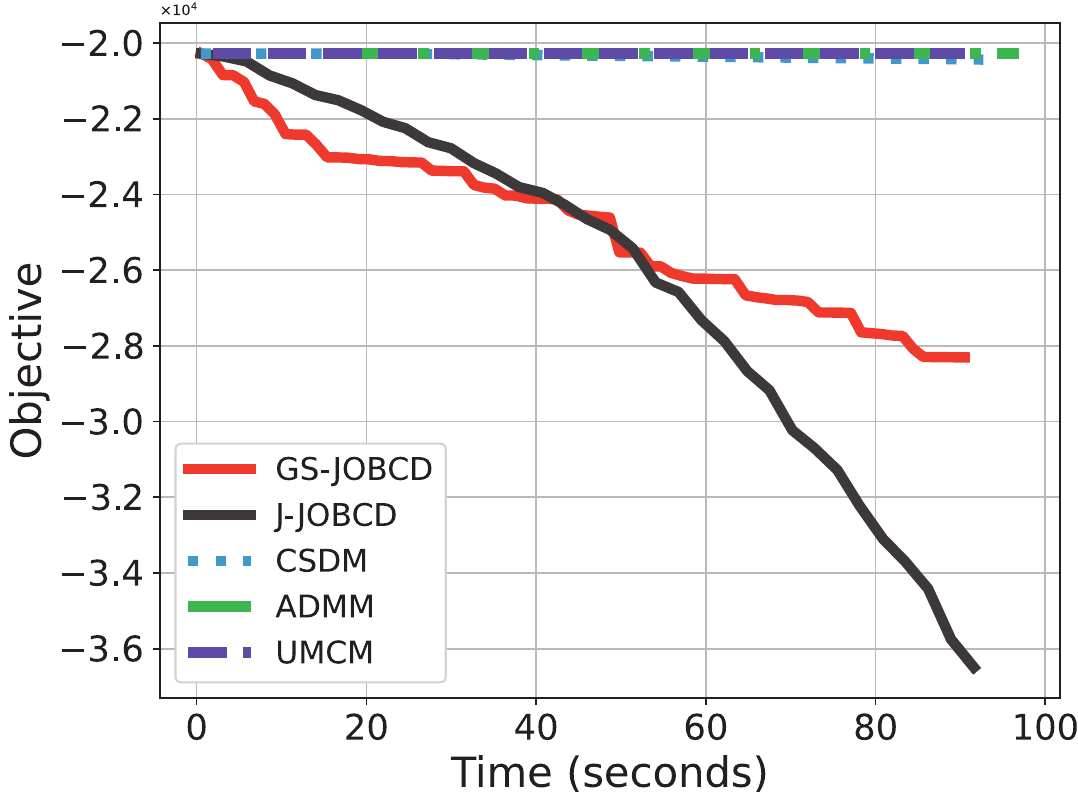} \subcaption{wla (2470-290-145)}
	\end{minipage}
	\caption{The convergence curve for the HEVP across various datasets with different parameters $(m,n,p)$.}
	\label{figure:experiment:eigen}
\end{figure}


\begin{figure}[!t]
	\centering
	\begin{minipage}{0.32\textwidth}
		\includegraphics[width=\linewidth,height=2.25cm]{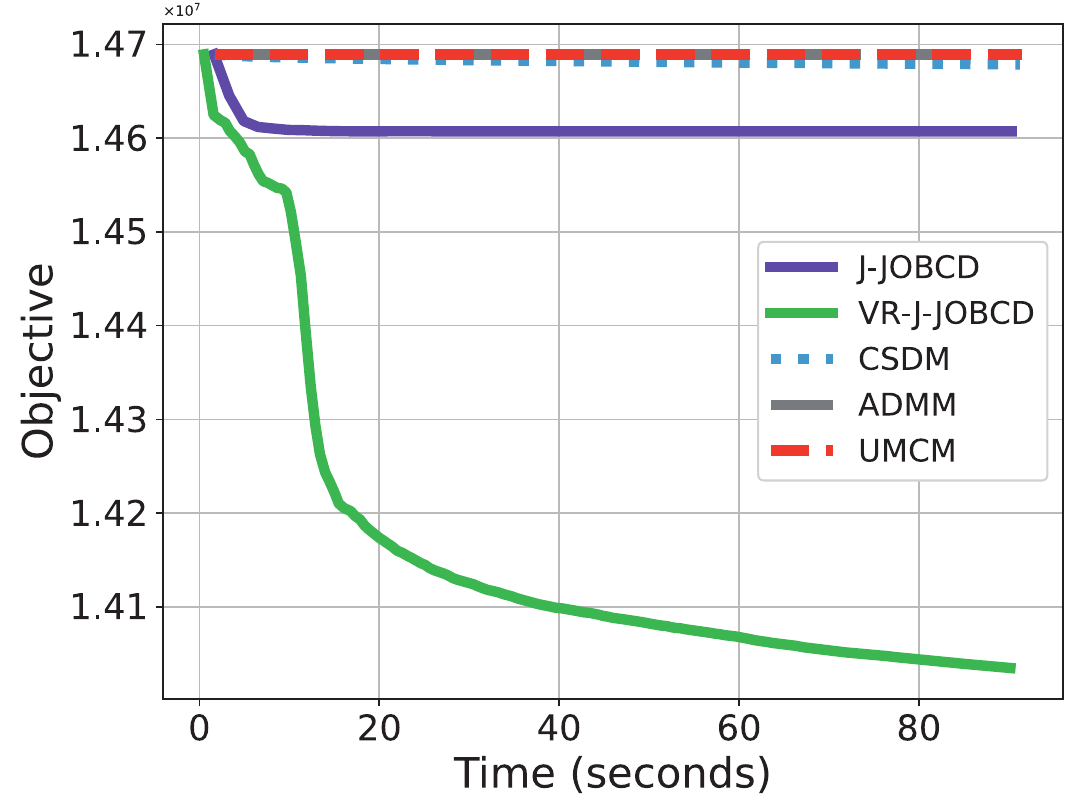} \subcaption{Cifar (10000-50-45)}
	\end{minipage}
	\begin{minipage}{0.32\textwidth}
		\includegraphics[width=\linewidth,height=2.25cm]{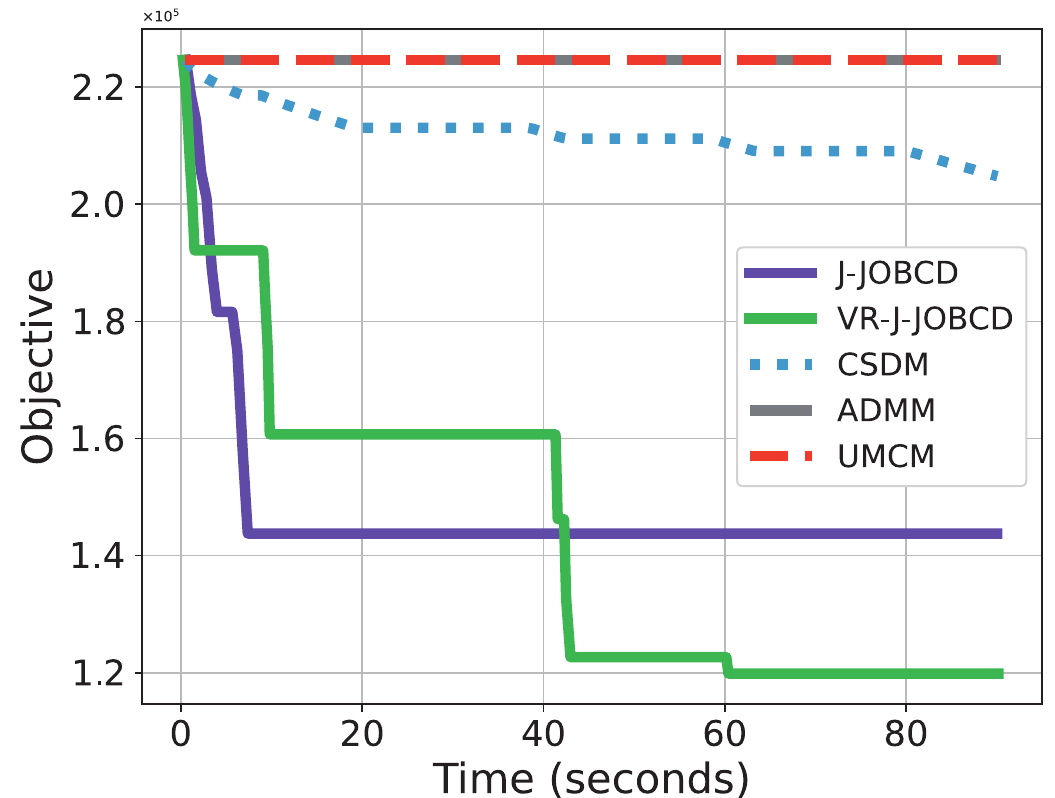} \subcaption{Gisette (6000-50-45)}
	\end{minipage}
	\begin{minipage}{0.32\textwidth}
		\includegraphics[width=\linewidth,height=2.25cm]{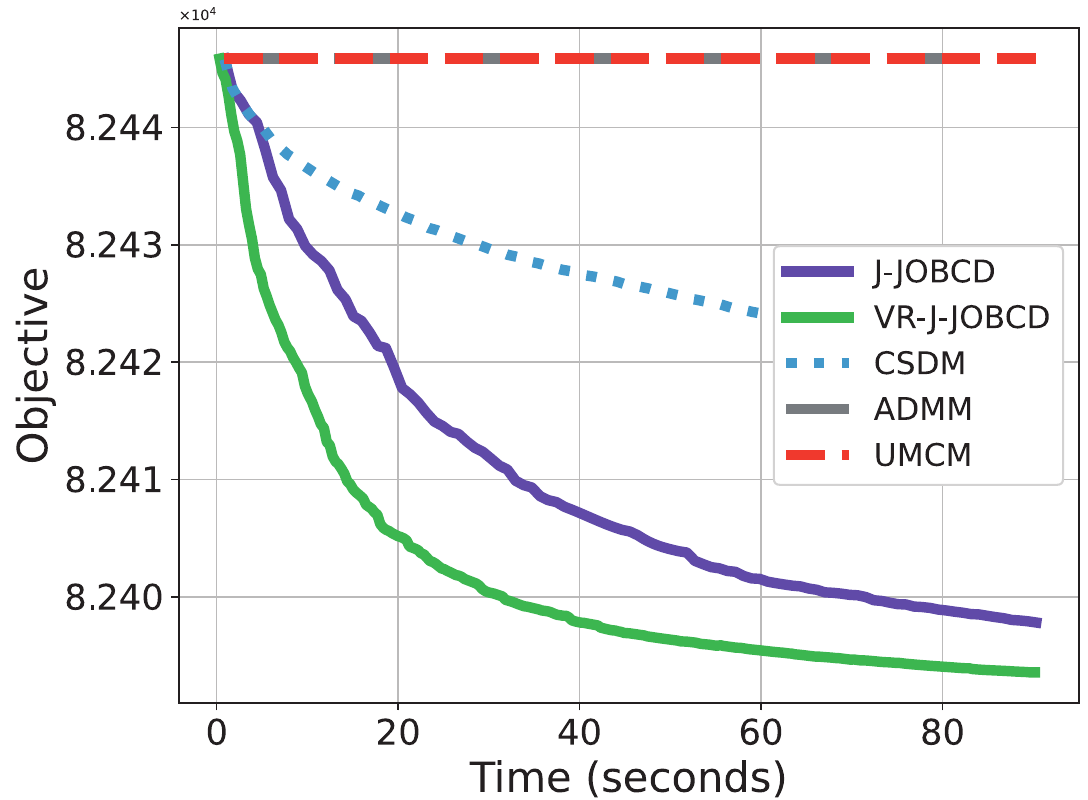} \subcaption{News20 (7967-50-45)}
	\end{minipage}
	\caption{The convergence curve for HSPP across various datasets with different parameters $(m,n,p)$.}

	\label{Experiment:metric learning:fig}
\end{figure}

$\blacktriangleright$ \textbf{Experiment Results.} Table \ref{table:Experiment:eigen} and Figure \ref{figure:experiment:eigen} display the accuracy and computational efficiency for HEVP, while Figure \ref{Experiment:metric learning:fig} presents the results for HSPP, leading to the following observations: (\bfit{i}) \textbf{GS-JOBCD} and \textbf{JJOBCD} consistently deliver better performance than the other methods. (\bfit{ii}) Other methods frequently encounter poor local minima, whereas \textbf{GS-JOBCD} effectively escapes these minima and typically achieves lower objective values, aligning with our theory that our methods locate stronger stationary points. (\bfit{iii}) \textbf{VR-J-JOBCD} outperforms both \textbf{J-JOBCD} and \textbf{CSDM} when dealing with a large dataset characterized by an infinite-sum structure.

\vspace{-8pt}
\section{Conclusions}
\vspace{-8pt}

In this paper, we propose a new approach JOBCD, which is based on block coordinate descent, for solving the optimization problem under J-orthogonality constraints. We discuss two specific variants of JOBCD: one based on a Gauss-Seidel strategy (GS-JOBCD), the other on a variance-reduced Jacobi strategy. Both algorithms capitalize on specific structural characteristics of the constraints to converge to more favorable stationary solutions. Notably, \textbf{VR-J-JOBCD} incorporates a variance-reduction technique into a parallel framework to reduce oracle complexity in the minimization of finite-sum functions. For both \textbf{GS-JOBCD} and \textbf{VR-J-JOBCD}, we establish the oracle complexity under mild conditions and strong limit-point convergence results under the Kurdyka-Lojasiewicz inequality. Some experiments on the hyperbolic eigenvalue problem and structural probe problem show the efficiency and efficacy of the proposed methods.

\normalem
\clearpage

\pdfoutput=1
\pdfoutput=1
\pdfoutput=1
\pdfoutput=1
\pdfoutput=1


\clearpage
\appendix
{\huge Appendix}

The appendix is organized as follows.

Appendix \ref{app:sect:notation} introduces some notations, technical preliminaries, and relevant lemmas.

Appendix \ref{app:sect:discussion} concludes some additional discussions.

Appendix \ref{app:sect:proposed:JOBCD} presents the proofs for Section \ref{sect:proposed:JOBCD}.

Appendix \ref{app:sect:opt:analysis} offers the proofs for Section \ref{sect:opt:analysis}.

Appendix \ref{app:sect:CA} contains the proofs for Section \ref{sect:CA}.

Appendix \ref{app:extension} contains several extra experiments, extensions and discussions of the proposed methods.


\section{Notations, Technical Preliminaries, and Relevant Lemmas} \label{app:sect:notation}

\subsection{Notations}
\label{app:sect:notations}

In this paper, we denote the Lowercase boldface letters represent vectors, while uppercase letters represent real-valued matrices. We use the Matlab colon notation to denote indices that describe submatrices. The following notations are used throughout this paper.

\begin{itemize}[leftmargin=12pt,itemsep=0.2ex]

\item $\mathbb{N}$ : Set of natural numbers

\item $\mathbb{R}$ : Set of real numbers

\item $[n]$: $\{1,2,...,n\}$

\item $\|\mathbf{x}\|$: Euclidean norm: $\|\mathbf{x}\|=\|\mathbf{x}\|_2 = \sqrt{\la \mathbf{x},\mathbf{x}\ra}$

\item $\x_i$: the $i$-th element of vector $\x$

\item $\mathbf{X}_{i,j}$ or $\mathbf{X}_{ij}$ : the ($i^{\text{th}}$, $j^{\text{th}}$) element of matrix $\X$

\item $\vec(\mathbf{X})$ : $\vec(\mathbf{X}) \in \mathbb{R}^{nn\times 1}$,~the vector formed by stacking the column vectors of $\mathbf{X}$

\item $\operatorname{mat}(\mathbf{x}) \in \mathbb{R}^{n \times n} \text {, Convert } \mathbf{x} \in \mathbb{R}^{n n \times 1} \text { into a matrix with } \operatorname{mat}(\operatorname{vec}(\mathbf{X}))=\mathbf{X}$

\item $\X \trans$ : the transpose of the matrix $\X$

\item $\operatorname{sign}(t)$ : the signum function, $\operatorname{sign}(t)=1$ if $t \geq 0$ and $\operatorname{sign}(t)=-1$ otherwise

\item $\mathbf{X} \otimes \mathbf{Y}:$ Kronecker product of $\mathbf{X}$ and $\mathbf{Y}$

\item $\text{det}(\mathbf{D})$ : Determinant of a square matrix $\mathbf{D} \in \mathbb{R}^{n \times n}\mathbf{D} \in \mathbb{R}^{n\times n}$

\item $\mathbf{C}_n^2$ : the number of possible combinations choosing $k$ items from $n$ without repetition.

\item $\zero_{n,r}$ : A zero matrix of size $n \times r$; the subscript is omitted sometimes

\item $\I_r$  : $\I_r \in \mathbb{R}^{r\times r}$, Identity matrix


\item $\mathbf{X}\succeq\zero (\text{or}~\succ \zero)$ : the Matrix $\mathbf{X}$ is symmetric positive semidefinite (or definite)

\item $\Diag(\mathbf{x})$: Diagonal matrix with $\mathbf{x}$ as the main diagonal entries.


\item $\tr(\mathbf{A})$ : Sum of the elements on the main diagonal $\mathbf{A}$: $\tr(\mathbf{A})=\sum_i \mathbf{A}_{i,i}$



\item $\|\mathbf{X}\|_{*}$ : Nuclear norm: sum of the singular values of matrix $\mathbf{X}$

\item $\|\mathbf{X}\|$ : Operator/Spectral norm: the largest singular value of $\X$

\item $\|\mathbf{X}\|_{\fro}$ : Frobenius norm: $(\sum_{ij}{\mathbf{X}_{ij}^2})^{1/2}$

\item $\nabla f(\mathbf{X})$ : classical (limiting) Euclidean gradient of $f(\mathbf{X})$ at $\mathbf{X}$

\item $\nabla_{\JJ} f(\mathbf{X})$ : Riemannian gradient of $f(\mathbf{X})$ at $\mathbf{X}$

\item $\mathcal{I}_{\mathbf{\xi}}(\X)$ : the indicator function of a set $\mathbf{\xi}$ with $\mathcal{I}_{\mathbf{\xi}}(\X)=0$ if $\X \in\mathbf{\xi}$ and otherwise $+\infty$

\item  $\text{dist}(\mathbf{\xi},\mathbf{\xi}')$ : the distance
between two sets with $\text{dist}(\mathbf{\xi},\mathbf{\xi}') \triangleq \inf_{\X\in\mathbf{\xi},\X'\in\mathbf{\xi}'}\|\X-\X'\|_{\fro}$

\item $\mathcal{I}_{\mathbf{\xi}}(\mathbf{\x})$ : the indicator function of a set $\mathbf{\xi}$ with $\mathcal{I}_{\mathbf{\xi}}(\mathbf{\x})=0$ if $\mathbf{\x} \in\mathbf{\xi}$ and otherwise $+\infty$.

\end{itemize}


\subsection{Relevant Lemmas}

\begin{lemma}\label{Some useful lemma used in Proof (1)}
(Lemma 6.6 of \cite{yuan2023block}) For any $\mathbf{W} \in \mathbb{R}^{n \times n}$, we have: $\sum_{i=1}^{C_n^k}\|\mathbf{W} (\mathcal{B}_i, \mathcal{B}_i )\|_{\fro}^2 = \tfrac{k}{n} C_n^k \sum_i \mathbf{W}_{i i}^2+ C_{n-2}^{k-2} \sum_i \sum_{j, j \neq i} \mathbf{W}_{i j}^2$. Here, the set $\{\mathcal{B}_1,\mathcal{B}_2,\cdots,\mathcal{B}_{C_n^k}\}$ represents all possible combinations of the index vectors choosing $k$ items from $n$ without repetition.
\end{lemma}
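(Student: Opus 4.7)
The plan is to prove the identity by swapping the order of summation and carrying out an elementary double-counting argument on the index set $[n]\times[n]$. First I would expand the Frobenius norm squared of each principal submatrix entrywise, namely $\|\mathbf{W}(\mathcal{B}_i,\mathcal{B}_i)\|_{\fro}^2 = \sum_{p,q \in \mathcal{B}_i} \mathbf{W}_{pq}^2$, so that the left-hand side becomes a sum $\sum_{i=1}^{C_n^k} \sum_{p,q\in\mathcal{B}_i} \mathbf{W}_{pq}^2$ over triples $(i,p,q)$ with $p,q \in \mathcal{B}_i$.

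Next I would swap the order of summation, rewriting the double sum as $\sum_{p,q \in [n]} \mathbf{W}_{pq}^2 \cdot N(p,q)$, where $N(p,q) \triangleq \#\{i : p \in \mathcal{B}_i \text{ and } q \in \mathcal{B}_i\}$ is the number of $k$-subsets of $[n]$ containing both indices $p$ and $q$. The counting then splits into two cases: if $p=q$, then $N(p,p)$ equals the number of $k$-subsets containing a fixed single element, which is $C_{n-1}^{k-1}$; if $p \neq q$, then $N(p,q)$ equals the number of $k$-subsets containing two fixed distinct elements, which is $C_{n-2}^{k-2}$.

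Substituting these counts yields
\begin{equation*}
\ts \sum_{i=1}^{C_n^k}\|\mathbf{W}(\mathcal{B}_i,\mathcal{B}_i)\|_{\fro}^2 = C_{n-1}^{k-1}\sum_{i}\mathbf{W}_{ii}^2 + C_{n-2}^{k-2}\sum_{i}\sum_{j\neq i}\mathbf{W}_{ij}^2.
\end{equation*}
The final step is to apply the elementary binomial identity $C_{n-1}^{k-1} = \tfrac{k}{n}\, C_n^k$, which follows immediately from $\tfrac{(n-1)!}{(k-1)!(n-k)!} = \tfrac{k}{n}\cdot\tfrac{n!}{k!(n-k)!}$, to rewrite the diagonal coefficient in the form stated in the lemma.

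There is no serious obstacle in this argument; the whole proof is a textbook double-counting exercise, and the only point requiring any care is to keep the diagonal and off-diagonal contributions cleanly separated and to invoke the correct absorption identity $C_{n-1}^{k-1} = \tfrac{k}{n} C_n^k$ at the end. No further assumptions on $\mathbf{W}$ (such as symmetry) are used, so the identity holds for any real matrix $\mathbf{W}\in\Rn^{n\times n}$.
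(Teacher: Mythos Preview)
Your argument is correct: the double-counting by swapping the order of summation and computing $N(p,q)$ as $C_{n-1}^{k-1}$ or $C_{n-2}^{k-2}$ according to whether $p=q$ or $p\neq q$, followed by the absorption identity $C_{n-1}^{k-1}=\tfrac{k}{n}C_n^k$, is exactly the right idea and is fully rigorous. Note that the paper does not supply its own proof of this lemma---it simply cites it as Lemma~6.6 of \cite{yuan2023block}---so there is no in-paper argument to compare against; your proposal would serve as a complete self-contained proof.
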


\begin{lemma} \label{lemma: The expected difference between the mean of the sample and the population gradient} We have $\SA$ be the set of $|\SA|=b$ samples from $[N]$, drawn with replacement and uniformly at random. Then, $\forall t, \mathbf{X}^t \in \mathbb{R}^{n \times n}$, we have:
$$ \begin{aligned}
{ \textstyle \mathbb{E}_{\iota^t} [ \|\frac{1}{b} \sum_{i \in \SA} \nabla f_i(\mathbf{X}^t)-\nabla f(\mathbf{X}^t) \|_{\fro}^2 ] =
\frac{N-b}{b(N-1)} \mathbb{E}_{\iota^t} [ \|\nabla f_i(\mathbf{X}^t)-\nabla f(\mathbf{X}^t) \|_{\fro}^2 ] .}
\end{aligned}$$
\end{lemma}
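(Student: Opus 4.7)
The plan is to reduce the statement to a standard finite-population variance calculation. Define the centered quantities $Y_i \triangleq \nabla f_i(\mathbf{X}^t) - \nabla f(\mathbf{X}^t) \in \mathbb{R}^{n\times n}$ for $i\in[N]$. By definition of $f = \tfrac{1}{N}\sum_i f_i$, these satisfy the zero-sum identity $\sum_{i=1}^N Y_i = \mathbf{0}$. Writing $\sigma_t^2 \triangleq \mathbb{E}_i[\|Y_i\|_{\fro}^2] = \tfrac{1}{N}\sum_{i=1}^N \|Y_i\|_{\fro}^2$ for a uniformly drawn index $i\in[N]$, the left-hand side of the claim is $\tfrac{1}{b^2}\,\mathbb{E}_{\iota^t}\bigl[\bigl\|\sum_{i\in\SA} Y_i\bigr\|_{\fro}^2\bigr]$, so the task reduces to evaluating this second moment under the sampling rule.

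Next, I would expand the square and split diagonal from off-diagonal contributions:
\begin{align*}
\mathbb{E}_{\iota^t}\Bigl[\Bigl\|\sum_{i\in\SA} Y_i\Bigr\|_{\fro}^2\Bigr]
= \sum_{i\in\SA}\mathbb{E}\bigl[\|Y_i\|_{\fro}^2\bigr] + \sum_{\substack{i,j\in\SA\\ i\ne j}} \mathbb{E}\bigl[\langle Y_i, Y_j\rangle\bigr].
\end{align*}
The diagonal part contributes $b\,\sigma_t^2$. For the cross terms (treating the sample as without replacement, since the stated correction factor $\tfrac{N-b}{b(N-1)}$ is the finite-population correction and would collapse to $\tfrac{1}{b}$ under truly independent draws), each ordered pair $(k,\ell)$ with $k\ne \ell$ occurs with probability $\tfrac{b(b-1)}{N(N-1)}$. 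Hence
\begin{align*}
\sum_{\substack{i,j\in\SA\\ i\ne j}} \mathbb{E}\bigl[\langle Y_i,Y_j\rangle\bigr]
= \tfrac{b(b-1)}{N(N-1)}\sum_{k\ne \ell}\langle Y_k,Y_\ell\rangle.
\end{align*}

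The central algebraic step is to eliminate the off-diagonal sum using the zero-sum identity: since $\sum_{k=1}^N Y_k = \mathbf{0}$, expanding $0 = \|\sum_k Y_k\|_{\fro}^2$ yields $\sum_{k\ne\ell}\langle Y_k,Y_\ell\rangle = -\sum_{k}\|Y_k\|_{\fro}^2 = -N\sigma_t^2$. Substituting back gives
\begin{align*}
\mathbb{E}_{\iota^t}\Bigl[\Bigl\|\sum_{i\in\SA} Y_i\Bigr\|_{\fro}^2\Bigr]
= b\,\sigma_t^2 - \tfrac{b(b-1)}{N-1}\sigma_t^2
= \tfrac{b(N-b)}{N-1}\sigma_t^2,
\end{align*}
and dividing by $b^2$ produces the claimed factor $\tfrac{N-b}{b(N-1)}$.

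The only real subtlety I anticipate is the sampling convention: the statement as written says ``with replacement,'' but the advertised variance expression is the classical without-replacement identity. If interpreted literally as i.i.d.\ sampling one obtains $\tfrac{1}{b}\sigma_t^2$ (equivalent only when $b=1$), so in the write-up I would either flag this as a mild typo in the hypothesis or explicitly note that the argument above uses the convention consistent with the displayed formula. Everything else is a direct bilinear expansion combined with the zero-sum identity, so no genuine obstacle remains once the sampling convention is pinned down.
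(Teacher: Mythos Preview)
Your argument is correct and in fact more detailed than what the paper provides: the paper's own ``proof'' consists of a single sentence deferring entirely to Lemma~2.8 of \cite{cai2023cyclic}, with no computation shown. Your bilinear expansion together with the zero-sum identity $\sum_i Y_i=\mathbf{0}$ is the standard route and recovers the stated factor cleanly.

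Your observation about the sampling convention is also well taken. The displayed factor $\tfrac{N-b}{b(N-1)}$ is exactly the finite-population correction for sampling \emph{without} replacement; under i.i.d.\ sampling with replacement the cross terms vanish and one gets $\tfrac{1}{b}\sigma_t^2$ instead. The paper uses this lemma only with $b=N$ (so the left-hand side is zero either way) and otherwise inherits the statement verbatim from the cited reference, so the discrepancy is harmless in context, but your diagnosis is accurate and worth noting.
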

\begin{proof}
The proof is exactly the same as in Lemma 2.8 of \cite{cai2023cyclic}.
\end{proof}

\begin{lemma}\label{lemma: tangent space of J orth}
The tangent space ${ \textstyle \mathbf{T}_{\mathbf{X}} \JJ}$ of manifold constructed by ${ \textstyle \mathbf{X}^{\top} \mathbf{J} \mathbf{X} = \mathbf{J}}$, with $\X \in \mathbb{R}^{n \times n}$, is :
\begin{align}
{ \textstyle \mathbf{T}_{\mathbf{X}} \JJ }\triangleq { \textstyle  \{\mathbf{Y} \in \mathbb{R}^{n \times n} \mid \mathbf{X}^{\top} \mathbf{J}  \mathbf{Y}+\mathbf{Y}^{\top} \mathbf{J}\mathbf{X} = 0 \}},
\end{align}
where $\mathbf{Y} = t \tilde{\mathbf{Y}}$ with $t$ is a positive scalar approaching 0.
\end{lemma}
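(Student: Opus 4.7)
The plan is to establish both inclusions separating the tangent space from the algebraic condition $\X^{\top}\J\mathbf{Y}+\mathbf{Y}^{\top}\J\X = 0$. For the easier containment $\mathbf{T}_{\X}\JJ \subseteq \{\mathbf{Y}\,:\,\X^{\top}\J\mathbf{Y}+\mathbf{Y}^{\top}\J\X=0\}$, I would take a smooth curve $\gamma:(-\epsilon,\epsilon)\to \JJ$ with $\gamma(0)=\X$ and $\dot{\gamma}(0)=\mathbf{Y}$, differentiate the defining identity $\gamma(t)^{\top}\J\gamma(t)=\J$ at $t=0$, and read off $\X^{\top}\J\mathbf{Y}+\mathbf{Y}^{\top}\J\X=0$ directly.

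For the reverse containment, the cleanest route is to view $\JJ$ as the level set of the smooth map $F:\mathbb{R}^{n\times n}\to \mathrm{Sym}(n)$ defined by $F(\X)\triangleq \X^{\top}\J\X - \J$, whose differential at $\X$ is exactly $dF_{\X}[\mathbf{Y}] = \X^{\top}\J\mathbf{Y}+\mathbf{Y}^{\top}\J\X$. By the regular value / submersion theorem, once I show $dF_{\X}$ is surjective onto $\mathrm{Sym}(n)$ for every $\X\in\JJ$, I automatically obtain that $\JJ$ is an embedded submanifold and $\mathbf{T}_{\X}\JJ = \ker dF_{\X}$, which is precisely the claimed set.

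The surjectivity step is the one that actually uses the J-orthogonality identity and is the only substantive computation. Given an arbitrary symmetric $\mathbf{S}\in\mathrm{Sym}(n)$, I propose the explicit preimage $\mathbf{Y}=\tfrac{1}{2}\X\J\mathbf{S}$. Then using $\X^{\top}\J\X=\J$ and $\J^{2}=\I$, I compute $\X^{\top}\J\mathbf{Y} = \tfrac{1}{2}\X^{\top}\J\X\J\mathbf{S} = \tfrac{1}{2}\J\J\mathbf{S} = \tfrac{1}{2}\mathbf{S}$, and similarly $\mathbf{Y}^{\top}\J\X = \tfrac{1}{2}\mathbf{S}\J\X^{\top}\J\X = \tfrac{1}{2}\mathbf{S}$, so $dF_{\X}[\mathbf{Y}]=\mathbf{S}$. (Note $\X$ is invertible since $\det(\X^{\top}\J\X)=\det(\J)\neq 0$ forces $\det(\X)\neq 0$, which legitimizes the construction.)

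As an alternative constructive proof of $\supseteq$, which some readers may prefer, I can exhibit an explicit curve. Given any $\mathbf{Y}$ with $\X^{\top}\J\mathbf{Y}+\mathbf{Y}^{\top}\J\X=0$, set $\mathbf{A}\triangleq\X^{-1}\mathbf{Y}$; the tangent condition becomes $\J\mathbf{A}+\mathbf{A}^{\top}\J=0$, equivalently $\mathbf{A}^{\top}=-\J\mathbf{A}\J$. Define $\gamma(t)\triangleq \X\exp(t\mathbf{A})$. Using $\exp(t\mathbf{A}^{\top})=\exp(-t\J\mathbf{A}\J)=\J\exp(-t\mathbf{A})\J$, one checks $\gamma(t)^{\top}\J\gamma(t) = \exp(t\mathbf{A}^{\top})\J\exp(t\mathbf{A}) = \J\exp(-t\mathbf{A})\exp(t\mathbf{A}) = \J$, so $\gamma(t)\in\JJ$, and clearly $\dot{\gamma}(0)=\X\mathbf{A}=\mathbf{Y}$. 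The main (and only real) obstacle in the whole argument is isolating the right algebraic identity for the preimage construction; once that is in hand, everything else is differentiation of the defining relation.
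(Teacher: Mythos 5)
Your proposal is correct, and it is in fact more complete than the paper's own argument, which proceeds by a different and more heuristic route. The paper simply expands $h(\X+\mathbf{Y}) = (\X+\mathbf{Y})^{\top}\J(\X+\mathbf{Y})-\J$ with $\mathbf{Y}=t\tilde{\mathbf{Y}}$, discards the quadratic term $t^{2}\tilde{\mathbf{Y}}^{\top}\J\tilde{\mathbf{Y}}$ on the grounds that $t\to 0$, and identifies the tangent space with the kernel of the surviving linear part $\X^{\top}\J\mathbf{Y}+\mathbf{Y}^{\top}\J\X$; this is essentially only the "easy" inclusion of your argument, carried out by a first-order expansion rather than by differentiating a curve, and it never addresses why every matrix satisfying the linear condition is genuinely realized as a velocity of a curve in $\JJ$. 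Your two additions are exactly what upgrades this to a rigorous statement: the submersion argument (with the explicit preimage $\mathbf{Y}=\tfrac{1}{2}\X\J\mathbf{S}$, whose verification via $\X^{\top}\J\X=\J$ and $\J^{2}=\I$ is correct, and with the correct observation that $\det(\X)^2\det(\J)=\det(\J)$ forces invertibility of $\X$) shows $\JJ$ is an embedded submanifold with $\mathbf{T}_{\X}\JJ=\ker dF_{\X}$, and the alternative curve $\gamma(t)=\X\exp(t\mathbf{A})$ with $\mathbf{A}^{\top}=-\J\mathbf{A}\J$ gives a constructive proof of the reverse containment. What the paper's approach buys is brevity and a direct match with how the lemma is stated (in terms of $\mathbf{Y}=t\tilde{\mathbf{Y}}$ with $t\to 0$); what yours buys is a genuine two-sided characterization and, as a byproduct, the smooth-manifold structure of $\JJ$ itself, which the paper takes for granted.
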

\begin{proof}
Assuming point $\mathbf{X} \in \mathbb{R}^{n \times n}$ lies on manifold $\JJ$, we have: ${ \textstyle h(\mathbf{X}) = \mathbf{X}^{\top} \mathbf{J} \mathbf{X} -\mathbf{J}}$. Moving along $\mathbf{Y} \in \mathbb{R}^{n \times n}$ in the tangent space of $\mathbf{X}$, we obtain:
\begin{align}
{ \textstyle h(\mathbf{X+Y})} = & { \textstyle (\mathbf{X+Y})^{\top} \mathbf{J} (\mathbf{X+Y}) -\mathbf{J}} \nonumber \\
= & { \textstyle \mathbf{X}^{\top} \mathbf{J} \mathbf{X} + \mathbf{X}^{\top} \mathbf{J} \mathbf{Y} + \mathbf{Y}^{\top} \mathbf{J} \mathbf{X} + \mathbf{Y}^{\top} \mathbf{J} \mathbf{Y} -\mathbf{J}} \nonumber \\
\stackrel{\step{1}}{=} & { \textstyle \mathbf{X}^{\top} \mathbf{J} \mathbf{Y} + \mathbf{Y}^{\top} \mathbf{J} \mathbf{X} + \mathbf{Y}^{\top} \mathbf{J} \mathbf{Y}} \nonumber \\
\stackrel{\step{2}}{=} & { \textstyle t\mathbf{X}^{\top} \mathbf{J} \tilde{\mathbf{Y}} + t \tilde{\mathbf{Y}}^{\top} \mathbf{J} \mathbf{X} + t^2 \tilde{\mathbf{Y}}^{\top} \mathbf{J} \tilde{\mathbf{Y}}} \nonumber
\end{align}
where step \step{1} uses ${ \textstyle \mathbf{X}^{\top} \mathbf{J} \mathbf{X} = \mathbf{J}}$; step \step{2} uses $\mathbf{Y} = t \tilde{\mathbf{Y}}$.

Since $t$ is a positive scalar approaching 0, we can ignore the higher-order term: $t^2 \tilde{\mathbf{Y}}^{\top} \mathbf{J} \tilde{\mathbf{Y}}$. According to the properties of the tangent space of any manifold, we have: ${ \textstyle h(\mathbf{X+Y})=0}$, 
In other words, $\mathbf{X}^{\top} \mathbf{J}  \mathbf{Y}+\mathbf{Y}^{\top} \mathbf{J}\mathbf{X} = 0$, i.e. we obtain the defining equation for the tangent space: ${ \textstyle \mathbf{T}_{\mathbf{X}} \JJ \triangleq \{\mathbf{Y} \in \mathbb{R}^{n \times n} \mid \mathbf{X}^{\top} \mathbf{J}  \mathbf{Y}+\mathbf{Y}^{\top} \mathbf{J}\mathbf{X} = 0 \}}$.
\end{proof}

\section{Additional Discussions}
\label{app:sect:discussion}

\subsection{On the Global Optimal Solution for Problem (\ref{eq:V:mu})} \label{app:sect:complete of V}

In Section \ref{sect:JOBCD:GS:JOBCD}, we have demonstrated how to use the breakpoint search method to obtain an optimal solution for the case of $\V = (\begin{smallmatrix}  \ch & \sh \\ \sh &  \ch\end{smallmatrix})$ of Problem (\ref{eq:V:mu}). Since the structure of the other three cases $\V \in \{(\begin{smallmatrix}  \ch & ~- \sh \\ - \sh &  ~\ch\end{smallmatrix}) , (\begin{smallmatrix} - \ch & ~- \sh \\  \sh &  ~\ch\end{smallmatrix}),  (\begin{smallmatrix}  \ch & ~-\sh \\  \sh & ~-\ch\end{smallmatrix})  \}$ is exactly the same except for the coefficients of Problem (\ref{aaequ}), we will provide the corresponding coefficients in Problem (\ref{aaequ}): $\min_{\ch,\sh} a \ch+b \sh+c \ch^2+d \ch \sh+e \sh^2$, and omit the specific analysis process.

\textbf{Case (\bfit{a})}. $\V = (\begin{smallmatrix}  \ch & ~- \sh \\ - \sh &  ~\ch\end{smallmatrix})$: $a=\mathbf{P}_{11}+\mathbf{P}_{22}$, $b=-\mathbf{P}_{12}-\mathbf{P}_{21}$, $c=\tfrac{1}{2}(\dot{\Q}_{11}+\dot{\Q}_{41}+\dot{\Q}_{14}+\dot{\Q}_{44})$, $d=-\tfrac{1}{2}(\dot{\Q}_{21}+\dot{\Q}_{31}+\dot{\Q}_{12}+\dot{\Q}_{42}+\dot{\Q}_{13}+\dot{\Q}_{43}+\dot{\Q}_{24}+\dot{\Q}_{34})$, and $e=\tfrac{1}{2}(\dot{\Q}_{22}+\dot{\Q}_{32}+\dot{\Q}_{23}+\dot{\Q}_{33})$.

\textbf{Case (\bfit{b})}. $\V = (\begin{smallmatrix}  -\ch & ~- \sh \\ \sh &  ~\ch\end{smallmatrix})$:$a=-\mathbf{P}_{11}+\mathbf{P}_{22}$, 
$b=-\mathbf{P}_{12}+\mathbf{P}_{21}$, 
$c=\tfrac{1}{2}(\dot{\Q}_{11}-\dot{\Q}_{41}-\dot{\Q}_{14}+\dot{\Q}_{44})$, $d=\tfrac{1}{2}(\dot{\Q}_{21}-\dot{\Q}_{31}+\dot{\Q}_{12}-\dot{\Q}_{42}-\dot{\Q}_{13}+\dot{\Q}_{43}-\dot{\Q}_{24}+\dot{\Q}_{34})$, and $e=\tfrac{1}{2}(\dot{\Q}_{22}-\dot{\Q}_{32}-\dot{\Q}_{23}+\dot{\Q}_{33})$.

\textbf{Case (\bfit{c})}. $\V = (\begin{smallmatrix}  \ch & ~- \sh \\ \sh &  ~-\ch\end{smallmatrix})$:$a=\mathbf{P}_{11}-\mathbf{P}_{22}$, $b=-\mathbf{P}_{12}+\mathbf{P}_{21}$, $c=\tfrac{1}{2}(\dot{\Q}_{11}-\dot{\Q}_{41}-\dot{\Q}_{14}+\dot{\Q}_{44})$, $d=\tfrac{1}{2}(-\dot{\Q}_{21}+\dot{\Q}_{31}-\dot{\Q}_{12}+\dot{\Q}_{42}+\dot{\Q}_{13}-\dot{\Q}_{43}+\dot{\Q}_{24}-\dot{\Q}_{34})$, and $e=\tfrac{1}{2}(\dot{\Q}_{22}-\dot{\Q}_{32}-\dot{\Q}_{23}+\dot{\Q}_{33})$.

\section{Proofs for Section \ref{sect:proposed:JOBCD}}\label{app:sect:proposed:JOBCD}

\subsection{Proof of Lemma \ref{binding:jorth:theorem}} \label{app:binding:jorth:theorem}

\begin{proof}[Proof]

Defining $\mathbf{J}_{\B\B} = \mathbf{J}(\mathbf{U}_{\B}, \mathbf{U}_{\B})$ , then we have: $\mathbf{J} \mathbf{U}_{\B} =  \mathbf{U}_{\B} \mathbf{J}_{\B\B}$, $\mathbf{U}_{\B}^{\top} \mathbf{J} = \mathbf{J}_{\B\B} \mathbf{U}_{\B}^{\top}$, and $\mathbf{U}_{\B}^{\top} \mathbf{J} \mathbf{U}_{\B}=\mathbf{J}_{\B\B}$.

\textbf{Part (a)}. For any $\mathbf{V} \in \mathbb{R}^{2 \times 2}$ and $\B \in \{\mathcal{B}_i \}_{i=1}^{\mathbf{C}_n^2}$, we have:
\beq
&& { [\mathbf{X}^{+} ]^{\top} \mathbf{J} \mathbf{X}^{+}-\mathbf{X}^{\top} \mathbf{J} \mathbf{X} } \nn \\
&\stackrel{ \step{1}}{=} & \mathbf{X}^{\top} \mathbf{J} \mathbf{U}_{\B} (\mathbf{V}-\mathbf{I}_2 ) \mathbf{U}_{\B}^{\top} \mathbf{X}+ [\mathbf{U}_{\B} (\mathbf{V}-\mathbf{I}_2 ) \mathbf{U}_{\B}^{\top} \mathbf{X} ]^{\top} \mathbf{J} \mathbf{X} \nn\\
&& + [\mathbf{U}_{\B} (\mathbf{V}-\mathbf{I}_2 ) \mathbf{U}_{\B}^{\top} \mathbf{X} ]^{\top} \mathbf{J}  [\mathbf{U}_{\B} (\mathbf{V}-\mathbf{I}_2 ) \mathbf{U}_{\B}^{\top} \mathbf{X} ] \nn \\
&= & \mathbf{X}^{\top}  [ \mathbf{J} \mathbf{U}_{\B} (\mathbf{V}-\mathbf{I}_2 ) \mathbf{U}_{\B}^{\top} +\mathbf{U}_{\B} (\mathbf{V}-\mathbf{I}_2 )^{\top} \mathbf{U}_{\B}^{\top} \mathbf{J} +\mathbf{U}_{\B} (\mathbf{V}-\mathbf{I}_2 )^{\top} \mathbf{U}_{\B}^{\top}  \mathbf{J} \mathbf{U}_{\B} (\mathbf{V}-\mathbf{I}_2 ) \mathbf{U}_{\B}^{\top}  ]\mathbf{X} \nn \\
&= & \mathbf{X}^{\top}  [ \mathbf{U}_{\B} \mathbf{J}_{\B\B} (\mathbf{V}-\mathbf{I}_2 ) \mathbf{U}_{\B}^{\top} +\mathbf{U}_{\B} (\mathbf{V}-\mathbf{I}_2 )^{\top} \mathbf{J}_{\B\B} \mathbf{U}_{\B}^{\top} +\mathbf{U}_{\B} (\mathbf{V}-\mathbf{I}_2 )^{\top} \mathbf{J}_{\B\B} (\mathbf{V}-\mathbf{I}_2 ) \mathbf{U}_{\B}^{\top}  ]\mathbf{X} \nn\\
&= & \mathbf{X}^{\top} \mathbf{U}_{\B}  [  \mathbf{J}_{\B\B} (\mathbf{V}-\mathbf{I}_2 )  + (\mathbf{V}-\mathbf{I}_2 )^{\top} \mathbf{J}_{\B\B} + (\mathbf{V}-\mathbf{I}_2 )^{\top} \mathbf{J}_{\B\B} (\mathbf{V}-\mathbf{I}_2 )   ] \mathbf{U}_{\B}^{\top}\mathbf{X} \nn\\
&= & \mathbf{X}^{\top} \mathbf{U}_{\B}  [
\mathbf{V}^{\top}\mathbf{J}_{\B\B}\mathbf{V}-\mathbf{J}_{\B\B}
 ] \mathbf{U}_{\B}^{\top}\mathbf{X}\nn \\
& \stackrel{\text {\step{2}}}{=} & \mathbf{0}.\nn
\eeq

\textbf{Part (b)}. Using the update rule for $ { \textstyle\mathbf{X}^{+}=\mathbf{X}+\mathbf{U}_{\B} (\mathbf{V}-\mathbf{I}_2 ) \mathbf{U}_{\B}^{\top} \mathbf{X} \in \mathbb{R}^{n \times n}}$, we derive:
\beq
{ \textstyle \|\mathbf{X}^{+}-\mathbf{X} \|_{\fro}} & = & { \textstyle \|\mathbf{U}_{\B} (\mathbf{V}-\mathbf{I}_2 ) \mathbf{U}_{\B}^{\top} \mathbf{X} \|_{\fro}} \nn\\
& \stackrel{\step{1}}{\leq}& { \textstyle \|\mathbf{U}_{\B} \|_{\fro} \cdot \| (\mathbf{V}-\mathbf{I}_2 ) \mathbf{U}_{\B}^{\top} \mathbf{X} \|_{\fro}}, \nn\\
& \stackrel{\step{2}}{\leq}& { \textstyle \|\mathbf{U}_{\B} \|_{\fro} \cdot \| (\mathbf{V}-\mathbf{I}_2 ) \|_{\fro} \cdot \|\mathbf{U}_{\B}^{\top} \|_{\fro} \cdot\|\mathbf{X}\|_{\fro}}, \nn\\
& \stackrel{\step{3}}{=}& { \textstyle \|\mathbf{V}-\mathbf{I}_2 \|_{\fro} \cdot\|\mathbf{X}\|_{\fro}},\nn
\eeq
\noi where step \step{1} and step \step{2} use the norm inequality that $ { \textstyle\|\mathbf{A X}\|_{\fro} \leq\|\mathbf{A}\|_{\fro} \cdot\|\mathbf{X}\|_{\fro}}$ for any $\mathbf{A}$ and $\mathbf{X}$; step \step{3} uses $ { \textstyle \|\mathbf{U}_{\B} \|= \|\mathbf{U}_{\B}^{\top} \|=1}$.

\textbf{Part (c)}. We define $\mathbf{Z} \triangleq \mathbf{U}_{\B}^{\top} \mathbf{X}$. We derive:
\beq
 \|\mathbf{X}^{+}-\mathbf{X} \|_{\mathbf{H}}^2 & {=} & { \textstyle  \|\mathbf{U}_{\B} (\mathbf{V}-\mathbf{I}_2 ) \mathbf{Z} \|_{\mathbf{H}}^2} \nn \\
& \stackrel{\step{1}}{=} &\text{vec}(\mathbf{U}_{\B}(\mathbf{V}-\mathbf{I}_2)\mathbf{Z})^{\top}\mathbf{H}\text{vec}(\mathbf{U}_{\B}(\mathbf{V}-\mathbf{I}_2)\mathbf{Z})\nn \\
& \stackrel{\step{2}}{=} &\text{vec}(\mathbf{V}-\mathbf{I}_2)^{\top}(\mathbf{Z}^{\top}\otimes\mathbf{U}_{\B})^{\top} \mathbf{H} (\mathbf{Z}^{\top}\otimes\mathbf{U}_{\B}) \text{vec}(\mathbf{V}-\mathbf{I}_2)\nn\\
& = &\|\mathbf{V}-\mathbf{I}_2\|^2_{(\mathbf{Z}^{\top}\otimes\mathbf{U}_{\B})^{\top} \mathbf{H} (\mathbf{Z}^{\top}\otimes\mathbf{U}_{\B})}\nn\\
& \stackrel{\step{3}}{\leq}& \|\mathbf{V}-\mathbf{I}_2\|^2_{\mathbf{Q}},\nn
\eeq
\noi where step \step{1} uses $\|\mathbf{X}\|^2_{\mathbf{H}} = \text{vec}(\mathbf{X})^{\top}\mathbf{H}\text{vec}(\mathbf{X})$; step \step{2} uses $(\mathbf{Z}^{\top}\otimes\mathbf{R})\text{vec}(\textbf{U})=\text{vec}(\mathbf{R}\mathbf{U}\mathbf{Z})$ for all $\mathbf{R}$, $\mathbf{Z}$ and $\mathbf{U}$ of suitable dimensions; step \step{3} uses the choice of $\mathbf{Q} \succcurlyeq \underline{\mathbf{Q}} \triangleq (\mathbf{Z}^{\top} \otimes \mathbf{U}_{\B})^{\top} \mathbf{H}(\mathbf{Z}^{\top} \otimes \mathbf{U}_{\B})$.
\end{proof}

\subsection{Proof of Lemma \ref{lemma equ subproblem (1) of JOBCD}} \label{BP.4}
\begin{proof}[Proof]
We denote $w=c+e$. According to the properties of trigonometric functions, we have: (\bfit{i}) $\ch^2=\tfrac{1}{1-\th^2}$; (\bfit{ii}) $\sh^2=\tfrac{\th^2}{1-\th^2}$; (\bfit{iii})$\th=\tfrac{\sh}{\ch}$, leading to: $ { \textstyle \ch=\frac{\pm 1}{\sqrt{1-\th^2}}, \sh=\frac{\pm \th}{\sqrt{1-\th^2}}}$ with $|\th| < 1$. 

We discuss two cases for Problem (\ref{aaequ}). 

\textbf{Case (\bfit{a})}. $ { \textstyle \ch=\frac{1}{\sqrt{1-\th^2}}, \sh=\frac{\th}{\sqrt{1-\th^2}}}$. Problem (\ref{aaequ}) is equivalent to the following problem: $\bar{\mu}_{+}=\arg \min _\mu \frac{a+\th b}{\sqrt{1-\th ^2}}+\frac{w+\th d}{1-\th^2}-e$. Therefore, the optimal solution $\bar{\mu}_{+}$can be computed as:
\begin{align} \label{ep:one dim:optim:mu+}
{ \textstyle \cosh  (\bar{\mu}_{+} )=\frac{1}{\sqrt{1- (\bar{t}_{+} )^2}},~\text{and}~ \sinh  (\bar{\mu}_{+} )=\frac{\bar{t}_{+}}{\sqrt{1- (\bar{t}_{+} )^2}}}
\end{align}

\textbf{Case (\bfit{b})}. $ { \textstyle \ch=\frac{-1}{\sqrt{1-\th^2}}, \sh=\frac{-\th}{\sqrt{1-\th^2}}}$. Problem (\ref{aaequ}) is equivalent to the following problem: $\bar{\mu}_{-}=\arg \min_\mu \frac{-a-\th b}{\sqrt{1-\th^2}}+\frac{w+\th d}{1-\th^2}-e$. Therefore, the optimal solution $\bar{\mu}_{-}$can be computed as:
\begin{align} \label{ep:one dim:optim:mu-}
{ \textstyle \cosh  (\bar{\mu}_{-} )=\frac{-1}{\sqrt{1- (\bar{t}_{-} )^2}}, ~\text{and}~\sinh  (\bar{\mu}_{-} )=\frac{-\bar{t}_{-}}{\sqrt{1- (\bar{t}_{-} )^2}}}.
\end{align}

We define the objective function as: $ { \textstyle \breve{F}(\tilde{c}, \tilde{s}) }\triangleq  { \textstyle a \tilde{c}+b \tilde{s}+c \tilde{c}^2+d \tilde{c} \tilde{s}+e \tilde{s}^2}$. In view of (\ref{ep:one dim:optim:mu+}) and (\ref{ep:one dim:optim:mu-}), the optimal solution pair $[cosh(\bar{\mu}, sinh(\bar{\mu})]$ for problem (\ref{aaequ}) can be computed as:
\beq
&&[\cosh (\bar{\mu}), \sinh (\bar{\mu})]=\arg \min _{[c, s]} \breve{F}(c, s), \nn\\
&&~\st~ [c, s] \in \{ [\cosh  (\bar{\mu}_{+} ), \sinh  (\bar{\mu}_{+} ) ], [\cosh  (\bar{\mu}_{-} ), \sinh  (\bar{\mu}_{-} ) ] \} \nn
\eeq
\noi Importantly, it is not necessary to compute the values $\bar{\mu}_{+}$ for (\ref{ep:one dim:optim:mu+}) and $\bar{\mu}_{-}$ for (\ref{ep:one dim:optim:mu-}).



\end{proof}

\subsection{Proof of Lemma \ref{lemma:ind:subp}} \label{BP.6}
\begin{proof}
The objective function for $\B_{(i)}^t$ as in Equation (\ref{function:MM:GS}) is formulated as :
$$\ts f(\X^t)+ \tfrac{1}{2}\| \V_i-\I \|_{\mathbf{Q}+\theta\I}^2 + \la \V_i - \I, [ \nabla f(\X^t) (\X^t)\trans ]_{\B_{(i)}^t \B_{(i)}^t } \ra $$

\textbf{Part (1).} For the part of $ { \textstyle \frac{1} {2}\| \mathbf{V}_i -\I\|_{\mathbf{Q}+\theta\I}^2}$, it is obviously irrelevant.

\textbf{Part (2).} For the part of $\la \V_i - \I, [ \nabla f(\X^t) (\X^t)\trans ]_{\B_{(i)}^t \B_{(i)}^t} \ra$, we note that $  [\nabla f (\mathbf{X}^t ) (\mathbf{X}^t )^{\top} ]_{\B_{(i)}^t \B_{(i)}^t} =  [\nabla f (\mathbf{X}^t ) ] {(\B_{(i)}^t,:)} [(\mathbf{X}^t)^{\top}]{(:,\B_{(i)}^t)} =  [\nabla f (\mathbf{X}^t ) ] {(\B_{(i)}^t,:)} [(\mathbf{X}^t){(\B_{(i)}^t,:)}]^{\top}$ , which just use the information of block $\B_{(i)}^t$. The proof ends. \end{proof}

\subsection{Proof of Lemma \ref{lemma: properties of V update of JJOBCD}} \label{BP.8}
\begin{proof}[Proof]
Part (\bfit{a}). 
For the purpose of analysis, we define the following:
$\forall i \in [\frac{n}{2}], \mathbf{K}_{i}=\mathbf{U}_{\B_{(i)}} (\mathbf{V}_{i}-\mathbf{I}_2 ) \mathbf{U}_{\B_{(i)}}^{\top} \mathbf{X}$.
\beq
 \textstyle \|\sum_{i=1}^{\frac{n}{2}}[\mathbf{U}_{\B_{(i)}}(\mathbf{V}_i-\mathbf{I}_2) \mathbf{U}_{\B_{(i)}}^{\top} \mathbf{X}]\|_{\fro}^2 &\stackrel{\step{1}}{=} &  \left\| \left[\begin{array}{c}
\mathbf{K}_{1} \\ \mathbf{K}_{2} \\ \vdots \\ \mathbf{K}_{\frac{n}{2}}\end{array} \right] \right\|_{\fro}^2 \nonumber \\
&\stackrel{\step{2}}{=} & { \textstyle  \| \mathbf{K}_{1}  \|_{\fro}^2 + \| \mathbf{K}_{2} \|_{\fro}^2+\cdots+ \|\mathbf{K}_{\frac{n}{2}} \|_{\fro}^2} \nonumber \\
&\stackrel{\step{3}}{=} & { \textstyle \sum_{i=1}^{\frac{n}{2}} [ \|\mathbf{U}_{\B_{(i)}} (\mathbf{V}_i-\mathbf{I}_2 ) \mathbf{U}_{\B_{(i)}}^{\top} \mathbf{X} \|_{\fro}^2 ]} \nonumber
\eeq
where step \step{1} uses the definition of $\mathbf{K}_{i}$ and the assumption that $\B \in \Upsilon$; step \step{2} uses the definition of Squared Frobenius Norm; step \step{3} uses the definition of $\mathbf{K}_{i}$.

 Part (\bfit{b}). Using the update rule for ${ \textstyle \mathbf{X}^{+}=\mathbf{X}+ [\sum_{i=1}^{n/2}
\mathbf{U}_{\B_{(i)}} (\mathbf{V}_i-\mathbf{I}_2 ) \mathbf{U}_{\B_{(i)}}^{\top}]\mathbf{X} \in \mathbb{R}^{n \times n}}$, we have the following inequalities:
\beq
{ \textstyle \|\mathbf{X}^{+}-\mathbf{X} \|_{\fro}^2} & = &{ \textstyle \|[\sum_{i=1}^{n/2}
\mathbf{U}_{\B_{(i)}} (\mathbf{V}_i-\mathbf{I}_2 ) \mathbf{U}_{\B_{(i)}}^{\top} ]\mathbf{X} \|_{\fro}^2 }\\
& \stackrel{\step{1}}{=} &{ \textstyle \sum_{i=1}^{n/2} \| [
\mathbf{U}_{\B_{(i)}}(\mathbf{V}_i-\mathbf{I}_2) \mathbf{U}_{\B_{(i)}}^{\top} ]\mathbf{X} \|_{\fro}^2} \\
& \stackrel{\step{2}}{\leq} & { \textstyle \sum_{i=1}^{n/2}\|\mathbf{V}_i-\mathbf{I}_2\|_{\fro}^2 \cdot\|\mathbf{X}\|_{\fro}^2 },
\eeq
where step \step{1} uses the conclusion of Part (\bfit{a}); step \step{2} uses the same proof process of Part (\bfit{b}) of lemma \ref{binding:jorth:theorem}.

Part (\bfit{c}). We derive the following results:
$$
\begin{aligned}
\frac{1}{2} \|\mathbf{X}^{+}-\mathbf{X} \|_{\mathbf{H}}^2 & = { \textstyle \frac{1}{2} \| [\sum_{i=1}^{n/2}
\mathbf{U}_{\B_{(i)}} (\mathbf{V}_i-\mathbf{I}_2 ) \mathbf{U}_{\B_{(i)}}^{\top}  ]\mathbf{X} \|_{\mathbf{H}}^2} \\
& \stackrel{\step{1}}{=} { \textstyle \frac{1}{2}\sum_{i=1}^{n/2} \| [
\mathbf{U}_{\B_{(i)}} (\mathbf{V}_i-\mathbf{I}_2 ) \mathbf{U}_{\B_{(i)}}^{\top}  ]\mathbf{X} \|_{\mathbf{H}}^2 }\\
& \stackrel{\step{2}}{\leq} { \textstyle \frac{1}{2}\sum_{i=1}^{n/2} \|\mathbf{V}_i-\mathbf{I}_2 \|_{\mathbf{Q}}^2 }
\end{aligned}
$$
where step \step{1} uses the conclusion of Part (\bfit{a}); step \step{2} uses the same proof process of Part (\bfit{c}) of lemma \ref{binding:jorth:theorem}.

Part (\bfit{d}). We derive the following results:
\beq
&&{ \textstyle\sum_{i=1}^{n / 2}  \langle\mathbf{V}_i-\mathbf{I}_2, [(\nabla f(\mathbf{X}) - \tilde{\mathbf{G}}) {\mathbf{X}}^{\top}]_{\B_{\mathbf{i}} \B_{\mathbf{i}}}\rangle} \nonumber \\
&=& { \textstyle\sum_{i=1}^{n / 2}\langle [\mathbf{U}_{\B_{(i)}} (\mathbf{V}_i-\mathbf{I}_2 ) \mathbf{U}_{\B_{(i)}}^{\top}]\mathbf{X},[ (\nabla f(\mathbf{X}) - \tilde{\mathbf{G}} )]\rangle} \nonumber \\
&=& { \textstyle \langle\mathbf{X}^{+}-\mathbf{X},[(\nabla f(\mathbf{X}) - \tilde{\mathbf{G}}) ]\rangle} \nonumber \\
&\stackrel{\step{1}}{\leq} & { \textstyle\frac{1}{2} \|\mathbf{X}^{+}-\mathbf{X} \|^2_{\fro}+\frac{1}{2}\|[\nabla f(\mathbf{X}) - \tilde{\mathbf{G}}]\|^2_{\fro}} \nonumber \\
&\stackrel{\step{2}}{\leq} & { \textstyle\frac{1}{2} \|\mathbf{X} \|_{\fro}^2 \sum_{i=1}^{n / 2} \|\mathbf{V}_i-\mathbf{I}_2 \|_{\fro}^2+\frac{1}{2}\|[\nabla f(\mathbf{X}) - \tilde{\mathbf{G}}]\|^2_{\fro}} \label{cdot relation between V and nabla} 
\eeq
where step \step{1} uses $\forall \mathbf{A}, \mathbf{B}, \frac{1}{2} \|\mathbf{A}-\mathbf{B} \|^2_{\fro} = \frac{1}{2} \|\mathbf{A} \|^2_{\fro} + \frac{1}{2} \|\mathbf{B} \|^2_{\fro} - \langle \mathbf{A}, \mathbf{B}  \rangle \geq 0$, with $\mathbf{A} = \|\mathbf{X}^{+}-\mathbf{X} \|^2_{\fro}$ and $\mathbf{B} = \|[\nabla f(\mathbf{X}) - \tilde{\mathbf{G}} ]\|^2_{\fro}$; step \step{2} uses the conclusion of Part (\bfit{b}).
\end{proof}

 \section{Proofs for Section \ref{sect:opt:analysis}} \label{app:sect:opt:analysis}

\subsection{Proof of Lemma \ref{lemma foptim for Jorth}} \label{O.2}

\begin{proof}
We consider the Lagrangian function of problem (\ref{eq:main}):
\beq
\mathcal{L}(\mathbf{X}, \Lambda) = f(\mathbf{X})-\tfrac{1}{2} \la \Lambda, \X\trans\J\X - \J \ra.
\eeq
\noi Setting the gradient of $\mathcal{L}(\mathbf{X}, \Lambda)$ \textit{w.r.t.} $\X$ to zero yields:
\beq \label{eq:L:G:X}
\nabla f(\X) - \J \X \Lambda = \zero.
\eeq
\noi \textbf{Part (a)}. Multiplying both sides by $\X\trans$ and using the fact that $\X\trans \J\X=\J$, we have $\J \Lambda = \X\trans \nabla f(\X)$. Multiplying both sides by $\J\trans$ and using $\mathbf{J}^{\top} \mathbf{J}=\mathbf{I}$, we have $\Lambda=\mathbf{J}\mathbf{X}^{\top} \nabla f(\X)$. Since $\Lambda$ is symmetric, we have $\Lambda = \nabla f(\X)\trans \X \mathbf{J}$. Putting this equality into Equality (\ref{eq:L:G:X}) yields the following first-order optimality condition for Problem (\ref{eq:main}):
\beq
\nabla f(\X) = \J \X [\nabla f(\X)]\trans \X \J.
\eeq
\noi \textbf{Part (b)}. We let $\G=\nabla f(\X)$. We derive the following results:
\beq
 \G = \J \X \G \trans \X \J &\stackrel{\step{1}}{\Rightarrow}& \J \X\trans \cdot \G = \J \X\trans\cdot \J \X \G\trans \X\J \nn\\
&\stackrel{\step{2}}{\Rightarrow}& \J \X\trans \G =  \G\trans \X\J \nn\\
&\stackrel{\step{3}}{\Rightarrow}& \X (\J \X\trans \G)\X\trans = \X (\G\trans \X\J)\X\trans \nn\\
&\stackrel{\step{4}}{\Rightarrow}& \X \underbrace{\J \X\trans \G\X\trans \J}_{\triangleq \G\trans} \J = \J \underbrace{\J\X \G\trans \X\J}_{\triangleq \G}\X\trans \label{eq:sym:SGJ}\\
&\stackrel{\step{5}}{\Rightarrow}& (\X \G\trans \J) \cdot \J\X = (\J \G \X\trans ) \cdot \J\X\nn\\
&\stackrel{\step{6}}{\Rightarrow}& \X\G\trans\X = \J\G\J \nn\\
&\stackrel{\step{7}}{\Rightarrow}& \J\X\G\trans\X\J = \G ,\nn
\eeq
\noi where step \step{1} uses the results of left-multiplying both sides by $\J\X\trans$; step \step{2} uses $\J\cdot \X\trans\J\X = \J\J=\I$; step \step{3} uses the results of left-multiplying both sides by $\X$ and subsequently right-multiplying them by $\X\trans$; \step{4} uses $\G=\J\X\G\trans\X\J$; step \step{5} uses the the results of right-multiplying both sides by $\J\X$; step \step{6} uses $\J\J=\I$ and $\X\trans\J\X=\J$; step \step{7} uses the results of left-multiply both sides by $\J$ and right-multiplied by $\J$.

Given Equality (\ref{eq:sym:SGJ}), we conclude that the critical point condition is equivalent to the requirement that the matrix $\X \nabla f(\check{\X}) \trans \J$ is symmetric, which is expressed as $\X\G\trans\J=[\X\G\trans\J]\trans$.
\end{proof}

\subsection{Proof of Theorem \ref{theorem:stronger point}} \label{O.3}

\begin{proof} 

We use $\ddot{\X}$ and $\check{\mathbf{X}}$ to denote any \rm{BS}-point and critical point, respectively.

For all $\B \in \Omega \triangleq {\textstyle  \{\mathcal{B}_1, \mathcal{B}_2, \ldots, \mathcal{B}_{\mathrm{C}_n^2} \}}$, we have:
\begin{align} 
\I_2 \in \arg \min _{\V \in \JJ_{\B}} \GG(\V;\ddot{\X},\B).
\nn \end{align}
\noi where $\GG(\V;\X,\B) \triangleq f(\X)+ \tfrac{1}{2}\| \V-\I_2 \|_{\mathbf{Q}+\theta\I}^2 + \la \V - \I, [\nabla f(\X) (\X)\trans ]_{\B\B} \ra$.

The Euclidean gradient of $\GG(\V;\ddot{\X},\B)$ can be computed as:
\beq\label{optim for exact} 
\ts \ddot{\G} \triangleq  \mat ( (\Q+\theta \mathbf{I}_2 ) \vec (\V-\I_2 ) ) +   [\nabla f(\ddot{\X})(\ddot{\X})\trans]_{\B\B}. 
\eeq


\noi Given Lemma \ref{lemma foptim for Jorth}, we set the Riemannian gradient of $\GG(\mathbf{V}; \ddot{\X},\B)$ \textit{w.r.t.} $\mathbf{V}$ to zero, leading to the following first-order optimality condition:
\beq \label{eq:GJGJ}
\mathbf{0}  =  \nabla_{\JJ} \GG (\V  ; \ddot{\X}, \B )  =  \ddot{\mathbf{G}} - \UB \trans \J \mathbf{V} \ddot{\mathbf{G}}^{\top} \mathbf{V} \J \UB.  
\eeq
\noi Letting $\mathbf{V}=\I_2$, and using the definition of $\ddot{\mathbf{G}}$, we have:
\beq
&& \ts \zero_{2,2} = [\nabla f(\X)(\X)\trans ]_{\B\B} - \mathbf{J}_{\B\B} \ddot{\mathbf{G}}^{\top} \mathbf{J}_{\B\B},~\forall \B \in \{\mathcal{B}_i \}_{i=1}^{\mathbf{C}_n^2}\nn\\
&\Rightarrow& \ts \zero_{2,2} =   \UB^{\top}[\nabla f(\ddot{\X}) \ddot{\X}\trans]\UB - \J_{\B\B} \UB^{\top}[\ddot{\X}\nabla f(\ddot{\X})\trans]\UB \J_{\B\B},~\forall \B \in \{\mathcal{B}_i \}_{i=1}^{\mathbf{C}_n^2}\nn\\
&\stackrel{\step{1}}{\Rightarrow}& \ts \zero_{2,2} =   \UB^{\top}[\nabla f(\ddot{\X}) \ddot{\X}\trans]\UB - \UB^{\top} \J [\ddot{\X}\nabla f(\ddot{\X})\trans] \J \UB ,~\forall \B \in \{\mathcal{B}_i \}_{i=1}^{\mathbf{C}_n^2}\nn\\
&\stackrel{\step{2}}{\Rightarrow}& \ts \zero_{n,n} = [\nabla f(\ddot{\X}) \ddot{\X}\trans] - \J [\ddot{\X}\nabla f(\ddot{\X})\trans] \J, \nn\\
&\stackrel{\step{3}}{\Rightarrow}& \ts  [\J\nabla f(\ddot{\X}) \ddot{\X}\trans] = [\J\nabla f(\ddot{\X}) \ddot{\X}\trans]\trans, \nn
\eeq
\noi where step \step{1} uses $\UB^{\top} \J = \J_{\B\B} \UB^{\top}$ and $\J \UB = \UB \J_{\B\B}$; step \step{2} uses the the following results for any $\mathbf{W}\in\Rn^{n\times n}$:
\begin{align} \label{from ABS to c}
({ \textstyle \forall \B \in \{\mathcal{B}_i \}_{i=1}^{\mathbf{C}_n^2}, \mathbf{0}_{2, 2}} = { \textstyle \mathbf{U}_{\B}^{\top} \mathbf{W} \mathbf{U}_{\B}=\mathbf{W}_{\B \B})}  \Rightarrow{\textstyle (\mathbf{W}=\mathbf{0}_{n, n})};
\end{align}
\noi step \step{3} uses the fact that both sides are left-multiplied by $\J$. We conclude that the matrix $\J\nabla f(\ddot{\X}) \ddot{\X}\trans$ is symmetric. Using Claim (\bfit{b}) of Lemma \ref{lemma foptim for Jorth}, we conclude that $\ddot{\X}$ is a also a critical point.

\noi Notably, the condition in Equation (\ref{eq:GJGJ}) is a necessary but not sufficient condition. This is because $\textup{BS}$-point is the global minimum of Problem: $\arg \min _{\V \in \JJ_{\B}} \GG(\V;\ddot{\X},\B)$, according to Definition \ref{BS def}.
\end{proof}

\section{Proofs for Section \ref{sect:CA}} \label{app:sect:CA}

\subsection{Proof of Lemma \ref{lemma: relation betw u_k and U_k-1}} \label{VRJJOBCD.1}

\begin{proof}
By the definition of $\tilde{\mathbf{G}}^t$, we have
\beq
&& \mathbb{E}_{\iota^{t}}[\|\tilde{\mathbf{G}}^t-\nabla f (\mathbf{X}^t )\|_{\fro}^2 ] \nonumber \\
&\stackrel{\step{1}}{=} & p \mathbb{E}_{\iota^{t}}[\textstyle\|\frac{1}{b} \sum_{i=1}^b \nabla f_i  (\mathbf{X}^t )-\nabla f (\mathbf{X}^t )\|_{\fro}^2 ] + \nonumber \\
&& (1-p) \mathbb{E}_{\iota^{t}}[\textstyle \|\tilde{\mathbf{G}}^{t-1}+\frac{1}{b^{\prime}} \sum_{i=1}^{b^{\prime}} (\nabla f_i (\mathbf{X}^{t} )-\nabla f_i (\mathbf{X}^{t-1} ) )-\nabla f (\mathbf{X}^{t} )\|_{\fro}^2 ]\nonumber \\
&\stackrel{\step{2}}{=} & p \mathbb{E}_{\iota^{t}}[\textstyle \|\frac{1}{b} \sum_{i=1}^b \nabla f_i  (\mathbf{X}^t )-\nabla f (\mathbf{X}^t ) \|_{\fro}^2 ]+
(1-p)\mathbb{E}_{\iota^{t-1}}[\|\textstyle \tilde{\mathbf{G}}^{t-1}-\nabla f (\mathbf{X}^{t-1} )\|_{\fro}^2] \nonumber \\
&&+(1-p) \mathbb{E}_{\iota^{t}}[ \textstyle \|\frac{1}{b^{\prime}} \sum_{i=1}^{b^{\prime}} (\nabla f_i (\mathbf{X}^{t} )-\nabla f_i (\mathbf{X}^{t-1} ) )-\nabla f (\mathbf{X}^{t} )+\nabla f (\mathbf{X}^{t-1} )\|_{\fro}^2 ] \nonumber 
\eeq
where step \step{1} uses formula (\ref{eq:VR's gradient update}); step \step{2} uses that $\tilde{\mathbf{G}}^{t-1}-\nabla f(\mathbf{X}^{t-1})$ is measurable w.r.t. $\iota^{t-1}$ and $\mathbb{E}_{\iota^{t}}[ \textstyle \|\frac{1}{b^{\prime}} \sum_{i=1}^{b^{\prime}} (\nabla f_i (\mathbf{X}^{t} )-\nabla f_i (\mathbf{X}^{t-1} ) )-\nabla f (\mathbf{X}^{t} )+\nabla f (\mathbf{X}^{t-1} )\|_{\fro}^2 ]=0$.
We further have
\beq
&& { \textstyle\mathbb{E}_{\iota^{t}}[\|\tilde{\mathbf{G}}^t-\nabla f (\mathbf{X}^t )\|_{\fro}^2 ] }\nonumber \\
&\stackrel{\step{1}}{\leq} & { \textstyle p \mathbb{E}_{\iota^{t}}[\|\frac{1}{b} \sum_{i=1}^b \nabla f_i  (\mathbf{X}^t )-\nabla f (\mathbf{X}^t ) \|_{\fro}^2 ]+
(1-p)\mathbb{E}_{\iota^{t-1}} [\|\tilde{\mathbf{G}}^{t-1}-\nabla f (\mathbf{X}^{t-1} )\|_{\fro}^2]} \nonumber \\
&&{ \textstyle+(1-p) \mathbb{E}_{\iota^{t}}[\|\frac{1}{b^{\prime}} \sum_{i=1}^{b^{\prime}} (\nabla f_i (\mathbf{X}^{t} )-\nabla f_i (\mathbf{X}^{t-1} ) )\|_{\fro}^2 ]} \nonumber \\
&\stackrel{\step{2}}{\leq} & { \textstyle\frac{p(N-b)}{b(N-1)} \mathbb{E}_{\iota^{t}}[\|\nabla f_i (\mathbf{X}^{t} )-\nabla f (\mathbf{X}^{t} )\|_{\fro}^2]+
(1-p)\mathbb{E}_{\iota^{t-1}} \|\tilde{\mathbf{G}}^{t-1}-\nabla f (\mathbf{X}^{t-1} )\|_{\fro}^2] }\nonumber \\
&& { \textstyle+\frac{1-p}{b^{\prime}} \mathbb{E}_{\iota^{t-1}}[\| \nabla f_i (\mathbf{X}^{t} )-\nabla f_i (\mathbf{X}^{t-1} )\|_{\fro}^2] }\nonumber \\
&\stackrel{\step{3}}\leq &  \textstyle\frac{p(N-b)}{b(N-1)} \sigma^2 +
(1-p)\mathbb{E}_{\iota^{t-1}} [\|\tilde{\mathbf{G}}^{t-1}-\nabla f (\mathbf{X}^{t-1} )\|_{\fro}^2] \nn\\
&& \ts +\frac{L_f^2 \XXX^2(1-p)}{b^{\prime}} \mathbb{E}_{\iota^{t-1}}[\sum_{i=1}^{n / 2}\|\mathbf{V}^{t-1}_i-\mathbf{I}_2\|_{\fro}^2] ~~~~
\eeq
\noi where step \step{1} uses that for any random variable $\mathbf{X}, \mathbb{E} [(\mathbf{X}-\mathbb{E}[\mathbf{X}])^2 ] \leq \mathbb{E} [\mathbf{X}^2 ]$; step \step{2} uses lemma \ref{lemma: The expected difference between the mean of the sample and the population gradient}; step \step{3} uses assumption \ref{ass:bound:var}, Inequality (\ref{Lipschitz contiu}) and Part (\bfit{b}) of lemma \ref{lemma: properties of V update of JJOBCD}.
\end{proof}

\subsection{Proof of theorem \ref{Thm:GS-JOBCD global coverage}}\label{app:sect:CA:GC:GS-JJOBCD}

\begin{proof}
For simplicity, we use $\B$ instead of $\B^t$.
We will show that the following inequality holds :
\begin{align} \label{GS coverage F dif}
{ \textstyle \frac{\theta}{2}\|\bar{\mathbf{V}}^t-\mathbf{I}_2\|_{\fro}^2 \leq f (\mathbf{X}^t )-f (\mathbf{X}^{t+1} )} .
\end{align}
Since $\bar{\mathbf{V}}^t$ is the global optimal solution of Problem (\ref{eq:J:opt:multi:dim}), we have:
$$
{ \textstyle \GG (\bar{\mathbf{V}}^t ; \mathbf{X}^t, \B ) \leq \GG (\mathbf{V} ; \mathbf{X}^t, \B ), \V\in \JJ_{\B}}
$$
Letting $\mathbf{V}=\mathbf{I}_2$, we have: ${ \textstyle \GG (\bar{\mathbf{V}}^t ; \mathbf{X}^t, \B ) \leq \GG (\mathbf{I}_2 ; \mathbf{X}^t, \B )}$. We further obtain:
\begin{align} \label{GS optim < I}
{ \textstyle \frac{1}{2}\|\bar{\mathbf{V}}^t-\mathbf{I}_2\|_{\mathbf{Q}+\theta \mathbf{I}}^2+\langle\bar{\mathbf{V}}^t-\mathbf{I}, [\nabla f (\mathbf{X}^t ) (\mathbf{X}^t )^{\top}]_{\B \B}\rangle \leq 0 .}
\end{align}
Using Inequality (\ref{Lipschitz contiu}) with $N=1$ and Part (\bfit{c}) of Lemma \ref{binding:jorth:theorem}, we have:
\begin{align} \label{GS Lips nequ}
f (\mathbf{X}^{t+1} ) \leq f (\mathbf{X}^t )+{ \textstyle \langle\bar{\mathbf{V}}^t-\mathbf{I}_2,[\nabla f (\mathbf{X}^t ) (\mathbf{X}^t )^{\top}]_{\B \B}\rangle}+
{ \textstyle \frac{1}{2}\|\bar{\mathbf{V}}^t-\mathbf{I}_2\|_{\mathbf{Q}}^2 }.
\end{align}
Adding Inequality (\ref{GS optim < I}) and (\ref{GS Lips nequ}) together, we obtain the inequality in (\ref{GS coverage F dif}).
Using the result of Part (\bfit{b}) in Lemma \ref{binding:jorth:theorem} that ${ \textstyle \frac{ \|\mathbf{X}^{+}-\mathbf{X} \|_{\fro}^2}{\|\mathbf{X}\|_{\fro}^2} \leq  \|\mathbf{V}-\mathbf{I}_2 \|_{\fro}^2}$, we have the following sufficient decrease condition:
\begin{align} \label{GS coverage leq}
{ \textstyle f (\mathbf{X}^{t+1} )-f (\mathbf{X}^t )} \leq{ \textstyle -\frac{\theta}{2}\|\bar{\mathbf{V}}^t-\mathbf{I}_2\|_{\fro}^2} \leq{ \textstyle -\frac{\theta}{2} \frac{ \|\mathbf{X}^{t+1}-\mathbf{X}^t \|_{\fro}^2}{ \|\mathbf{X}^t \|_{\fro}^2}}
\end{align}
We now prove the global convergence. Taking the expectation for Inequality (\ref{GS coverage leq}), we obtain a lower bound on the expected progress made by each iteration for Algorithm \ref{alg:main:GS:JOBCD}:
$$
{ \textstyle \mathbb{E}_{\xi^{t+1}} [f (\mathbf{X}^{t+1} ) ]-\mathbb{E}_{\xi^{t}} [f (\mathbf{X}^t )}] \leq{ \textstyle -\mathbb{E}_{\xi^{t}} [\frac{\theta}{2 }\|\bar{\mathbf{V}}^t-\mathbf{I}_2\|_{\fro}^2 ]} .
$$
Summing up the inequality above over $t=0,1, \ldots,T$, we have:
$$
{ \textstyle \mathbb{E}_{\xi^{t}}[\frac{\theta}{2} \sum_{t=0}^{T}\|\bar{\mathbf{V}}^t-\mathbf{I}_2\|_{\fro}^2]} \leq { \textstyle f (\mathbf{X}^0 )-\mathbb{E}_{\xi^{T+1}} [f (\mathbf{X}^{T+1} )] } \leq { \textstyle f (\mathbf{X}^0 )-f(\bar{\mathbf{X}}) } .
$$
As a result, there exists an index $\bar{t}$ with $0 \leq \bar{t} \leq T$ such that
\begin{align} \label{GS leq:covsum t}
{ \textstyle \mathbb{E}_{\xi^{\bar{t}}}[\|\bar{\mathbf{V}}^{\bar{t}}-\mathbf{I}_2\|_{\fro}^2] }\leq { \textstyle \frac{2}{\theta (T+1)} [f (\mathbf{X}^0 )-f(\bar{\mathbf{X}}) ]}.
\end{align}
Furthermore, for any $t$, we have:
\begin{align} \label{GS eq:sum C}
\mathcal{E}(\mathbf{X}^t) \triangleq { \textstyle \frac{1}{C_{n}^{2}} \sum_{i=1}^{C_{n}^{2}} \operatorname{dist}  (\mathbf{I}_2, \arg \min _{\mathbf{V}} \GG (\mathbf{V}; \mathbf{X}^t, \mathcal{B}_{i}) )^2}={ \textstyle \mathbb{E}_{\xi^{t}} [\|\bar{\mathbf{V}}^t-\mathbf{I}_2\|_{\fro}^2 ]}
\end{align}
Combining Inequality (\ref{GS leq:covsum t}) and equality (\ref{GS eq:sum C}), we have the following result:
\begin{align}\label{Global converg:GS}
{ \textstyle \mathbb{E}_{\xi^{t}} [\|\bar{\mathbf{V}}^t-\mathbf{I}_2\|_{\fro}^2 ]}={ \textstyle \mathcal{E}(\mathbf{X}^{\bar{t}})} \leq { \textstyle \frac{2 (f (\mathbf{X}^0 )-f(\bar{\mathbf{X}}) )}{\theta (T+1)} }.
\end{align}
We will give the arithmetic operations of \textbf{GS-JOBCD}. By the chosen parameters and Inequality (\ref{Global converg:GS}), we have
$$
{ \textstyle \mathcal{E}(\mathbf{X}^{\bar{t}})
}\leq { \textstyle \frac{2 (f (\mathbf{X}^0 )-f(\bar{\mathbf{X}}) )}{\theta (T+1)}}
\leq \epsilon .
$$
We define ${ \textstyle \Delta_0=f (\mathbf{X}_0 )-f (\bar{\mathbf{X}} )}$ and set ${ \textstyle T+1=\frac{2 \Delta_0}{\epsilon \theta}}$. Denoting $m_t$ to be the number of arithmetic operations at $t$-th iteration, we have for $t \geq 1$:
$$
{ \textstyle \mathbb{E}_{\xi^{t}} [m_t]=\mathcal{O} (2N ) .
}$$
Then we have for $t \geq 1$, the total number of arithmetic operations $M^T$ in $T$ iterations to obtain $\epsilon$-BS-point is
$$
{ \textstyle \mathbb{E}_{\xi^{T}}[M^T]}={ \textstyle \mathbb{E}_{\xi^{t}} [\sum_{t=0}^T m_t ]}={ \textstyle 2(T+1)N}={ \textstyle \mathcal{O}((T+1)N)}.
$$
We have
$(T+1)N=N\frac{2\Delta_0}{\epsilon \theta} =\mathcal{O} (\tfrac{\Delta_0 N}{\epsilon} )$. \end{proof}

\subsection{Proof of Theorem \ref{Thm:VR-JJOBCD global coverage}}\label{app:sect:CA:GC:VR-JJOBCD}

\begin{proof}
For simplicity, we use $\B$ instead of $\B^t$. Defining $\bar{\mathbf{V}_:}^t$ as the global optimal solution of $\arg \min_{\V_:} \mathcal{T}(\mathbf{V}_:; \mathbf{X}^t, \B)$, we have:
$$
{\textstyle\mathcal{T}(\bar{\mathbf{V}_:}^t; \mathbf{X}^t, \B) \leq \mathcal{T}(\mathbf{V}_:; \mathbf{X}^t, \B), \forall i, \V_i \in \JJ_{\B_{(i)}}}
$$
Letting $\mathbf{V}_i=\mathbf{I}_2, \forall i$, we have: ${\textstyle\mathcal{T}(\bar{\mathbf{V}_:}^t; \mathbf{X}^t, \B) \leq \mathcal{T} (\mathbf{I}_2; \mathbf{X}^t, \B)}$.
We further obtain:
\begin{align} \label{optim < I}
{ \textstyle\frac{1}{2}\sum_{i=1}^{n/2}\|\bar{\mathbf{V}}_i^t-\mathbf{I}_2\|_{(\zeta + \theta )\I}^2+\sum_{i=1}^{n/2}\langle\bar{\mathbf{V}}_i^t-\mathbf{I},[ \tilde{\mathbf{G}}^t (\mathbf{X}^t )^{\top}]_{\B_{(i)} \B_{(i)}}\rangle \leq 0 .}
\end{align}

Using the results of telescoping Inequality (\ref{Lipschitz contiu}) over $i$ from $1$ to $N$ with Part (\bfit{c}) of Lemma \ref{lemma: properties of V update of JJOBCD}, we have:
\begin{align} \label{Lips nequ}
{ \textstyle f (\mathbf{X}^{t+1} ) \leq f (\mathbf{X}^t )+\sum_{i=1}^{n/2}\langle \bar{\mathbf{V}}_i-\mathbf{I}_2, [\nabla f(\mathbf{X}) \mathbf{X}^{\top} ]_{\B_{(i)} \B_{(i)}}\rangle+\frac{1}{2}\sum_{i=1}^{n/2} \|\bar{\mathbf{V}}_i-\mathbf{I}_2 \|_{\zeta \I}^2 .}
\end{align}
Adding inequality (\ref{optim < I}), and (\ref{Lips nequ}) together, we obtain the inequality in (\ref{coverage leq}).
\beq
&&{ \textstyle \frac{\theta}{2}\sum_{i=1}^{n/2}\|\bar{\mathbf{V}}_i^t-\mathbf{I}_2\|_{\fro}^2 }\nn\\
&\leq &
{ \textstyle f (\mathbf{X}^t ) -f (\mathbf{X}^{t+1} )+ \sum_{i=1}^{n/2}\langle\bar{\mathbf{V}}_i^t-\mathbf{I},[(\nabla f (\mathbf{X}^t ) - \tilde{\mathbf{G}}^t)  (\mathbf{X}^t )^{\top} ]_{\B_{(i)} \B_{(i)}}\rangle }\nonumber \\
&\stackrel{\step{1}}{\leq} & { \textstyle f (\mathbf{X}^t ) -f (\mathbf{X}^{t+1} )+ \frac{1}{2} \|\mathbf{X}^t \|_{\fro}^2 \sum_{i=1}^{n / 2}\|\bar{\mathbf{V}}_i^t-\mathbf{I}_2\|_{\fro}^2+\frac{1}{2}\|[\nabla f(\mathbf{X}^t)-\tilde{\mathbf{G}}^{t} ]\|^2_{\fro}} \label{coverage leq}
\eeq
where step \step{1} uses Part (\bfit{d}) of Lemma \ref{lemma: properties of V update of JJOBCD}.\\
Taking expectation on both sides of inequality (\ref{coverage leq}) with respect to all randomness of the algorithm, and adding the inequality in Lemma \ref{lemma: relation betw u_k and U_k-1} $ \times \frac{1}{2p}$ to (\ref{coverage leq}), we have:
\beq
&& { \textstyle  (\frac{\theta - \XXX^2}{2} -\frac{L_f^2 \XXX^2(1-p)}{2pb^{\prime}}  ) \mathbb{E}_{\iota^{t}}  [\sum_{i=1}^{n/2}\|\bar{\mathbf{V}}_i^t-\mathbf{I}_2\|_{\fro}^2 ] }
\nonumber \\
&\leq & { \textstyle \mathbb{E}_{\iota^{t}} [f (\mathbf{X}^t )] -\mathbb{E}_{\iota^{t+1}} [f (\mathbf{X}^{t+1} ) ]+
\frac{(N-b)}{2b(N-1)} \sigma^2 +\frac{1-p}{2p} (\mathbb{E}_{\iota^{t}} [ u^{t} ] - \mathbb{E}_{\iota^{t+1}} [ u^{t+1} ] )} \label{Global converg E(1)}
\eeq
Summing up the inequality above over $t=0,1, \ldots,T$, we have:
\beq
&& { \textstyle  (\frac{\theta - \XXX^2}{2} -\frac{L_f^2 \XXX^2(1-p)}{2pb^{\prime}}  )} { \textstyle \mathbb{E}_{\iota^{T}} [\sum_{t=0}^{T}\sum_{i=1}^{n/2}\|\bar{\mathbf{V}}_i^t-\mathbf{I}_2\|_{\fro}^2 ]}
 \nonumber \\
&\leq & { \textstyle f (\mathbf{X}^0 ) -\mathbb{E}_{\iota^{T}} [f (\mathbf{X}^{T} ) ]}+
{ \textstyle \frac{(T+1)(N-b)}{2b(N-1)} \sigma^2}+ { \textstyle \frac{1-p}{2p} (u^{0} - \mathbb{E}_{\iota^{T+1}} [ u^{T+1} ] )}  \nonumber \\
&\leq & { \textstyle f (\mathbf{X}^0 ) -f(\bar{\mathbf{X}}) }+
{ \textstyle \frac{(T+1)(N-b)}{2b(N-1)} \sigma^2}+ { \textstyle \frac{1-p}{2p} (u^{0} - \mathbb{E}_{\iota^{T+1}} [ u^{T+1} ] )} \label{Global converg E(2)}
\eeq
As a result, there exists an index $\bar{t}$ with $0 \leq \bar{t} \leq T$ such that
\beq
&& { \textstyle  (\frac{\theta - \XXX^2}{2} -\frac{L_f^2 \XXX^2(1-p)}{2pb^{\prime}}  )(T+1)\mathbb{E}_{\iota^{\bar{t}}}[\sum_{i=1}^{n/2}\|\bar{\mathbf{V}}_i^{\bar{t}}-\mathbf{I}_2\|_{\fro}^2]}
\nonumber \\
&\leq & { \textstyle f (\mathbf{X}^0 ) -f(\bar{\mathbf{X}})+
\frac{(T+1)(N-b)}{2b(N-1)} \sigma^2 + \frac{1-p}{2p} (u^{0} - \mathbb{E}_{\iota^{T+1}} [ u^{T+1} ] )} \label{Global converg E(3)}
\eeq
Defining ${ \textstyle \varpi = \frac{\theta - \XXX^2}{2} -\frac{L_f^2 \XXX^2(1-p)}{2pb^{\prime}}}$, furthermore, for any $t$ and $\forall i$, we have:
\beq \label{Global converg E(4)}
{ \textstyle \mathcal{E}(\mathbf{X}^t)=
{ \textstyle \frac{1}{\mathrm{C}_J} \sum_{i=1}^{\mathrm{C}_J} \mathbb{E}_{\iota^{t}} [\operatorname{dist}(\mathbf{I}_2, \arg \min _{\mathbf{V}_:} \mathcal{T}(\mathbf{V}_:; \mathbf{X}^t, \tilde{\mathcal{B}}_{i}))^2]}=
\mathbb{E}_{\iota^{t}} [\sum_{i=1}^{n / 2}\|\bar{\mathbf{V}}_i^t-\mathbf{I}_2\|_{\fro}^2 ]}
\eeq
Combining inequality (\ref{Global converg E(3)}) and (\ref{Global converg E(4)}) , we have the following result:
\begin{align} \label{Global converg E(5)}
{ \textstyle
\mathcal{E}(\mathbf{X}^{\bar{t}}) }{ \textstyle \leq \frac{1}{(T+1)\varpi}  ( f (\mathbf{X}^0 ) -f(\bar{\mathbf{X}})+
\frac{(T+1)(N-b)}{2b(N-1)} \sigma^2 +
\frac{1-p}{2p} (u^{0}- \mathbb{E}_{\iota^{T+1}} [ u^{T+1} ] ) )}
\end{align}
By the chosen parameters and Inequality (\ref{Global converg E(5)}), we have
$$
{ \textstyle \mathcal{E}(\mathbf{X}^{\bar{t}})
\leq \frac{1}{(T+1)\varpi} (f (\mathbf{X}^0 ) -f(\bar{\mathbf{X}})+
\frac{(T+1)(N-b)}{2b(N-1)} \sigma^2 +
\frac{1-p}{2p} (u^{0} - \mathbb{E}_{\iota^{T+1}} [ u^{T+1} ] ) )
\leq \epsilon} .
$$
We define $\Delta_0=f (\mathbf{X}_0 )-f (\bar{\mathbf{X}} )$ and set $T+1=\frac{ \Delta_0}{\epsilon \varpi}$. Denoting $m_t^i$ to be the number of arithmetic operations to update the $i$-th block at $t$-th iteration, we have for $t \geq 1$
$$
{ \textstyle \mathbb{E}_{\iota^{t}} [m_t^i] =\mathcal{O} (2 (p b+(1-p) b^{\prime} ) ) }.
$$
Letting $m_t$ be the number of arithmetic operations in the $t$-the iteration, we have for $t \geq 1$
$$
{ \textstyle \mathbb{E}_{\iota^{t}} [m_t ]=\mathbb{E}_{\iota^{t}} [\sum_{i=1}^{n/2} m_t^i ]=\mathcal{O} ( (p b+(1-p) b^{\prime} ) n/2 \times 2 )=\mathcal{O} (n (p b+(1-p) b^{\prime} ) ) }.
$$
Hence, the total number of arithmetic operations $M^T$ in $T$ iterations to obtain $\epsilon$-BS-point is
$$
{ \textstyle \mathbb{E}_{\iota^{T}}[M]=\mathbb{E}_{\iota^{t}} [\sum_{t=0}^T m_t ]=\mathcal{O}(b n)+\mathbb{E}_{\iota^{t}} [\sum_{t=1}^T m_t ]=\mathcal{O} (b n+Tn (p b+(1-p) b^{\prime} ) ) }.
$$
Since $b=N, b^{\prime}=\sqrt{b}$ and ${ \textstyle p=\frac{b^{\prime}}{b+b^{\prime}}}$, ${ \textstyle \varpi = \frac{\theta - \XXX^2}{2} -\frac{L_f^2 \XXX^2(1-p)}{2pb^{\prime}} = \frac{1}{2}(\theta - \XXX^2-L_f^2 \XXX^2)}$, we have
$$
{ \textstyle nT (p b+(1-p) b^{\prime} )=n\frac{\Delta_0}{\epsilon (\theta - \XXX^2-L_f^2 \XXX^2)} \frac{2 b b^{\prime}}{b+b^{\prime}} \leq \frac{n\Delta_0}{\epsilon (\theta - \XXX^2-L_f^2 \XXX^2)} 2b^{\prime}=\mathcal{O} (\tfrac{\Delta_0 \sqrt{N}}{\epsilon} )}.
$$ \end{proof}

\subsection{Proof of Theorem \ref{Thm: Strong CON for VR-JJOBCD}} \label{app:sect:CA:SC:VR-JJOBCD}

\begin{proof} For simplicity, we use $\B$ instead of $\B^t$. We notice that the Riemannian gradient of ${ \textstyle \mathcal{T}(\mathbf{V}_: ; \mathbf{X}^t, \B)}$ at the point ${ \textstyle \mathbf{V}_i=\mathbf{I}_2}, \forall i$ . Defining ${ \textstyle \mathbf{G}=\tilde{\mathbf{G}}^t [\mathbf{X}^t ]^{\top}}$ and using ${ \textstyle \mathbf{J} \mathbf{U}_{\B} =  \mathbf{U}_{\B} \mathbf{J}_{\B\B}}$, ${ \textstyle \mathbf{U}_{\B}^{\top} \mathbf{J} = \mathbf{J}_{\B\B} \mathbf{U}_{\B}^{\top}}$,we have:
\begin{align}
{ \textstyle \quad \nabla_{\JJ} \mathcal{T} (\V_: = \mathbf{I}_2; \mathbf{X}^t, \B ) }= { \textstyle \sum_{i=1}^{n/2}\mathbf{U}_{\B_{(i)}}^{\top}\mathbf{G} \mathbf{U}_{\B_{(i)}} - \mathbf{U}_{\B_{(i)}}^{\top} \mathbf{J} \mathbf{G}^{\top} \mathbf{J} \mathbf{U}_{\B_{(i)}}} \label{Riemannian grad of J at I}
\end{align}
Then, we prove the following important lemmas.

\begin{lemma} \label{lemma:lipschitz:G}
We have the following result for \textup{\textbf{VR-J-JOBCD}}:
${ \textstyle \mathbb{E}_{\iota^{t+1}}[\|\tilde{\mathbf{G}}^t-\tilde{\mathbf{G}}^{t+1}\|_{\fro} ]} \leq { \textstyle p \mathbb{E}_{\iota^{t}}[\sqrt{u^t}] }+{ \textstyle L_f \mathbb{E}_{\iota^{t+1}}[\|{ \textstyle \mathbf{X}^{t}- \mathbf{X}^{t+1}} 
\|_{\fro} ]}$
\end{lemma}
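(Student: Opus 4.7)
The plan is to condition on the two branches of the variance-reduced update in Formula (\ref{eq:VR's gradient update}) and bound each by triangle inequality. Since $b=N$ by assumption, the first branch collapses to the full gradient: with probability $p$, $\tilde{\mathbf{G}}^{t+1}=\nabla f(\mathbf{X}^{t+1})$, and with probability $1-p$, $\tilde{\mathbf{G}}^{t+1}-\tilde{\mathbf{G}}^{t}=\tfrac{1}{b^{\prime}}\sum_{i\in\SB^{t+1}}(\nabla f_i(\mathbf{X}^{t+1})-\nabla f_i(\mathbf{X}^{t}))$. Taking the Frobenius norm and conditional expectation yields
\[
\mathbb{E}_{\iota^{t+1}}[\|\tilde{\mathbf{G}}^{t+1}-\tilde{\mathbf{G}}^{t}\|_{\fro}] \leq p\,\mathbb{E}_{\iota^{t+1}}[\|\nabla f(\mathbf{X}^{t+1})-\tilde{\mathbf{G}}^{t}\|_{\fro}] + (1-p)\,\mathbb{E}_{\iota^{t+1}}[\|\tfrac{1}{b^{\prime}}\!\!\sum_{i\in\SB^{t+1}}\!(\nabla f_i(\mathbf{X}^{t+1})-\nabla f_i(\mathbf{X}^{t}))\|_{\fro}].
\]

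For the first term, I would insert $\pm\nabla f(\mathbf{X}^{t})$ and apply the triangle inequality together with the $L_f$-Lipschitz continuity of $\nabla f$ (implied by Assumption ($\mathbb{A}$-i)):
\[
\|\nabla f(\mathbf{X}^{t+1})-\tilde{\mathbf{G}}^{t}\|_{\fro} \leq \|\nabla f(\mathbf{X}^{t+1})-\nabla f(\mathbf{X}^{t})\|_{\fro}+\|\nabla f(\mathbf{X}^{t})-\tilde{\mathbf{G}}^{t}\|_{\fro} \leq L_f\|\mathbf{X}^{t+1}-\mathbf{X}^{t}\|_{\fro}+\sqrt{u^t}.
\]
For the second term, apply the triangle inequality inside the minibatch sum and use $L_f$-Lipschitz continuity of each $\nabla f_i$ componentwise, obtaining the uniform bound $L_f\|\mathbf{X}^{t+1}-\mathbf{X}^{t}\|_{\fro}$ regardless of which samples are drawn into $\SB^{t+1}$. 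Combining the two contributions with weights $p$ and $1-p$ cancels the $(1-p)$ in the $L_f$ part, giving exactly $L_f\,\mathbb{E}_{\iota^{t+1}}[\|\mathbf{X}^{t}-\mathbf{X}^{t+1}\|_{\fro}]+p\,\mathbb{E}_{\iota^{t}}[\sqrt{u^t}]$, as claimed. The only subtlety is the tower property: $\sqrt{u^t}$ is $\iota^t$-measurable, so its expectation under $\iota^{t+1}$ reduces to an expectation under $\iota^t$, which is why the first summand carries the subscript $\iota^t$ while the second carries $\iota^{t+1}$.

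I do not anticipate a serious obstacle here; the lemma is essentially bookkeeping of the two cases of the PAGE-style estimator, plus Lipschitz continuity. The one place requiring a touch of care is the case-$p$ branch: one must avoid trying to use Jensen's inequality on $\mathbb{E}[\sqrt{u^t}]$ to convert it to $\sqrt{\mathbb{E}[u^t]}$, since the statement is already phrased in terms of the (weaker) $\mathbb{E}[\sqrt{\cdot}]$ quantity and any such conversion would go the wrong direction.
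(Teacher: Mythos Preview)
Your proposal is correct and follows essentially the same approach as the paper's proof: condition on the two branches of the PAGE estimator (using $b=N$ so the first branch is the full gradient), insert $\pm\nabla f(\mathbf{X}^t)$ and apply the triangle inequality plus $L_f$-Lipschitzness for the first branch, bound the minibatch term in the second branch directly by Lipschitzness, and then combine the $p$ and $1-p$ weighted Lipschitz contributions into the single $L_f$ term. Your remark on the $\iota^t$-measurability of $\sqrt{u^t}$ is exactly the point behind the paper's step \step{4}.
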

\begin{proof}
By the definition of $\tilde{\mathbf{G}}^t$, with the choice of $b=N$, $b^{\prime}=\sqrt{b}$ and ${ \textstyle p=\frac{b^{\prime}}{b+b^{\prime}}}$, we have
\beq
&& { \textstyle \mathbb{E}_{\iota^{t+1}}[\|\tilde{\mathbf{G}}^t-\tilde{\mathbf{G}}^{t+1}\|_{\fro} ]} \nonumber \\
&\stackrel{\step{1}}{=} & \mathbb{E}_{\iota^{t+1}}[\|\tilde{\mathbf{G}}^t-{ \textstyle \frac{p}{b}\sum_{i=1}^{b}\nabla f_i (\mathbf{X}^{t+1})}
-{ \textstyle \frac{1-p}{b^{\prime}}\sum_{i=1}^{b^{\prime}}(\nabla f_i (\mathbf{X}^{t+1})-\nabla f_i(\mathbf{X}^{t}) )}
-(1-p) \tilde{\mathbf{G}}^{t}
\|_{\fro} ]\nonumber \\
&= & \mathbb{E}_{\iota^{t+1}}[\|p\tilde{\mathbf{G}}^t-{ \textstyle \frac{p}{b}\sum_{i=1}^{b}\nabla f_i (\mathbf{X}^{t+1})}
-{ \textstyle \frac{1-p}{b^{\prime}}\sum_{i=1}^{b^{\prime}}(\nabla f_i (\mathbf{X}^{t+1})-\nabla f_i(\mathbf{X}^{t}) )}
\|_{\fro} ]\nonumber \\
&\stackrel{\step{2}}{\leq} & p\mathbb{E}_{\iota^{t+1}}[\|\tilde{\mathbf{G}}^t-{ \textstyle \nabla f (\mathbf{X}^{t+1})} 
\|_{\fro} ] + 
\tfrac{1-p}{b^{\prime}} \mathbb{E}_{\iota^{t}}[\|{ \textstyle \sum_{i=1}^{b^{\prime}}\nabla f_i (\mathbf{X}^{t+1})
-\nabla f_i (\mathbf{X}^{t}) }
\|_{\fro} ]\nonumber \\
&\stackrel{\step{3}}{\leq} & p\mathbb{E}_{\iota^{t}}[\|\tilde{\mathbf{G}}^t-{ \textstyle \nabla f (\mathbf{X}^{t})} 
\|_{\fro} ] + p\mathbb{E}_{\iota^{t+1}}[\|{ \textstyle \nabla f (\mathbf{X}^{t})}-{ \textstyle \nabla f (\mathbf{X}^{t+1})} 
\|_{\fro} ] \nonumber \\ 
&& + \tfrac{1-p}{b^{\prime}} \mathbb{E}_{\iota^{t+1}}[\|{ \textstyle \sum_{i=1}^{b^{\prime}}\nabla f_i (\mathbf{X}^{t+1})
-\nabla f_i (\mathbf{X}^{t}) }
\|_{\fro} ]\nonumber \\
&\stackrel{\step{4}}{\leq} &  { \textstyle p \mathbb{E}_{\iota^{t}}[\sqrt{u^t}] }+ { \textstyle p\mathbb{E}_{\iota^{t+1}}[\|{ \textstyle \nabla f (\mathbf{X}^{t})}-{ \textstyle \nabla f (\mathbf{X}^{t+1})} 
\|_{\fro} ]} + 
{ \textstyle (1-p) \mathbb{E}_{\iota^{t+1}}[\|\nabla f_i (\mathbf{X}^{t+1})
-\nabla f_i (\mathbf{X}^{t}) 
\|_{\fro} ]} \nonumber \\
&\stackrel{\step{5}}{\leq} &  { \textstyle p \mathbb{E}_{\iota^{t}}[\sqrt{u^t}] }+ { \textstyle L_f \mathbb{E}_{\iota^{t+1}}[\|{ \textstyle \mathbf{X}^{t}- \mathbf{X}^{t+1}} 
\|_{\fro} ]} \nonumber 
\eeq
\noi where step \step{1} uses formula (\ref{eq:VR's gradient update}); step \step{2} uses norm inequality and $\frac{1}{b}\sum_{i=1}^{b}\nabla f_i (\mathbf{X}^{t+1}) = \nabla f (\mathbf{X}^{t+1})$ with $b=N$ and norm inequality; step \step{3} uses triangle inequality that $\|\mathbf{A}-\mathbf{B}\|_{\fro} \leq \|\mathbf{A}-\mathbf{C}\|_{\fro} + \|\mathbf{C}-\mathbf{B}\|_{\fro}$, for any $\mathbf{A}$, $\mathbf{B}$ and $\mathbf{C}$; step \step{4} the definition of $u^t$; step \step{5} uses Inequality (\ref{Lipschitz contiu}) and the results of telescoping it over $i$ from 1 to $N$.
\end{proof}

\begin{lemma}\label{Riemannian gradient Lower Bound for the Iterates Gap}
(Riemannian gradient Lower Bound for the Iterates Gap) We define ${ \textstyle \phi \triangleq  (3\XXX+\VVV\XXX )\GGG+ (1+\VVV^2+\tfrac{n}{2}(\XXX^2+\VVV^2\XXX^2))L_f+(1+\VVV^2) \theta}$. It holds that: ${ \textstyle \mathbb{E}_{\iota^{t+1}} [\operatorname{dist} (\mathbf{0}, \nabla_{\JJ} \mathcal{T} (\mathbf{I}_2; \mathbf{X}^{t+1}, \B^{t+1} ) ) ]} \leq { \textstyle \phi \cdot \mathbb{E}_{\iota^{t}} [\sum_{i=1}^{n/2}\|\bar{\mathbf{V}}_i^t-\mathbf{I}_2\|_{\fro} ]} + \tfrac{np\sqrt{u^t}}{2}(\XXX+\VVV^2\XXX)$.
\end{lemma}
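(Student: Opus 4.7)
The goal is to bound the expected distance $\mathbb{E}_{\iota^{t+1}}[\operatorname{dist}(\mathbf{0}, \nabla_\JJ \mathcal{T}(\I_2; \X^{t+1}, \B^{t+1}))]$ by the step magnitude $\sum_i \|\bar{\V}^t_i - \I_2\|_\fro$ from iteration $t$, plus an additive term scaling with $\sqrt{u^t}$. My strategy is to march backward in time from $(\tilde{\mathbf{G}}^{t+1}, \X^{t+1}, \B^{t+1})$ to $(\tilde{\mathbf{G}}^t, \X^t, \B^t)$ via triangle inequalities, and then trade the remaining stationarity residual at iteration $t$ for $\|\bar{\V}^t_i - \I_2\|_\fro$ using the first-order optimality condition of the $t$-th subproblem, Equation~(\ref{eq:GJGJ}).

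Define the auxiliary matrix $\mathbf{W}(\tilde{\mathbf{G}},\X) \triangleq \tilde{\mathbf{G}}\X^\top - \J \X \tilde{\mathbf{G}}^\top \J$. By Equation~(\ref{Riemannian grad of J at I}), one has $\|\nabla_\JJ \mathcal{T}(\I_2; \X, \B)\|_\fro \leq \sum_{i=1}^{n/2}\|\mathbf{U}_{\B_{(i)}}^\top \mathbf{W}(\tilde{\mathbf{G}}, \X) \mathbf{U}_{\B_{(i)}}\|_\fro$. Writing $\mathbf{W}^{t+1} \triangleq \mathbf{W}(\tilde{\mathbf{G}}^{t+1}, \X^{t+1})$ and splitting $\mathbf{W}^{t+1} = (\mathbf{W}^{t+1} - \mathbf{W}^t) + \mathbf{W}^t$, bilinearity together with the triangle inequality gives $\|\mathbf{W}^{t+1} - \mathbf{W}^t\|_\fro \leq 2\XXX \|\tilde{\mathbf{G}}^{t+1} - \tilde{\mathbf{G}}^t\|_\fro + 2\GGG \|\X^{t+1} - \X^t\|_\fro$. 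Taking expectations and applying Lemma~\ref{lemma:lipschitz:G} to the first summand and Part~(\bfit{b}) of Lemma~\ref{lemma: properties of V update of JJOBCD} (which yields $\mathbb{E}[\|\X^{t+1} - \X^t\|_\fro] \leq \XXX \, \mathbb{E}[\sum_i \|\bar{\V}^t_i - \I_2\|_\fro]$ using $\sqrt{\sum a_i^2} \leq \sum a_i$) to the second produces the $L_f$- and $\GGG$-weighted contributions to $\phi$ together with the $p\sqrt{u^t}$ residual.

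For the stationarity residual, the first-order optimality of $\bar{\V}^t_i$ for the $t$-th subproblem (Equation~(\ref{eq:GJGJ}) specialized to \textbf{VR-J-JOBCD}) reads $\mathbf{0} = \ddot{\mathbf{G}}^t_i - \mathbf{U}_{\B^t_{(i)}}^\top \mathbf{J} \bar{\V}^t_i (\ddot{\mathbf{G}}^t_i)^\top \bar{\V}^t_i \mathbf{J} \mathbf{U}_{\B^t_{(i)}}$ with $\ddot{\mathbf{G}}^t_i = \ddot{\Q}(\bar{\V}^t_i - \I_2) + [\tilde{\mathbf{G}}^t(\X^t)^\top]_{\B^t_{(i)}\B^t_{(i)}}$. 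Inserting $\bar{\V}^t_i = \I_2 + (\bar{\V}^t_i - \I_2)$ and expanding, each of the four resulting terms in the restriction of $\mathbf{W}^t$ to block $\B^t_{(i)}$ carries a factor of $(\bar{\V}^t_i - \I_2)$, multiplied by bounded quantities governed by $\XXX$, $\VVV$, $\GGG$, and $\theta$, yielding a Lipschitz-type bound $\|\mathbf{W}^t(\B^t_{(i)}, \B^t_{(i)})\|_\fro \leq C \|\bar{\V}^t_i - \I_2\|_\fro$ for an explicit constant $C$; this accounts for the $(1+\VVV^2)\theta$ and the remaining $\GGG, \XXX, \VVV$ contributions to $\phi$. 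The \textbf{main obstacle} is the block mismatch: the Riemannian gradient is evaluated on the fresh random block $\B^{t+1}$, whereas the optimality condition is enforced on the previous block $\B^t$. I would resolve this by upper-bounding $\sum_i \|\mathbf{U}_{\B^{t+1}_{(i)}}^\top \mathbf{W}^t \mathbf{U}_{\B^{t+1}_{(i)}}\|_\fro$ by a full-matrix Frobenius norm estimate (via Lemma~\ref{Some useful lemma used in Proof (1)}), which picks up the $\tfrac{n}{2}$-factor visible in the $L_f$-coefficient of $\phi$. Summing over $i$ and collecting all constants yields the stated bound.
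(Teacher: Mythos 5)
Your skeleton matches the paper's proof in most respects: bound the Riemannian gradient at $(\X^{t+1},\B^{t+1})$ by triangle inequalities that march back to iteration $t$, control $\|\tilde{\mathbf{G}}^{t+1}-\tilde{\mathbf{G}}^t\|_{\fro}$ via Lemma \ref{lemma:lipschitz:G}, control $\|\X^{t+1}-\X^t\|_{\fro}$ via Part (\bfit{b}) of Lemma \ref{lemma: properties of V update of JJOBCD}, and convert the remaining stationarity residual into $\|\bar{\V}^t_i-\I_2\|_{\fro}$ through the first-order optimality of the $t$-th subproblem (this is exactly the paper's $\Omega_{i0},\Omega_{i1},\Omega_{i2},\Upsilon_{i1}$ decomposition in spirit). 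The divergence — and the gap — is precisely at the step you flag as the "main obstacle." The optimality conditions at iteration $t$ only constrain the $n/2$ diagonal $2\times 2$ blocks of $\mathbf{W}^t=\tilde{\mathbf{G}}^t(\X^t)^{\top}-\J\X^t(\tilde{\mathbf{G}}^t)^{\top}\J$ indexed by the \emph{old} partition $\B^t$; the cross-pair entries $(\mathbf{W}^t)_{jk}$ with $j\in\B^t_{(i)}$, $k\in\B^t_{(i')}$, $i\neq i'$, are not touched by any subproblem and are generically of size $\mathcal{O}(\XXX\GGG)$, not proportional to $\|\bar{\V}^t_i-\I_2\|_{\fro}$. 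Your plan to bound $\sum_i\|\mathbf{U}_{\B^{t+1}_{(i)}}^{\top}\mathbf{W}^t\mathbf{U}_{\B^{t+1}_{(i)}}\|_{\fro}$ by a full-matrix estimate via Lemma \ref{Some useful lemma used in Proof (1)} therefore imports exactly these uncontrolled off-block entries (the identity in that lemma relates the average over random blocks to the \emph{entire} off-diagonal mass of $\mathbf{W}^t$), and would leave an additive constant of order $\XXX\GGG$ that does not vanish as $\bar{\V}^t\to\I_2$ and $u^t\to 0$. The claimed inequality has no such slack, so this route cannot close.

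The paper resolves the block mismatch differently: in step \step{2} of its chain it replaces the expectation over $\B^{t+1}$ by the expectation over $\B^t$, on the grounds that both working sets are drawn uniformly from $\Upsilon$, so all block restrictions are evaluated on $\B^t$, where the optimality conditions of $\bar{\V}^t_i$ apply; the four resulting terms then give $c_1=2\XXX\GGG$, $c_2=(1+\VVV^2)(L_f+\theta)$, $c_3=(\XXX+\VVV\XXX)\GGG$, $c_4=\tfrac{n}{2}(\XXX^2+\VVV^2\XXX^2)L_f$ and the $\tfrac{np}{2}(\XXX+\VVV^2\XXX)\sqrt{u^t}$ remainder. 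Note also that your bookkeeping attributes the $\tfrac{n}{2}$ factor in the $L_f$-coefficient of $\phi$ to the block-mismatch estimate, whereas in the paper it arises from the $\Omega_{i2}$ term (the $\tilde{\mathbf{G}}^t-\tilde{\mathbf{G}}^{t+1}$ contribution) after applying Lemma \ref{lemma:lipschitz:G} and summing over the $n/2$ blocks — a secondary discrepancy, but a sign that the constants would not assemble into the stated $\phi$ along your route even if the mismatch issue were set aside.
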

\begin{proof}
For notation simplicity, we define:
\begin{align}
 \Omega_{i0} \triangleq& { \textstyle\mathbf{U}_{\B_{(i)}}^{\top}[\tilde{\mathbf{G}}^{t+1}] [\mathbf{X}^{t+1}]^{\top} \mathbf{U}_{\B_{(i)}}}, \forall i \label{omega 0} \\
\Omega_{i1}  \triangleq & { \textstyle \mathbf{U}_{\B_{(i)}}^{\top}[\tilde{\mathbf{G}}^{t+1}][\mathbf{X}^t]^{\top} \mathbf{U}_{\B_{(i)}}}, \forall i, \label{omega 1} \\
\Omega_{i2}  \triangleq & { \textstyle \mathbf{U}_{\B_{(i)}}^{\top}[\tilde{\mathbf{G}}^t-\tilde{\mathbf{G}}^{t+1}] [\mathbf{X}^t ]^{\top} \mathbf{U}_{\B_{(i)}}}, \forall i .
 \label{omega 2} 
\end{align}
First, using the optimality of ${ \textstyle \bar{\mathbf{V}}_i^t}, i \in \{1,\cdots,\frac{n}{2}\}$ for the subproblem, we have:
\beq
\mathbf{0}_{2, 2}  ={ \textstyle \tilde{\mathbf{G}_i}-\mathbf{J}_{\B_{(i)}} \bar{\mathbf{V}}_i^t \tilde{\mathbf{G}}_i^{\top} \bar{\mathbf{V}}_i^t \mathbf{J}_{\B_{(i)}}} \\
\text { where } \tilde{\mathbf{G}_i}  = \underbrace{{ \textstyle \operatorname{mat}( (\mathbf{Q}+\theta \mathbf{I}_2 ) \operatorname{vec}(\bar{\mathbf{V}}_i^t-\mathbf{I}_2))}}_{\triangleq \Upsilon_{i1}}+\underbrace{{ \textstyle \mathbf{U}_{\B_{(i)}}^{\top}\tilde{\mathbf{G}}^t (\mathbf{X}^t )^{\top} \mathbf{U}_{\B_{(i)}}}}_{\triangleq \Upsilon_{i2}} .
 \eeq
Using the relation that ${ \textstyle \tilde{\mathbf{G}_i}=\Upsilon_{i1}+\Upsilon_{i2}}$, we obtain the following results from the above equality:
\begin{align}
& { \textstyle \mathbf{0}_{2, 2}= (\Upsilon_{i1}+\Upsilon_{i2} )- \mathbf{J}_{\B_{(i)}} \bar{\mathbf{V}}_i^t (\Upsilon_{i1}+\Upsilon_{i2} )^{\top} \bar{\mathbf{V}}_i^t \mathbf{J}_{\B_{(i)}}} \nonumber \\
\stackrel{\step{1}}{ \Rightarrow} & { \textstyle \mathbf{0}_{2, 2}=\Upsilon_{i1}+\Omega_{i1}+\Omega_{i2}- \mathbf{J}_{\B_{(i)}} \bar{\mathbf{V}}_i^t (\Upsilon_{i1}+\Omega_{i1}+\Omega_{i2} )^{\top} \bar{\mathbf{V}}_i^t \mathbf{J}_{\B_{(i)}} }\nonumber \\
 \Rightarrow & { \textstyle \Omega_{i1}=\mathbf{J}_{\B_{(i)}} \bar{\mathbf{V}}_i^t (\Upsilon_{i1}+\Omega_{i1}+\Omega_{i2} )^{\top} \bar{\mathbf{V}}_i^t \mathbf{J}_{\B_{(i)}}-\Upsilon_{i1}-\Omega_{i2}},
\label{omega1 2} \end{align}
where step \step{1} uses ${ \textstyle \Upsilon_{i2}=\Omega_{i1}+\Omega_{i2}}$.
Then we derive the following results:
\beq
&& { \textstyle \mathbb{E}_{\iota^{t+1}} [\operatorname{dist} (\mathbf{0}, \nabla_{\JJ} \mathcal{T}(\mathbf{V}_:=\mathbf{I}_2 ; \mathbf{X}^{t+1}, \B^{t+1} )) ]}={ \textstyle \mathbb{E}_{\iota^{t+1}} [ \|\nabla_{\JJ} \mathcal{T} (\mathbf{V}_:=\mathbf{I}_2 ; \mathbf{X}^{t+1}, \B^{t+1} ) \|_{\fro} ]} \nonumber \\
&\stackrel{\step{1}}{=} & { \textstyle \mathbb{E}_{\iota^{t+1}} [  \|\sum_{i=1}^{n/2}\mathbf{U}_{\B_{(i)}^{t+1}}^{\top}(\tilde{\mathbf{G}}^{t+1} [\mathbf{X}^{t+1} ]^{\top}- \mathbf{J} \mathbf{X}^{t+1}[\tilde{\mathbf{G}}^{t+1}]^{\top} \mathbf{J}) \mathbf{U}_{\B_{(i)}^{t+1}} \|_{\fro} ] }\nonumber \\
&\stackrel{\step{2}}{=} & { \textstyle \mathbb{E}_{\iota^{t}} [ \|\sum_{i=1}^{n/2}\mathbf{U}_{\B_{(i)}}^{\top}(\tilde{\mathbf{G}}^{t+1} [\mathbf{X}^{t+1} ]^{\top}-\mathbf{J} \mathbf{X}^{t+1}[\tilde{\mathbf{G}}^{t+1}]^{\top} \mathbf{J}) \mathbf{U}_{\B_{(i)}} \|_{\fro} ] }\nonumber \\
&\stackrel{\step{3}}{=} & { \textstyle \mathbb{E}_{\iota^{t}} [ \|\sum_{i=1}^{n/2}\Omega_{i0}-\mathbf{J}_{\B_{(i)}}\Omega_{i0}^{\top} \mathbf{J}_{\B_{(i)}} \|_{\fro} ] }\nonumber \\
&\stackrel{\step{4}}{=} & { \textstyle \mathbb{E}_{\iota^{t}} [ \|\sum_{i=1}^{n/2} (\Omega_{i0}-\Omega_{i1} )+\Omega_{i1}- (\mathbf{J}_{\B_{(i)}} \Omega_{i0}^{\top} \mathbf{J}_{\B_{(i)}}-\mathbf{J}_{\B_{(i)}} \Omega_{i1}^{\top} \mathbf{J}_{\B_{(i)}} )-\mathbf{J}_{\B_{(i)}} \Omega_{i1}^{\top} \mathbf{J}_{\B_{(i)}} \|_{\fro} ] }\nonumber \\
&\stackrel{\step{5}}{\leq} & { \textstyle \mathbb{E}_{\iota^{t}} [ \|\sum_{i=1}^{n/2}\Omega_{i0}-\Omega_{i1} \|_{\fro} ]}+{ \textstyle \mathbb{E}_{\iota^{t+1}} [ \|\sum_{i=1}^{n/2}\mathbf{J}_{\B_{(i)}} \Omega_{i0}^{\top} \mathbf{J}_{\B_{(i)}}- \mathbf{J}_{\B_{(i)}} \Omega_{i1}^{\top} \mathbf{J}_{\B_{(i)}}  \|_{\fro} ]} \nonumber \\
&& \ts + { \textstyle \mathbb{E}_{\iota^{t+1}} [ \|\sum_{i=1}^{n/2}\Omega_{i1}-\mathbf{J}_{\B_{(i)}} \Omega_{i1}^{\top} \mathbf{J}_{\B_{(i)}} \|_{\fro} ] }\nonumber \\
&\stackrel{\step{6}}{\leq} & { \textstyle \mathbb{E}_{\iota^{t}} [ \|\sum_{i=1}^{n/2}\Omega_{i0}-\Omega_{i1} \|_{\fro} ]}+{ \textstyle \mathbb{E}_{\iota^{t+1}} [ \| \sum_{i=1}^{n/2}\Omega_{i0}^{\top} - \Omega_{i1}^{\top}   \|_{\fro} ]}+{ \textstyle \mathbb{E}_{\iota^{t+1}} [ \|\sum_{i=1}^{n/2}\Omega_{i1}-\mathbf{J}_{\B_{(i)}} \Omega_{i1}^{\top} \mathbf{J}_{\B_{(i)}} \|_{\fro} ] }\nonumber \\
&\stackrel{\step{7}}{\leq} & 2 { \textstyle \mathbb{E}_{\iota^{t}} [ \|\sum_{i=1}^{n/2}\Omega_{i0}-\Omega_{i1} \|_{\fro} ]}+{ \textstyle \mathbb{E}_{\iota^{t+1}} [ \|\sum_{i=1}^{n/2}\Omega_{i1}-\mathbf{J}_{\B_{(i)}} \Omega_{i1}^{\top} \mathbf{J}_{\B_{(i)}} \|_{\fro} ] }\nonumber \\
&\stackrel{\step{8}}{=} & 2 { \textstyle \mathbb{E}_{\iota^{t}} [ \|\sum_{i=1}^{n/2}\Omega_{i0}-\Omega_{i1} \|_{\fro} ]}\nonumber \\
&& + { \textstyle \mathbb{E}_{\iota^{t}} [ \| \sum_{i=1}^{n/2} \mathbf{J}_{\B_{(i)}} \bar{\mathbf{V}}_i^t (\Upsilon_{i1}+\Omega_{i1}+\Omega_{i2} )^{\top} \bar{\mathbf{V}}_i^t \mathbf{J}_{\B_{(i)}} -\Upsilon_{i1}-\Omega_{i2}-\mathbf{J}_{\B_{(i)}} \Omega_{i1}^{\top} \mathbf{J}_{\B_{(i)}} \|_{\fro} ] }\nonumber \\
&\stackrel{\step{9}}{\leq} & 2 { \textstyle \mathbb{E}_{\iota^{t}} [ \|\sum_{i=1}^{n/2}\Omega_{i0}-\Omega_{i1} \|_{\fro} ]}+{ \textstyle \mathbb{E}_{\iota^{t}} [ \| \sum_{i=1}^{n/2} \mathbf{J}_{\B_{(i)}} \bar{\mathbf{V}}_i^t \Upsilon_{i1}^{\top} \bar{\mathbf{V}}_i^t \mathbf{J}_{\B_{(i)}}-\Upsilon_{i1} \|_{\fro} ]+ }\nonumber \\
&& { \textstyle \mathbb{E}_{\iota^{t}} [ \| \sum_{i=1}^{n/2}\bar{\mathbf{V}}_i^t \Omega_{i1}^{\top} \bar{\mathbf{V}}_i^t  - \Omega_{i1}^{\top}    \|_{\fro} ]}+{ \textstyle \mathbb{E}_{\iota^{t}} [ \| \sum_{i=1}^{n/2} \mathbf{J}_{\B_{(i)}} \bar{\mathbf{V}}_i^t \Omega_{i2}^{\top} \bar{\mathbf{V}}_i^t \mathbf{J}_{\B_{(i)}}-\Omega_{i2} \|_{\fro} ] } \label{for prove (1)}
\eeq
where step \step{1} uses Equality (\ref{Riemannian grad of J at I}) ; step \step{2} uses the fact that both the working set $\B^t$ and $\B^{t+1}$ are selected randomly and uniformly; step \step{3} uses the definition of $\Omega_{i0}$ in (\ref{omega 0}); step \step{4} uses $-\Omega_{i1}+\Omega_{i1}=\mathbf{0}$ and ${ \textstyle -\Omega_{i1}^{\top}+\Omega_{i1}^{\top}=\mathbf{0}}$; step \step{5} uses the norm inequality; step \step{6} uses the norm inequality; step \step{7} uses the norm inequality; step \step{8} uses Equality (\ref{omega1 2}); step \step{9} uses the norm inequality. We now establish individual bounds for each term for Inequality (\ref{for prove (1)}). 

For the first term $2 { \textstyle \mathbb{E}_{\iota^{t}}[\|\sum_{i=1}^{n/2}\Omega_{i0}-\Omega_{i1}\|_{\fro}]}$ in (\ref{for prove (1)}):
\beq\label{for prove (1).1} 
2 { \textstyle \mathbb{E}_{\iota^{t}}[ \|\sum_{i=1}^{n/2}\Omega_{i0}-\Omega_{i1}\|_{\fro} ]} & =  & 2 { \textstyle \mathbb{E}_{\iota^{t}}[\|\sum_{i=1}^{n/2}\mathbf{U}_{\B_{(i)}}^{\top}[\tilde{\mathbf{G}}^{t}][\mathbf{X}^{t}-\mathbf{X}^t]^{\top} \mathbf{U}_{\B_{(i)}}\|_{\fro}]} \nonumber \\
&\stackrel{\step{1}}{=} & 2 { \textstyle \mathbb{E}_{\iota^{t}}[\|\sum_{i=1}^{n/2}[\tilde{\mathbf{G}}^{t}] [\mathbf{U}_{\B_{(i)}}(\bar{\mathbf{V}}_i^t-\mathbf{I}_2) \mathbf{U}_{\B_{(i)}} \mathbf{X}^t]^{\top} \|_{\fro}] }\nonumber \\
&\stackrel{\step{2}}{\leq} & 2\XXX \GGG  { \textstyle \mathbb{E}_{\iota^{t}}[\|\sum_{i=1}^{n/2} \bar{\mathbf{V}}_i^t-\mathbf{I}_2\|_{\fro}]} \nn\\ &\stackrel{\step{3}}{\leq} & 2\XXX \GGG  { \textstyle \mathbb{E}_{\iota^{t}}[\sum_{i=1}^{n/2} \|\bar{\mathbf{V}}_i^t-\mathbf{I}_2\|_{\fro}] }
\eeq
where step \step{1} uses ${ \textstyle  [\mathbf{X}^{t}-\mathbf{X}^t ]_{\B_{i} \B_{i}}}={ \textstyle \mathbf{U}_{\B_{(i)}}(\bar{\mathbf{V}}_i^t-\mathbf{I}_2) \mathbf{U}_{\B_{(i)}}^{\top} \mathbf{X}^t}$; step \step{2} uses the inequality $\|\mathbf{X Y}\|_{\fro} \leq\|\mathbf{X}\|_{\fro}\|\mathbf{Y}\|_{\fro}$ for all $\mathbf{X}$ and $\mathbf{Y}$ repeatedly and the fact that $\forall t,\|\tilde{\mathbf{G}}^t\|_{\fro} \leq \GGG$ and $\forall t,\|\mathbf{X}^t\|_{\fro} \leq \XXX$; step \step{3} uses the norm inequality.

For the second term ${ \textstyle \mathbb{E}_{\iota^{t}}[\| \sum_{i=1}^{n/2} \mathbf{J}_{\B_{(i)}} \bar{\mathbf{V}}_i^t \Upsilon_{i1}^{\top} \bar{\mathbf{V}}_i^t \mathbf{J}_{\B_{(i)}} -\Upsilon_{i1}\|_{\fro}]}$ in (\ref{for prove (1)}):
\beq
&& { \textstyle \mathbb{E}_{\iota^{t}}[\| \sum_{i=1}^{n/2} \mathbf{J}_{\B_{(i)}} \bar{\mathbf{V}}_i^t \Upsilon_{i1}^{\top} \bar{\mathbf{V}}_i^t \mathbf{J}_{\B_{(i)}}-\Upsilon_{i1}\|_{\fro}] }\nonumber \\
&\stackrel{\text { \step{1} }}{\leq} & { \textstyle \mathbb{E}_{\iota^{t}}[\|\sum_{i=1}^{n/2}\bar{\mathbf{V}}_i^t \Upsilon_{i1}^{\top} \bar{\mathbf{V}}_i^t\|_{\fro}]}+{ \textstyle \mathbb{E}_{\iota^{t}}[\|\sum_{i=1}^{n/2}\Upsilon_{i1}\|_{\fro}] }\nonumber \\
&\stackrel{\text { \step{2} }}{\leq} & (1+\VVV^2) { \textstyle \mathbb{E}_{\iota^{t}}[\|\sum_{i=1}^{n/2}\Upsilon_{i1}\|_{\fro}] }\nonumber \\
&\stackrel{\text { \step{3} }}{=} & (1+\VVV^2) { \textstyle \mathbb{E}_{\iota^{t}}[\|\sum_{i=1}^{n/2}\operatorname{mat}((\mathbf{Q}+\theta \mathbf{I}_2) \operatorname{vec}(\bar{\mathbf{V}}_i^t-\mathbf{I}_2))\|_{\fro}]} \nonumber \\
&\leq & (1+\VVV^2) \|\mathbf{Q}+\theta \mathbf{I}_2 \|_{\fro} \cdot { \textstyle \mathbb{E}_{\iota^{t}}[\|\sum_{i=1}^{n/2} \bar{\mathbf{V}}_i^t-\mathbf{I}_2 \|_{\fro}]} \nonumber \\
&\stackrel{\text { \step{4} }}{\leq} & (1+\VVV^2) (L_f+\theta ) \cdot { \textstyle \mathbb{E}_{\iota^{t}}[\sum_{i=1}^{n/2}\| \bar{\mathbf{V}}_i^t-\mathbf{I}_2\|_{\fro}] }
\label{for prove (1).2} 
\eeq
where step \step{1} uses the triangle inequality; step \step{2} uses the inequality $\|\mathbf{X Y}\|_{\fro} \leq\|\mathbf{X}\|_{\fro}\|\mathbf{Y}\|_{\fro}$ for all $\mathbf{X}$ and $\mathbf{Y}$ and $\forall t, \|\mathbf{V}^t\|_{\fro} \leq \VVV$; step \step{3} uses the definition of $\Upsilon_{i1}$; step \step{4} uses the choice of $\mathbf{Q} \preceq L_f \mathbf{I}$ and the norm inequality. 

For the third term ${ \textstyle \mathbb{E}_{\iota^{t}}[\|\sum_{i=1}^{n/2}\bar{\mathbf{V}}_i^t \Omega_{i1}^{\top} \bar{\mathbf{V}}_i^t-\Omega_{i1}^{\top}\|_{\fro}]}$ in (\ref{for prove (1)}), we have:
\beq
&&{ \textstyle \mathbb{E}_{\iota^{t}} [ \|\sum_{i=1}^{n/2}\bar{\mathbf{V}}_i^t \Omega_{i1}^{\top} \bar{\mathbf{V}}_i^t-\Omega_{i1}^{\top} \|_{\fro} ] }\nonumber\\
&\stackrel{\step{1}}{=} & { \textstyle \mathbb{E}_{\iota^{t}} [ \|\sum_{i=1}^{n/2}\bar{\mathbf{V}}_i^t \Omega_{i1}^{\top}(\bar{\mathbf{V}}_i^t-\mathbf{I}_2)+(\bar{\mathbf{V}}_i^t-\mathbf{I}_2) \Omega_{i1}^{\top} \|_{\fro} ]} \nonumber \\
&\stackrel{\step{2}}{\leq} & (1+\VVV) { \textstyle \mathbb{E}_{\iota^{t}} [\sum_{i=1}^{n/2} \|\Omega_{i1} \|_{\fro} \cdot \|\bar{\mathbf{V}}_i^t-\mathbf{I}_2\|_{\fro} ]} \nonumber \\
&\stackrel{\step{3}}{\leq} & (\XXX+\VVV \XXX) { \textstyle \mathbb{E}_{\iota^{t}} [\sum_{i=1}^{n/2}\|\tilde{\mathbf{G}}^{t}\|_{\fro} \cdot\|\bar{\mathbf{V}}_i^t-\mathbf{I}_2\|_{\fro} ] }\nonumber \\ 
&\stackrel{\step{4}}{\leq} & (\XXX+\VVV \XXX)\GGG { \textstyle \mathbb{E}_{\iota^{t}} [\sum_{i=1}^{n/2}\|\bar{\mathbf{V}}_i^t-\mathbf{I}_2\|_{\fro} ]}
\label{for prove (1).3}
\eeq
where step \step{1} uses the fact that ${ \textstyle -\bar{\mathbf{V}}_i^t \Omega_{i1}^{\top} \mathbf{I}_2+\bar{\mathbf{V}}_i^t \Omega_{i1}^{\top}=\mathbf{0}}$; step \step{2} uses the norm inequality and $\forall t, { \textstyle \|\mathbf{V}^t\|_{\fro} \leq \VVV}$; step \step{3} uses the fact that ${ \textstyle  \|\Omega_{i1} \|_{\fro}}={ \textstyle \|\mathbf{U}_{\B_{(i)}}^{\top}\tilde{\mathbf{G}}^{t} [\mathbf{X}^t ]^{\top} \mathbf{U}_{\B_{(i)}}\|_{\fro}} \leq \XXX{ \textstyle \|\tilde{\mathbf{G}}^{t}\|_{\fro}}, \forall i$ which can be derived using the norm inequality ; step \step{4} uses the fact that $\forall \mathbf{X},\|\tilde{\mathbf{G}}^t\|_{\fro} \leq \GGG$.

For the fourth term ${ \textstyle \mathbb{E}_{\iota^{t}}[\| \sum_{i=1}^{n/2} \mathbf{J}_{\B_{(i)}} \bar{\mathbf{V}}_i^t \Omega_{i2}^{\top} \bar{\mathbf{V}}_i^t \mathbf{J}_{\B_{(i)}}-\Omega_{i2}\|_{\fro}]}$ in (\ref{for prove (1)}), we have:
\beq
&& { \textstyle \mathbb{E}_{\iota^{t}}[\| \sum_{i=1}^{n/2} \mathbf{J}_{\B_{(i)}} \bar{\mathbf{V}}_i^t \Omega_{i2}^{\top} \bar{\mathbf{V}}_i^t \mathbf{J}_{\B_{(i)}}-\Omega_{i2}\|_{\fro}]} \nonumber \\
&\stackrel{\step{1}}{\leq} & { \textstyle \mathbb{E}_{\iota^{t}}[\|\sum_{i=1}^{n/2}\bar{\mathbf{V}}_i^t \Omega_{i2}^{\top} \bar{\mathbf{V}}_i^t\|_{\fro}]}+{ \textstyle \mathbb{E}_{\iota^{t}}[\|\sum_{i=1}^{n/2}\Omega_{i2}\|_{\fro}] }\nonumber \\
&\stackrel{\step{2}}{\leq} & (1+\VVV^2) { \textstyle \mathbb{E}_{\iota^{t}}[\|\sum_{i=1}^{n/2}\Omega_{i2}\|_{\fro}]} \nonumber \\
&\stackrel{\step{3}}{=} & (1+\VVV^2) { \textstyle \mathbb{E}_{\iota^{t}}[\|\sum_{i=1}^{n/2}\mathbf{U}_{\B_{(i)}}^{\top}[\tilde{\mathbf{G}}^t-\tilde{\mathbf{G}}^{t} ][\mathbf{X}^t]^{\top} \mathbf{U}_{\B_{(i)}}\|_{\fro}]} \nonumber \\
&\stackrel{\step{4}}{\leq} & \tfrac{n}{2}(\XXX+\VVV^2\XXX) { \textstyle \mathbb{E}_{\iota^{t}}[\|[\tilde{\mathbf{G}}^{t}-\tilde{\mathbf{G}}^{t}]\|_{\fro}]} \nonumber \\
&\stackrel{\step{5}}{\leq} & \tfrac{n}{2}(\XXX+\VVV^2\XXX)({ \textstyle p \mathbb{E}_{\iota^{t}}[\sqrt{u^t}] }+{ \textstyle L_f \mathbb{E}_{\iota^{t}}[\|{ \textstyle \mathbf{X}^{t}- \mathbf{X}^{t}} 
\|_{\fro} ]}) \nonumber \\
&\stackrel{\step{6}}{\leq} & \tfrac{np}{2}(\XXX+\VVV^2\XXX) \mathbb{E}_{\iota^{t}}[\sqrt{u^t}] + 
\tfrac{n L_f}{2}(\XXX^2+\VVV^2\XXX^2) 
\mathbb{E}_{\iota^{t}} { \textstyle [\sum_{i=1}^{n/2}\|\bar{\mathbf{V}}_i^t-\mathbf{I}_2\|_{\fro}]} 
\label{for prove (1).4} 
\eeq
where step \step{1} uses the triangle inequality; step \step{2} uses the norm inequality and $\forall t, \|\mathbf{V}^t\|_{\fro} \leq \VVV$; step \step{3} uses the definition of $\forall i, \Omega_{i2}=\mathbf{U}_{\B_{(i)}}^{\top}[ \tilde{\mathbf{G}}^t- \tilde{\mathbf{G}}^{t}] [\mathbf{X}^t ]^{\top} \mathbf{U}_{\B_{(i)}}$ in (\ref{omega 2}); step \step{4} uses the norm inequality and $\forall t,\|\mathbf{X}^t\|_{\fro} \leq \XXX$; step \step{5} uses Lemma \ref{lemma:lipschitz:G}; step \step{6} uses Part (\bfit{b}) in Lemma \ref{lemma: properties of V update of JJOBCD} and $\forall t, \|\mathbf{X}^t\|_{\fro} \leq \XXX$.

In view of( \ref{for prove (1).1}), (\ref{for prove (1).2}), (\ref{for prove (1).3}), (\ref{for prove (1).4}), and (\ref{for prove (1)}), we have:
\beq
&& { \textstyle \mathbb{E}_{\iota^{t+1}} [\|\nabla_{\JJ} \mathcal{T} (\mathbf{I}_2 ; \mathbf{X}^{t+1}, \B^{t+1} )\|_{\fro} ]} \nn \\
&\leq &  \tfrac{np}{2}(\XXX+\VVV^2\XXX) \mathbb{E}_{\iota^{t}}[\sqrt{u^t}] + (c_1+c_2+c 3+c 4 ) \cdot { \textstyle \mathbb{E}_{\iota^{t}} [\sum_{i=1}^{n/2}\|\bar{\mathbf{V}}_i^t-\mathbf{I}_2\|_{\fro} ]} \nn \\
& = & \tfrac{np}{2}(\XXX+\VVV^2\XXX) \mathbb{E}_{\iota^{t}}[\sqrt{u^t}] +\phi { \textstyle \mathbb{E}_{\iota^{t}} [\sum_{i=1}^{n/2}\|\bar{\mathbf{V}}_i^t-\mathbf{I}_2\|_{\fro} ]} \nn
\eeq
where $c_1=2\XXX\GGG, c_2=(1+\VVV^2) (L_f+\theta ), c_3=(\XXX+\VVV \XXX)\GGG$, and $c_4=\tfrac{n}{2}(\XXX^2+\VVV^2\XXX^2)L_f$.
\end{proof}

\begin{lemma}\label{norm bet F and J}
We have the following results:
$\operatorname{dist} (\mathbf{0}, \nabla_{\JJ} f (\mathbf{X}^t ) ) \leq \gamma \cdot \|\nabla_{\JJ} \mathcal{T} (\mathbf{I}_2 ; \mathbf{X}^t, \B ) \|_{\fro} + 2\XXX^2 \sqrt{\mathbb{E}_{\iota^{t}} [u^t ]}$ with $\gamma \triangleq { \textstyle \XXX\sqrt{C_n^2}}$.
\end{lemma}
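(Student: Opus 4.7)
The plan is to push the global Riemannian gradient $\nabla_{\JJ} f(\X^t)$ through the J-orthogonality identity $\X^t\J(\X^t)^\top=\J$, compare the right-multiplied object with the block-Riemannian structure of the surrogate $\mathcal{T}$ built from the stochastic estimator $\tilde{\mathbf{G}}^t$, and absorb the discrepancy $\tilde{\mathbf{G}}^t-\nabla f(\X^t)$ into the $\sqrt{u^t}$ term via Jensen's inequality.

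First, Lemma~\ref{lemma foptim for Jorth} gives $\nabla_{\JJ} f(\X^t)=\nabla f(\X^t)-\J\X^t[\nabla f(\X^t)]^\top\X^t\J$. Right-multiplying by $(\X^t)^\top$ and using $\X^t\J(\X^t)^\top=\J$ (which follows from $(\X^t)^{-1}=\J(\X^t)^\top\J$ for $\X^t\in\JJ$), I obtain
\begin{align*}
\nabla_{\JJ} f(\X^t)\,(\X^t)^\top \;=\; \mathbf{W}^t - \J\mathbf{W}^{t\top}\J, \qquad \mathbf{W}^t \triangleq \nabla f(\X^t)(\X^t)^\top.
\end{align*}
Since $\|(\X^t)^{-1}\|=\|\J(\X^t)^\top\J\|=\|\X^t\|\leq\XXX$ by Assumption~\ref{ass:XV}, the sub-multiplicative inequality $\|\mathbf{A}\mathbf{B}\|_{\fro}\leq\|\mathbf{A}\|_{\fro}\|\mathbf{B}\|$ (spectral norm on the right) yields $\|\nabla_{\JJ}f(\X^t)\|_{\fro}\leq\XXX\,\|\mathbf{W}^t-\J\mathbf{W}^{t\top}\J\|_{\fro}$.

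Setting $\tilde{\mathbf{W}}^t\triangleq\tilde{\mathbf{G}}^t(\X^t)^\top$, the triangle inequality delivers
\begin{align*}
\|\mathbf{W}^t-\J\mathbf{W}^{t\top}\J\|_{\fro}\;\leq\;\|\tilde{\mathbf{W}}^t-\J\tilde{\mathbf{W}}^{t\top}\J\|_{\fro}+2\|\mathbf{W}^t-\tilde{\mathbf{W}}^t\|_{\fro},
\end{align*}
and the last quantity is at most $2\|\nabla f(\X^t)-\tilde{\mathbf{G}}^t\|_{\fro}\cdot\|\X^t\|\leq 2\XXX\sqrt{u^t}$ by the definition of $u^t$. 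Next, I would apply Lemma~\ref{Some useful lemma used in Proof (1)} with $k=2$ to $\tilde{\mathbf{M}}\triangleq\tilde{\mathbf{W}}^t-\J\tilde{\mathbf{W}}^{t\top}\J$, obtaining $\|\tilde{\mathbf{M}}\|_{\fro}^2\leq\sum_{\mathcal{B}_i\in\Omega}\|\tilde{\mathbf{M}}_{\mathcal{B}_i\mathcal{B}_i}\|_{\fro}^2$, and recognize each summand $\tilde{\mathbf{M}}_{\mathcal{B}_i\mathcal{B}_i}=\tilde{\mathbf{W}}^t_{\mathcal{B}_i\mathcal{B}_i}-\J_{\mathcal{B}_i\mathcal{B}_i}(\tilde{\mathbf{W}}^t_{\mathcal{B}_i\mathcal{B}_i})^\top\J_{\mathcal{B}_i\mathcal{B}_i}$ as exactly the local Riemannian gradient of $\mathcal{T}$ at $\V=\I_2$ restricted to block $\mathcal{B}_i$, cf.\ Equation~(\ref{Riemannian grad of J at I}).

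Combining these with the bound $\sum_{\mathcal{B}_i\in\Omega}\|\tilde{\mathbf{M}}_{\mathcal{B}_i\mathcal{B}_i}\|_{\fro}^2\leq C_n^2\,\|\nabla_{\JJ}\mathcal{T}(\I_2;\X^t,\B)\|_{\fro}^2$, taking square roots, applying Jensen's inequality $\mathbb{E}[\sqrt{u^t}]\leq\sqrt{\mathbb{E}[u^t]}$, and recalling $\gamma=\XXX\sqrt{C_n^2}$ produces the claimed bound. The main obstacle I anticipate is this last passage: linking the full sum over all $C_n^2$ index pairs in $\Omega$ to a single grouping $\B\in\Upsilon$ that contains only $n/2$ of those pairs. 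This likely requires either bounding each block contribution uniformly by $\|\nabla_{\JJ}\mathcal{T}(\I_2;\X^t,\B)\|_{\fro}$ via a worst-case-block argument, or exploiting the symmetric coverage of $\Upsilon$---each pair in $\Omega$ belongs to the same fraction of groupings---so that the missing-block terms are absorbed up to a multiplicative factor proportional to $C_n^2$.
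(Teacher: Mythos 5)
Your first two steps coincide with the paper's own argument: you factor the J-orthogonal inverse out of $\nabla_{\JJ} f(\X^t)$ to pay a factor $\XXX$ and reduce to $\|\mathbf{W}^t-\J\mathbf{W}^{t\top}\J\|_{\fro}$ (the paper inserts $(\X^t)^{\top}\J\X^t\J=\I$ instead of inverting, which is the same manipulation), and you then swap $\nabla f(\X^t)$ for $\tilde{\mathbf{G}}^t$ by the triangle inequality at cost $2\XXX\|\nabla f(\X^t)-\tilde{\mathbf{G}}^t\|_{\fro}\leq 2\XXX\sqrt{u^t}$, exactly as in the paper. Your identification of each block restriction $\tilde{\mathbf{M}}_{\mathcal{B}_i\mathcal{B}_i}=\tilde{\mathbf{W}}^t_{\mathcal{B}_i\mathcal{B}_i}-\J_{\mathcal{B}_i\mathcal{B}_i}(\tilde{\mathbf{W}}^t_{\mathcal{B}_i\mathcal{B}_i})^{\top}\J_{\mathcal{B}_i\mathcal{B}_i}$ with the per-block Riemannian gradient of $\mathcal{T}$ at $\V=\I_2$ (via $\mathbf{U}_{\B}^{\top}\J=\J_{\B\B}\mathbf{U}_{\B}^{\top}$) is also correct, as is $\|\tilde{\mathbf{M}}\|_{\fro}^2\leq\sum_{\mathcal{B}_i\in\Omega}\|\tilde{\mathbf{M}}_{\mathcal{B}_i\mathcal{B}_i}\|_{\fro}^2$ from Lemma~\ref{Some useful lemma used in Proof (1)} with $k=2$.

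The genuine gap is the one you flag yourself, and your first fallback cannot close it: for a \emph{fixed} grouping $\B\in\Upsilon$ the inequality $\sum_{\mathcal{B}_i\in\Omega}\|\tilde{\mathbf{M}}_{\mathcal{B}_i\mathcal{B}_i}\|_{\fro}^2\leq C_n^2\,\|\nabla_{\JJ}\mathcal{T}(\I_2;\X^t,\B)\|_{\fro}^2$ is false in general, since $\tilde{\mathbf{M}}$ may concentrate its mass on pairs $(j,k)$ that belong to none of the $n/2$ blocks of $\B$, making the right-hand side blind to it; so no worst-case-single-block argument works deterministically. What does work is your second suggestion, the symmetric-coverage argument, and this is in substance what the paper relies on: $\B$ is drawn uniformly from $\Upsilon$, each of the $C_n^2$ pairs lies in exactly a $\tfrac{1}{n-1}$ fraction of the groupings, hence $\mathbb{E}_{\B}\bigl[\sum_{i=1}^{n/2}\|\tilde{\mathbf{M}}_{\B_{(i)}\B_{(i)}}\|_{\fro}^2\bigr]=\tfrac{1}{n-1}\sum_{\mathcal{B}_i\in\Omega}\|\tilde{\mathbf{M}}_{\mathcal{B}_i\mathcal{B}_i}\|_{\fro}^2\geq\tfrac{1}{n-1}\|\tilde{\mathbf{M}}\|_{\fro}^2$, while for the disjoint blocks of a grouping the sum $\sum_{i=1}^{n/2}\|\tilde{\mathbf{M}}_{\B_{(i)}\B_{(i)}}\|_{\fro}^2$ equals the squared norm of the block-stacked gradient in Equation~(\ref{Riemannian grad of J at I}) (cf.\ Part (\bfit{a}) of Lemma~\ref{lemma: properties of V update of JJOBCD}). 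The resulting bound therefore holds only with an expectation over the random grouping attached to the $\mathcal{T}$-term (with constant $\sqrt{n-1}\leq\sqrt{C_n^2}=\gamma/\XXX$), which is exactly how the Gauss--Seidel analogue Lemma~\ref{GS norm bet F and J} is stated and proved (there the $\mathbb{E}_{\xi^{t-1}}$ appears explicitly) and how the quantity is consumed downstream inside $\mathbb{E}_{\iota^t}$. Until you insert that randomization/averaging step, the last inequality of your proposal remains unproved.
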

\begin{proof}
We have the following inequalities:
\beq
 \textstyle \|\nabla_{\JJ} f (\mathbf{X}^t ) \|_{\fro} & \stackrel{\step{1}}{=}&  \textstyle \|\nabla f (\mathbf{X}^t )-\mathbf{J}\mathbf{X}^t (\nabla f (\mathbf{X}^t ) )^{\top} \mathbf{X}^t \mathbf{J}\|_{\fro} \nn \\
& \stackrel{\step{2}}{=}  & { \textstyle \|\nabla f (\mathbf{X}^t ) (\mathbf{X}^t )^{\top} \mathbf{J} \mathbf{X}^t \mathbf{J}-\mathbf{J}\mathbf{X}^t (\nabla f (\mathbf{X}^t ) )^{\top}\mathbf{J}\mathbf{J} \mathbf{X}^t \mathbf{J} \|_{\fro} } \nn\\
& \stackrel{\step{3}}{\leq} & { \textstyle \|\nabla f (\mathbf{X}^t ) (\mathbf{X}^t )^{\top}-\mathbf{J}\mathbf{X}^t (\nabla f (\mathbf{X}^t ) )^{\top} \mathbf{J}\|_{\fro}  } { \textstyle  \|\mathbf{J}\mathbf{X}^t \mathbf{J} \|_{\fro}}  \nn \\
& \stackrel{\step{4}}{\leq}  & \XXX { \textstyle  \|\nabla f (\mathbf{X}^t ) (\mathbf{X}^t )^{\top}-\mathbf{J} \mathbf{X}^t (\nabla f (\mathbf{X}^t ) )^{\top} \mathbf{J}\|_{\fro}} \nn
\eeq
where step \step{1} uses the definition of ${ \textstyle \nabla_{\JJ} f (\mathbf{X}^t )}$; step \step{2} uses $\mathbf{J}\mathbf{J}=\mathbf{I}$ and $\mathbf{X}^{\top} \mathbf{J}\mathbf{X}=\mathbf{J}  \Rightarrow \mathbf{X}^{\top} \mathbf{J}\mathbf{X} \mathbf{J}=\mathbf{J}\mathbf{J}=\mathbf{I}$; step \step{3} uses the norm inequality and ; step \step{4} uses $\forall t, { \textstyle \|\mathbf{X}^t\|_{\fro}} \leq \XXX$.

We Consider ${ \textstyle \|\nabla f (\mathbf{X}^t ) (\mathbf{X}^t )^{\top}-\mathbf{J} \mathbf{X}^t (\nabla f (\mathbf{X}^t ) )^{\top} \mathbf{J}\|_{\fro}}$:
\beq
&& { \textstyle \|\nabla f (\mathbf{X}^t ) (\mathbf{X}^t )^{\top}-\mathbf{J} \mathbf{X}^t (\nabla f (\mathbf{X}^t ) )^{\top} \mathbf{J}\|_{\fro}} \nonumber \\
&\stackrel{\step{1}}{\leq} & { \textstyle \|\tilde{\mathbf{G}}^t (\mathbf{X}^t )^{\top}-\mathbf{J} \mathbf{X}^t(\tilde{\mathbf{G}}^t)^{\top} \mathbf{J}\|_{\fro}}+
{ \textstyle \|(\nabla f (\mathbf{X}^t )-\tilde{\mathbf{G}}^t) (\mathbf{X}^t )^{\top}-\mathbf{J} \mathbf{X}^t(\nabla f (\mathbf{X}^t )-\tilde{\mathbf{G}}^t)^{\top} \mathbf{J}\|_{\fro}} \nonumber \\
&\stackrel{\step{2}}{\leq} & { \textstyle \|\tilde{\mathbf{G}}^t (\mathbf{X}^t )^{\top}-\mathbf{J} \mathbf{X}^t(\tilde{\mathbf{G}}^t)^{\top} \mathbf{J}\|_{\fro}}+
{ \textstyle \|\nabla f (\mathbf{X}^t )-\tilde{\mathbf{G}}^t\|_{\fro}\cdot
\|\mathbf{X}^t\|_{\fro}}+ { \textstyle \|\mathbf{X}^t\|_{\fro}}\cdot{ \textstyle \|\nabla f (\mathbf{X}^t )-\tilde{\mathbf{G}}^t\|_{\fro}} \nonumber \\
&\stackrel{\step{3}}{\leq} & { \textstyle \|\tilde{\mathbf{G}}^t(\mathbf{X}^t)^{\top}-\mathbf{J} \mathbf{X}^t(\tilde{\mathbf{G}}^t)^{\top} \mathbf{J}\|_{\fro}}+
2\XXX{ \textstyle \sqrt{\mathbb{E}_{\iota^{t}} [u^t ]}}
\nonumber 
\eeq
where step \step{1} uses $\forall \mathbf{A}, \mathbf{B}, { \textstyle \|\mathbf{A}\|_{\fro}-\|\mathbf{B}\|_{\fro} \leq \|\mathbf{A}-\mathbf{B}\|_{\fro}}$; step \step{2} uses the norm inequality; step \step{3} uses $\forall t, \|\mathbf{X}^t\|_{\fro} \leq \XXX$. Thus,
\beq
{ \textstyle  \|\nabla_{\JJ} F (\mathbf{X}^t ) \|_{\fro}} & \leq& \XXX{ \textstyle \|\tilde{\mathbf{G}}^t  (\mathbf{X}^t )^{\top}-\mathbf{J} \mathbf{X}^t(\tilde{\mathbf{G}}^t)^{\top} \mathbf{J}\|_{\fro}}+
2\XXX^2 { \textstyle \sqrt{\mathbb{E}_{\iota^{t}} [u^t ]}} \nonumber \\
& \stackrel{\step{1}}{\leq} & \XXX { \textstyle \sqrt{C_n^2}} \cdot { \textstyle \| \sum_{i=1}^{n/2}\mathbf{U}_{\B_{(i)}}^{\top}[\tilde{\mathbf{G}}^t (\mathbf{X}^t )^{\top}-\mathbf{J} \mathbf{X}^t(\tilde{\mathbf{G}}^t)^{\top} \mathbf{J} \mathbf{U}_{\B_{(i)}}\|_{\fro}]} + 2\XXX^2 { \textstyle \sqrt{\mathbb{E}_{\iota^{t}} [u^t ]}} \nonumber \\
& \stackrel{\step{2}}{=} & \XXX { \textstyle \sqrt{C_n^2}} \cdot{ \textstyle \|\nabla_{\JJ} \mathcal{T} (\mathbf{I}_2 ; \mathbf{X}^t, \B )\|_{\fro}} + 2\XXX^2 { \textstyle \sqrt{\mathbb{E}_{\iota^{t}} [u^t ]}} \nonumber
\eeq
where step \step{1} uses Lemma \ref{Some useful lemma used in Proof (1)} with ${ \textstyle \mathbf{W} = \tilde{\mathbf{G}}^t  (\mathbf{X}^t )^{\top}-\mathbf{J} \mathbf{X}^t(\tilde{\mathbf{G}}^t)^{\top} \mathbf{J}}$ and $k=2$; step \step{2} uses the definition of ${ \textstyle \nabla_{\JJ} \mathcal{T} (\mathbf{I}_2 ; \mathbf{X}^t, \B )}$.
\end{proof}

We now present the following useful lemma.
\begin{lemma} \label{lemma B.1}
We define ${ \textstyle \mathbf{T}_{\mathbf{X}} \JJ \triangleq \{\mathbf{Y} \in \mathbb{R}^{n \times n} \mid \mathcal{A}_X(\mathbf{Y})=\mathbf{0} \}}$ and ${ \textstyle \mathcal{A}_{\mathbf{X}}(\mathbf{Y}) \triangleq \mathbf{X}^{\top} \mathbf{J}  \mathbf{Y}+\mathbf{Y}^{\top} \mathbf{J}\mathbf{X}}$. For any $\mathbf{G} \in \mathbb{R}^{n \times n}$ and $\mathbf{X}^{\top} \mathbf{J} \mathbf{X} = \mathbf{J}$, the unique minimizer of the following optimization problem:
$$
{ \textstyle \bar{\mathbf{Y}}=\arg \min _{\mathbf{Y} \in \mathbf{T}_{\mathbf{X} M} \JJ} h(\mathbf{Y})=\frac{1}{2}\|\mathbf{Y}-\mathbf{G}\|_{\fro}^2},
$$
satisify ${ \textstyle h(\bar{\mathbf{Y}}) \leq h(\mathbf{G}-\mathbf{J} \mathbf{X} \mathbf{G}^{\top} \mathbf{X} \mathbf{J})}$.
\end{lemma}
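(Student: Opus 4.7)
The plan is to invoke the minimality of $\bar{\mathbf{Y}}$ directly: if one exhibits any tangent vector $\mathbf{Y}_0\in\mathbf{T}_\mathbf{X}\JJ$ with $h(\mathbf{Y}_0)\le h(\mathbf{G}-\mathbf{J}\mathbf{X}\mathbf{G}^\top\mathbf{X}\mathbf{J})$, then $h(\bar{\mathbf{Y}})=\min_{\mathbf{Y}\in\mathbf{T}_\mathbf{X}\JJ}h(\mathbf{Y})\le h(\mathbf{Y}_0)$ completes the proof. Uniqueness of $\bar{\mathbf{Y}}$ is immediate from the strict convexity of $h$ on the closed linear subspace $\mathbf{T}_\mathbf{X}\JJ$.

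The most economical candidate is $\mathbf{Y}_0\triangleq\mathbf{G}-\mathbf{J}\mathbf{X}\mathbf{G}^\top\mathbf{X}\mathbf{J}$ itself, since $h(\mathbf{Y}_0)$ then matches the right-hand side exactly and the claim collapses to verifying $\mathbf{Y}_0\in\mathbf{T}_\mathbf{X}\JJ$, i.e.\ $\mathcal{A}_\mathbf{X}(\mathbf{Y}_0)=\mathbf{X}^\top\mathbf{J}\mathbf{Y}_0+\mathbf{Y}_0^\top\mathbf{J}\mathbf{X}=\mathbf{0}$. Expanding using $\mathbf{J}^2=\mathbf{I}$ gives $\mathbf{X}^\top\mathbf{J}\mathbf{Y}_0=\mathbf{X}^\top\mathbf{J}\mathbf{G}-\mathbf{X}^\top\mathbf{X}\,\mathbf{G}^\top\mathbf{X}\mathbf{J}$ together with its transpose-paired counterpart for $\mathbf{Y}_0^\top\mathbf{J}\mathbf{X}$; the hypothesis $\mathbf{X}^\top\mathbf{J}\mathbf{X}=\mathbf{J}$ (equivalently $\mathbf{X}\mathbf{J}\mathbf{X}^\top=\mathbf{J}$ and $\mathbf{X}^{-1}=\mathbf{J}\mathbf{X}^\top\mathbf{J}$) is then used to force pairwise cancellation of the four resulting terms, so that the symmetric parts of $\mathbf{X}^\top\mathbf{J}\mathbf{G}$ and $\mathbf{X}^\top\mathbf{X}\,\mathbf{G}^\top\mathbf{X}\mathbf{J}$ agree.

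The principal technical obstacle is precisely this algebraic cancellation: because only $\mathbf{X}^\top\mathbf{J}\mathbf{X}=\mathbf{J}$ is assumed, the inner factor $\mathbf{X}^\top\mathbf{X}$ arising after the collapse $\mathbf{X}^\top\mathbf{J}\mathbf{J}\mathbf{X}=\mathbf{X}^\top\mathbf{X}$ does not reduce to the identity, so no one-line cancellation is available in full generality. If the direct verification does not close, the contingency plan is to instead define $\mathbf{Y}_0=\mathbf{G}-\mathbf{J}\mathbf{X}\mathbf{S}$, where $\mathbf{S}$ is the unique symmetric solution to the Lyapunov equation $(\mathbf{X}^\top\mathbf{X})\mathbf{S}+\mathbf{S}(\mathbf{X}^\top\mathbf{X})=\mathbf{X}^\top\mathbf{J}\mathbf{G}+\mathbf{G}^\top\mathbf{J}\mathbf{X}$, which is solvable because $\mathbf{X}^\top\mathbf{X}\succ\mathbf{0}$. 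This $\mathbf{Y}_0$ lies in $\mathbf{T}_\mathbf{X}\JJ$ by construction, and one then bounds $\|\mathbf{J}\mathbf{X}\mathbf{S}\|_\fro\le\|\mathbf{J}\mathbf{X}\mathbf{G}^\top\mathbf{X}\mathbf{J}\|_\fro$ by observing that the symmetric part of $\mathbf{G}^\top\mathbf{X}\mathbf{J}$ solves the same Lyapunov relation, so the Lyapunov operator $\mathbf{Z}\mapsto(\mathbf{X}^\top\mathbf{X})\mathbf{Z}+\mathbf{Z}(\mathbf{X}^\top\mathbf{X})$ supplies the required norm comparison.
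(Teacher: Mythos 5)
Your first plan is exactly the route the paper itself takes: exhibit $\mathbf{Y}_0\triangleq\mathbf{G}-\mathbf{J}\mathbf{X}\mathbf{G}^{\top}\mathbf{X}\mathbf{J}$ as an element of $\mathbf{T}_{\mathbf{X}}\JJ$ and invoke optimality of $\bar{\mathbf{Y}}$. The obstacle you flag is not a removable technicality: $\mathbf{Y}_0$ is genuinely not tangent for general $\mathbf{G}$. Indeed $\mathcal{A}_{\mathbf{X}}(\mathbf{Y}_0)=\mathbf{X}^{\top}\mathbf{J}\mathbf{G}+\mathbf{G}^{\top}\mathbf{J}\mathbf{X}-\mathbf{X}^{\top}\mathbf{X}\mathbf{G}^{\top}\mathbf{X}\mathbf{J}-\mathbf{J}\mathbf{X}^{\top}\mathbf{G}\mathbf{X}^{\top}\mathbf{X}$, and with $n=2$, $\mathbf{J}=(\begin{smallmatrix}1&0\\0&-1\end{smallmatrix})$, $\mathbf{X}=(\begin{smallmatrix}c&s\\s&c\end{smallmatrix})$ ($c=\cosh t$, $s=\sinh t$) and $\mathbf{G}=(\begin{smallmatrix}1&0\\0&0\end{smallmatrix})$ this residual equals $-4cs^{2}\mathbf{J}\neq\mathbf{0}$. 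The paper's own argument stumbles at precisely this point: the multiplier it constructs is $\boldsymbol{\Lambda}=\mathbf{J}\mathbf{X}^{\top}\mathbf{G}$, which is not symmetric for arbitrary $\mathbf{G}$, so its final replacement by $\mathbf{G}^{\top}\mathbf{X}\mathbf{J}$ ``because $\boldsymbol{\Lambda}$ is symmetric'' is unjustified; the feasible point that its multiplier actually produces is $\mathbf{G}-\mathbf{J}\mathbf{X}\mathbf{J}\mathbf{X}^{\top}\mathbf{G}=\mathbf{0}$ (since $\mathbf{X}\mathbf{J}\mathbf{X}^{\top}=\mathbf{J}$ gives $\mathbf{J}\mathbf{X}\mathbf{J}\mathbf{X}^{\top}=\mathbf{I}$), which only yields the useless bound $h(\bar{\mathbf{Y}})\leq\tfrac{1}{2}\|\mathbf{G}\|_{\fro}^{2}$.

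Your fallback cannot rescue the claim either. The Lyapunov construction is sound; in fact, since the normal space of $\mathbf{T}_{\mathbf{X}}\JJ$ is $\{\mathbf{J}\mathbf{X}\mathbf{S}:\mathbf{S}=\mathbf{S}^{\top}\}$, the symmetric solution of $(\mathbf{X}^{\top}\mathbf{X})\mathbf{S}+\mathbf{S}(\mathbf{X}^{\top}\mathbf{X})=\mathbf{X}^{\top}\mathbf{J}\mathbf{G}+\mathbf{G}^{\top}\mathbf{J}\mathbf{X}$ gives exactly the minimizer $\bar{\mathbf{Y}}=\mathbf{G}-\mathbf{J}\mathbf{X}\mathbf{S}$. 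But the comparison you then need, $\|\mathbf{J}\mathbf{X}\mathbf{S}\|_{\fro}\leq\|\mathbf{J}\mathbf{X}\mathbf{G}^{\top}\mathbf{X}\mathbf{J}\|_{\fro}$, is the lemma itself, and your supporting observation is false: the symmetric part of $\mathbf{G}^{\top}\mathbf{X}\mathbf{J}$ does not satisfy the same Lyapunov equation (test $\mathbf{G}=(\begin{smallmatrix}1&0\\0&0\end{smallmatrix})$ above; the $(1,1)$ entry comes out $2c^{3}$ rather than the required $2c$). Worse, no argument can close this gap, because the stated inequality is false in general: with $\mathbf{J},\mathbf{X}$ as above and $\mathbf{G}=\tfrac{1}{2}(\begin{smallmatrix}1&-1\\-1&1\end{smallmatrix})$, the tangent space $\{\mathbf{X}\mathbf{J}\mathbf{Z}:\mathbf{Z}^{\top}=-\mathbf{Z}\}$ is the span of $(\begin{smallmatrix}s&c\\c&s\end{smallmatrix})$, so $h(\bar{\mathbf{Y}})=\tfrac{(c+s)^{2}}{4(c^{2}+s^{2})}\to\tfrac{1}{2}$ as $t\to\infty$, while $h(\mathbf{G}-\mathbf{J}\mathbf{X}\mathbf{G}^{\top}\mathbf{X}\mathbf{J})=\tfrac{1}{2}(c-s)^{4}\to 0$; already at $t=1$ the left side is about $0.49$ versus $0.009$ on the right. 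The inequality does hold in special cases (e.g.\ when $\mathbf{J}\mathbf{X}^{\top}\mathbf{G}$ is symmetric, or when $\mathbf{X}^{\top}\mathbf{X}=\mathbf{I}$, where $\mathbf{J}$ and $\mathbf{X}$ commute and $\mathbf{Y}_0$ really is tangent), but not for arbitrary $\mathbf{G}$ and J-orthogonal $\mathbf{X}$; so the difficulty you identified is a genuine defect of the statement and of the paper's proof, not a gap you could have closed.
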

\begin{proof}  
We note that ${ \textstyle \bar{\mathbf{Y}}=\arg \min _{\mathbf{Y} \in \mathrm{T}_{\mathbf{X}} \JJ} \frac{1}{2}\|\mathbf{Y}-\mathbf{G}\|_{\fro}^2}={ \textstyle \arg \min _{\mathbf{Y}} \frac{1}{2}\|\mathbf{Y}-\mathbf{G}\|_{\mathrm{F}}^2}$, s.t. ${ \textstyle \mathbf{X}^{\top} \mathbf{J} \mathbf{Y}+\mathbf{Y}^{\top} \mathbf{J} \mathbf{X}=\mathbf{0}}$. Introducing a multiplier ${ \textstyle \boldsymbol{\Lambda} \in \mathbb{R}^{n \times n}}$ for the linear constraints ${ \textstyle \mathbf{X}^{\top} \mathbf{J} \mathbf{Y}+\mathbf{Y}^{\top} \mathbf{J} \mathbf{X}=\mathbf{0}}$, we have following Lagrangian function: ${ \textstyle \tilde{\mathcal{L}}(\mathbf{Y} ; \boldsymbol{\Lambda})=}$ ${ \textstyle \frac{1}{2}\|\mathbf{Y}-\mathbf{G}\|_{\mathrm{F}}^2+ \langle\mathbf{X}^{\top} \mathbf{J} \mathbf{Y}+\mathbf{Y}^{\top} \mathbf{J} \mathbf{X}, \boldsymbol{\Lambda} \rangle}$. We naturally derive the following first-order optimality condition: ${ \textstyle \mathbf{Y}-\mathbf{G}+\mathbf{J}\mathbf{X} \boldsymbol{\Lambda}=\mathbf{0}}$, ${ \textstyle \mathbf{X}^{\top} \mathbf{J} \mathbf{Y}+\mathbf{Y}^{\top} \mathbf{J} \mathbf{X}=\mathbf{0}}$. Incorporating the term ${ \textstyle \mathbf{Y}=\mathbf{G}-\mathbf{J} \mathbf{X} \boldsymbol{\Lambda}}$ into ${ \textstyle \mathbf{X}^{\top} \mathbf{J} \mathbf{Y}+\mathbf{Y}^{\top} \mathbf{J} \mathbf{X}=\mathbf{0}}$, we obtain:
\begin{align} \label{Lambda equ of JT}
{ \textstyle \mathbf{X}^{\top}\mathbf{X} \boldsymbol{\Lambda} +  \boldsymbol{\Lambda}^{\top}\mathbf{X}^{\top}\mathbf{X} }= { \textstyle \mathbf{G}^{\top}\mathbf{J}\mathbf{X}+\mathbf{X}^{\top}\mathbf{J}\mathbf{G}}
\end{align}
Any $\boldsymbol{\Lambda}$ satisfying formula (\ref{Lambda equ of JT}) is a feasible point, so we can easily find :
\beq\label{Lambda of JT}
&& { \textstyle \mathbf{X}^{\top}\mathbf{X} \boldsymbol{\Lambda} = \mathbf{X}^{\top}\mathbf{J}\mathbf{G}} \nonumber \\
&\stackrel{\step{1}}{ \Rightarrow} & { \textstyle \mathbf{X} \boldsymbol{\Lambda} = \mathbf{J} \mathbf{G}} \nonumber \\
&\stackrel{\step{2}}{ \Rightarrow} & { \textstyle \mathbf{X}^{\top} \mathbf{J} \mathbf{X} \boldsymbol{\Lambda} = \mathbf{X}^{\top}\mathbf{J}\mathbf{J}\mathbf{G}} \nonumber \\
&\stackrel{\step{3}}{ \Rightarrow} & { \textstyle \mathbf{J} \boldsymbol{\Lambda} = \mathbf{X}^{\top}\mathbf{G}} \nonumber \\
&\stackrel{\step{4}}{ \Rightarrow} & { \textstyle \boldsymbol{\Lambda} = \mathbf{J}\mathbf{X}^{\top}\mathbf{G}} \nonumber \\
&\stackrel{\step{5}}{ \Rightarrow} & { \textstyle \boldsymbol{\Lambda} = \mathbf{G}^{\top}\mathbf{X}\mathbf{J}}  
\eeq

where step \step{1} uses the fact that any matrix $\X$ satisfying the J-orthogonality constraint has a determinant of 1 or -1, thus inv($\X$) exists; step \step{2} multiply both sides of the equation by $\mathbf{X}\mathbf{J}$;step \step{3} uses $\mathbf{X}^T\mathbf{J}\mathbf{X}=\mathbf{J}$ and $\mathbf{J}\mathbf{J}=\mathbf{I}$; step \step{4} multiply both sides of the equation by $\mathbf{J}$ and uses $\mathbf{J}\mathbf{J}=\mathbf{I}$; step \step{5} uses the fact that $\boldsymbol{\Lambda}$ is a symmetric matrix.

Therefore, a feasible solution $\mathbf{Y}$ can be computed as ${ \textstyle \mathbf{Y}=\mathbf{G}-\mathbf{J}\mathbf{X} \boldsymbol{\Lambda}=\mathbf{G}-\mathbf{J}\mathbf{X}\mathbf{G}^{\top}\mathbf{X}\mathbf{J}}$. Since $\bar{\mathbf{Y}}$ is the optimal solution, there must be ${ \textstyle h(\bar{\mathbf{Y}}) \leq h(\mathbf{G}-\mathbf{J} \mathbf{X} \mathbf{G}^{\top} \mathbf{X} \mathbf{J})}$. \end{proof}

We now present the proof of this lemma.

\begin{lemma}\label{dist F and dist_M F}
For any $\mathbf{X} \in \mathbb{R}^{n \times n}$, it holds that ${ \textstyle \operatorname{dist} (\mathbf{0}, \nabla f^{\circ}(\mathbf{X}) ) \leq \operatorname{dist} (\mathbf{0}, \nabla_{\JJ} f(\mathbf{X}) )}$.
\end{lemma}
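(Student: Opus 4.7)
The strategy is to exhibit a single explicit element of $\nabla f^{\circ}(\mathbf{X})$ whose Frobenius norm is at most $\|\nabla_{\JJ} f(\mathbf{X})\|_{\fro}$; the claim then follows from the infimum definition of $\operatorname{dist}(\mathbf{0},\cdot)$, since $\operatorname{dist}(\mathbf{0},\nabla_{\JJ} f(\mathbf{X}))$ is simply $\|\nabla_{\JJ} f(\mathbf{X})\|_{\fro}$ (it is a single matrix, not a set). Set $\mathbf{G} \triangleq \nabla f(\mathbf{X})$ throughout; the natural candidate witness is $\nabla_{\JJ} f(\mathbf{X}) = \mathbf{G} - \mathbf{J}\mathbf{X}\mathbf{G}^{\top}\mathbf{X}\mathbf{J}$ itself, provided it actually lies in $\nabla f^{\circ}(\mathbf{X})$.

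First, because $f$ is smooth and $\JJ$ is a smooth embedded submanifold, the limiting subdifferential of $f^{\circ}=f+\mathcal{I}_{\JJ}$ decomposes as $\nabla f^{\circ}(\mathbf{X}) = \nabla f(\mathbf{X}) + N_{\JJ}(\mathbf{X})$ for $\mathbf{X}\in\JJ$. I would characterize the normal cone via standard Lagrangian calculus applied to the symmetric constraint map $\mathbf{X}\mapsto \mathbf{X}^{\top}\mathbf{J}\mathbf{X}-\mathbf{J}$ — the same calculation already performed in the proof of Lemma \ref{lemma B.1} (equations (\ref{Lambda equ of JT})--(\ref{Lambda of JT})): one obtains $N_{\JJ}(\mathbf{X}) = \{\mathbf{J}\mathbf{X}\boldsymbol{\Lambda} : \boldsymbol{\Lambda} = \boldsymbol{\Lambda}^{\top}\}$.

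Next, I would verify the crucial membership $\nabla_{\JJ} f(\mathbf{X}) \in \nabla f^{\circ}(\mathbf{X})$, which reduces to showing that $\mathbf{W} \triangleq -\mathbf{J}\mathbf{X}\mathbf{G}^{\top}\mathbf{X}\mathbf{J}$ belongs to $N_{\JJ}(\mathbf{X})$. Writing $\mathbf{W} = \mathbf{J}\mathbf{X}\boldsymbol{\Lambda}$ with $\boldsymbol{\Lambda} = -\mathbf{G}^{\top}\mathbf{X}\mathbf{J}$, one must argue that this particular $\boldsymbol{\Lambda}$ can be taken symmetric. This is the main obstacle, since $\mathbf{G}^{\top}\mathbf{X}\mathbf{J}$ is generally asymmetric; the resolution is to appeal exactly to the Lemma \ref{lemma B.1} derivation, where the identities $\mathbf{X}^{\top}\mathbf{J}\mathbf{X}=\mathbf{J}$, $\mathbf{J}\mathbf{J}=\mathbf{I}$, and the invertibility of $\mathbf{X}$ are used to rewrite the multiplier equation and justify taking its symmetric form. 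Equivalently, one can read off from Lemma \ref{lemma B.1} that $\mathbf{Y}=\nabla_{\JJ} f(\mathbf{X})$ is the feasible point $\mathbf{G}-\mathbf{J}\mathbf{X}\boldsymbol{\Lambda}$ produced there, so it satisfies $\mathbf{Y}-\mathbf{G} \in N_{\JJ}(\mathbf{X})$ by construction.

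Concluding, since $\nabla_{\JJ} f(\mathbf{X})$ is an admissible point of the set $\nabla f^{\circ}(\mathbf{X})$, the infimum definition gives
\begin{equation*}
\operatorname{dist}(\mathbf{0}, \nabla f^{\circ}(\mathbf{X})) \;\leq\; \|\nabla_{\JJ} f(\mathbf{X})\|_{\fro} \;=\; \operatorname{dist}(\mathbf{0}, \nabla_{\JJ} f(\mathbf{X})),
\end{equation*}
which is the desired inequality. The proof is short once the normal-cone characterization and the symmetric-multiplier argument from Lemma \ref{lemma B.1} are in hand; essentially all the work is in that symmetry step.
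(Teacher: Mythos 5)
Your proposal takes a genuinely different route from the paper's, and that route has a gap at its central step. The paper never tries to place $\nabla_{\JJ} f(\X)$ inside the set $\nabla f^{\circ}(\X)$: it identifies $\operatorname{dist}(\zero,\nabla f^{\circ}(\X))$ with the norm of the projection of $\nabla f(\X)$ onto the tangent space $\mathbf{T}_{\X}\JJ$ and then compares that with $\|\nabla_{\JJ}f(\X)\|_{\fro}$ via Lemma \ref{lemma B.1}, whose ``feasible point'' is feasible for the tangent-space constraint $\X\trans\J\mathbf{Y}+\mathbf{Y}\trans\J\X=\zero$ — it is not a certificate of normal-cone membership. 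Your argument instead needs $\nabla_{\JJ}f(\X)\in\nabla f(\X)+N_{\JJ}(\X)$ with $N_{\JJ}(\X)=\{\J\X\Lambda:\Lambda=\Lambda\trans\}$, i.e.\ (since $\J\X$ is invertible) that $\G\trans\X\J$ is symmetric, where $\G=\nabla f(\X)$. This is not a technicality one can arrange: for $\X\in\JJ$, symmetry of $\G\trans\X\J$ is equivalent (multiply by $\J\X$ and use $\X\J\X\trans=\J$, $\X\trans\J\X=\J$) to $\G=\J\X\G\trans\X\J$, i.e.\ to $\nabla_{\JJ}f(\X)=\zero$; cf.\ Lemma \ref{lemma foptim for Jorth}. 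So your witness lies in $\nabla f^{\circ}(\X)$ only at critical points, where the inequality is trivial anyway. Concretely, take $n=2$, $\J=\Diag(1,-1)$, $\X=\I_2\in\JJ$, $\G=(\begin{smallmatrix}0&1\\0&0\end{smallmatrix})$: then $\nabla_{\JJ}f(\X)-\G=-\J\G\trans\J=(\begin{smallmatrix}0&0\\1&0\end{smallmatrix})$, which is not of the form $\J\Lambda$ with $\Lambda$ symmetric, so $\nabla_{\JJ}f(\X)\notin\nabla f^{\circ}(\X)$ even though the lemma's inequality does hold at this point. Hence the ``exhibit a witness'' step collapses at generic feasible $\X$.

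The appeal to Lemma \ref{lemma B.1} cannot close this hole, for two reasons: first, what that lemma produces is a point satisfying the tangent-space equation, used to bound $\|\bar{\mathbf{Y}}-\G\|_{\fro}$, not the statement $\bar{\mathbf{Y}}-\G\in N_{\JJ}(\X)$; second, the ``symmetric multiplier'' step inside its derivation is precisely the identity $\J\X\trans\G=\G\trans\X\J$ that you would need to prove, and as noted it characterizes critical points rather than holding for all $\X\in\JJ$. If you want a self-contained argument, a sound route is: for $\X\in\JJ$, $\nabla f^{\circ}(\X)=\G+N_{\JJ}(\X)$ with $N_{\JJ}(\X)$ a subspace, so $\operatorname{dist}(\zero,\nabla f^{\circ}(\X))=\|\mathcal{P}_{\mathbf{T}_{\X}\JJ}(\G)\|_{\fro}$; then check, using only $\X\trans\J\mathbf{U}+\mathbf{U}\trans\J\X=\zero$ and $\X\J\X\trans=\J$, that $\la\J\X\G\trans\X\J,\mathbf{U}\ra=-\la\G,\mathbf{U}\ra$, hence $\la\nabla_{\JJ}f(\X),\mathbf{U}\ra=2\la\G,\mathbf{U}\ra$ for every tangent $\mathbf{U}$; this gives $\|\mathcal{P}_{\mathbf{T}_{\X}\JJ}(\G)\|_{\fro}=\tfrac{1}{2}\|\mathcal{P}_{\mathbf{T}_{\X}\JJ}(\nabla_{\JJ}f(\X))\|_{\fro}\le\|\nabla_{\JJ}f(\X)\|_{\fro}$, which is the claimed bound.
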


\begin{proof}
For the purpose of analysis, we define the nearest J orthogonal matrix to an arbitrary matrix $\mathbf{Y} \in \mathbb{R}^{n \times n}$ is given by $ \mathcal{P}_{\JJ}(\X)$. Similarly, we have $\mathcal{P}_{\mathbf{T}_{\mathbf{X}} \JJ}(\nabla f(\mathbf{X}))$ for projecting gradient $\nabla f(\mathbf{X})$ into space ${\mathbf{T}_{\mathbf{X}} \JJ}$. 

We recall that the following first-order optimality conditions are equivalent for all $\mathbf{X} \in \mathbb{R}^{n \times n}$ :
\beq
{ \textstyle  (\mathbf{0} \in \nabla f^{\circ}(\mathbf{X}) )    \Leftrightarrow (\mathbf{0} \in \mathcal{P}_{\mathbf{T}_{\mathbf{X}} \JJ}(\nabla f(\mathbf{X})) )} .
\eeq

Therefore, we derive the following results:
\beq
{ \textstyle \operatorname{dist} (\mathbf{0}, \nabla f^{\circ}(\mathbf{X}) )} & = & { \textstyle \inf _{\mathbf{Y} \in \nabla f^{\circ}(\mathbf{X})}\|\mathbf{Y}\|_{\fro}} \\
& = & { \textstyle \inf _{\mathbf{Y} \in \mathcal{P}_{ (\mathbf{T}_{\mathbf{X}} \JJ )}(\nabla f(\mathbf{X}))}\|\mathbf{Y}\|_{\fro}}
\eeq
We let ${ \textstyle \mathbf{G} \in \nabla f(\mathbf{X})}$ and obtain the following results from the above equality:
\beq
{ \textstyle \operatorname{dist} (\mathbf{0}, \nabla f^{\circ}(\mathbf{X}) )} & \stackrel{\step{1}}{\leq} & { \textstyle  \|\mathbf{G}-\mathbf{J} \mathbf{X} \mathbf{G}^{\top} \mathbf{X} \mathbf{J}  \|_{\fro}}, \\
& \stackrel{\step{2}}{=} & { \textstyle  \|\nabla_{\JJ} f(\mathbf{X}) \|_{\fro}} \triangleq { \textstyle \operatorname{dist} (\mathbf{0}, \nabla_{\JJ} f(\mathbf{X}) ) }.
\eeq
where step \step{1} uses Lemma \ref{lemma B.1}; step \step{2} uses ${ \textstyle \nabla_{\JJ} f(\mathbf{X})=\mathbf{G}-\mathbf{J}\mathbf{X G}^{\top} \mathbf{X} \mathbf{J}}$ with ${ \textstyle \mathbf{G} \in \nabla f(\mathbf{X})}$.
\end{proof}
First of all, since ${ \textstyle f^{\circ}(\mathbf{X})} \triangleq { \textstyle f(\mathbf{X})+\mathcal{I}_{\JJ}(\mathbf{X})}$ is a KL function, we have from Proposition \ref{KL property} that:
\beq\label{use KL} 
{ \textstyle \frac{1}{\varphi^{\prime} (f^{\circ} (\mathbf{X}^{\prime} )-f^{\circ}(\mathbf{X}) )} }& \leq& { \textstyle \operatorname{dist} (0, \nabla f^{\circ} (\mathbf{X}^{\prime} ) ) }\nonumber \\
& \stackrel{\step{1}}{=} & { \textstyle  \|\nabla_{\JJ} f (\mathbf{X}^{\prime} ) \|_{\fro}},
\eeq
where step \step{1} uses Lemma \ref{dist F and dist_M F}. Here, $\varphi(\cdot)$ is some certain concave desingularization function. Since $\varphi(\cdot)$ is concave, we have:
\beq
\forall \Delta \in \mathbb{R}, \Delta^{+} \in \mathbb{R}, { \textstyle \varphi (\Delta^{+} )+ (\Delta-\Delta^{+} ) \varphi^{\prime}(\Delta) \leq \varphi(\Delta)} .
\eeq
Applying the inequality above with ${ \textstyle \Delta=f (\mathbf{X}^t )-f(\bar{\mathbf{X}})}$ and ${ \textstyle \Delta^{+}=f (\mathbf{X}^{t+1} )-f(\bar{\mathbf{X}})}$, we have:
\beq
&& { \textstyle  (f (\mathbf{X}^t )-f (\mathbf{X}^{t+1} ) ) \varphi^{\prime} (f (\mathbf{X}^t )-f(\bar{\mathbf{X}}) )} \nonumber \\
&\leq& { \textstyle \varphi (f (\mathbf{X}^t )-f(\bar{\mathbf{X}}) )-\varphi (f (\mathbf{X}^{t+1} )-f(\bar{\mathbf{X}}) )} \triangleq \mathcal{E}^t .
\label{temp2} 
\eeq
With the sufficient descent condition as shown in Theorem \ref{Thm:VR-JJOBCD global coverage}, we derive the following inequalities:
\beq
&& { \textstyle \mathbb{E}_{\iota^{t}} [\frac{\theta}{2}\sum_{i=1}^{n/2}\|\bar{\mathbf{V}}_i^t-\mathbf{I}_2\|_{\fro}^2 ]} \nn\\
&\leq& { \textstyle \mathbb{E}_{\iota^{t}} [f (\mathbf{X}^t )-f (\mathbf{X}^{t+1} ) ]} +
\frac{1}{2} { \textstyle  \mathbb{E}_{\iota^{t}} [\| \mathbf{X}^t  \|_{\fro}^2] \mathbb{E}_{\iota^{t}} [\sum_{i=1}^{n / 2}\|\bar{\mathbf{V}}_i^t-\mathbf{I}_2\|_{\fro}^2]}+{ \textstyle \frac{1}{2} \mathbb{E}_{\iota^{t}} [u^t ]} \\
&\stackrel{\step{1}}{ \Rightarrow} &{ \textstyle \mathbb{E}_{\iota^{t}} [\frac{\theta-\XXX^2}{2}\sum_{i=1}^{n/2}\|\bar{\mathbf{V}}_i^t-\mathbf{I}_2\|_{\fro}^2 ]}{\leq} { \textstyle \mathbb{E}_{\iota^{t}} [f (\mathbf{X}^t )-f (\mathbf{X}^{t+1} ) ] }+{ \textstyle \frac{1}{2} \mathbb{E}_{\iota^{t}} [u^t ] }\\
\eeq
where step \step{1} uses $\forall t, \|\mathbf{X}^t\|_{\fro} \leq \XXX$.
\beq
&&{ \textstyle \mathbb{E}_{\iota^{t}} [\frac{\theta-\XXX^2}{2}\sum_{i=1}^{n/2}\|\bar{\mathbf{V}}_i^t-\mathbf{I}_2\|_{\fro}^2 ]}\nn\\
& \stackrel{\step{1}}{\leq} &{ \textstyle \mathbb{E}_{\iota^{t}} [\frac{\mathcal{E}^t}{\varphi^{\prime} (f (\mathbf{X}^t )-f(\bar{\mathbf{X}}) )} ]}+{ \textstyle \frac{1}{2} \mathbb{E}_{\iota^{t}} [u^t ] }\nonumber \\
& \stackrel{\step{2}}{\leq}& { \textstyle \mathbb{E}_{\iota^{t}} [\mathcal{E}^t \|\nabla_{\JJ} f (\mathbf{X}^t ) \|_{\fro} ] }+{ \textstyle \frac{1}{2}\mathbb{E}_{\iota^{t}} [u^t ] } \nonumber \\
& \stackrel{\step{3}}{\leq}& { \textstyle \mathbb{E}_{\iota^{t}} [\mathcal{E}^t \gamma \|\nabla_{\JJ} \mathcal{T} (\mathbf{I}_2 ; \mathbf{X}^t, \B ) \|_{\fro} + 2\mathcal{E}^t \XXX^2 \sqrt{\mathbb{E}_{\iota^{t}} [u^t ]} ] } +{ \textstyle \frac{1}{2} \mathbb{E}_{\iota^{t}} [u^t ] }\nonumber \\
& \stackrel{\step{4}}{\leq}& { \textstyle
 \mathbb{E}_{\iota^{t}} [\mathcal{E}^t \gamma \phi \sum_{i=1}^{n/2} \| \bar{\mathbf{V}}_i^{t-1}-\mathbf{I}_2\|_{\fro}] + \mathcal{E}^t \gamma \tfrac{np}{2}(\XXX+\VVV^2\XXX) \sqrt{\mathbb{E}_{\iota^{t}} [u^t ]} }\nonumber \\
 && + 2\mathcal{E}^t \XXX^2 \sqrt{\mathbb{E}_{\iota^{t}} [u^t ]}   +{ \textstyle \frac{1}{2}\mathbb{E}_{\iota^{t}} [u^t ] }\nonumber \\
& \stackrel{\step{5}}{\leq}& { \textstyle \mathbb{E}_{\iota^{t}} [\mathcal{E}^t \gamma \phi \sqrt{\frac{n}{2}} \sqrt{\sum_{i=1}^{n/2} \| \bar{\mathbf{V}}_i^{t-1}-\mathbf{I}_2\|^2_{\fro}}} \nonumber \\
&& + \mathcal{E}^t (2\XXX^2+\gamma\tfrac{np}{2}\XXX + \gamma\tfrac{np}{2}\VVV^2\XXX)\sqrt{\mathbb{E}_{\iota^{t}} [u^t ]} ]
+{ \textstyle \frac{1}{2}\mathbb{E}_{\iota^{t}} [u^t ] }\nonumber \\
& \stackrel{\step{6}}{\leq}& { \textstyle \mathbb{E}_{\iota^{t}} [ \frac{n{\mathcal{E}^t}^2 \gamma^2 \phi^2}{4\theta^{\prime}} + \frac{\theta^{\prime}}{2}\sum_{i=1}^{n/2} \| \bar{\mathbf{V}}_i^{t-1}-\mathbf{I}_2\|^2_{\fro} +
\frac{\bar{\theta}\mathbb{E}_{\iota^{t}} [u^t ]}{2}} \nonumber \\ 
&& + \tfrac{{\mathcal{E}^t}^2 (2\XXX^2+\gamma\tfrac{np}{2}\XXX + \gamma\tfrac{np}{2}\VVV^2\XXX)^2}{2\bar{\theta}} ]
+{ \textstyle \frac{1}{2}\mathbb{E}_{\iota^{t}} [u^t ]} \nonumber \\
& \stackrel{\step{7}}{=}& { \textstyle \mathbb{E}_{\iota^{t}} [ {\mathcal{E}^t}^2  \mathfrak{A}^2 + \frac{\theta^{\prime}}{2}\sum_{i=1}^{n/2} \| \bar{\mathbf{V}}_i^{t-1}-\mathbf{I}_2\|^2_{\fro}  ]}
+{ \textstyle \frac{\bar{\theta}+1}{2} \mathbb{E}_{\iota^{t}} [u^t ]} \label{final Strong converage (1)}
\eeq
where step \step{1} uses the sufficient descent condition as shown in Theorem \ref{Thm:VR-JJOBCD global coverage}; step \step{2} uses Inequality (\ref{temp2}) and (\ref{use KL}) with $\mathbf{X}^{\prime}=\mathbf{X}^t$ and $\mathbf{X}=\bar{\mathbf{X}}$; step \step{3} uses lemma \ref{norm bet F and J} ; step \step{4} uses Lemma \ref{Riemannian gradient Lower Bound for
the Iterates Gap} ; step \step{5} uses $\forall x_i\in\mathbb{R}, { \textstyle \frac{x_1+\cdots+x_n}{n} \leqslant \sqrt{\frac{x_1^2+\cdots+x_n^2}{n}}}$ ; step \step{6} applies the inequality that $\forall \theta^{\prime}>$ $0, a, b, a b \leq \frac{\theta^{\prime} a^2}{2}+\frac{b^2}{2 \theta^{\prime}}$ with ${ \textstyle a=\sqrt{\sum_{i=1}^{n/2} \| \bar{\mathbf{V}}_i^{t-1}-\mathbf{I}_2\|^2_{\fro}}},
{ \textstyle b=\mathcal{E}^t \gamma \phi \sqrt{\frac{n}{2}}};
{ \textstyle a=\sqrt{\mathbb{E}_{\iota^{t}} [u^t ]},
b=\mathcal{E}^t (2\XXX^2+\gamma\tfrac{np}{2}\XXX + \gamma\tfrac{np}{2}\VVV^2\XXX)}$; step \step{7} denote $\mathfrak{A}^2 \triangleq \tfrac{(2\XXX^2+\gamma\tfrac{np}{2}\XXX + \gamma\tfrac{np}{2}\VVV^2\XXX)^2}{2\bar{\theta}} + \frac{n \gamma^2 \phi^2}{4\theta^{\prime}}$. To simplify the formula, we define
$\aleph^t = \sum_{i=1}^{n/2}\|\bar{\mathbf{V}}_i^t-\mathbf{I}_2\|_{\fro}^2$ .

Multiplying both sides by 2 and taking the square  root of both sides, we have:
\beq
{ \textstyle \mathbb{E}_{\iota^{t}} [\sqrt{\theta-\XXX^2}\sqrt{\aleph^t} ]}
& \leq &
{ \textstyle\sqrt{ \textstyle \mathbb{E}_{\iota^{t}} [ {\mathcal{E}^t}^2  \mathfrak{A}^2 + \theta^{\prime}\aleph^{t-1}  ]
+ (\bar{\theta}+1) \mathbb{E}_{\iota^{t}} [u^t ]}} \nonumber \\
&\leq &
{ \textstyle\sqrt{ \textstyle \mathbb{E}_{\iota^{t}} [ {\mathcal{E}^t}^2 \mathfrak{A}^2 ]}} + { \textstyle \mathbb{E}_{\iota^{t-1}} [ \sqrt{\theta^{\prime} \aleph^{t-1}}  ]}
+ \sqrt{(\bar{\theta}+1) \mathbb{E}_{\iota^{t}} [u^t ]} \nonumber \\
&\leq &
{ \textstyle {\mathcal{E}^t}  \mathfrak{A}} + 
{ \textstyle \sqrt{\theta^{\prime}} \mathbb{E}_{\iota^{t-1}} [ \sqrt{\aleph^{t-1}}  ]} + \sqrt{(\bar{\theta}+1)}\sqrt{\mathbb{E}_{\iota^{t}} [u^t ]} \label{sqrt final Strong converage (1)} 
\eeq

To recursively eliminate term $\sqrt{(\bar{\theta}+1) \mathbb{E}_{\iota^{t}} [u^t ]}$, we take the root of both sides of the Inequality in Lemma \ref{lemma: relation betw u_k and U_k-1}:
\beq
{ \textstyle \sqrt{\mathbb{E}_{\iota^{t}} [u^t ]}} &\leq & { \textstyle \sqrt{\frac{p(N-b)}{b(N-1)} \sigma^2}+\sqrt{(1-p)\mathbb{E}_{\iota^{t-1}} [u^{t-1} ]}}+{ \textstyle \sqrt{\frac{L_f^2 \XXX^2(1-p)}{b^{\prime}} \mathbb{E}_{\iota^{t-1}} [\aleph^{t-1} ]}} \nonumber \\
&\leq & { \textstyle \sqrt{\frac{p(N-b)}{b(N-1)} \sigma^2}+\sqrt{(1-p)}\sqrt{\mathbb{E}_{\iota^{t-1}} [u^{t-1} ]}}+
{ \textstyle \sqrt{\frac{L_f^2 \XXX^2(1-p)}{b^{\prime}}} \sqrt{ \mathbb{E}_{\iota^{t-1}} [\aleph^{t-1} ]}} \label{sqrt Recursive relation of Eu}
\eeq
Adding Inequality ${ \textstyle \frac{\sqrt{\bar{\theta}+1}}{1-\sqrt{1-p}} \times}$ (\ref{sqrt Recursive relation of Eu}) to (\ref{sqrt final Strong converage (1)})
\beq
\mathbb{E}_{\iota^{t}} [ { \textstyle \sqrt{\theta-\XXX^2}} { \textstyle \sqrt{\aleph^t}} ]
&\leq &
{ \textstyle {\mathcal{E}^t}  \mathfrak{A}} + { \textstyle  (\sqrt{\theta^{\prime}} + \sqrt{\frac{L_f^2 \XXX^2(1-p)}{b^{\prime}}} \frac{\sqrt{\bar{\theta}+1}}{1-\sqrt{1-p}}  ) \mathbb{E}_{\iota^{t-1}} [ \sqrt{\aleph^{t-1}}  ]}
+ \nonumber \\
&& { \textstyle \frac{\sqrt{1-p}\sqrt{(\bar{\theta}+1)}}{1-\sqrt{1-p}}}  (\sqrt{\mathbb{E}_{\iota^{t-1}} [u^{t-1} ]}-\sqrt{\mathbb{E}_{\iota^{t}} [u^t ]}  ) + 
 { \textstyle \frac{\sqrt{\bar{\theta}+1}}{1-\sqrt{1-p}}} { \textstyle \sqrt{\frac{p(N-b)}{b(N-1)} \sigma^2}}
\eeq
With the choice ${ \textstyle \sqrt{\theta^{\prime}}=\frac{\sqrt{\theta-\XXX^2}}{2}- \sqrt{\frac{L_f^2 \XXX^2(1-p)}{b^{\prime}}} \frac{\sqrt{\bar{\theta}+1}}{1-\sqrt{1-p}} }$, we have:
\beq
{ \textstyle \mathbb{E}_{\iota^{t}} [\sqrt{\theta-\XXX^2}\sqrt{\aleph^t} ]}
&\leq &
{ \textstyle {\mathcal{E}^t} \mathfrak{A}} + { \textstyle  (\frac{\sqrt{\theta-\XXX^2}}{2}  ) \mathbb{E}_{\iota^{t-1}} [ \sqrt{\aleph^{t-1}}  ]}
+ \nonumber \\
&& { \textstyle \frac{\sqrt{1-p}\sqrt{(\bar{\theta}+1)}}{1-\sqrt{1-p}}}  (\sqrt{\mathbb{E}_{\iota^{t-1}} [u^{t-1} ]}-\sqrt{\mathbb{E}_{\iota^{t}} [u^t ]}  ) +  { \textstyle \frac{\sqrt{\bar{\theta}+1}}{1-\sqrt{1-p}}} { \textstyle \sqrt{\frac{p(N-b)}{b(N-1)} \sigma^2}}
\eeq

Rearranging terms, we have:
\beq
&& { \textstyle \mathbb{E}_{\iota^{t}} [\sqrt{\theta-\XXX^2}\sqrt{\aleph^t} ]} - { \textstyle \mathbb{E}_{\iota^{t-1}} [\frac{\sqrt{\theta-\XXX^2}}{2}\sqrt{\aleph^{t-1}} ]} \nonumber \\
&\leq & { \textstyle {\mathcal{E}^t} \mathfrak{A}} + { \textstyle \frac{\sqrt{1-p}\sqrt{(\bar{\theta}+1)}}{1-\sqrt{1-p}}  (\sqrt{\mathbb{E}_{\iota^{t-1}} [u^{t-1} ]}-\sqrt{\mathbb{E}_{\iota^{t}} [u^t ]}  )} + { \textstyle \frac{\sqrt{\bar{\theta}+1}}{1-\sqrt{1-p}}} { \textstyle \sqrt{\frac{p(N-b)}{b(N-1)} \sigma^2}}
\eeq

Summing the inequality above over $t=i,2 \ldots, T$, we have:
\beq
&& { \textstyle \mathbb{E}_{\iota^{T}} [\sqrt{\theta-\XXX^2}\sqrt{\aleph^T} ]} + { \textstyle \mathbb{E}_{\iota^{T-1}} [\frac{\sqrt{\theta-\XXX^2}}{2}\sum_{t=i}^{T-1}\sqrt{\aleph^{t}} ]} \nonumber \\
&\stackrel{}{\leq} & { \textstyle \mathfrak{A} \sum_{t=i}^T{\mathcal{E}^t} } + { \textstyle \frac{\sqrt{1-p}\sqrt{(\bar{\theta}+1)}}{1-\sqrt{1-p}}} { \textstyle  (\sqrt{ \mathbb{E}_{\iota^{i-1}} [u^{i-1}] }-\sqrt{\mathbb{E}_{\iota^{T}} [u^T ]}  )} + \nn \\
&& { \textstyle \frac{(T-i+1)\sqrt{\bar{\theta}+1}}{1-\sqrt{1-p}}} { \textstyle \sqrt{\frac{p(N-b)}{b(N-1)} \sigma^2}} + { \textstyle \frac{\sqrt{\theta-\XXX^2}}{2} \mathbb{E}_{\iota^{i-1}} [\sqrt{\aleph^{i-1}} ]} \nonumber \\
&\stackrel{\step{1}}{\leq} & { \textstyle \mathfrak{A} \sum_{t=i}^T{\mathcal{E}^t} } + { \textstyle \frac{\sqrt{1-p}\sqrt{(\bar{\theta}+1)}}{1-\sqrt{1-p}}} { \textstyle \sqrt{ \mathbb{E}_{\iota^{i-1}} [u^{i-1}] }} + { \textstyle \frac{(T-i+1)\sqrt{\bar{\theta}+1}}{1-\sqrt{1-p}}} { \textstyle \sqrt{\frac{p(N-b)}{b(N-1)} \sigma^2}} + { \textstyle \frac{\sqrt{\theta-\XXX^2}}{2} \mathbb{E}_{\iota^{i-1}} [\sqrt{\aleph^{i-1}} ]} \nonumber 
\eeq
where step \step{1} uses the fact that ${ \textstyle \mathbb{E}_{\iota^{T}} [ u^{T} ] \geq 0}$. 

Since $b=N, b^{\prime}=\sqrt{b}$, ${ \textstyle p=\frac{b^{\prime}}{b+b^{\prime}}}$, we have $  \frac{(T+i-1)\sqrt{\bar{\theta}+1}}{1-\sqrt{1-p}} { \textstyle \sqrt{\frac{p(N-b)}{b(N-1)} \sigma^2}} = 0$. Rearranging terms, we have:
\begin{align} \label{Strong converage (2)}
& { \textstyle \mathbb{E}_{\iota^{T}} [\frac{\theta-\XXX^2}{2}\sum_{t=i}^{T}\sqrt{\aleph^t} ]} \leq { \textstyle \mathfrak{A} \sum_{t=i}^{T}{\mathcal{E}^t} } +
{ \textstyle \frac{\sqrt{1-p}\sqrt{(\bar{\theta}+1)}}{1-\sqrt{1-p}}} { \textstyle \sqrt{ \mathbb{E}_{\iota^{i-1}} [u^{i-1}] }} + { \textstyle \frac{\sqrt{\theta-\XXX^2}}{2} \mathbb{E}_{\iota^{i-1}} [\sqrt{\aleph^{i-1}} ]}
\end{align}
Considering ${ \textstyle \mathfrak{A} \sum_{t=i}^T \mathcal{E}^t}$, we have:
\beq
{ \textstyle \mathfrak{A} \sum_{t=i}^T \mathcal{E}^t}
&\stackrel{\step{1}}{=} & { \textstyle \mathfrak{A} \sum_{t=i}^T \varphi (f (\mathbf{X}^t )-f(\bar{\mathbf{X}}) )-\varphi(f(\mathbf{X}^{t+1})-f(\bar{\mathbf{X}}))} \nonumber \\
&\stackrel{\step{2}}{=} & \mathfrak{A} { \textstyle  [\varphi(f(\mathbf{X}^i)-f(\bar{\mathbf{X}}))-\varphi(f(\mathbf{X}^{T+1})-f(\bar{\mathbf{X}})) ]} \nonumber \\
&\stackrel{\step{3}}{\leq} & \mathfrak{A} { \textstyle \varphi(f(\mathbf{X}^i)-f(\bar{\mathbf{X}}))} \label{Strong converage (3)}
\eeq
where step \step{1} uses the definition of $\mathcal{E}^i$ in (\ref{temp2}); step \step{2} uses a basic recursive reduction; step \step{3} uses the fact the desingularization function $\varphi(\cdot)$ is positive. Combining Inequality (\ref{Strong converage (2)}) and (\ref{Strong converage (3)}), we obtain :
\beq
{ \textstyle \mathbb{E}_{\iota^{T}} [\frac{\theta-\XXX^2}{2}\sum_{t=i}^{T}\sqrt{\aleph^t} ]} & \leq & { \textstyle \mathfrak{A} \varphi(f(\mathbf{X}^i)-f(\bar{\mathbf{X}})) } + \nn \\
&& { \textstyle \frac{\sqrt{1-p}\sqrt{(\bar{\theta}+1)}}{1-\sqrt{1-p}}} { \textstyle \sqrt{ \mathbb{E}_{\iota^{i-1}} [u^{i-1}] }} + { \textstyle \frac{\sqrt{\theta-\XXX^2}}{2} \mathbb{E}_{\iota^{i}} [\sqrt{\aleph^{i-1}} ]} \label{VR:V:for conver}
\eeq
Using $\forall t,{ \textstyle \|\mathbf{V}^t\|_{\fro} \leq \VVV}$, we have the fact that ${ \textstyle  \|\mathbf{V}_i-\mathbf{I}_2 \|_{\fro}^2 }\leq { \textstyle  ( \|\mathbf{V}_i \|_{\fro}+ \|\mathbf{I}_2 \|_{\fro} )^2 }\leq { \textstyle  (\XXX+\sqrt{2} )^2}$ and ${ \textstyle \sum_{i=1}^{n / 2}\|\bar{\mathbf{V}}_i^0-\mathbf{I}_2\|_{\fro}^2} \leq { \textstyle \frac{n}{2}(\VVV+\sqrt{2})^2}$. Using the inequality that ${ \textstyle \frac{ \|\mathbf{X}^{+}-\mathbf{X} \|^2_{\fro}}{\XXX^2}} \leq
{ \textstyle \frac{ \|\mathbf{X}^{+}-\mathbf{X} \|^2_{\fro}}{ \|\mathbf{X} \|^2_{\fro}}} \leq
{ \textstyle \sum_{i=1}^{n/2}\|\bar{\mathbf{V}}_i-\mathbf{I}_2\|^2_{\fro}}$ as shown in Part (\bfit{b}) in Lemma \ref{lemma: properties of V update of JJOBCD} and letting $i=1$, we have:
\beq
{ \textstyle \mathbb{E}_{\iota^{t}} [\frac{\theta-\XXX^2}{2\XXX}\sum_{t=1}^{T} \|
\mathbf{X}^{t+1}-\mathbf{X}^{t}  \|_{\fro} ]} & \leq & { \textstyle \mathfrak{A} \varphi(f(\mathbf{X}^1)-f(\bar{\mathbf{X}})) }+ \nn \\
&& { \textstyle \frac{\sqrt{1-p}\sqrt{(\bar{\theta}+1)}}{1-\sqrt{1-p}}} { \textstyle \sqrt{ \mathbb{E}_{\iota^{0}} [u^{0}] }} + { \textstyle \frac{\sqrt{\theta-\XXX^2}}{2} \sqrt{ \frac{n}{2}(\VVV+\sqrt{2})^2} }\nonumber
\eeq
Since ${ \textstyle \mathbb{E}_{\iota^{0}} [ u^{0} ] \leq \frac{N-b}{b(N-1)}\sigma^2} = 0$, we have:
\begin{align}
& { \textstyle \mathbb{E}_{\iota^{t}} [\frac{\theta-\XXX^2}{2\XXX}\sum_{t=1}^{T} \|
\mathbf{X}^{t+1}-\mathbf{X}^{t}  \|_{\fro} ]} \leq { \textstyle \mathfrak{A} \varphi(f(\mathbf{X}^1)-f(\bar{\mathbf{X}})) } + { \textstyle \frac{\sqrt{\theta-\XXX^2}}{2} \sqrt{ \frac{n}{2}(\VVV+\sqrt{2})^2} }\nonumber
\end{align}
We can get the expression for C:
$${ \textstyle \mathbb{E}_{\iota^{t}} [\sum_{j=1}^{t} \|
\mathbf{X}^{j+1}-\mathbf{X}^{j}  \|_{\fro} ]}  \nonumber \leq C$$ 
where $C \triangleq { \textstyle \frac{2\XXX}{\theta-\XXX^2}}  ({ \textstyle \mathfrak{A} \varphi(f(\mathbf{X}^1)-f(\bar{\mathbf{X}})) } + { \textstyle \frac{\sqrt{\theta-\XXX^2}}{2} \sqrt{ \frac{n}{2}(\VVV+\sqrt{2})^2} } )$. Considering that: $ \sqrt{\theta^{\prime}}=\tfrac{\sqrt{\theta-\XXX^2}}{2}- \sqrt{\tfrac{L_f^2 \XXX^2(1-p)}{b^{\prime}}} \tfrac{\sqrt{\bar{\theta}+1}}{1-\sqrt{1-p}}=\tfrac{\sqrt{\theta-\XXX^2}}{2}- { \textstyle\sqrt{L_f^2\XXX^2(1+\bar{\theta})}}((1+N^{\frac{1}{2}})^{\frac{1}{2}} + N^{\frac{1}{4}})$, we have:
$\mathfrak{A} 
= \sqrt{\tfrac{(2\XXX^2+\gamma\tfrac{np}{2}\XXX + \gamma\tfrac{np}{2}\VVV^2\XXX)^2}{2\bar{\theta}} + \frac{n \gamma^2 \phi^2}{4\theta^{\prime}}} \leq \mathcal{O}(\frac{1}{N^{1/4}})$. 
Finally, we have $
C = \mathcal{O} ({ \textstyle \tfrac{\varphi(f(\mathbf{X}^1)-f(\bar{\mathbf{X}}))}{N^{1/4}}}  )$
\end{proof}

\subsection{Proof of Theorem \ref{Thm:Strong CON for GS-JOBCD}} \label{app:sect:CA:SC:GS-JJOBCD}

\begin{proof}
For simplicity, we use $\B$ instead of $\B^t$. Initially, we prove the following important lemmas.
\begin{lemma} \label{GS Riemannian gradient Lower Bound for
the Iterates Gap} (Riemannian gradient Lower Bound for the Iterates Gap) We define $\phi \triangleq  (3\XXX+\VVV \XXX )\GGG+ (1+\XXX^2+\VVV^2+\VVV^2\XXX^2)L_f+(1+\VVV^2) \theta$. It holds that: $ { \textstyle \mathbb{E}_{\xi^{t+1}} [\operatorname{dist} (\mathbf{0}, \nabla_{\JJ} \GG (\mathbf{I}_2 ; \mathbf{X}^{t+1}, \B^{t+1} ) ) ]} \leq \phi \cdot  { \textstyle \mathbb{E}_{\xi^{t}} [\|\bar{\mathbf{V}}^t-\mathbf{I}_2\|_{\fro} ]}$.
\end{lemma}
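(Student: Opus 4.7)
The plan is to mirror the structure of the proof of Lemma~\ref{Riemannian gradient Lower Bound for the Iterates Gap} established for \textbf{VR-J-JOBCD}, specialized to the deterministic single-block setting of \textbf{GS-JOBCD}. Since here $\tilde{\mathbf{G}}^t=\nabla f(\mathbf{X}^t)$ is the exact gradient, the variance term $u^t$ vanishes and Lemma~\ref{lemma:lipschitz:G} is replaced directly by the Lipschitz bound $\|\nabla f(\mathbf{X}^t)-\nabla f(\mathbf{X}^{t+1})\|_{\fro}\leq L_f\|\mathbf{X}^t-\mathbf{X}^{t+1}\|_{\fro}$. Combined with Part~(\bfit{b}) of Lemma~\ref{binding:jorth:theorem}, which gives $\|\mathbf{X}^{t+1}-\mathbf{X}^t\|_{\fro}\leq\XXX\|\bar{\mathbf{V}}^t-\mathbf{I}_2\|_{\fro}$, this yields the needed one-step control.

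Concretely, I would begin by writing the Riemannian gradient $\nabla_{\JJ}\GG(\mathbf{I}_2;\mathbf{X}^{t+1},\B^{t+1})=\mathbf{U}_{\B^{t+1}}^{\top}\bigl(\nabla f(\mathbf{X}^{t+1})(\mathbf{X}^{t+1})^{\top}-\mathbf{J}\mathbf{X}^{t+1}[\nabla f(\mathbf{X}^{t+1})]^{\top}\mathbf{J}\bigr)\mathbf{U}_{\B^{t+1}}$, using the formula already derived for the gradient of $\GG$. I would introduce the auxiliary quantities, analogous to those in equations (\ref{omega 0})-(\ref{omega 2}):
\[
\Omega_0=\mathbf{U}_{\B}^{\top}\nabla f(\mathbf{X}^{t+1})(\mathbf{X}^{t+1})^{\top}\mathbf{U}_{\B},\;\;\Omega_1=\mathbf{U}_{\B}^{\top}\nabla f(\mathbf{X}^{t+1})(\mathbf{X}^{t})^{\top}\mathbf{U}_{\B},\;\;\Omega_2=\mathbf{U}_{\B}^{\top}[\nabla f(\mathbf{X}^{t})-\nabla f(\mathbf{X}^{t+1})](\mathbf{X}^{t})^{\top}\mathbf{U}_{\B},
\]
where the working sets $\B^t=\B^{t+1}=\B$ can be identified in expectation because they are drawn i.i.d.\ uniformly. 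The first-order optimality of $\bar{\mathbf{V}}^t$ for $\GG(\cdot;\mathbf{X}^t,\B)$ then yields $\Omega_1=\mathbf{J}_{\B\B}\bar{\mathbf{V}}^t(\Upsilon_1+\Omega_1+\Omega_2)^{\top}\bar{\mathbf{V}}^t\mathbf{J}_{\B\B}-\Upsilon_1-\Omega_2$, where $\Upsilon_1=\mathrm{mat}\bigl((\mathbf{Q}+\theta\mathbf{I}_2)\mathrm{vec}(\bar{\mathbf{V}}^t-\mathbf{I}_2)\bigr)$, exactly as in equation~(\ref{omega1 2}).

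Applying the same add-and-subtract decomposition as in (\ref{for prove (1)}) splits $\|\nabla_{\JJ}\GG(\mathbf{I}_2;\mathbf{X}^{t+1},\B^{t+1})\|_{\fro}$ into four terms to be bounded individually: (i)~$2\|\Omega_0-\Omega_1\|_{\fro}\leq 2\XXX\GGG\,\|\bar{\mathbf{V}}^t-\mathbf{I}_2\|_{\fro}$ via Part~(\bfit{b}) of Lemma~\ref{binding:jorth:theorem}; (ii)~$\|\mathbf{J}_{\B\B}\bar{\mathbf{V}}^t\Upsilon_1^{\top}\bar{\mathbf{V}}^t\mathbf{J}_{\B\B}-\Upsilon_1\|_{\fro}\leq(1+\VVV^2)(L_f+\theta)\,\|\bar{\mathbf{V}}^t-\mathbf{I}_2\|_{\fro}$ using $\|\mathbf{Q}+\theta\mathbf{I}_2\|\leq L_f+\theta$; (iii)~$\|\bar{\mathbf{V}}^t\Omega_1^{\top}\bar{\mathbf{V}}^t-\Omega_1^{\top}\|_{\fro}\leq(1+\VVV)\XXX\GGG\,\|\bar{\mathbf{V}}^t-\mathbf{I}_2\|_{\fro}$ via the identity $\bar{\mathbf{V}}^t\Omega_1^{\top}\bar{\mathbf{V}}^t-\Omega_1^{\top}=\bar{\mathbf{V}}^t\Omega_1^{\top}(\bar{\mathbf{V}}^t-\mathbf{I}_2)+(\bar{\mathbf{V}}^t-\mathbf{I}_2)\Omega_1^{\top}$; and (iv)~$\|\mathbf{J}_{\B\B}\bar{\mathbf{V}}^t\Omega_2^{\top}\bar{\mathbf{V}}^t\mathbf{J}_{\B\B}-\Omega_2\|_{\fro}\leq(1+\VVV^2)\XXX^2 L_f\,\|\bar{\mathbf{V}}^t-\mathbf{I}_2\|_{\fro}$ by chaining Lipschitz continuity with Part~(\bfit{b}) of Lemma~\ref{binding:jorth:theorem}. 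Summing these four bounds gives the coefficient $(3+\VVV)\XXX\GGG+(1+\XXX^2+\VVV^2+\VVV^2\XXX^2)L_f+(1+\VVV^2)\theta=\phi$, as claimed.

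The main obstacle, as in the \textbf{VR-J-JOBCD} analog, is the careful constant bookkeeping across the four pieces, together with an accurate use of Part~(\bfit{b}) of Lemma~\ref{binding:jorth:theorem} in pieces (i) and (iv) to convert an iterate gap into a block-multiplier gap. Once this is done, taking the expectation $\mathbb{E}_{\xi^{t+1}}[\cdot]$ on both sides is immediate since the deterministic bound derived above is independent of $\B^{t+1}$, and uses the uniform random selection of the working set to equate $\mathbb{E}_{\xi^{t+1}}[\|\bar{\mathbf{V}}^t-\mathbf{I}_2\|_{\fro}]=\mathbb{E}_{\xi^{t}}[\|\bar{\mathbf{V}}^t-\mathbf{I}_2\|_{\fro}]$ on the right-hand side.
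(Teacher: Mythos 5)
Your proposal is correct and takes essentially the same route as the paper: the paper's own proof of this lemma is a one-line remark that the argument is identical to the \textbf{VR-J-JOBCD} counterpart (Lemma \ref{Riemannian gradient Lower Bound for the Iterates Gap}), and your specialization with the exact gradient (so the variance term vanishes and the Lipschitz bound replaces Lemma \ref{lemma:lipschitz:G}), the $\Omega_0,\Omega_1,\Omega_2$ decomposition, and the four individual bounds is exactly that argument with the constants correctly summing to $\phi=(3\XXX+\VVV\XXX)\GGG+(1+\XXX^2+\VVV^2+\VVV^2\XXX^2)L_f+(1+\VVV^2)\theta$.
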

\begin{proof}
The proof process is exactly the same as in lemma \ref{Riemannian gradient Lower Bound for the Iterates Gap} and will not be repeated here.
\end{proof}
The following lemma is useful to outline the relation of $ { \textstyle  \|\nabla_{\JJ} f (\mathbf{X}^t ) \|_{\fro}}$ and $ { \textstyle  \|\nabla_{\JJ} \GG (\mathbf{I}_2 ; \mathbf{X}^t, \B ) \|_{\fro}}$.
\begin{lemma} \label{GS norm bet F and J} We have the following results:\\ $ { \textstyle \operatorname{dist} (\mathbf{0}, \nabla_{\JJ} f (\mathbf{X}^t ) )} \leq  { \textstyle \gamma \cdot \mathbb{E}_{\xi^{t-1}} [\operatorname{dist} (\mathbf{0}, \nabla_{\JJ} \GG (\mathbf{I}_2 ; \mathbf{X}^t, \B ) ) ]}$ with $\gamma \triangleq \XXX { \textstyle \sqrt{C_n^2}}$.
\end{lemma}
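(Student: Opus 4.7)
The plan is to mirror the proof of Lemma \ref{norm bet F and J}, specialized to the single-block, exact-gradient setting of \textbf{GS-JOBCD}. First, I would identify the Riemannian gradient of the surrogate at $\V=\I_2$ in closed form. Starting from $\ddot{\G}=[\nabla f(\X^t)(\X^t)^\top]_{\B\B}$ and the stationarity relation $\nabla_\JJ \GG(\V;\X^t,\B) = \ddot{\G} - \J_{\B\B}\V\ddot{\G}^\top\V\J_{\B\B}$, together with $\UB^\top\J=\J_{\B\B}\UB^\top$, I obtain the compact form $\nabla_\JJ\GG(\I_2;\X^t,\B)=\mathbf{W}(\B,\B)$, where $\mathbf{W}\triangleq \nabla f(\X^t)(\X^t)^\top - \J\X^t(\nabla f(\X^t))^\top\J$.

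Next, I would establish $\|\nabla_\JJ f(\X^t)\|_\fro\leq \XXX\|\mathbf{W}\|_\fro$ using $\X^t\in\JJ$. This follows by right-multiplying $\nabla_\JJ f(\X^t)=\nabla f(\X^t)-\J\X^t(\nabla f(\X^t))^\top\X^t\J$ by the identity $\J\X^t\J(\X^t)^\top=\I$ (a consequence of $(\X^t)^\top\J\X^t=\J$ and $\J\J=\I$), factoring out $\mathbf{W}$, and applying submultiplicativity of the Frobenius norm together with $\|\J\X^t\J\|_\fro\leq \XXX$. This is the noise-free analog of steps \step{1}--\step{4} of the proof of Lemma \ref{norm bet F and J}, with the stochastic-error term $\|\nabla f(\X^t)-\tilde\G^t\|_\fro$ absent because \textbf{GS-JOBCD} uses the exact gradient.

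The key structural observation is that $\mathbf{W}$ has zero diagonal: for every $i$, $\mathbf{W}_{ii}=[\nabla f(\X^t)(\X^t)^\top]_{ii}-\J_{ii}^2[\X^t(\nabla f(\X^t))^\top]_{ii}=0$, since $\J_{ii}^2=1$ and $[\mathbf{A}^\top]_{ii}=\mathbf{A}_{ii}$. Invoking Lemma \ref{Some useful lemma used in Proof (1)} with $k=2$ therefore kills the diagonal contribution and yields
\begin{equation*}
\|\mathbf{W}\|_\fro^2 \;=\; \sum_{i=1}^{C_n^2}\|\mathbf{W}(\mathcal{B}_i,\mathcal{B}_i)\|_\fro^2.
\end{equation*}
Since $\B$ is drawn uniformly from $\Omega=\{\mathcal{B}_1,\ldots,\mathcal{B}_{C_n^2}\}$ in \textbf{GS-JOBCD}, this rewrites as $\|\mathbf{W}\|_\fro^2 = C_n^2\,\mathbb{E}_\B[\|\mathbf{W}(\B,\B)\|_\fro^2]$, and together with the identification $\mathbf{W}(\B,\B)=\nabla_\JJ\GG(\I_2;\X^t,\B)$ and the bound in step~2 this delivers the claim with $\gamma=\XXX\sqrt{C_n^2}$.

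The main subtlety is the last passage from the second-moment identity to the first-moment inequality in the lemma statement: Cauchy--Schwarz gives $\|\mathbf{W}\|_\fro=\sqrt{C_n^2}\sqrt{\mathbb{E}_\B[\|\mathbf{W}(\B,\B)\|_\fro^2]}$, whereas Jensen's inequality $\sqrt{\mathbb{E}[X^2]}\ge \mathbb{E}[X]$ points the wrong way for directly recovering $\mathbb{E}_\B[\operatorname{dist}(\cdot)]$ on the right-hand side. The hard part is therefore to reconcile these two moments; I would follow the same template used at step \step{1} in the proof of Lemma \ref{norm bet F and J}, which absorbs the $\sqrt{C_n^2}$ factor coming from Lemma \ref{Some useful lemma used in Proof (1)} into the stated bound, so that the analog for \textbf{GS-JOBCD} produces exactly $\gamma=\XXX\sqrt{C_n^2}$ as advertised.
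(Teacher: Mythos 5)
Your proposal follows essentially the same route as the paper's proof: bound $\|\nabla_{\JJ} f(\X^t)\|_\fro \leq \XXX\,\|\mathbf{W}\|_\fro$ with $\mathbf{W}\triangleq \nabla f(\X^t)(\X^t)\trans-\J\X^t(\nabla f(\X^t))\trans\J$ using $(\X^t)\trans\J\X^t=\J$, relate $\|\mathbf{W}\|_\fro^2$ to the block norms $\|\mathbf{W}(\mathcal{B}_i,\mathcal{B}_i)\|_\fro^2$ via Lemma \ref{Some useful lemma used in Proof (1)} with $k=2$, identify $\mathbf{W}(\B,\B)=\nabla_{\JJ}\GG(\I_2;\X^t,\B)$, and convert the sum over $\Omega$ into an expectation via uniform sampling of $\B$. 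Your observation that $\operatorname{diag}(\mathbf{W})=\mathbf{0}$ (turning the relation into an equality) is a valid small refinement; the paper only needs the inequality, which holds anyway since $\tfrac{2}{n}C_n^2=n-1\geq 1$.

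On the subtlety you flag at the end: it is real, but your proposed fix does not close it. Borrowing the ``template'' of Lemma \ref{norm bet F and J} cannot help, because in that lemma the term $\|\nabla_{\JJ}\mathcal{T}(\I_2;\X^t,\B)\|_\fro$ carries no expectation over the block, so no first-versus-second moment reconciliation arises there. What your argument (and the paper's) actually establishes is $\|\nabla_{\JJ} f(\X^t)\|_\fro \leq \gamma\,(\mathbb{E}_{\xi^{t-1}}[\|\nabla_{\JJ}\GG(\I_2;\X^t,\B)\|_\fro^2])^{1/2}$; the paper derives exactly this squared-norm chain and then says ``taking the square root of both sides, we finish the proof,'' which by Jensen yields the root-mean-square bound, not the stated first-moment bound $\gamma\,\mathbb{E}_{\xi^{t-1}}[\|\nabla_{\JJ}\GG(\I_2;\X^t,\B)\|_\fro]$. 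So your write-up matches the paper's proof in substance and shares its one imprecision; the clean repair is to state (and subsequently use) the lemma with the root-mean-square quantity on the right-hand side, which is what both arguments actually prove and which still feeds into the later Young-inequality step in the strong-convergence analysis.
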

\begin{proof}
We have the following inequalities:
\beq
 { \textstyle  \|\nabla_{\JJ} f (\mathbf{X}^t ) \|_{\fro}^2} & \stackrel{\step{1}}{=} & { \textstyle  \|\mathbf{G}^t-\mathbf{J}\mathbf{X}^t (\mathbf{G}^t )^{\top} \mathbf{X}^t \mathbf{J} \|_{\fro}^2}\nn \\
& \stackrel{\step{2}}{=} & { \textstyle  \|\mathbf{G}^t (\mathbf{X}^t )^{\top} \mathbf{J} \mathbf{X}^t \mathbf{J}-\mathbf{J}\mathbf{X}^t (\mathbf{G}^t )^{\top}\mathbf{J}\mathbf{J} \mathbf{X}^t \mathbf{J} \|_{\fro}^2} \nn\\
& \stackrel{\step{3}}{\leq} & { \textstyle  \|\mathbf{G}^t (\mathbf{X}^t )^{\top}-\mathbf{J}\mathbf{X}^t (\mathbf{G}^t )^{\top} \mathbf{J} \|_{\fro}^2}   { \textstyle  \|\mathbf{J}\mathbf{X}^t\mathbf{J} \|_{\fro}^2}\nn \\
& \stackrel{\step{4}}{\leq}& { \textstyle \|\mathbf{X}^t \|_{\fro}^2} { \textstyle \|\mathbf{W}\|_{\fro}^2}, \text { with }  { \textstyle \mathbf{W} \triangleq \mathbf{G}^t (\mathbf{X}^t )^{\top}-\mathbf{J} \mathbf{X}^t (\mathbf{G}^t )^{\top} \mathbf{J}}\nn\\
& \stackrel{\step{5}}{\leq}& { \textstyle  \|\mathbf{X}^t\|_{\fro}^2}  { \textstyle C_n^2} \cdot  { \textstyle \mathbb{E}_{\xi^{t-1}} [ \|\mathbf{U}_{\B}^{\top} [\mathbf{G}^t (\mathbf{X}^t )^{\top}-\mathbf{J} \mathbf{X}^t (\mathbf{G}^t )^{\top} \mathbf{J} ] \mathbf{U}_{\B} \|_{\fro}^2 ]} \nn \\
& \stackrel{\step{6}}{=} & { \textstyle \|\mathbf{X}^t \|_{\fro}^2}  { \textstyle C_n^2} \cdot { \textstyle \mathbb{E}_{\xi^{t-1}} [ \|\nabla_{\JJ} \GG (\mathbf{I}_2 ; \mathbf{X}^t, \B ) \|_{\fro}^2]} \label{lemma:grad dist:with t} \\
& \stackrel{\step{7}}{\leq} & { \textstyle \XXX^2}  { \textstyle C_n^2} \cdot { \textstyle \mathbb{E}_{\xi^{t-1}} [ \|\nabla_{\JJ} \GG (\mathbf{I}_2 ; \mathbf{X}^t, \B ) \|_{\fro}^2]} \nn 
\eeq
where step \step{1} uses the definition of $\nabla_{\JJ} f (\mathbf{X}^t )$; step \step{2} uses $\mathbf{J}\mathbf{J}=\mathbf{I}$ and $\mathbf{X}^{\top} \mathbf{J}\mathbf{X}=\mathbf{J}  \Rightarrow \mathbf{X}^{\top} \mathbf{J}\mathbf{X} \mathbf{J}=\mathbf{J}\mathbf{J}=\mathbf{I}$; step \step{3} uses the norm inequality and ; step \step{4} uses the definition of $\mathbf{W} \triangleq \mathbf{G}^t (\mathbf{X}^t )^{\top}-\mathbf{J}\mathbf{X}^t (\mathbf{G}^t )^{\top}\mathbf{J}$; step \step{5} uses Lemma (\ref{Some useful lemma used in Proof (1)}) with $k=2$; step \step{6} uses the definition of $ { \textstyle \nabla_{\JJ} \GG (\mathbf{I}_2 ; \mathbf{X}^t, \B )}$. Taking the square root of both sides, we finish the proof of this lemma; step \step{7} uses $\forall t,\|\mathbf{X}^t\|_{\fro} \leq \XXX$.
\end{proof}
Finally, we obtain our main convergence results. First of all, since $ { \textstyle f^{\circ}(\mathbf{X}) \triangleq f(\mathbf{X})+\mathcal{I}_{\JJ}(\mathbf{X})}$ is a KL function, we have from Proposition \ref{KL property} that:
\begin{align}
 { \textstyle \frac{1}{\varphi^{\prime} (f^{\circ} (\mathbf{X}^{\prime} )-f^{\circ}(\mathbf{X}) )}} \leq  { \textstyle \operatorname{dist} (0, \nabla f^{\circ} (\mathbf{X}^{\prime} ) ) }\stackrel{\step{1}}{\leq}  { \textstyle  \|\nabla_{\JJ} f (\mathbf{X}^{\prime} ) \|_{\fro}},
\label{GS use KL} \end{align}
where step \step{1} uses Lemma \ref{dist F and dist_M F}. Here, $\varphi(\cdot)$ is some certain concave desingularization function. Since $\varphi(\cdot)$ is concave, we have:
$$
 { \textstyle \forall \Delta \in \mathbb{R}, \Delta^{+} \in \mathbb{R}, \varphi (\Delta^{+} )+ (\Delta-\Delta^{+} ) \varphi^{\prime}(\Delta) \leq \varphi(\Delta)} .
$$
Applying the inequality above with $ { \textstyle \Delta=f (\mathbf{X}^t )-f(\bar{\mathbf{X}})}$ and $ { \textstyle \Delta^{+}=f (\mathbf{X}^{t+1} )-f(\bar{\mathbf{X}})}$, we have:
\beq
 &&{ \textstyle  (f (\mathbf{X}^t )-f (\mathbf{X}^{t+1} ) ) \varphi^{\prime} (f (\mathbf{X}^t )-f(\bar{\mathbf{X}}) )} \nonumber \\
&\leq & { \textstyle \varphi (f (\mathbf{X}^t )-f(\bar{\mathbf{X}}) )-\varphi (f (\mathbf{X}^{t+1} )-f(\bar{\mathbf{X}}) )} \triangleq \mathcal{E}^t .
\label{GS temp2} 
\eeq
We derive the following inequalities:
\beq
 { \textstyle \mathbb{E}_{\xi^{t}} [\frac{\theta}{2}\|\bar{\mathbf{V}}^t-\mathbf{I}_2\|_{\fro}^2 ]} & \stackrel{\step{1}}{\leq} & { \textstyle \mathbb{E}_{\xi^{t}} [f (\mathbf{X}^t )-f (\mathbf{X}^{t+1} ) ]}\nn\\
& \stackrel{\step{2}}{\leq}&  { \textstyle \mathbb{E}_{\xi^{t}} [\frac{\mathcal{E}^t}{\varphi^{\prime} (f (\mathbf{X}^t )-f(\bar{\mathbf{X}}) )} ]}\nn \\
& \stackrel{\step{3}}{\leq} & { \textstyle \mathbb{E}_{\xi^{t}} [\mathcal{E}^t \|\nabla_{\JJ} f (\mathbf{X}^t ) \|_{\fro} ] }\nn\\
& \stackrel{\step{4}}{\leq} & { \textstyle \mathbb{E}_{\xi^{t}} [\mathcal{E}^t \gamma \|\nabla_{\JJ} \GG (\mathbf{I}_2 ; \mathbf{X}^t, \B ) \|_{\fro} ]} \nn\\
& \stackrel{\step{5}}{\leq} & { \textstyle \mathbb{E}_{\xi^{t-1}} [\mathcal{E}^t \gamma \phi\|\bar{\mathbf{V}}^{t-1}-\mathbf{I}_2\|_{\fro} ]} \nn\\
& \stackrel{\step{6}}{\leq}&  { \textstyle \mathbb{E}_{\xi^{t-1}} [\frac{\theta^{\prime}}{2}\|\bar{\mathbf{V}}^{t-1}-\mathbf{I}_2\|_{\fro}^2+\frac{ (\mathcal{E}^t \gamma \phi )^2}{2 \theta^{\prime}} ]}, \forall \theta^{\prime}>0,\nn
\eeq
where step \step{1} uses the sufficient descent condition as shown in Theorem \ref{Thm:GS-JOBCD global coverage}; step \step{2} uses Inequality (\ref{GS temp2}); step \step{3} uses Inequality (\ref{GS use KL}) with $ { \textstyle \mathbf{X}^{\prime}=\mathbf{X}^t}$ and $ { \textstyle \mathbf{X}=\bar{\mathbf{X}}}$; step \step{4} uses Lemma \ref{GS norm bet F and J}; step \step{5} uses Lemma \ref {GS Riemannian gradient Lower Bound for the Iterates Gap}; step \step{6} applies the inequality that $\forall \theta^{\prime}>$ $0, a, b, a b \leq  { \textstyle \frac{\theta^{\prime} a^2}{2}+\frac{b^2}{2 \theta^{\prime}}}$ with $a= { \textstyle \|\bar{\mathbf{V}}^{t-1}-\mathbf{I}_2\|_{\fro}}$ and $b=\mathcal{E}^t \gamma \phi$.

Multiplying both sides by 2 and taking the square root of both sides, we have:
\beq
\sqrt{\theta}  { \textstyle \mathbb{E}_{\xi^{t}}[\|\bar{\mathbf{V}}^t-\mathbf{I}_2\|_{\fro}]} &\leq&  { \textstyle \sqrt{\frac{ (\mathcal{E}^t \gamma \phi )^2}{\theta^{\prime}}+\theta^{\prime} \mathbb{E}_{\xi^{t-1}}[\|\bar{\mathbf{V}}^{t-1}-\mathbf{I}_2\|_{\fro}^2]}}, \forall \theta^{\prime}>0 \nn \\
&\stackrel{\step{1}}{\leq}&  { \textstyle \sqrt{\theta^{\prime}}} { \textstyle  \mathbb{E}_{\xi^{t-1}}[\|\bar{\mathbf{V}}^{t-1}-\mathbf{I}_2\|_{\fro}]}+ { \textstyle \frac{\mathcal{E}^t \gamma \phi}{\sqrt{\theta^{\prime}}}}, \forall \theta^{\prime}>0, \nn
\eeq

where step \step{1} uses the inequality that $ { \textstyle \sqrt{a+b} \leq \sqrt{a}+\sqrt{b}}$ for all $a \geq 0$ and $b \geq 0$. Summing the inequality above over $t=i,2 \ldots, T$, we have:
\beq
&& \sqrt{\theta}  { \textstyle \mathbb{E}_{\xi^{T}}[\|\bar{\mathbf{V}}^T-\mathbf{I}_2\|_{\fro}]}-\sqrt{\theta^{\prime}}  { \textstyle \mathbb{E}_{\xi^{i-1}}[\|\bar{\mathbf{V}}^{i-1}-\mathbf{I}_2\|_{\fro}]}+ { \textstyle \sum_{t=i}^{T-1}(\sqrt{\theta}-\sqrt{\theta^{\prime}}) \mathbb{E}_{\xi^{t}}[\|\bar{\mathbf{V}}^t-\mathbf{I}_2\|_{\fro}] } \nn\\
&\leq &  { \textstyle \frac{\gamma \phi}{\sqrt{\theta^{\prime}}} \sum_{t=i}^T \mathcal{E}^t}\nn \\
&\stackrel{\step{1}}{=} &  { \textstyle \frac{\gamma \phi}{\sqrt{\theta^{\prime}}}}  { \textstyle \sum_{t=i}^T \varphi (f (\mathbf{X}^t)-f(\bar{\mathbf{X}}) )-\varphi (f (\mathbf{X}^{t+1} )-f(\bar{\mathbf{X}}) )} \nn \\
&\stackrel{\step{2}}{=} &  { \textstyle \frac{\gamma \phi}{\sqrt{\theta^{\prime}}}} { \textstyle  [\varphi (f (\mathbf{X}^i )-f(\bar{\mathbf{X}}) )-\varphi (f (\mathbf{X}^{T+1} )-f(\bar{\mathbf{X}}) ) ]} \nn \\
&\stackrel{\step{3}}{\leq} &  { \textstyle \frac{\gamma \phi}{\sqrt{\theta^{\prime}}}} { \textstyle  \varphi (f (\mathbf{X}^i )-f(\bar{\mathbf{X}}) )},\nn
\eeq
where step \step{1} uses the definition of $\mathcal{E}^i$ in (\ref{GS temp2}); step \step{2} uses a basic recursive reduction; step \step{3} uses the fact the desingularization function $\varphi(\cdot)$ is positive. With the choice $ { \textstyle \theta^{\prime}=\frac{\theta}{4}}$, we have:
\beq
&& \sqrt{\theta}  { \textstyle \mathbb{E}_{\xi^{T}}[\|\bar{\mathbf{V}}^T-\mathbf{I}_2\|_{\fro}]}+ { \textstyle \frac{\sqrt{\theta}}{2} \sum_{t=i}^{T-1} \mathbb{E}_{\xi^{t}}[\|\bar{\mathbf{V}}^t-\mathbf{I}_2\|_{\fro}]} \nonumber \\
&\leq &  { \textstyle \frac{2 \gamma \phi}{\sqrt{\theta}}}  { \textstyle \varphi (f (\mathbf{X}^i )-f(\bar{\mathbf{X}}) )}+ { \textstyle \frac{\sqrt{\theta}}{2}} { \textstyle  \mathbb{E}_{\xi^{i-1}}[\|\bar{\mathbf{V}}^{i-1}-\mathbf{I}_2\|_{\fro}]} \label{GS temp3} 
\eeq
We obtain from Inequality (\ref{GS temp3}):
\beq
&& { \textstyle \frac{1}{2}\sum_{t=i}^T \mathbb{E}_{\xi^{t}}[\|\bar{\mathbf{V}}^t-\mathbf{I}_2\|_{\fro}]} \leq  { \textstyle \frac{2 \gamma \phi}{\theta} \varphi (f (\mathbf{X}^i)-f(\bar{\mathbf{X}}) )}+ { \textstyle \frac{1}{2}{ \textstyle  \mathbb{E}_{\xi^{i-1}}[\|\bar{\mathbf{V}}^{i-1}-\mathbf{I}_2\|_{\fro}]}} \label{temp4} \\
& \stackrel{\step{1}}{ \Rightarrow} & { \textstyle \frac{1}{2} \sum_{t=i}^T \mathbb{E}_{\xi^{t}} [ \|\mathbf{X}^{t+1}-\mathbf{X}^t \|_{\fro} ]} \leq  { \textstyle  (\frac{2 \XXX \gamma \phi}{\theta} \varphi (f (\mathbf{X}^i)-f(\bar{\mathbf{X}}) )+\frac{\XXX}{2}(\VVV+\sqrt{2}) )} \nonumber 
\eeq
where step \step{1} uses $\forall t,  \| \mathbf{V} \|_{\fro} \leq \VVV$, then $ \|\mathbf{V}-\mathbf{I}_2 \|_{\fro} \leq \|\mathbf{V}\|_{\fro}+\|\mathbf{I}\|_{\fro} \leq \VVV+\sqrt{2}$ and the inequality that $ { \textstyle \frac{ \|\mathbf{X}^{i+1}-\mathbf{X}^i \|_{\fro}}{\XXX} \leq \|\bar{\mathbf{V}}^i-\mathbf{I}_2\|_{\fro}}$ as shown in Part (\bfit{b}) in Lemma \ref{binding:jorth:theorem}. Finally, let $i=1$ we can get:
$${ \textstyle \sum_{t=1}^T \mathbb{E}_{\xi^{t}} [ \|\mathbf{X}^{t+1}-\mathbf{X}^t \|_{\fro} ]} \leq  C$$
where $C \triangleq
 { \textstyle \frac{4 \XXX \gamma \phi}{\theta} \varphi (f (\mathbf{X}^1 )-f(\bar{\mathbf{X}}) )+\XXX(\VVV+\sqrt{2})} \leq 
  {\textstyle \mathcal{O} (\varphi (f (\mathbf{X}^1 )-f(\bar{\mathbf{X}}) )  )}$. \end{proof}

\subsection{Proof of Theorem \ref{Thm:Strong CON Rate for GS-JOBCD}} \label{app:sect:CA:SC:Rate:GS-JOBCD}
\begin{proof}
We have for all $t \geq 1$:
$${ \textstyle \| \X^t - \X^{\infty} \|_\fro \stackrel{\step{1}}{\leq} \sum_{i=t}^{\infty} \|\X^i - \X^{i+1}\|_\fro \stackrel{\step{2}}{\leq} \sum_{i=t}^{\infty} \|\X^i\|_\fro \|\V^i - \I\|_\fro \stackrel{\step{3}}{\leq} \XXX\sum_{i=t}^{\infty} \|\V^i - \I\|_\fro}$$
where step \step{1} uses the triangle inequality; step \step{2} uses Part (\bfit{b}) of Lemma \ref{binding:jorth:theorem}; step \step{3} uses $\forall t, \|\X^t\|_\fro \leq \XXX$.

Defining $d_t \triangleq \sum_{i=t}^{\infty} \|\V^i - \I\|_\fro$, $\Delta_t = f (\mathbf{X}^t )-f(\bar{\mathbf{X}}) $ and $\varphi_t = \varphi (f (\mathbf{X}^t )-f(\bar{\mathbf{X}}) ) = \varphi (\Delta_t)$, we have:

\beq
d_{t+1} & \stackrel{}{\leq} & d_t \nonumber \\
& \stackrel{\step{1}}{\leq} & { \textstyle \frac{4 \gamma \phi}{\theta} \varphi_t+\|\bar{\mathbf{V}}^{t-1}-\mathbf{I}_2\|_{\fro}} \nonumber \\
& \stackrel{\step{2}}{\leq} & { \textstyle \frac{4 \gamma \phi}{\theta} c \{{\Delta_t}^{\sigma}\}^{\frac{1-\sigma}{\sigma}}+ \|\bar{\mathbf{V}}^{t-1}-\mathbf{I}_2\|_{\fro}} \nonumber \\
& \stackrel{\step{3}}{\leq} & { \textstyle \frac{4 \gamma \phi}{\theta} c \{c(1-\sigma)\frac{1}{\varphi^{\prime}(\Delta_i)}\}^{\frac{1-\sigma}{\sigma}}+\|\bar{\mathbf{V}}^{t-1}-\mathbf{I}_2\|_{\fro}} \nonumber \\
& \stackrel{\step{4}}{\leq} & { \textstyle \frac{4 \gamma \phi}{\theta} c \{c(1-\sigma) \|\nabla_{\JJ} f (\mathbf{X}^{t} ) \|_{\fro} \}^{\frac{1-\sigma}{\sigma}}+\|\bar{\mathbf{V}}^{t-1}-\mathbf{I}_2\|_{\fro}} \nonumber \\
& \stackrel{\step{5}}{\leq} & { \textstyle \frac{4 \gamma \phi}{\theta} c \{c(1-\sigma) \gamma \|\nabla_{\JJ} \GG (\mathbf{I}_2 ; \mathbf{X}^t, \B ) \|_{\fro} \}^{\frac{1-\sigma}{\sigma}}+\|\bar{\mathbf{V}}^{t-1}-\mathbf{I}_2\|_{\fro}} \nonumber \\
& \stackrel{\step{6}}{\leq} & { \textstyle \frac{4 \gamma \phi}{\theta} c \{c(1-\sigma) \gamma \phi\|\bar{\mathbf{V}}^{t-1}-\mathbf{I}_2\|_{\fro} \}^{\frac{1-\sigma}{\sigma}}+\|\bar{\mathbf{V}}^{t-1}-\mathbf{I}_2\|_{\fro}} \nonumber \\
& \stackrel{\step{7}}{=} & { \textstyle \frac{4 \gamma \phi}{\theta} c \{c(1-\sigma) \gamma \phi (d_{t-1}-d_t) \}^{\frac{1-\sigma}{\sigma}}+d_{t-1}-d_t} \nonumber \\
& \stackrel{\step{8}}{=} & { \textstyle \mathcal{K} (d_{t-1}-d_t)^{\frac{1-\sigma}{\sigma}}+d_{t-1}-d_t} \nonumber 
\eeq

where \step{1} uses (\ref{temp4}); \step{2} uses the definitions that $\varphi_t = \varphi(\Delta_t)$ and $\varphi(t) = ct^{1-\sigma}$; \step{3} uses $\varphi^{\prime}(s) = c(1-\sigma)s^{-\sigma}$; \step{4} uses (\ref{GS use KL}); step \step{5} uses Lemma \ref{GS norm bet F and J}; step \step{6} uses Lemma \ref {GS Riemannian gradient Lower Bound for the Iterates Gap}; step \step{7} uses the fact that $d_t-d_{t+1} = \|\bar{\mathbf{V}}^{t}-\mathbf{I}_2\|_{\fro} $; step \step{8} define $ { \textstyle \mathcal{K} = \frac{4 \gamma \phi}{\theta} c \{c(1-\sigma) \gamma \phi \}^{\frac{1-\sigma}{\sigma}}}$.

(a) $\sigma \in (\frac{1}{4},\frac{1}{2})$. Defining $w \triangleq \frac{1-\sigma}{\sigma} \in (1,3)$ and $\tau = \frac{2\sigma-1}{1-\sigma} \in (-\frac{2}{3},0)$, we have $\frac{1}{w} = \tau+1$. We define $t^{\prime} \triangleq \{i | d_{i-1}-d_i \leq 1\}$. For all $t \geq t^{\prime}$, we have:
\beq
d_{t} & \stackrel{}{\leq} & { \textstyle \mathcal{K} (d_{t-1}-d_t)^{\frac{1}{1+\tau}}+d_{t-1}-d_t} \nn \\
& \stackrel{\step{1}}{\leq} & { \textstyle \mathcal{K} (d_{t-1}-d_t)+d_{t-1}-d_t} \nn \\
& \stackrel{}{=} & { \textstyle (\mathcal{K}+1) (d_{t-1}-d_t)} \nn 
\eeq
where step \step{1} uses the fact that $x^{v} \leq x$ if $v = \frac{1}{1+\tau} \geq 0$ for all $x \in [0,1]$. We further obtain:
$\tfrac{d_t}{d_{t-1}} \leq \tfrac{\kappa+1}{\kappa+2}$. Thus , we have: $d_t \leq \tfrac{\kappa+1}{\kappa+2}d_{t-1} 
\leq (\tfrac{\kappa+1}{\kappa+2})^t d_{0} = (1-\gamma)^t d_{0}$ with $\gamma = \frac{1}{\kappa+2} \in (0,1)$.
Letting $t=T$, we have:
\beq d_T &=& \operatorname{exp}(\operatorname{log}(d_T)) \nn \\
& = & \operatorname{exp}(T\operatorname{log}(1-\gamma)+\operatorname{log}(d_0)) \nn \\
& \stackrel{\step{1}}{\leq} & \operatorname{exp}(-T\gamma+\operatorname{log}(d_0)) \nn \\
&\leq & \mathcal{O}(\tfrac{1}{\operatorname{exp}(T)})
\eeq
where step \step{1} uses $\operatorname{log}(1-x) \leq -x$ for all $x \in (0,1)$.

(b) $\sigma \in (\frac{1}{2},1)$. We define $w \triangleq \frac{1-\sigma}{\sigma} \in (0,1)$. Notably, we have $d_{t-1}-d_{t} \leq R \triangleq d_0$.
\beq
d_{t+1} & \stackrel{}{\leq} & { \textstyle \mathcal{K} (d_{t-1}-d_t)^w+d_{t-1}-d_t} \nn \\
& \stackrel{\step{1}}{\leq} & { \textstyle \mathcal{K} (d_{t-1}-d_t)^w+(d_{t-1}-d_t)^w R^{1-w}} \nn \\
& \stackrel{}{=} & { \textstyle (\mathcal{K}+R^{1-w}) (d_{t-1}-d_t)^w} \nn
\eeq
where step \step{1} uses the fact that $\operatorname{max}_{x \in (0,R)} \frac{x}{x^w} = R^{1-w}$ if $w \in (0,1)$.

Defining $\tau = \frac{2\sigma-1}{1-\sigma} \in (0,\infty)$, we have $\frac{1}{w} = \tau+1$. Defining $h(s) = \frac{1}{s^{\tau+1}}$, $\breve{h}(s) = -\frac{1}{\tau}\cdot s^{-\tau}$, we have $\nabla \breve{h}(s) = h(s)$. We let $\kappa >1$ and assume $h(d_{t+1}) \leq \kappa h(d_{t-1})$.
\beq
1 & \stackrel{}{\leq} & { \textstyle (\mathcal{K}+R^{1-w}) (d_{t-1}-d_t) h(d_{t+1})} \nn \\
& \stackrel{\step{1}}{\leq} & { \textstyle \kappa (\mathcal{K}+R^{1-w}) (d_{t-1}-d_t) h(d_{t-1})} \nn \\
& \stackrel{\step{2}}{\leq} & { \textstyle \kappa (\mathcal{K}+R^{1-w}) \int_{d_t}^{d_{t-1}} h(s) ds} \nn \\
& \stackrel{\step{3}}{=} & { \textstyle \kappa (\mathcal{K}+R^{1-w}) ( \breve{h}(d_{t-1})-\breve{h}(d_t))} \nn \\
& \stackrel{}{=} & { \textstyle -\frac{\kappa (\mathcal{K}+R^{1-w})}{\tau} (d_{t-1}^{-\tau}-d_{t}^{-\tau})} \nn
\eeq
where step \step{1} uses $h(d_{t+1}) \leq \kappa h(d_{t-1})$;step \step{2} uses the fact that $h(s)$ is a nonnegative and decreasing function $(a-b)h(a) \leq \int_b^a h(s)$ for all $a,b \in [0,\infty)$;step \step{3} uses the fact that $\nabla \breve{h}(s) = h(s)$.

This lead to: 
$${ \textstyle d_{t}^{-\tau}-d_{t-1}^{-\tau}} \geq { \textstyle \frac{\tau}{\kappa (\mathcal{K}+R^{1-w})}}.$$
Telescoping Inequality over $t = \{1,2,...,T\}$, we have:
$${ \textstyle d_{T}^{-\tau}-d_{0}^{-\tau}} \geq { \textstyle \frac{T\tau}{\kappa (\mathcal{K}+R^{1-w})}}.$$
This lead to:
$${ \textstyle d_{T}} = { \textstyle (d_{T}^{-\tau})^{-1/\tau}} \leq { \textstyle (\frac{T\tau}{\kappa (\mathcal{K}+R^{1-w})})^{-1/\tau}} \leq { \textstyle \mathcal{O}(T)^{-1/\tau}} = { \textstyle \mathcal{O}(\frac{1}{T^{\frac{1}{\tau}}})}.$$
\end{proof}

\subsection{Proof of Theorem \ref{Thm:Strong CON Rate for VR-J-JOBCD}} \label{app:sect:CA:SC:Rate:VR-J-JOBCD}
\begin{proof} We present the important conclusions needed for the proof.
\begin{lemma} \label{lemma:Inequality:Exponential} We have the following result: 
$(a+b)^c \leq 2^{|c-1|} (a^c+b^c)$, with $\forall a,b \geq0$ and $c \in (0,\infty)$.
\end{lemma}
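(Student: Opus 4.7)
The plan is to split the argument into two cases according to whether $c \geq 1$ or $0 < c < 1$, since the sign of $c-1$ (and the convexity behavior of $x \mapsto x^c$) changes at $c=1$. In each case, I would first dispose of the trivial sub-case $a=b=0$, for which both sides are $0$, so assume $a+b > 0$.

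\textbf{Case 1: $c \geq 1$.} Here $|c-1| = c-1$, and the target reduces to $(a+b)^c \leq 2^{c-1}(a^c+b^c)$. I would invoke the convexity of the map $\varphi(x) = x^c$ on $[0,\infty)$, which holds because $\varphi''(x) = c(c-1)x^{c-2} \geq 0$ for $c \geq 1$. Applying Jensen's inequality at the two points $a$ and $b$ with equal weights gives
\begin{equation*}
\left(\frac{a+b}{2}\right)^c \;=\; \varphi\!\left(\frac{a+b}{2}\right) \;\leq\; \frac{\varphi(a)+\varphi(b)}{2} \;=\; \frac{a^c+b^c}{2}.
\end{equation*}
Multiplying by $2^c$ yields $(a+b)^c \leq 2^{c-1}(a^c+b^c)$, which is exactly the claim in this case.

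\textbf{Case 2: $0 < c < 1$.} Here $|c-1| = 1-c$, so the claim reads $(a+b)^c \leq 2^{1-c}(a^c+b^c)$. I would use the well-known subadditivity of $x \mapsto x^c$ for $c \in (0,1]$, which follows from concavity together with $0^c = 0$: writing $a = (a+b)\cdot\tfrac{a}{a+b}$ and $b = (a+b)\cdot\tfrac{b}{a+b}$, concavity of $x^c$ gives $a^c + b^c \geq (a+b)^c \cdot \bigl((\tfrac{a}{a+b}) + (\tfrac{b}{a+b})\bigr) = (a+b)^c$. Hence $(a+b)^c \leq a^c+b^c$, and since $2^{1-c} \geq 1$ for $c \leq 1$, we conclude $(a+b)^c \leq a^c+b^c \leq 2^{1-c}(a^c+b^c)$.

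No step of this argument is genuinely hard; there is no real obstacle beyond bookkeeping the case split. The only subtlety worth flagging is the boundary point $c=1$, where both cases give the trivial equality $(a+b) = 2^0(a+b) = a+b$, so the two cases agree. Combining the two cases yields $(a+b)^c \leq 2^{|c-1|}(a^c+b^c)$ for all $a,b \geq 0$ and all $c \in (0,\infty)$, completing the proof.
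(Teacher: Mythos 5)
Your proof is correct. In the convex case $c \geq 1$ you argue exactly as the paper does: convexity of $x \mapsto x^c$, Jensen at the midpoint, multiply by $2^c$. In the concave case $0 < c < 1$ you take a genuinely different route: you prove subadditivity, $(a+b)^c \leq a^c + b^c$, via $f(tx) \geq t f(x)$ for concave $f$ with $f(0)=0$, and then absorb the slack into $2^{1-c} \geq 1$. The paper instead tries to handle this case by rewriting the target as $2^c(a+b)^c \leq 2(a^c+b^c)$ and then invoking the reverse Jensen bound $(a+b)^c \geq 2^{c-1}(a^c+b^c)$ to "reduce" it to $2^{2c-1}(a^c+b^c) \leq 2(a^c+b^c)$; but substituting a lower bound for $(a+b)^c$ on the left-hand side only shows that the target implies the reduced statement, not the converse, so the paper's chain of "$\Leftrightarrow$" is not a valid derivation as written. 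Your subadditivity argument avoids this issue entirely and in fact yields the stronger constant $1$ in place of $2^{1-c}$ for $c \in (0,1]$, so it is the cleaner and more rigorous of the two; the only minor cosmetic point is that your Case 2 derivation silently uses $a+b>0$ for the weights $a/(a+b)$, which you did cover by disposing of $a=b=0$ first.
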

\begin{proof}
When $c=1$, the conclusion clearly exists.When $c \neq 1$, We define $f(x)=x^c$ with $x \geq 0$ and $c \in (0,\infty)$. 

Part (\bfit{a}) $c \in (0, 1)$. We make an equivalent variation on the conclusion:
\beq
&& (a+b)^c \leq  2^{1-c}(a^c+b^c) \nn \\
& \stackrel{}{\Leftrightarrow} & 2^c(a+b)^c \leq 2(a^c+b^c) \nn \\
& \stackrel{\step{1}}{\Leftrightarrow} & 2^{2c-1} (a^c+b^c) \leq 2(a^c+b^c) \label{lemma:Power function inequality}
\eeq
where step \step{1} uses the fact that $f(x) = x^c$ with $x \geq 0$ and $c \in (0, 1)$ is a concave function. So we have $f(\frac{a+b}{2}) \geq \frac{f(a)+f(b)}{2}$, which implies $(a+b)^c \geq 2^{c-1} (a^c+b^c)$. Since $c \in (0, 1)$, we have $2c-1 \leq 1$. Thus, the inequality (\ref{lemma:Power function inequality}) is obviously true.

Part (\bfit{b}) $c \in (1,\infty)$. $f(x) = x^c$ is a convex function. So we have $f(\frac{a+b}{2}) \leq \frac{f(a)+f(b)}{2}$, which implies $(\frac{a+b}{2})^c = \frac{(a+b)^c}{2^c} \leq \frac{a^c+b^c}{2}$. Multiplying both side by $2^c$, we have:
$(a+b)^c \leq 2^{c-1}(a^c+b^c)$.  \end{proof}

We have for all $t \geq 1$:
$$\| \X^t - \X^{\infty} \|_\fro 
\stackrel{\step{1}}{\leq} \sum_{i=t}^{\infty} \|\X^i - \X^{i+1}\|_\fro 
\stackrel{\step{2}}{\leq} \sum_{i=t}^{\infty} \|\X^i\|_\fro { \textstyle\sqrt{\sum_{j=1}^{n/2}\|\bar{\mathbf{V}}_j^t-\mathbf{I}_2\|_{\fro}^2}} 
\stackrel{\step{3}}{\leq} \XXX \sum_{i=t}^{\infty} { \textstyle\sqrt{\sum_{j=1}^{n/2}\|\bar{\mathbf{V}}_j^i-\mathbf{I}_2\|_{\fro}^2}}$$
where step \step{1} use the triangle inequality; step \step{2} uses Part (\bfit{b}) of Lemma \ref{lemma: properties of V update of JJOBCD}; step \step{3} uses $\forall t, \|\X^t\|_\fro \leq \XXX$.

Defining $d_t \triangleq \sum_{i=t}^{\infty} { \textstyle \sqrt{\sum_{j=1}^{n/2}\|\bar{\mathbf{V}}_j^i-\mathbf{I}_2\|_{\fro}^2}} =  \sum_{i=t}^{\infty} { \textstyle \sqrt{\aleph^{i}}}$, $\Delta_t = f (\mathbf{X}^t )-f(\bar{\mathbf{X}}) $ and $\varphi_t = \varphi (f (\mathbf{X}^t )-f(\bar{\mathbf{X}}) ) = \varphi (\Delta_t)$.
\begin{lemma} \label{lemma:inequality:VR:phi}
We have the following result: ${ \textstyle \varphi_t} \leq { \textstyle K_1 \{ \mathbb{E}_{\iota^{i}} [\sqrt{\aleph^{t-1}} ]\}^{\frac{1-\sigma}{\sigma}}} + 
{ \textstyle K_2 \{ \sqrt{\mathbb{E}_{\iota^{t}} [u^t ]}\}^{\frac{1-\sigma}{\sigma}}}$, where $K_1 = c \{c(1-\sigma) \gamma \phi \sqrt{\frac{n}{2}}\}^{\frac{1-\sigma}{\sigma}} 2^{|\frac{1-\sigma}{\sigma}-1|}$ 
and $K_2 = c \{c(1-\sigma)( \gamma \phi \tfrac{np}{2}(\XXX+\VVV^2\XXX) + 2 \gamma \XXX^2 )\}^{\frac{1-\sigma}{\sigma}} 2^{|\frac{1-\sigma}{\sigma}-1|}$.
\end{lemma}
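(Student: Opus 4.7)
The plan is to chain together the KL inequality, the gradient-relation lemmas already established, and the power-function splitting lemma \ref{lemma:Inequality:Exponential}, with the main work being to track how the constants assemble into $K_1$ and $K_2$.

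First I would invoke the KL property (Assumption \ref{KL property}) at $\mathbf{X}^t$ using $\varphi(s)=cs^{1-\sigma}$ so that $\varphi'(s)=c(1-\sigma)s^{-\sigma}$. The KL inequality combined with Lemma \ref{dist F and dist_M F} gives $\|\nabla_{\JJ} f(\mathbf{X}^t)\|_{\fro} \cdot c(1-\sigma)\Delta_t^{-\sigma}\geq 1$, which rearranges to $\Delta_t^\sigma \leq c(1-\sigma)\|\nabla_{\JJ} f(\mathbf{X}^t)\|_{\fro}$. Raising to the $\tfrac{1}{\sigma}$ power and multiplying by $c$ yields $\varphi_t = c\Delta_t^{1-\sigma}\leq c\{c(1-\sigma)\|\nabla_{\JJ} f(\mathbf{X}^t)\|_{\fro}\}^{(1-\sigma)/\sigma}$.

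Next I would produce an explicit upper bound on $\|\nabla_{\JJ} f(\mathbf{X}^t)\|_{\fro}$ in terms of the iterate gap and the variance term. Taking expectations and applying Lemma \ref{norm bet F and J} yields an initial bound $\gamma\|\nabla_{\JJ}\mathcal{T}(\mathbf{I}_2;\mathbf{X}^t,\B^t)\|_{\fro}+2\XXX^2\sqrt{\mathbb{E}_{\iota^t}[u^t]}$. I would then substitute Lemma \ref{Riemannian gradient Lower Bound for the Iterates Gap} (with the index shifted by one) to replace the Riemannian gradient of $\mathcal{T}$ by $\phi\cdot\mathbb{E}[\sum_{i=1}^{n/2}\|\bar{\mathbf{V}}_i^{t-1}-\mathbf{I}_2\|_{\fro}]+\tfrac{np}{2}(\XXX+\VVV^2\XXX)\sqrt{u^{t-1}}$. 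Using Cauchy--Schwarz, $\sum_{i=1}^{n/2}\|\bar{\mathbf{V}}_i^{t-1}-\mathbf{I}_2\|_{\fro}\leq\sqrt{n/2}\,\sqrt{\aleph^{t-1}}$, and bounding the residual variance terms uniformly by $\sqrt{\mathbb{E}_{\iota^t}[u^t]}$, I obtain
\begin{equation*}
\|\nabla_{\JJ} f(\mathbf{X}^t)\|_{\fro} \leq \underbrace{\gamma\phi\sqrt{\tfrac{n}{2}}}_{\text{coeff of }\sqrt{\aleph^{t-1}}}\cdot\sqrt{\aleph^{t-1}}+\underbrace{\bigl(\gamma\tfrac{np}{2}(\XXX+\VVV^2\XXX)+2\XXX^2\bigr)}_{\text{coeff of }\sqrt{u^t}}\cdot\sqrt{\mathbb{E}_{\iota^t}[u^t]}.
\end{equation*}

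Finally I would apply the power-function splitting Lemma \ref{lemma:Inequality:Exponential} with $a,b\geq 0$ and exponent $w=(1-\sigma)/\sigma$, which gives $(a+b)^w\leq 2^{|w-1|}(a^w+b^w)$. Substituting into the bound $\varphi_t\leq c\{c(1-\sigma)(a+b)\}^w$ and distributing $c(1-\sigma)$ inside each power separately produces exactly the two summands with constants $K_1=c\{c(1-\sigma)\gamma\phi\sqrt{n/2}\}^w\,2^{|w-1|}$ and $K_2=c\{c(1-\sigma)(\gamma\phi\tfrac{np}{2}(\XXX+\VVV^2\XXX)+2\gamma\XXX^2)\}^w\,2^{|w-1|}$, which matches the claim modulo the fact that $K_2$ as stated carries an extra $\gamma$ in front of $2\XXX^2$; I would absorb this by noting $\gamma\geq 1$ (from its definition $\gamma=\XXX\sqrt{C_n^2}$ in Lemma \ref{norm bet F and J}) so multiplying the $2\XXX^2$ contribution by $\gamma$ only loosens the bound.

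The main obstacle is the bookkeeping rather than any deep new inequality: correctly commuting expectations through the concave power $s\mapsto s^w$ (Jensen's inequality is applied in the favorable direction here since we only need upper bounds and the outer $(\cdot)^w$ is applied to a deterministic combination after expectation is taken), cleanly handling the index mismatch between $u^{t-1}$ and $u^t$ in Lemmas \ref{norm bet F and J} and \ref{Riemannian gradient Lower Bound for the Iterates Gap}, and verifying that the constants line up with $K_1,K_2$ after the $2^{|w-1|}$ splitting. The rest is routine substitution.
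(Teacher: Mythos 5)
Your proposal is correct and follows essentially the same route as the paper's own proof: KL inequality plus Lemma \ref{dist F and dist_M F}, then Lemmas \ref{norm bet F and J} and \ref{Riemannian gradient Lower Bound for the Iterates Gap}, the mean--root-mean-square step giving $\sqrt{n/2}\sqrt{\aleph^{t-1}}$, and finally the splitting Lemma \ref{lemma:Inequality:Exponential} to assemble $K_1$ and $K_2$. Your explicit remark that the extra $\gamma$ in front of $2\XXX^2$ only loosens the bound (since $\gamma=\XXX\sqrt{C_n^2}\geq 1$) is exactly the step the paper performs silently, and the index handling of $u^t$ matches the paper's treatment.
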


\begin{proof}
We have:
\beq
{ \textstyle \varphi_t} & \stackrel{\step{1}}{\leq} &  c \{{\Delta_t}^{\sigma}\}^{\frac{1-\sigma}{\sigma}} \nonumber \\
& \stackrel{\step{2}}{\leq} &  c \{c(1-\sigma)\frac{1}{\varphi^{\prime}(\Delta_i)}\}^{\frac{1-\sigma}{\sigma}} \nonumber \\
& \stackrel{\step{3}}{\leq} & c \{c(1-\sigma)\|\nabla_{\JJ} f (\mathbf{X}^{t} ) \|_{\fro}\}^{\frac{1-\sigma}{\sigma}} \nonumber \\
& \stackrel{\step{4}}{\leq} & { \textstyle c \{c(1-\sigma) (\gamma \|\nabla_{\JJ} \mathcal{T} (\mathbf{I}_2 ; \mathbf{X}^t, \B ) \|_{\fro} + 2 \XXX^2 \sqrt{\mathbb{E}_{\iota^{t}} [u^t ]})\}^{\frac{1-\sigma}{\sigma}}} \nn \\
& \stackrel{\step{5}}{\leq} & { \textstyle c \{c(1-\sigma) (\gamma (\phi \sum_{i=1}^{n/2} \| \bar{\mathbf{V}}_i^{t-1}-\mathbf{I}_2\|_{\fro} + \tfrac{np}{2}(\XXX+\VVV^2\XXX) \sqrt{\mathbb{E}_{\iota^{t}} [u^t ]}) + 2 \XXX^2 \sqrt{\mathbb{E}_{\iota^{t}} [u^t ]})\}^{\frac{1-\sigma}{\sigma}}} \nn \\
& \stackrel{\step{6}}{\leq} & { \textstyle c \{c(1-\sigma) (\gamma (\phi \sqrt{\frac{n}{2}} \mathbb{E}_{\iota^{i}} [\sqrt{\aleph^{t-1}} ] + \tfrac{np}{2}(\XXX+\VVV^2\XXX) \sqrt{\mathbb{E}_{\iota^{t}} [u^t ]}) + 2 \XXX^2 \sqrt{\mathbb{E}_{\iota^{t}} [u^t ]})\}^{\frac{1-\sigma}{\sigma}}} \nn \\
& \stackrel{\step{7}}{\leq} & { \textstyle c (c(1-\sigma))^{\frac{1-\sigma}{\sigma}} 2^{|\frac{1-\sigma}{\sigma}-1|} \{ \gamma \phi \sqrt{\frac{n}{2}} \mathbb{E}_{\iota^{i}} [\sqrt{\aleph^{t-1}} ]\}^{\frac{1-\sigma}{\sigma}}} + \nn \\
&& { \textstyle c (c(1-\sigma))^{\frac{1-\sigma}{\sigma}} 2^{|\frac{1-\sigma}{\sigma}-1|} \{ \gamma \phi \tfrac{np}{2}(\XXX+\VVV^2\XXX) \sqrt{\mathbb{E}_{\iota^{t}} [u^t ]}) + 2 \gamma \XXX^2 \sqrt{\mathbb{E}_{\iota^{t}} [u^t ]})\}^{\frac{1-\sigma}{\sigma}}} \nn \\
& \stackrel{}{=} & { \textstyle c \{c(1-\sigma) \gamma \phi \sqrt{\frac{n}{2}}\}^{\frac{1-\sigma}{\sigma}} 2^{|\frac{1-\sigma}{\sigma}-1|} \{ \mathbb{E}_{\iota^{i}} [\sqrt{\aleph^{t-1}} ]\}^{\frac{1-\sigma}{\sigma}}} + \nn \\
&& { \textstyle c \{c(1-\sigma)( \gamma \phi \tfrac{np}{2}(\XXX+\VVV^2\XXX) + 2 \gamma \XXX^2 )\}^{\frac{1-\sigma}{\sigma}} 2^{|\frac{1-\sigma}{\sigma}-1|} \{ \sqrt{\mathbb{E}_{\iota^{t}} [u^t ]}\}^{\frac{1-\sigma}{\sigma}}} \nn
\eeq
where \step{1} uses the definitions that $\varphi_t = \varphi(\Delta_t)$ and $\varphi(t) = ct^{1-\sigma}$; \step{2} uses Inequality (\ref{temp2}) and (\ref{use KL}) with $\mathbf{X}^{\prime}=\mathbf{X}^t$ and $\mathbf{X}=\bar{\mathbf{X}}$; \step{3} uses (\ref{GS use KL}); step \step{4} uses Lemma \ref{norm bet F and J}; step \step{5} uses Lemma \ref{Riemannian gradient Lower Bound for the Iterates Gap}; step \step{6} uses $\forall x_i\in\mathbb{R}, { \textstyle \frac{x_1+\cdots+x_n}{n} \leqslant \sqrt{\frac{x_1^2+\cdots+x_n^2}{n}}}$; step \step{7} uses Lemma \ref{lemma:Inequality:Exponential}.

Finally, we have:
${ \textstyle \varphi_t} \leq { \textstyle K_1 \{ \mathbb{E}_{\iota^{i}} [\sqrt{\aleph^{t-1}} ]\}^{\frac{1-\sigma}{\sigma}}} + 
{ \textstyle K_2 \{ \sqrt{\mathbb{E}_{\iota^{t}} [u^t ]}\}^{\frac{1-\sigma}{\sigma}}}$, where $K_1 = c \{c(1-\sigma) \gamma \phi \sqrt{\frac{n}{2}}\}^{\frac{1-\sigma}{\sigma}} 2^{|\frac{1-\sigma}{\sigma}-1|}$ 
and $K_2 = c \{c(1-\sigma)( \gamma \phi \tfrac{np}{2}(\XXX+\VVV^2\XXX) + 2 \gamma \XXX^2 )\}^{\frac{1-\sigma}{\sigma}} 2^{|\frac{1-\sigma}{\sigma}-1|}$.
\end{proof}

Finally, we prove the main results:

We have for all $t \geq 1$:
$$\| \X^t - \X^{\infty} \|_\fro 
\stackrel{\step{1}}{\leq} \sum_{i=t}^{\infty} \|\X^i - \X^{i+1}\|_\fro 
\stackrel{\step{2}}{\leq} \sum_{i=t}^{\infty} \|\X^i\|_\fro { \textstyle\sqrt{\sum_{j=1}^{n/2}\|\bar{\mathbf{V}}_j^t-\mathbf{I}_2\|_{\fro}^2}} 
\stackrel{\step{3}}{\leq} \XXX \sum_{i=t}^{\infty} { \textstyle\sqrt{\sum_{j=1}^{n/2}\|\bar{\mathbf{V}}_j^i-\mathbf{I}_2\|_{\fro}^2}}$$
where step \step{1} use the triangle inequality; step \step{2} uses Part (\bfit{b}) of Lemma \ref{lemma: properties of V update of JJOBCD}; step \step{3} uses $\forall t, \|\X^t\|_\fro \leq \XXX$.

Defining $d_t \triangleq \sum_{i=t}^{\infty} { \textstyle \sqrt{\sum_{j=1}^{n/2}\|\bar{\mathbf{V}}_j^i-\mathbf{I}_2\|_{\fro}^2}} =  \sum_{i=t}^{\infty} { \textstyle \sqrt{\aleph^{i}}}$, $\Delta_t = f (\mathbf{X}^t )-f(\bar{\mathbf{X}}) $ and $\varphi_t = \varphi (f (\mathbf{X}^t )-f(\bar{\mathbf{X}}) ) = \varphi (\Delta_t)$, we have:

\beq
d_{t+1} & \stackrel{}{\leq} & d_t \nonumber \\
& \stackrel{\step{1}}{\leq} & { \textstyle \frac{2}{\theta-\XXX^2} (\mathfrak{A} \varphi_t+   \frac{\sqrt{1-p}\sqrt{(\bar{\theta}+1)}}{1-\sqrt{1-p}}} { \textstyle \sqrt{ \mathbb{E}_{\iota^{t-1}} [u^{t-1}] }} + { \textstyle \frac{\sqrt{\theta-\XXX^2}}{2} \mathbb{E}_{\iota^{i}} [\sqrt{\aleph^{t-1}} ])} \nonumber \\
& \stackrel{\step{2}}{\leq} & { \textstyle \frac{2}{\theta-\XXX^2} (\mathfrak{A} ({ \textstyle K_1 \{ \mathbb{E}_{\iota^{i}} [\sqrt{\aleph^{t-1}} ]\}^{\frac{1-\sigma}{\sigma}}} + 
{ \textstyle K_2 \{ \sqrt{\mathbb{E}_{\iota^{t}} [u^t ]}\}^{\frac{1-\sigma}{\sigma}}})}+ \nn\\
&&{ \textstyle \frac{\sqrt{1-p}\sqrt{(\bar{\theta}+1)}}{1-\sqrt{1-p}}} { \textstyle \sqrt{ \mathbb{E}_{\iota^{t-1}} [u^{t-1}] }} + { \textstyle \frac{\sqrt{\theta-\XXX^2}}{2} \mathbb{E}_{\iota^{i}} [\sqrt{\aleph^{t-1}} ])} \nn \\
& \stackrel{}{=} & { \textstyle \frac{2}{\theta-\XXX^2} \mathfrak{A} { \textstyle K_1 \{ \mathbb{E}_{\iota^{i}} [\sqrt{\aleph^{t-1}} ]\}^{\frac{1-\sigma}{\sigma}}} + 
\frac{2}{\theta-\XXX^2} \mathfrak{A}{ \textstyle K_2 \{ \sqrt{\mathbb{E}_{\iota^{t}} [u^t ]}\}^{\frac{1-\sigma}{\sigma}}}}+ \nn\\
&&{ \textstyle \frac{2}{\theta-\XXX^2} \frac{\sqrt{1-p}\sqrt{(\bar{\theta}+1)}}{1-\sqrt{1-p}}} { \textstyle \sqrt{ \mathbb{E}_{\iota^{t-1}} [u^{t-1}] }} + { \textstyle \frac{1}{\sqrt{\theta-\XXX^2}} \mathbb{E}_{\iota^{i}} [\sqrt{\aleph^{t-1}} ])} \label{inequality:JS:d:(1)}
\eeq
where step \step{1} uses (\ref{VR:V:for conver}); step \step{2} uses Lemma \ref{lemma:inequality:VR:phi}.

When $b=N$, we can drive from Lemma \ref{lemma: relation betw u_k and U_k-1} that: 
${ \textstyle \mathbb{E}_{\iota^{t}} [u^t ]} \leq  { \textstyle \frac{p(N-b)}{b(N-1)} \sigma^2+(1-p)\mathbb{E}_{\iota^{t-1}} [u^{t-1} ]}+ { \textstyle \frac{L_f^2 \XXX^2(1-p)}{b^{\prime}} \mathbb{E}_{\iota^{t-1}}[\aleph^{t-1}]} = { \textstyle (1-p)\mathbb{E}_{\iota^{t-1}} [u^{t-1} ]}+ { \textstyle \frac{L_f^2 \XXX^2(1-p)}{b^{\prime}} \mathbb{E}_{\iota^{t-1}}[\aleph^{t-1}]}$. Thus,we have:
\beq
{ \textstyle \sqrt{\mathbb{E}_{\iota^{t}} [u^t ]}} & \leq & { \textstyle \sqrt{(1-p)\mathbb{E}_{\iota^{t-1}} [u^{t-1} ]+ { \textstyle \frac{L_f^2 \XXX^2(1-p)}{b^{\prime}} \mathbb{E}_{\iota^{t-1}}[\aleph^{t-1}]}}} \nn \\
& \stackrel{\step{1}}{\leq} & { \textstyle \sqrt{(1-p)} \sqrt{\mathbb{E}_{\iota^{t-1}} [u^{t-1} ]}+ \sqrt{\frac{L_f^2 \XXX^2(1-p)}{b^{\prime}}} \sqrt{{ \textstyle  \mathbb{E}_{\iota^{t-1}}[\aleph^{t-1}]}}} \label{inequality:recursion formula:without sigma}
\eeq
where step \step{1} uses the inequality that $ { \textstyle \sqrt{a+b} \leq \sqrt{a}+\sqrt{b}}$ for all $a \geq 0$ and $b \geq 0$.

We derive:
\beq
{ \textstyle \{\sqrt{\mathbb{E}_{\iota^{t}} [u^t ]}\}^{\frac{1-\sigma}{\sigma}}} & \leq & \{{ \textstyle \sqrt{(1-p)} \sqrt{\mathbb{E}_{\iota^{t-1}} [u^{t-1} ]}}+ { \textstyle \sqrt{\frac{L_f^2 \XXX^2(1-p)}{b^{\prime}}} \sqrt{{ \textstyle  \mathbb{E}_{\iota^{t-1}}[\aleph^{t-1}]}}}\}^{\frac{1-\sigma}{\sigma}} \nn \\
& \stackrel{\step{1}}{\leq} & 2^{|{\frac{1-\sigma}{\sigma}}-1|}\{\sqrt{(1-p)} \sqrt{\mathbb{E}_{\iota^{t-1}} [u^{t-1} ]}\}^{\frac{1-\sigma}{\sigma}}+ 
2^{|{\frac{1-\sigma}{\sigma}}-1|} \{{ \textstyle \sqrt{\frac{L_f^2 \XXX^2(1-p)}{b^{\prime}}} \sqrt{{ \textstyle  \mathbb{E}_{\iota^{t-1}}[\aleph^{t-1}]}}}\}^{\frac{1-\sigma}{\sigma}} \nn \\
& \stackrel{\step{2}}{=} & K_3 \{{ \textstyle \sqrt{\mathbb{E}_{\iota^{t-1}} [u^{t-1} ]}}\}^{\frac{1-\sigma}{\sigma}}+ 
K_4 \{{ \textstyle \mathbb{E}_{\iota^{t-1}} [\sqrt{\aleph^{t-1}}]}\}^{\frac{1-\sigma}{\sigma}} \label{inequality:recursion formula:with sigma}
\eeq
where step \step{1} uses Lemma \ref{lemma:Inequality:Exponential} ; step \step{2} defines $K_3 = 2^{|{\frac{1-\sigma}{\sigma}}-1|}\{\sqrt{{ \textstyle (1-p)}}\}^{\frac{1-\sigma}{\sigma}}$ and $K_4 = 2^{|{\frac{1-\sigma}{\sigma}}-1|} \{{ \textstyle \sqrt{\frac{L_f^2 \XXX^2(1-p)}{b^{\prime}}}}\}^{\frac{1-\sigma}{\sigma}}$.

Adding $\frac{2 \mathfrak{A} K_2 }{(\theta-\XXX^2)(1-K_3)} \times$ (\ref{inequality:recursion formula:without sigma}) to (\ref{inequality:JS:d:(1)}), we have:
\beq
d_{t+1} & \stackrel{}{\leq} & { \textstyle (\frac{2}{\theta-\XXX^2} \mathfrak{A}  K_1 + \frac{2 \mathfrak{A} K_2 K_4}{(\theta-\XXX^2)(1-K_3)})} { \textstyle \{ \mathbb{E}_{\iota^{i}} [\sqrt{\aleph^{t-1}} ]\}^{\frac{1-\sigma}{\sigma}}} + \nn \\
&& { \textstyle\frac{2 \mathfrak{A} K_2}{\theta-\XXX^2} \frac{K_3}{1-K_3}} ({ \textstyle  \{ \sqrt{\mathbb{E}_{\iota^{t-1}} [u^{t-1} ]}\}^{\frac{1-\sigma}{\sigma}}}-{ \textstyle  \{ \sqrt{\mathbb{E}_{\iota^{t}} [u^{t} ]}\}^{\frac{1-\sigma}{\sigma}})}+ \nn\\
&&{ \textstyle \frac{2}{\theta-\XXX^2} \frac{\sqrt{1-p}\sqrt{(\bar{\theta}+1)}}{1-\sqrt{1-p}}} { \textstyle \sqrt{ \mathbb{E}_{\iota^{t-1}} [u^{t-1}] }} + { \textstyle \frac{1}{\sqrt{\theta-\XXX^2}} \mathbb{E}_{\iota^{i}} [\sqrt{\aleph^{t-1}} ]} \label{inequality:JS:d:(2)}
\eeq 

Adding ${\textstyle \frac{2}{\theta-\XXX^2} \frac{\sqrt{1-p}\sqrt{(\bar{\theta}+1)}}{1-\sqrt{1-p}} \frac{1}{1-\sqrt{1-p}}} \times$ (\ref{inequality:recursion formula:with sigma}) to (\ref{inequality:JS:d:(2)}), we have:
\beq
d_{t+1} & \stackrel{}{\leq} & { \textstyle (\frac{2}{\theta-\XXX^2} \mathfrak{A}  K_1 + \frac{2 \mathfrak{A} K_2 K_4}{(\theta-\XXX^2)(1-K_3)})} { \textstyle \{ \mathbb{E}_{\iota^{i}} [\sqrt{\aleph^{t-1}} ]\}^{\frac{1-\sigma}{\sigma}}} + \nn \\
&& { \textstyle\frac{2 \mathfrak{A} K_2}{\theta-\XXX^2} \frac{K_3}{1-K_3}} ({ \textstyle  \{ \sqrt{\mathbb{E}_{\iota^{t-1}} [u^{t-1} ]}\}^{\frac{1-\sigma}{\sigma}}}-{ \textstyle  \{ \sqrt{\mathbb{E}_{\iota^{t}} [u^{t} ]}\}^{\frac{1-\sigma}{\sigma}})}+ \nn\\
&&{ \textstyle \frac{2}{\theta-\XXX^2} \frac{\sqrt{1-p}\sqrt{(\bar{\theta}+1)}}{1-\sqrt{1-p}} \frac{1}{1-\sqrt{1-p}}} { \textstyle (\sqrt{ \mathbb{E}_{\iota^{t-1}} [u^{t-1}] } - \sqrt{ \mathbb{E}_{\iota^{t}} [u^{t}] })} + \nn \\
&& { \textstyle (\frac{1}{\sqrt{\theta-\XXX^2}} + {\textstyle \frac{2}{\theta-\XXX^2} \frac{\sqrt{1-p}\sqrt{(\bar{\theta}+1)}}{1-\sqrt{1-p}} \frac{1}{1-\sqrt{1-p}} \sqrt{\frac{L_f^2 \XXX^2(1-p)}{b^{\prime}}} }) \mathbb{E}_{\iota^{i}} [\sqrt{\aleph^{t-1}} ]} \label{inequality:JS:d:(3)}
\eeq

Defining $K_5 = { \textstyle (\frac{2}{\theta-\XXX^2} \mathfrak{A}  K_1 + \frac{2 \mathfrak{A} K_2 K_4}{(\theta-\XXX^2)(1-K_3)})}$,$K_6 = { \textstyle\frac{2 \mathfrak{A} K_2}{\theta-\XXX^2} \frac{K_3}{1-K_3}}$,$K_7 = { \textstyle \frac{2}{\theta-\XXX^2} \frac{\sqrt{1-p}\sqrt{(\bar{\theta}+1)}}{1-\sqrt{1-p}} \frac{1}{1-\sqrt{1-p}}}$ and $K_8 = \frac{1}{\sqrt{\theta-\XXX^2}} + {\textstyle \frac{2}{\theta-\XXX^2} \frac{\sqrt{1-p}\sqrt{(\bar{\theta}+1)}}{1-\sqrt{1-p}} \frac{1}{1-\sqrt{1-p}} \sqrt{\frac{L_f^2 \XXX^2(1-p)}{b^{\prime}}} }$, we have
\beq
d_{t+1} & \stackrel{}{\leq} &  { \textstyle K_5 \{ \mathbb{E}_{\iota^{i}} [\sqrt{\aleph^{t-1}} ]\}^{\frac{1-\sigma}{\sigma}}} + K_6 ({ \textstyle  \{ \sqrt{\mathbb{E}_{\iota^{t-1}} [u^{t-1} ]}\}^{\frac{1-\sigma}{\sigma}}}-{ \textstyle  \{ \sqrt{\mathbb{E}_{\iota^{t}} [u^{t} ]}\}^{\frac{1-\sigma}{\sigma}})}+ \nn\\
&& K_7 { \textstyle (\sqrt{ \mathbb{E}_{\iota^{t-1}} [u^{t-1}] } - \sqrt{ \mathbb{E}_{\iota^{t}} [u^{t}] })} + { \textstyle K_8 \mathbb{E}_{\iota^{i}} [\sqrt{\aleph^{t-1}} ]} \label{inequality:JS:d:(4)}
\eeq

Using the fact that $d_t - d_{t+1} = \sqrt{\aleph^{t}}$, we have:
\beq
d_{t+1} & \stackrel{}{\leq} &  { \textstyle K_5 \{ \mathbb{E}_{\iota^{i}} [d_{t-1} - d_{t} ]\}^{\frac{1-\sigma}{\sigma}}} + K_6 ({ \textstyle  \{ \sqrt{\mathbb{E}_{\iota^{t-1}} [u^{t-1} ]}\}^{\frac{1-\sigma}{\sigma}}}-{ \textstyle  \{ \sqrt{\mathbb{E}_{\iota^{t}} [u^{t} ]}\}^{\frac{1-\sigma}{\sigma}})}+ \nn\\
&& K_7 { \textstyle (\sqrt{ \mathbb{E}_{\iota^{t-1}} [u^{t-1}] } - \sqrt{ \mathbb{E}_{\iota^{t}} [u^{t}] })} + { \textstyle K_8 \mathbb{E}_{\iota^{i}} [d_{t-1} - d_{t} ]} \label{inequality:JS:d:(5)}
\eeq

(a) $\sigma \in (\frac{1}{4},\frac{1}{2})$. Defining $w \triangleq \frac{1-\sigma}{\sigma} \in (1,3)$ and $\tau = \frac{2\sigma-1}{1-\sigma} \in (-\frac{2}{3},0)$, we have $\frac{1}{w} = \tau+1$. We define $t^{\prime} \triangleq \{i | d_{i-1}-d_i \leq 1\}$. For all $t \geq t^{\prime}$, we have:
\beq
d_{t} & \stackrel{}{\leq} &  { \textstyle K_5 \{ \mathbb{E}_{\iota^{i}} [d_{t-1} - d_{t} ]\}^{\frac{1}{1+\tau}}} + K_6 ({ \textstyle  \{ \sqrt{\mathbb{E}_{\iota^{t-1}} [u^{t-1} ]}\}^{\frac{1}{1+\tau}}}-{ \textstyle  \{ \sqrt{\mathbb{E}_{\iota^{t}} [u^{t} ]}\}^{\frac{1}{1+\tau}})}+ \nn\\
&& K_7 { \textstyle (\sqrt{ \mathbb{E}_{\iota^{t-1}} [u^{t-1}] } - \sqrt{ \mathbb{E}_{\iota^{t}} [u^{t}] })} + { \textstyle K_8 \mathbb{E}_{\iota^{i}} [d_{t-1} - d_{t} ]} \nn \\
& \stackrel{\step{1}}{\leq} &  { \textstyle K_5 \{ \mathbb{E}_{\iota^{i}} [d_{t-1} - d_{t} ]\}} + K_6 ({ \textstyle  \{ \sqrt{\mathbb{E}_{\iota^{t-1}} [u^{t-1} ]}\}^{\frac{1}{1+\tau}}}-{ \textstyle  \{ \sqrt{\mathbb{E}_{\iota^{t}} [u^{t} ]}\}^{\frac{1}{1+\tau}})}+ \nn\\
&& K_7 { \textstyle (\sqrt{ \mathbb{E}_{\iota^{t-1}} [u^{t-1}] } - \sqrt{ \mathbb{E}_{\iota^{t}} [u^{t}] })} + { \textstyle K_8 \mathbb{E}_{\iota^{i}} [d_{t-1} - d_{t} ]} \nn \\
& \stackrel{}{=} &  { \textstyle (K_5+K_8) \{ \mathbb{E}_{\iota^{i}} [d_{t-1} - d_{t} ]\}} + K_6 ({ \textstyle  \{ \sqrt{\mathbb{E}_{\iota^{t-1}} [u^{t-1} ]}\}^{\frac{1}{1+\tau}}}-{ \textstyle  \{ \sqrt{\mathbb{E}_{\iota^{t}} [u^{t} ]}\}^{\frac{1}{1+\tau}})}+ \nn\\
&& K_7 { \textstyle (\sqrt{ \mathbb{E}_{\iota^{t-1}} [u^{t-1}] } - \sqrt{ \mathbb{E}_{\iota^{t}} [u^{t}] })} \nn \\
\eeq
where step \step{1} uses the fact that $x^{v} \leq x$ if $v = \frac{1}{1+\tau} \geq 0$ for all $x \in [0,1]$. 

We further obtain:
\beq
1 & \stackrel{}{\leq} & { \textstyle (K_5+K_8) \{ \mathbb{E}_{\iota^{i}} [d_{t-1} - d_{t} ]\} \frac{1}{d_{t}}} +  ({ \textstyle \frac{K_6}{d_{t}} \{ \sqrt{\mathbb{E}_{\iota^{t-1}} [u^{t-1} ]}\}^{\frac{1}{1+\tau}}}-{ \textstyle  \{ \sqrt{\mathbb{E}_{\iota^{t}} [u^{t} ]}\}^{\frac{1}{1+\tau}})}+ \nn\\
&& { \textstyle \frac{K_7}{d_{t}} (\sqrt{ \mathbb{E}_{\iota^{t-1}} [u^{t-1}] } - \sqrt{ \mathbb{E}_{\iota^{t}} [u^{t}] })} \nn \\
& \stackrel{\step{1}}{\leq} & { \textstyle (K_5+K_8) \{ \mathbb{E}_{\iota^{i}} [d_{t-1} - d_{t} ]\} \frac{1}{d_{t}}} +  ({ \textstyle \frac{K_6}{d_{T}} \{ \sqrt{\mathbb{E}_{\iota^{t-1}} [u^{t-1} ]}\}^{\frac{1}{1+\tau}}}-{ \textstyle  \{ \sqrt{\mathbb{E}_{\iota^{t}} [u^{t} ]}\}^{\frac{1}{1+\tau}})}+ \nn\\
&& { \textstyle \frac{K_7}{d_{T}} (\sqrt{ \mathbb{E}_{\iota^{t-1}} [u^{t-1}] } - \sqrt{ \mathbb{E}_{\iota^{t}} [u^{t}] })} \nn 
\eeq
where step \step{1} uses the fact that $d_t \geq d_T$ for any $t \leq T$.

Telescoping Inequality over $t = \{1,2,...,T\}$, we have:
\beq
(K_5+K_8+1)T & \stackrel{}{\leq} & { \textstyle (K_5+K_8) \sum_{t=1}^{T}\{ \mathbb{E}_{\iota^{i}} [\frac{d_{t-1}}{d_t} ]\} } +  ({ \textstyle \frac{K_6}{d_{T}} \{ \sqrt{\mathbb{E}_{\iota^{0}} [u^{0} ]}\}^{\frac{1}{1+\tau}}}-{ \textstyle  \{ \sqrt{\mathbb{E}_{\iota^{T}} [u^{T} ]}\}^{\frac{1}{1+\tau}})}+ \nn\\
&& { \textstyle \frac{K_7}{d_{T}} (\sqrt{ \mathbb{E}_{\iota^{0}} [u^{0}] } - \sqrt{ \mathbb{E}_{\iota^{T}} [u^{T}] })} \nn \\
& \stackrel{\step{1}}{\leq} & { \textstyle (K_5+K_8) \sum_{t=1}^{T}\{ \mathbb{E}_{\iota^{i}} [\frac{d_{t-1}}{d_t} ]\} } +  { \textstyle \frac{K_6}{d_{T}} \{ \sqrt{\mathbb{E}_{\iota^{0}} [u^{0} ]}\}^{\frac{1}{1+\tau}}}+ { \textstyle \frac{K_7}{d_{T}} (\sqrt{ \mathbb{E}_{\iota^{0}} [u^{0}] })} \nn \\
& \stackrel{\step{2}}{\leq} & { \textstyle (K_5+K_8) \sum_{t=1}^{T}\{ \mathbb{E}_{\iota^{i}} [\frac{d_{t-1}}{d_t} ]\} } +  { \textstyle \frac{K_6}{d_{T}} \{ \sqrt{\frac{N-b}{b(N-1)}\sigma^2 }\}^{\frac{1}{1+\tau}}}+ { \textstyle \frac{K_7}{d_{T}} (\sqrt{\frac{N-b}{b(N-1)}\sigma^2 })} \nn \\
& \stackrel{\step{3}}{=} & { \textstyle (K_5+K_8) \sum_{t=1}^{T}\{ \mathbb{E}_{\iota^{i}} [\frac{d_{t-1}}{d_t} ]\} } \nn 
\eeq
where step \step{1} uses the fact that ${ \textstyle \mathbb{E}_{\iota^{T}} [ u^{T} ] \geq 0}$; step \step{2} uses ${ \textstyle \mathbb{E}_{\iota^{0}} [ u^{0} ] \leq \frac{N-b}{b(N-1)}\sigma^2} $; step \step{3} uses the fact that $b=N$.

Defining function $\mathcal{U}(x) = \frac{ln(x)}{x}$, we choose a constant $\xi > e$ such that $\mathcal{U}(x) \leq \operatorname{min} ~ \mathcal{U}(\frac{d_{i-1}}{d_i}), i\in [T]$. We derive:
\beq
\xi^{\frac{K_5+K_8+1}{K_5+K_8}T} & \leq & \xi^{\sum_{t=1}^{T}\{ \mathbb{E}_{\iota^{i}} [\frac{d_{t-1}}{d_t} ]\}} \nn \\
& = & \xi^{\{\mathbb{E}_{\iota^{i}}[\frac{d_{T-1}}{d_T}]\}} \cdot \xi^{\{\mathbb{E}_{\iota^{i}}[\frac{d_{T-2}}{d_{T-1}}]\}}\cdots \xi^{\{\mathbb{E}_{\iota^{i}}[\frac{d_{0}}{d_1}]\}} \nn \\
& \stackrel{\step{1}}{\leq} & { \textstyle \{\mathbb{E}_{\iota^{i}}[\frac{d_{T-1}}{d_T}]\}}^\xi \cdot {\textstyle \{\mathbb{E}_{\iota^{i}}[\frac{d_{T-2}}{d_{T-1}}]\}}^\xi \cdots {\textstyle \{\mathbb{E}_{\iota^{i}}[\frac{d_{0}}{d_1}]\}}^\xi \nn \\
& = & { \textstyle \{\mathbb{E}_{\iota^{i}}[\frac{d_{0}}{d_T}]\}}^\xi 
\eeq
where step \step{1} uses the fact that $\frac{\operatorname{log}(x_i)}{x_i} \geq \frac{\operatorname{log}(\xi)}{\xi} \Rightarrow \xi {\operatorname{log}(x_i)} \geq x_i {\operatorname{log}(\xi)}  \Rightarrow x_i^\xi \geq \xi^{x_i}$ with $x_i = \frac{d_{i-1}}{d_i}, \forall i\in[T]$.

Then, we have:
\beq
&& { \textstyle\frac{K_5+K_8+1}{K_5+K_8}T \operatorname{log} (\xi) \leq \xi log({\mathbb{E}_{\iota^{i}}[\frac{d_{0}}{d_T}]\}})} \nn \\
\Leftrightarrow && { \textstyle 
\operatorname{log}({\mathbb{E}_{\iota^{i}}[d_T]}) \leq {\mathbb{E}_{\iota^{i}}[d_0]} - \frac{1}{\xi} \frac{K_5+K_8+1}{K_5+K_8}T \operatorname{log} (\xi)} \nn \\
\Leftrightarrow && { \textstyle 
{\mathbb{E}_{\iota^{i}}[d_T]} \leq \operatorname{exp}({\mathbb{E}_{\iota^{i}}[d_0]} - \frac{1}{\xi} \frac{K_5+K_8+1}{K_5+K_8}T \operatorname{log} (\xi))} \nn
\eeq

Part (\bfit{b}) $\sigma \in (\frac{1}{2},1)$. We define $w \triangleq \frac{1-\sigma}{\sigma} \in (0,1)$. Notably, we have $d_{t-1}-d_{t} \leq R \triangleq d_0$. Defining $\tau = \frac{2\sigma-1}{1-\sigma} \in (0,\infty)$, we have $\frac{1}{w} = \tau+1$. Defining $h(s) = \frac{1}{s^{\tau+1}}$, $\breve{h}(s) = -\frac{1}{\tau}\cdot s^{-\tau}$, we have $\nabla \breve{h}(s) = h(s)$. We let $\kappa >1$ and assume $h(d_{t+1}) \leq \kappa h(d_{t-1})$.

\beq
d_{t+1} & \stackrel{}{\leq} &  { \textstyle K_5 \{ \mathbb{E}_{\iota^{i}} [d_{t-1} - d_{t} ]\}^w} + K_6 ({ \textstyle  \{ \sqrt{\mathbb{E}_{\iota^{t-1}} [u^{t-1} ]}\}^w}-{ \textstyle  \{ \sqrt{\mathbb{E}_{\iota^{t}} [u^{t} ]}\}^w)}+ \nn\\
&& K_7 { \textstyle (\sqrt{ \mathbb{E}_{\iota^{t-1}} [u^{t-1}] } - \sqrt{ \mathbb{E}_{\iota^{t}} [u^{t}] })} + { \textstyle K_8 \mathbb{E}_{\iota^{i}} [d_{t-1} - d_{t} ]} \nn \\
& \stackrel{\step{1}}{\leq} &  { \textstyle K_5 \{ \mathbb{E}_{\iota^{i}} [d_{t-1} - d_{t} ]\}^w} + K_6 ({ \textstyle  \{ \sqrt{\mathbb{E}_{\iota^{t-1}} [u^{t-1} ]}\}^w}-{ \textstyle  \{ \sqrt{\mathbb{E}_{\iota^{t}} [u^{t} ]}\}^w)}+ \nn\\
&& K_7 { \textstyle (\sqrt{ \mathbb{E}_{\iota^{t-1}} [u^{t-1}] } - \sqrt{ \mathbb{E}_{\iota^{t}} [u^{t}] })} + { \textstyle K_8 \{\mathbb{E}_{\iota^{i}} [d_{t-1} - d_{t} ]\}^w R^{1-w}} \nn \\
& \stackrel{}{=} &  { \textstyle (K_5+K_8R^{1-w}) \{ \mathbb{E}_{\iota^{i}} [d_{t-1} - d_{t} ]\}^w} + K_6 ({ \textstyle  \{ \sqrt{\mathbb{E}_{\iota^{t-1}} [u^{t-1} ]}\}^w}-{ \textstyle  \{ \sqrt{\mathbb{E}_{\iota^{t}} [u^{t} ]}\}^w)}+ \nn\\
&& K_7 { \textstyle (\sqrt{ \mathbb{E}_{\iota^{t-1}} [u^{t-1}] } - \sqrt{ \mathbb{E}_{\iota^{t}} [u^{t}] })} \nn 
\eeq
where step \step{1} uses the fact that $\operatorname{max}_{x \in (0,R)} \frac{x}{x^w} = R^{1-w}$ if $w \in (0,1)$.

We further obtain: 
\beq
1 & \stackrel{}{\leq} &  { \textstyle (K_5+K_8R^{1-w}) \{ \mathbb{E}_{\iota^{i}} [d_{t-1} - d_{t} ]\}^w \frac{1}{d_{t+1}}} +  ({ \textstyle \frac{K_6}{d_{t+1}} \{ \sqrt{\mathbb{E}_{\iota^{t-1}} [u^{t-1} ]}\}^w}-{ \textstyle  \{ \sqrt{\mathbb{E}_{\iota^{t}} [u^{t} ]}\}^w)}+ \nn\\
&&  { \textstyle \frac{K_7}{d_{t+1}}(\sqrt{ \mathbb{E}_{\iota^{t-1}} [u^{t-1}] } - \sqrt{ \mathbb{E}_{\iota^{t}} [u^{t}] })} \nn \\
& \stackrel{\step{1}}{\leq} &  { \textstyle (K_5+K_8R^{1-w}) \{ \mathbb{E}_{\iota^{i}} [d_{t-1} - d_{t} ]\}^w \frac{1}{d_{t+1}}} +  ({ \textstyle \frac{K_6}{d_{T}} \{ \sqrt{\mathbb{E}_{\iota^{t-1}} [u^{t-1} ]}\}^w}-{ \textstyle  \{ \sqrt{\mathbb{E}_{\iota^{t}} [u^{t} ]}\}^w)}+ \nn\\
&&  { \textstyle \frac{K_7}{d_{T}}(\sqrt{ \mathbb{E}_{\iota^{t-1}} [u^{t-1}] } - \sqrt{ \mathbb{E}_{\iota^{t}} [u^{t}] })} \nn 
\eeq
where step \step{1} uses the fact that $d_t \geq d_T$ for any $t \leq T$.

Telescoping Inequality over $t = \{1,2,...,T\}$, we have:
\beq
T & \stackrel{}{\leq} &  { \textstyle (K_5+K_8R^{1-w}) \sum_{t=1}^{T}\{ \mathbb{E}_{\iota^{i}} [d_{t-1} - d_{t} ]\}^w \frac{1}{d_{t+1}}} +  ({ \textstyle \frac{K_6}{d_{T}} \{ \sqrt{\mathbb{E}_{\iota^{0}} [u^{0} ]}\}^w}-{ \textstyle  \{ \sqrt{\mathbb{E}_{\iota^{T}} [u^{T} ]}\}^w)}+ \nn\\
&&  { \textstyle \frac{K_7}{d_{T}}(\sqrt{ \mathbb{E}_{\iota^{0}} [u^{0}] } - \sqrt{ \mathbb{E}_{\iota^{T}} [u^{T}] })} \nn \\
& \stackrel{\step{1}}{\leq} &  { \textstyle (K_5+K_8R^{1-w}) \sum_{t=1}^{T}\{ \mathbb{E}_{\iota^{i}} [d_{t-1} - d_{t} ]\}^w \frac{1}{d_{t+1}}} +  ({ \textstyle \frac{K_6}{d_{T}} \{ \sqrt{\mathbb{E}_{\iota^{0}} [u^{0} ]}\}^w}+{ \textstyle \frac{K_7}{d_{T}}(\sqrt{ \mathbb{E}_{\iota^{0}} [u^{0}] })} \nn \\
& \stackrel{\step{2}}{\leq} &  { \textstyle (K_5+K_8R^{1-w}) \sum_{t=1}^{T}\{ \mathbb{E}_{\iota^{i}} [d_{t-1} - d_{t} ]\}^w \frac{1}{d_{t+1}}} +  ({ \textstyle \frac{K_6}{d_{T}} \{ \sqrt{\frac{N-b}{b(N-1)}\sigma^2}\}^w}+{ \textstyle \frac{K_7}{d_{T}}(\sqrt{\frac{N-b}{b(N-1)}\sigma^2})} \nn \\
& \stackrel{\step{3}}{\leq} &  { \textstyle (K_5+K_8R^{1-w}) \sum_{t=1}^{T}\{ \mathbb{E}_{\iota^{i}} [d_{t-1} - d_{t} ]\}^w \frac{1}{d_{t+1}}} \nn \\
& \stackrel{}{=} &  { \textstyle (K_5+K_8R^{1-w}) \sum_{t=1}^{T} \{ \mathbb{E}_{\iota^{i}} [d_{t-1} - d_{t} ] h(d_{t+1}) \}^w } \nn \\
& \stackrel{\step{4}}{\leq} &  { \textstyle (K_5+K_8R^{1-w}) \sum_{t=1}^{T} \{ \kappa \mathbb{E}_{\iota^{i}} [d_{t-1} - d_{t} ] h(d_{t-1}) \}^w } \nn \\
& \stackrel{\step{5}}{\leq} &  { \textstyle (K_5+K_8R^{1-w}) \sum_{t=1}^{T} \{ \kappa \int_{d_t}^{d_{t-1}} h(s) ds \}^w } \nn \\
& \stackrel{\step{6}}{\leq} &  { \textstyle (K_5+K_8R^{1-w}) \sum_{t=1}^{T} \{ \kappa (\breve{h}(d_{t-1}) - \breve{h}(d_{t}))  \}^w } \nn \\
& \stackrel{}{=} &  { \textstyle (K_5+K_8R^{1-w}) \sum_{t=1}^{T} \{ -\frac{\kappa}{\tau} (d_{t-1}^{-\tau} - d_{t}^{-\tau})  \}^w } \nn \\
& \stackrel{}{=} &  { \textstyle (K_5+K_8R^{1-w}) (\frac{\kappa}{\tau})^w \{d_{T}^{-\tau} - d_{0}^{-\tau}  \}^w } \nn 
\eeq
where step \step{1} uses the fact that ${ \textstyle \mathbb{E}_{\iota^{T}} [ u^{T} ] \geq 0}$; step \step{2} uses ${ \textstyle \mathbb{E}_{\iota^{0}} [ u^{0} ] \leq \frac{N-b}{b(N-1)}\sigma^2} $; step \step{3} uses the fact that $b=N$; step \step{4} uses $h(d_{t+1}) \leq \kappa h(d_{t-1})$; step \step{5} uses the fact that $h(s)$ is a nonnegative and decreasing function $(a-b)h(a) \leq \int_b^a h(s)$ for all $a,b \in [0,\infty)$;step \step{6} uses the fact that $\nabla \breve{h}(s) = h(s)$.

This lead to:
\beq
{ \textstyle  \{d_{T}^{-\tau} - d_{0}^{-\tau}  \}^w } \geq  { \textstyle \frac{T}{(K_5+K_8R^{1-w}) (\frac{\kappa}{\tau})^w}} \nn
\eeq
Rearranging terms, we have:
\beq
{ \textstyle d_{T}^{-\tau}} \geq  { \textstyle \frac{{T}^{\frac{1}{w}}}{(K_5+K_8R^{1-w})^{\frac{1}{w}} (\frac{\kappa}{\tau})} + d_{0}^{-\tau}} \nn
\eeq
This lead to:
\beq
{ \textstyle d_T = (d_T^{-\tau})^{-1/\tau} \leq (\frac{{T}^{\frac{1}{w}}}{(K_5+K_8R^{1-w})^{\frac{1}{w}} (\frac{\kappa}{\tau})} + d_{0}^{-\tau})^{-1/\tau}} \nn
\eeq

Now, we analyze the complexity of terms in the conclusion of Part (\bfit{a}) and Part (\bfit{b}) to obtain the final result.

\begin{itemize}[leftmargin=12pt,itemsep=0.2ex]

\item $b=N$, $b^{\prime}=\sqrt{b}$ and ${ \textstyle p=\frac{b^{\prime}}{b+b^{\prime}} \leq \mathcal{O}(\frac{1}{N^{1/2}})}$.

\item ${ \textstyle \phi \triangleq  (3\XXX+\VVV\XXX )\GGG+ (1+\VVV^2+\tfrac{n}{2}(\XXX^2+\VVV^2\XXX^2))L_f+(1+\VVV^2) \theta}$ has no matter with $N$.

\item $\gamma \triangleq { \textstyle \XXX\sqrt{C_n^2}}$ has no matter with $N$. 

\item $\mathfrak{A} \leq \mathcal{O}(\frac{1}{N^{1/4}})$.

\item $K_1 = c \{c(1-\sigma) \gamma \phi \sqrt{\frac{n}{2}}\}^{\frac{1-\sigma}{\sigma}} 2^{|\frac{1-\sigma}{\sigma}-1|}$ has no matter with $N$.

\item $K_2 = c \{c(1-\sigma)( \gamma \phi \tfrac{np}{2}(\XXX+\VVV^2\XXX) + 2 \gamma \XXX^2 )\}^{\frac{1-\sigma}{\sigma}} 2^{|\frac{1-\sigma}{\sigma}-1|} \leq \mathcal{O}(\frac{1}{N^{\frac{1-\sigma}{2\sigma}}})$.

\item $K_3 = 2^{|{\frac{1-\sigma}{\sigma}}-1|}\{\sqrt{{ \textstyle (1-p)}}\}^{\frac{1-\sigma}{\sigma}} \leq \mathcal{O}(\frac{1}{N^{\frac{1-\sigma}{4\sigma}}})$

\item $K_4 = 2^{|{\frac{1-\sigma}{\sigma}}-1|} \{{ \textstyle \sqrt{\frac{L_f^2 \XXX^2(1-p)}{b^{\prime}}}}\}^{\frac{1-\sigma}{\sigma}} \leq \mathcal{O}(\frac{1}{N^{\frac{1-\sigma}{2\sigma}}})$

\item $K_5 = { \textstyle (\frac{2}{\theta-\XXX^2} \mathfrak{A}  K_1 + \frac{2 \mathfrak{A} K_2 K_4}{(\theta-\XXX^2)(1-K_3)})} $: 

(1). ${ \textstyle \frac{2}{\theta-\XXX^2} \mathfrak{A}  K_1 \leq \mathcal{O}(\frac{1}{N^{\frac{1}{4}}})}$; (2). ${ \textstyle \frac{2 \mathfrak{A} K_2 K_4}{(\theta-\XXX^2)(1-K_3)} \leq \mathcal{O}(\frac{1}{N^{\frac{3-2\sigma}{4\sigma}}}) \leq \mathcal{O}(\frac{1}{N^{1/4}})}$.

\item $K_8 = \frac{1}{\sqrt{\theta-\XXX^2}} + {\textstyle \frac{2}{\theta-\XXX^2} \frac{\sqrt{1-p}\sqrt{(\bar{\theta}+1)}}{1-\sqrt{1-p}} \frac{1}{1-\sqrt{1-p}} \sqrt{\frac{L_f^2 \XXX^2(1-p)}{b^{\prime}}}} \leq \mathcal{O}(\frac{1}{N^{1/4}})$

\end{itemize}

This lead to:

Part (\bfit{a}).
\beq
{ \textstyle 
{\mathbb{E}_{\iota^{i}}[d_T]} \leq \operatorname{exp}({\mathbb{E}_{\iota^{i}}[d_0]} - \frac{1}{\xi} \frac{K_5+K_8+1}{K_5+K_8}T \operatorname{log} (\xi)) = \mathcal{O}(\frac{1}{\operatorname{exp}(T N^{1/4})})}
\eeq

Part (\bfit{b}).
\beq
{ \textstyle d_T = (d_T^{-\tau})^{-1/\tau} \leq (\frac{{T}^{\frac{1}{w}}}{(K_5+K_8R^{1-w})^{\frac{1}{w}} (\frac{\kappa}{\tau})} + d_{0}^{-\tau})^{-1/\tau} \leq \mathcal{O}(\frac{1}{T^{\frac{1}{w\tau}} N^{\frac{1}{4w\tau}}}) = \mathcal{O}(\frac{1}{T^{(1+\frac{1}{\tau})} N^{(1/4+\frac{1}{4\tau})}})}
\eeq

\end{proof}
\section{Additional Experiment Details and Results}
\label{app:extension}

\subsection{Additional Details for Hyperbolic Structural Probe Problem}

\label{app:sect:Ultrahyperbolic manifold}
To begin with, we give the definition of the Ultrahyperbolic manifold $\mathbb{U}_\alpha^{p, q}$, which will be used in Ultra-hyperbolic geodesic distance $\mathbf{~d}_\alpha(\mathbf{x},\mathbf{y})$ and Diffeomorphism $\varphi (\cdot)$.

$\blacktriangleright$ \textbf{Ultrahyperbolic manifold.}
Vectors in an ultrahyperbolic manifold is defined as $
\mathbb{U}_\alpha^{p, q}= \{\mathbf{x}= (x_1, x_2, \cdots, x_{p+q} )^{\top} \in \mathbb{R}^{p, q}:\|\mathbf{x}\|_q^2=-\alpha^2 \}$\cite{xiong2022ultrahyperbolic}, where $\alpha$ is a non-negative real number denoting the radius of curvature. $\|\mathbf{x}\|_q^2=\langle\mathbf{x}, \mathbf{x}\rangle_q$ , $\forall \mathbf{x}, \mathbf{y} \in \mathbb{R}^{p, q},\langle\mathbf{x}, \mathbf{y}\rangle_q=\sum_{i=1}^p \mathbf{x}_i \mathbf{y}_i-\sum_{j=p+1}^{p+q} \mathbf{x}_j \mathbf{y}_j$ is a norm of the induced scalar product. The hyperbolic and spherical
manifolds can be defined as :$\mathbb{H}_{\alpha} = \mathbb{U}_{\alpha}^{p,1}$, $\mathbb{S}_{\alpha} = \mathbb{U}_{\alpha}^{0,q}$.

$\blacktriangleright$ \textbf{Ultra-hyperbolic geodesic distance.} The ultra-hyperbolic geodesic distance \cite{law2021ultrahyperbolic}\cite{law2020ultrahyperbolic} $\mathbf{d}_\gamma(\cdot, \cdot)$ is formulated:
$\forall\mathbf{x} \in \mathbb{U}_\alpha^{p, q},\mathbf{y} \in \mathbb{U}_\alpha^{p, q}$ and $\alpha > 0$, $ \mathbf{~d}_\alpha(\mathbf{x},\mathbf{y})= \{\begin{array}{ll}
\alpha \cosh ^{-1} ( |\frac{\langle\mathbf{x}, \mathbf{y}\rangle_q}{\alpha^2} | ) & \text { if } |\frac{\langle\mathbf{x}, \mathbf{y}\rangle_q}{\alpha^2} | \geq 1 \\
\alpha \cos ^{-1} ( |\frac{\langle\mathbf{x}, \mathbf{y}\rangle_q}{\alpha^2} | ) & \text { otherwise. }
\end{array} .$

$\blacktriangleright$ \textbf{Diffeomorphism.} [Theorem 1 Diffeomorphism of \cite{Xiong2021SemiRiemannianGC}]: Any vector $\mathbf{x} \in \mathbb{R}^p \times \mathbb{R}_{*}^q$ can be mapped into $\mathbb{U}_{\alpha}^{p,q}$ by a double projection $\varphi=\phi^{-1}\circ\phi$, with $\psi(\mathbf{x})= (\begin{array}{c}
\mathbf{s} \\
\alpha \frac{\mathbf{t}}{\|\mathbf{t}\|}
\end{array} ), \quad \psi^{-1}(\mathbf{z})= (\begin{array}{c}
\mathbf{v} \\
\frac{\sqrt{\alpha^2+\|\mathbf{v}\|^2}}{\alpha} \mathbf{u}
\end{array} ),$
where $\mathbf{x}= (\begin{array}{c}\mathbf{s} \\ \mathbf{t}\end{array} ) \in \mathbb{U}_\alpha^{p, q}$ with $\mathbf{s} \in \mathbb{R}^p$ and $\mathbf{t} \in \mathbb{R}_*^q \cdot \mathbf{z}= (\begin{array}{c}\mathbf{v} \\ \mathbf{u}\end{array} ) \in$ $\mathbb{R}^p \times \mathbb{S}_\alpha^q$ with $\mathbf{v} \in \mathbb{R}^p$ and $\mathbf{u} \in \mathbb{S}_\alpha^q$.

\subsection{Additional application: Ultra-hyperbolic Knowledge Graph Embedding} \label{Additional application problem: Ultra-hyperbolic space Word enbedding problem}
The J orthogonal matrix can be used as an isometric linear operator in the Ultrahyperbolic manifold, \cite{xiong2022ultrahyperbolic} et al. extended the knowledge graph model from hyperbolic space to Ultra-hyperbolic space (named as \textbf{UltraE}) by this property. The \textbf{UltraE} model is formulated as follows:
\begin{align}
\min_{\mathbf{R}, \mathbf{E}, \mathbf{b}} & \mathcal{L}(\mathbf{R}, \mathbf{E}, \mathbf{b}) \triangleq { \textstyle -\frac{1}{N} \sum \limits_{(h,r,t) \in \Delta} (\log s(h,r,t)+\sum \limits_{(h^{\prime}, r^{\prime}, r^{\prime}) \in \Delta^{\prime}_{(h,r,t)}} \log  (1-s(h^{\prime},r^{\prime},t^{\prime}) ) )} \nonumber \\
& s.t.  \left\{
\begin{array}{l}
s(h, r, t)=\sigma(-d_\alpha^2 (\mathbf{R}_r \mathbf{E}_h, \mathbf{E}_t )+\mathbf{b}_h+\mathbf{b}_t+\delta) \nonumber \\
\mathbf{R}_r^{\top} \mathbf{J} \mathbf{R}_r = \mathbf{J}
\end{array} \right.
\end{align}
where $\mathbf{E} \in \mathbb{R}^{n_e \times n}$ with $\mathbf{E}_h = \mathbf{E}(h,:) \in \mathbb{U}_{\alpha}^{p, q}$, 
$\mathbf{b} \in \mathbb{R}^{n_r}$ with $\mathbf{b}_h = \mathbf{b}(r) \in \mathbb{R}$, 
$\mathbf{R} \in \mathbb{R}^{n_r \times n \times n}$ with $\mathbf{R}_r = \mathbf{R}(r,:,:) \in \mathbb{R}^{n \times n}$ and $\ts\J=[\begin{smallmatrix}
\I_p & \mathbf{0} \\
\mathbf{0} & -\I_q
\end{smallmatrix}]$;
$\Delta \in \mathbb{N}^{N \times 3}$ is the set of positive triplets, $\Delta^{\prime}_{(h,r,t)} \in \mathbb{N}^{N \times k \times 3}$ denotes the set of negative triples constructed by corrupting $(h,r,t)$;
$\delta$ is a global margin hyper-parameter, $\sigma(\cdot)$ is the sigmoid function, $n_e$ represents the number of entities and $n_r$ represents the number of relations; $d_\alpha(\cdot)$ stands for the Ultra-hyperbolic geodesic distance (refer to \ref{app:sect:Ultrahyperbolic manifold}).

$\blacktriangleright$ \textbf{Experiment Details.} We selected a batch of \textbf{FB15K} and \textbf{WN18RR} respectively as the data set for the Ultra-hyperbolic Knowledge Graph Embedding problem, (training set size, test set size, number of entities, number of relations) are (719,308,135,22) and (545,233,208,5) respectively. $n = 36$, $p = 18$, $\delta = 5$, $\alpha = 1$ and $k=50$. 
In order to highlight the difference between J orthogonal optimization, in the \textbf{UltraE} model, all entities and biases of the optimization algorithm are optimized using \textbf{ADMM} by \textbf{Pytorch}, $lr=5e-4$.
We use the \textbf{Adagrad} optimizer in Pytorch to optimize the J-orthogonality constraint variable in the \textbf{CS} model.

\pdfoutput=1
\pdfoutput=1
\pdfoutput=1
\pdfoutput=1
\pdfoutput=1
\subsection{Experiment result}
\label{app:sect:exp}

$\blacktriangleright$ \textbf{Hyperbolic Eigenvalue Problem.} Table \ref{experiment:eigen:table} and Figure \ref{experiment:eigen:fig:p1}, \ref{experiment:eigen:fig:p2}, \ref{experiment:eigen:fig:p3} are supplementary experiments for HEVP. Several conclusions can be drawn. (i) \textbf{GS-JOBCD} often greatly improves upon \textbf{UMCM}, \textbf{ADMM} and \textbf{CSDM}. This is because our methods find stronger stationary points than them. (ii) \textbf{J-JOBCD} is a parallel version of \textbf{GS-JOBCD} and thus exhibits significantly faster convergence. (iii) The proposed methods generally give the best performance.

$\blacktriangleright$ \textbf{Hyperbolic Structural Probe Problem.} Table \ref{experiment:metric:table} and Figure \ref{experiment:metric:fig:e}, \ref{experiment:metric:fig:t} are supplementary experiments for HSPP. Several conclusions can be drawn. (i) \textbf{J-JOBCD} often greatly improves upon \textbf{UMCM}, \textbf{ADMM} and \textbf{CSDM} (ii) \textbf{VR-J-JOBCD} is a reduced variance version of \textbf{J-JOBCD} and thus exhibits significantly faster convergence for problems with large samples. (iii) The proposed methods generally give the best performance.

$\blacktriangleright$ \textbf{Ultra-hyperbolic Knowledge Graph Embedding Problem.} Figure \ref{fig:HyperbolicWordEmbedding learning by epoch on FB15k}, \ref{fig:HyperbolicWordEmbedding learning by time on FB15k}, \ref{fig:HyperbolicWordEmbedding learning by epoch on WN18RR} and \ref{fig:HyperbolicWordEmbedding learning by time on WN18RR} are supplementary experiments for \textbf{UltraE}. Several conclusions can be drawn. (i) In terms of Epoch performance, \textbf{J-JOBCD} and \textbf{VR-J-JOBCD} often greatly improves upon \textbf{CSDM}, thus they show better MRR and hits results. (ii) In models with limited sample sizes, the computational efficiency of \textbf{VR-J-JOBCD} is inferior to that of \textbf{J-JOBCD}.  This discrepancy arises because each iteration in \textbf{VR-J-JOBCD} necessitates two instances of backpropagation, thus consuming substantial computational resources. (iii) The proposed methods generally give the best performance.

\newpage
\subsubsection{Hyperbolic Eigenvalue Problem}

\begin{table}[H]
\centering
\lowcaption{The convergence curve of the compared methods for solving HEVP. (+) indicates that after the convergence of the \textbf{CSDM}, \textbf{UMCM} and \textbf{ADMM}, utilizing the \textbf{GS-JOBCD} for optimization markedly enhances the objective value. The $1^{s t}, 2^{\text {nd }}$, and $3^{\text {rd }}$ best results are colored with {\color[HTML]{FF0000}red}, {\color[HTML]{70AD47}green} and {\color[HTML]{5B9BD5}blue}, respectively. $(n,p)$ represents the dimension and p-value of the \textbf{J} orthogonal matrix (square matrix). The value in $()$ stands for $\sum_{ij}^n |\mathbf{X}^{\top}\mathbf{J}\mathbf{X}-\mathbf{J}|_{ij}.$}
\label{experiment:eigen:table}
\scalebox{0.44}{
}
\vspace*{2cm} 
\end{table}

\begin{figure}[htbp]
\centering
\begin{minipage}{0.35\linewidth}
\includegraphics[width=0.9\textwidth]{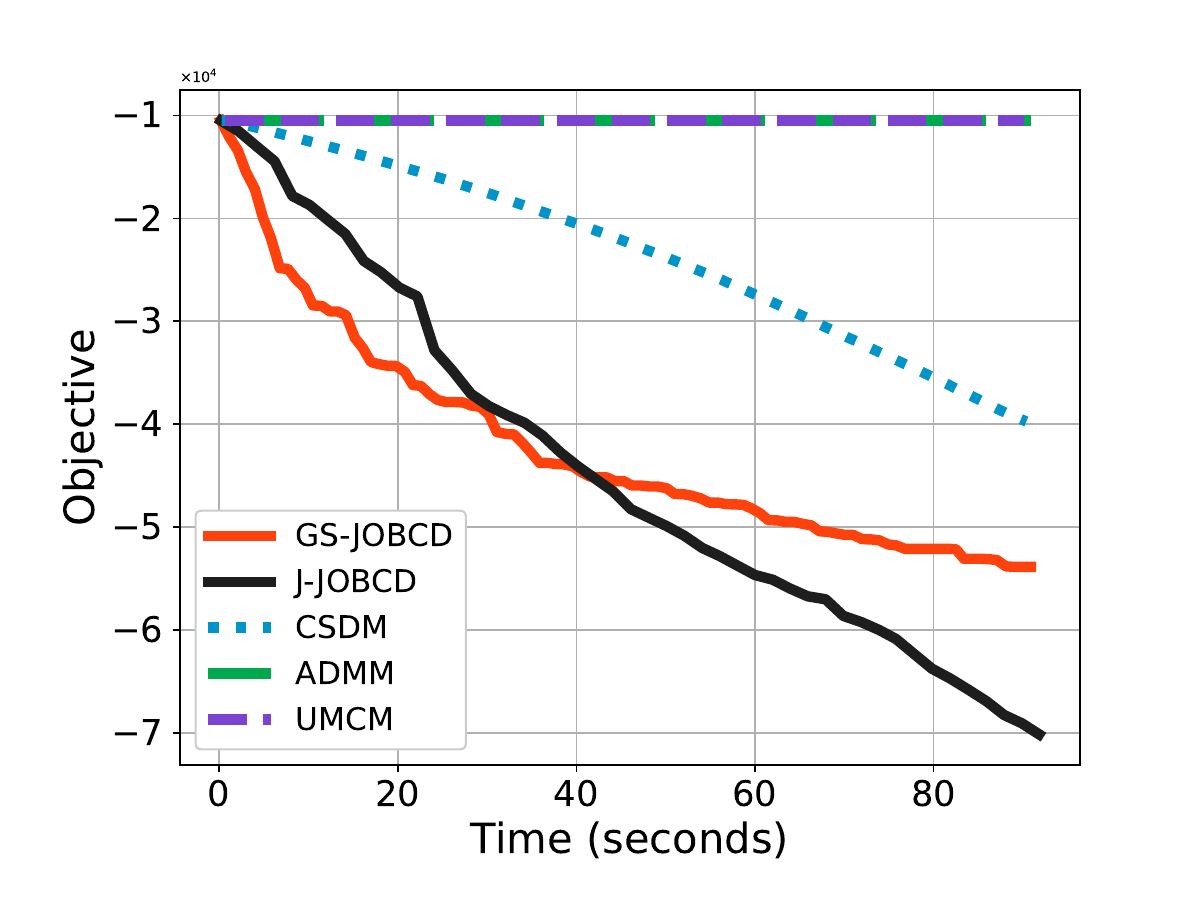} \subcaption{Cifar(1000-100-50)}
\end{minipage}\begin{minipage}{0.35\linewidth}
\includegraphics[width=0.9\textwidth]{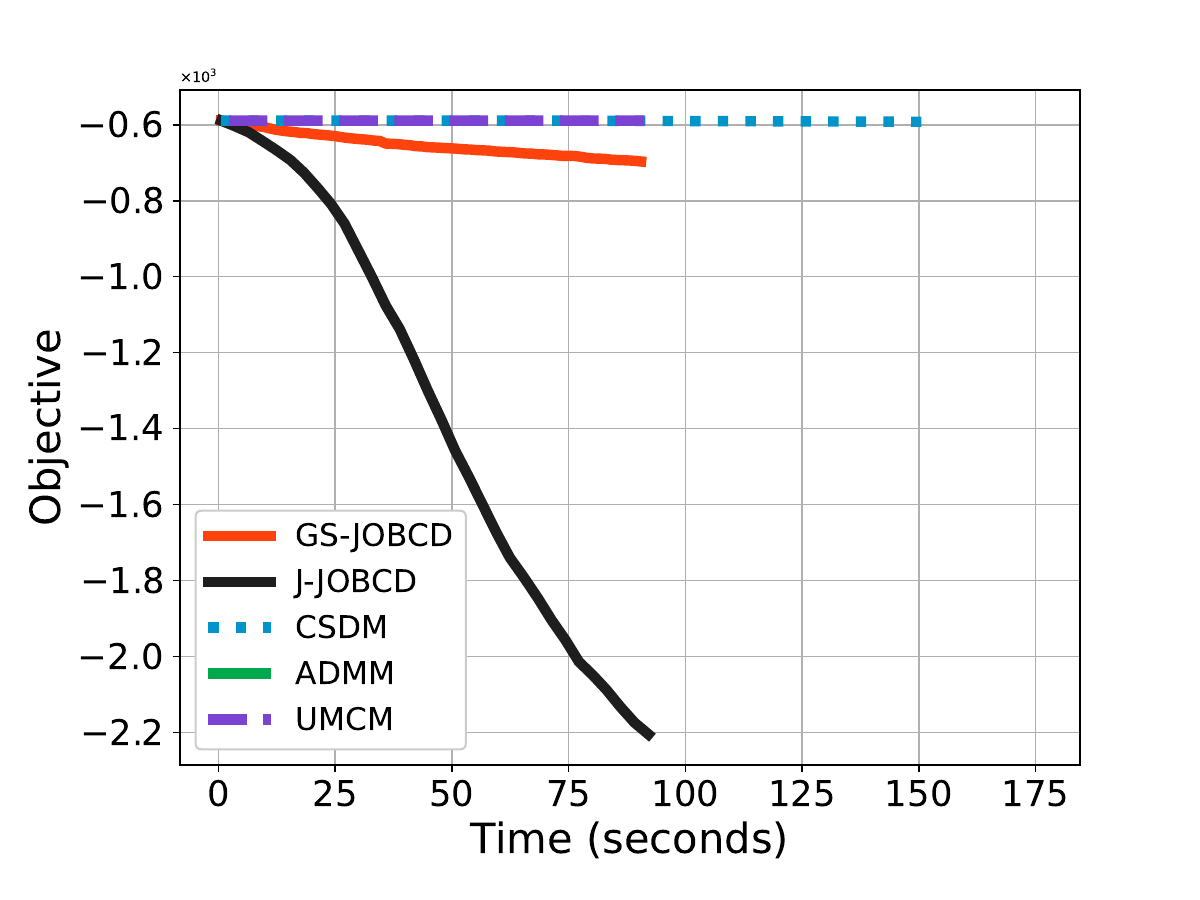} \subcaption{CnnCaltech(2000-1000-500)}
\end{minipage}\begin{minipage}{0.35\linewidth}
\includegraphics[width=0.9\textwidth]{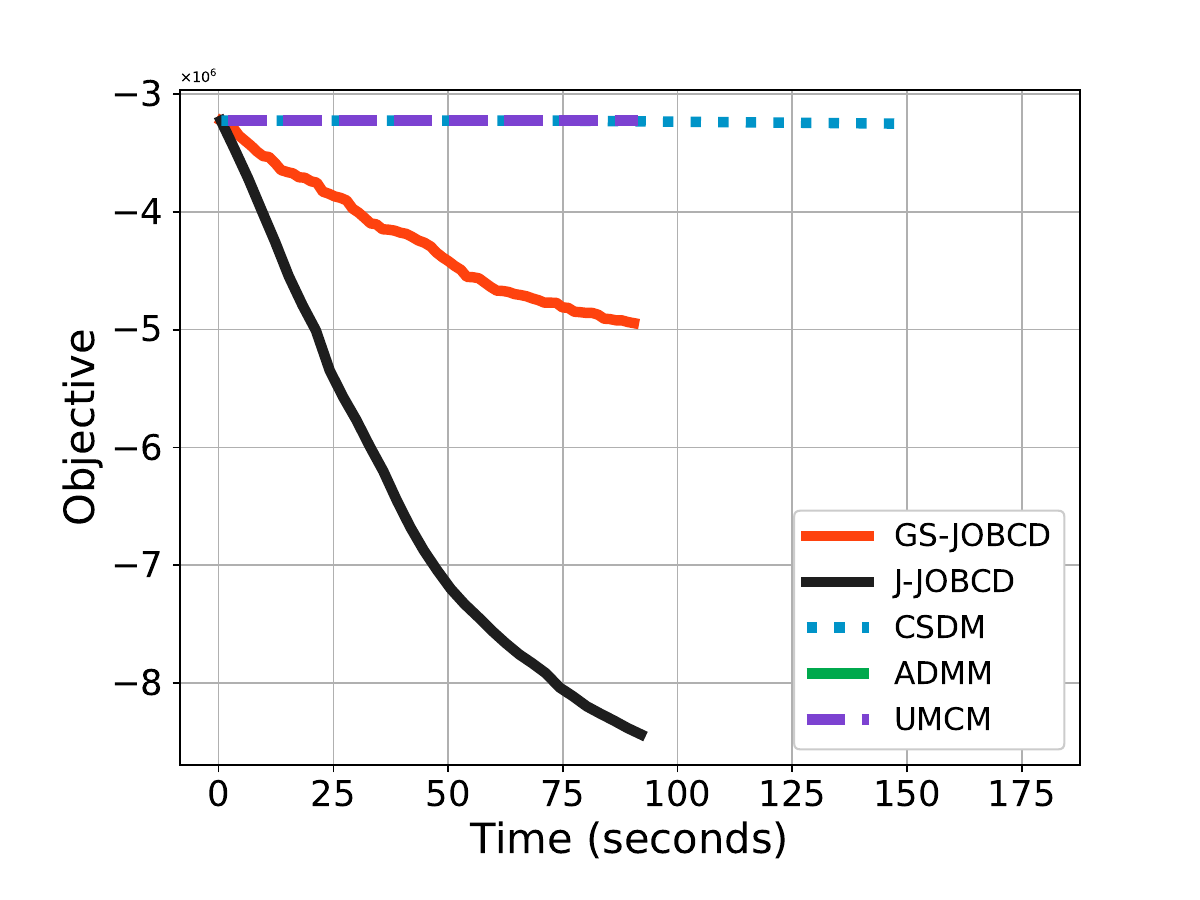} \subcaption{gisette(3000-1000-500)}
\end{minipage}\\\begin{minipage}{0.35\linewidth}
\includegraphics[width=0.9\textwidth]{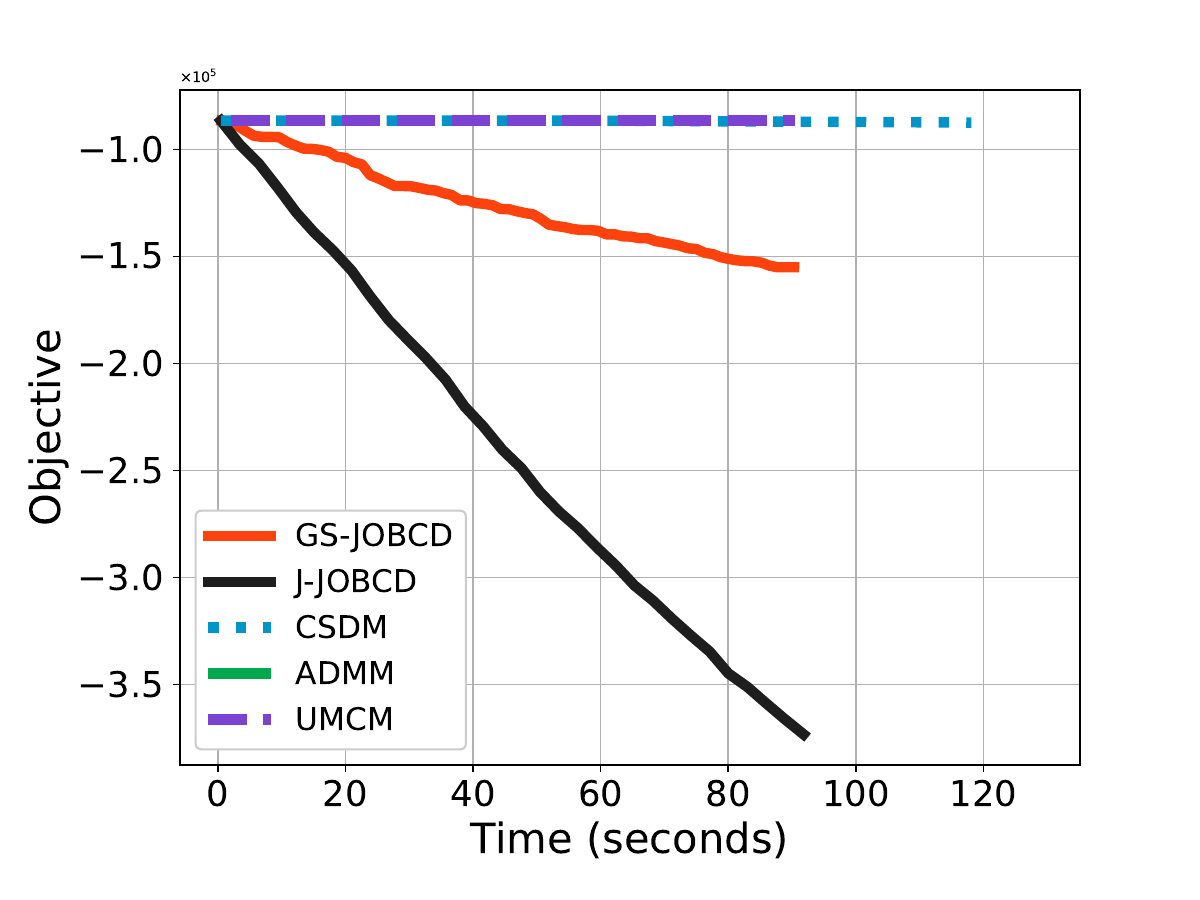} \subcaption{mnist(1000-780-390)}
\end{minipage}\begin{minipage}{0.35\linewidth}
\includegraphics[width=0.9\textwidth]{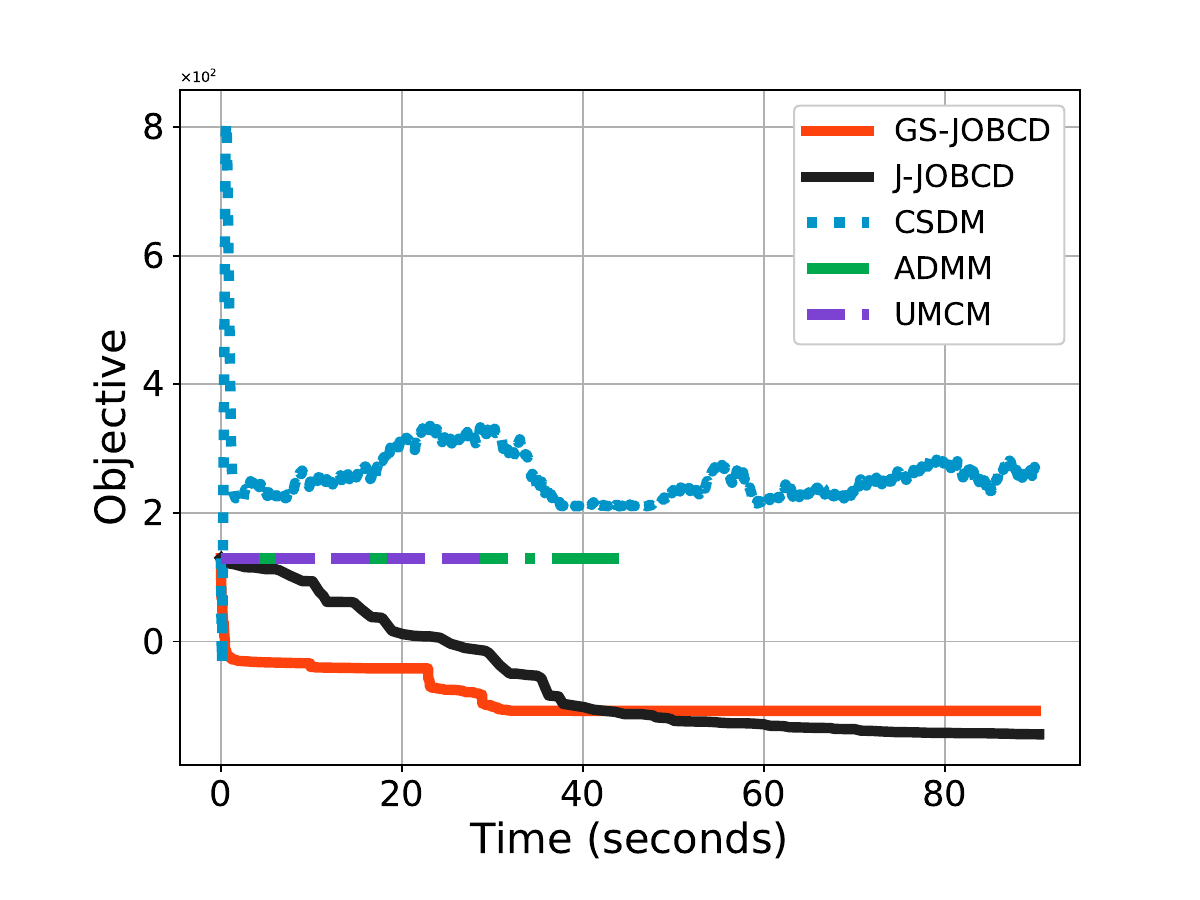} \subcaption{randn(10-10-5)}
\end{minipage}\begin{minipage}{0.35\linewidth}
\includegraphics[width=0.9\textwidth]{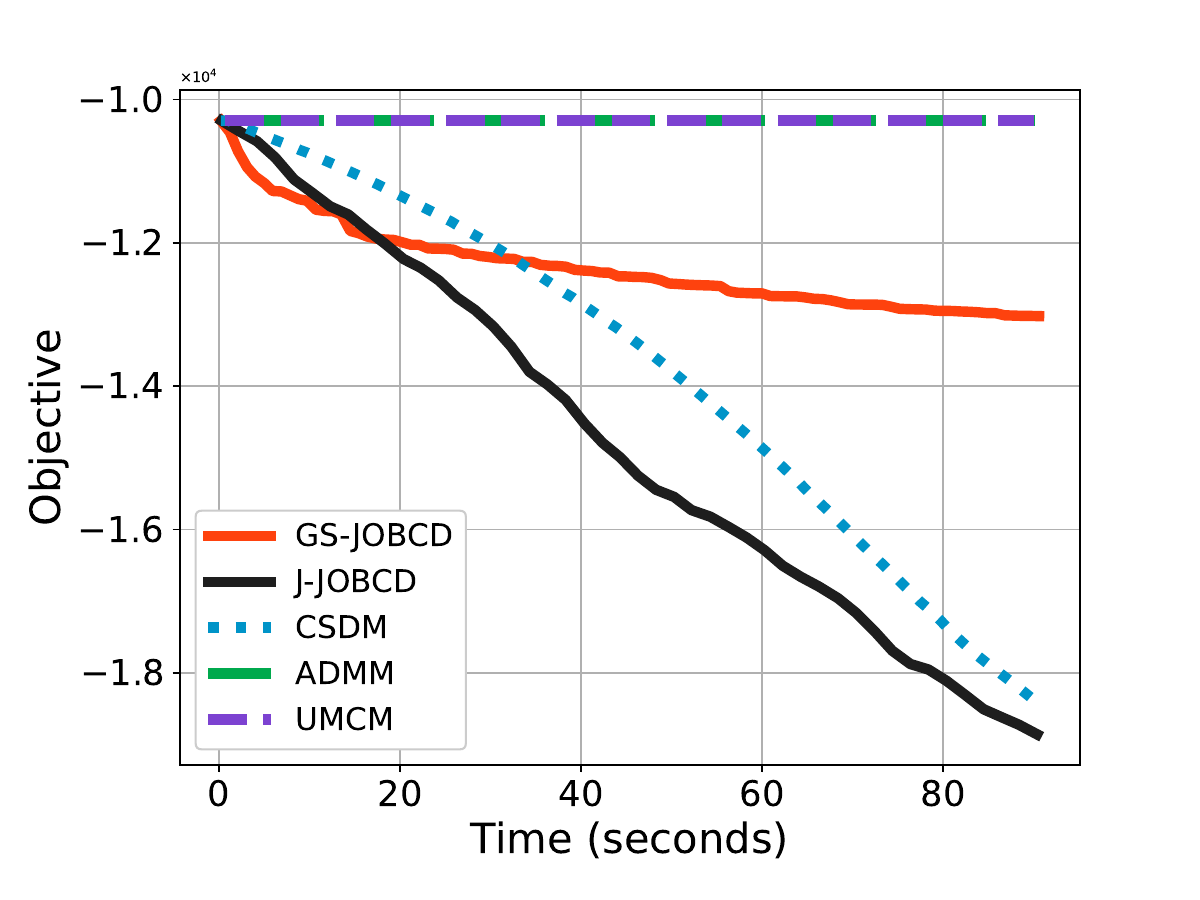} \subcaption{randn(100-100-50)}
\end{minipage}\\\begin{minipage}{0.35\linewidth}
\includegraphics[width=0.9\textwidth]{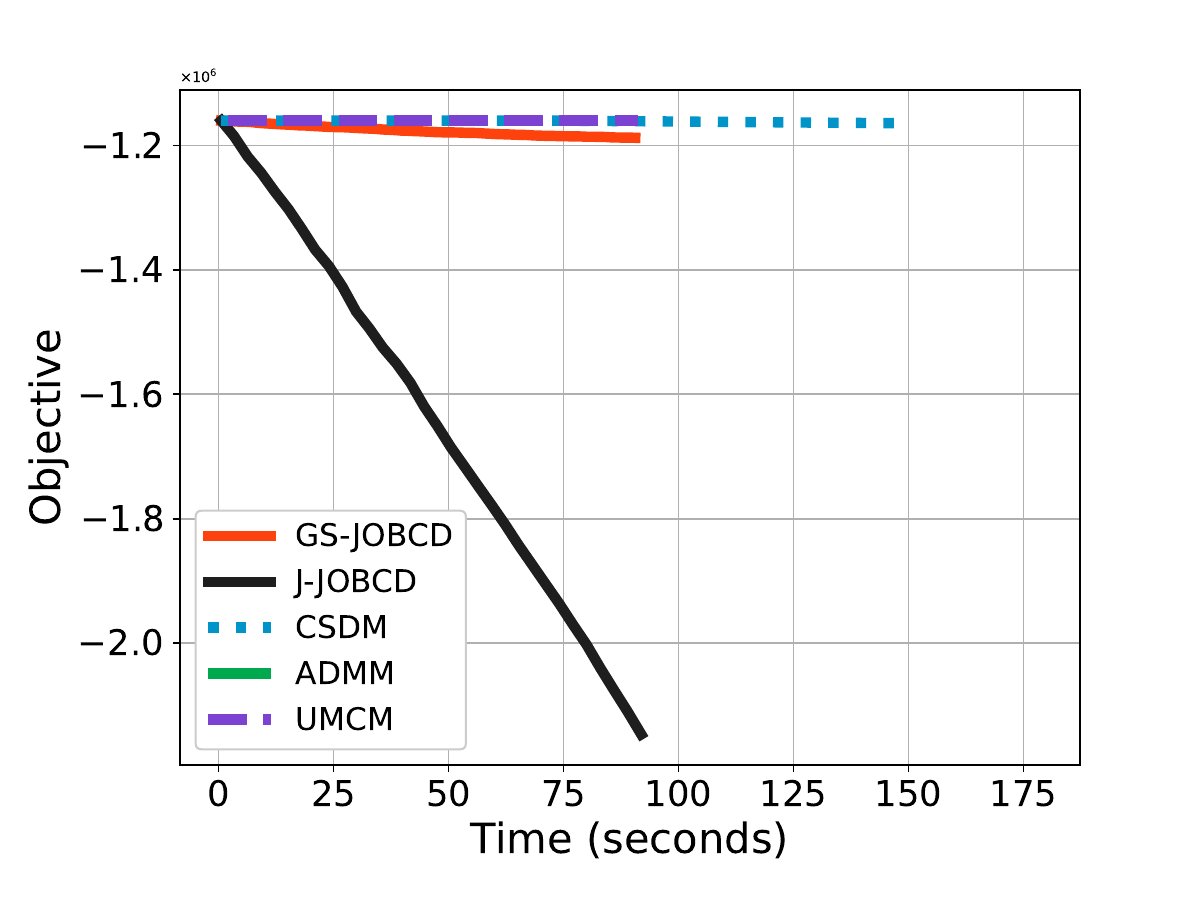} \subcaption{randn(1000-1000-500)}
\end{minipage}\begin{minipage}{0.35\linewidth}
\includegraphics[width=0.9\textwidth]{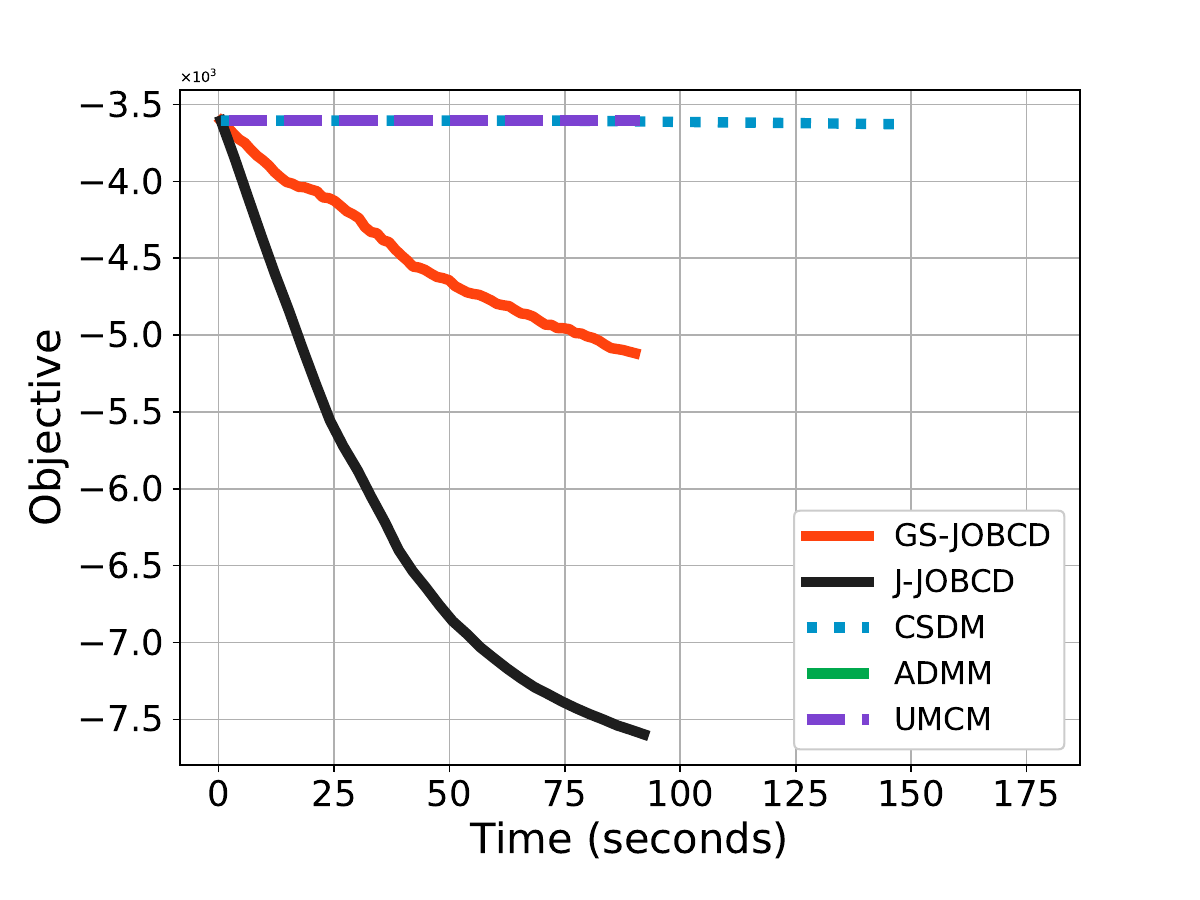} \subcaption{sector(500-1000-500)}
\end{minipage}\begin{minipage}{0.35\linewidth}
\includegraphics[width=0.9\textwidth]{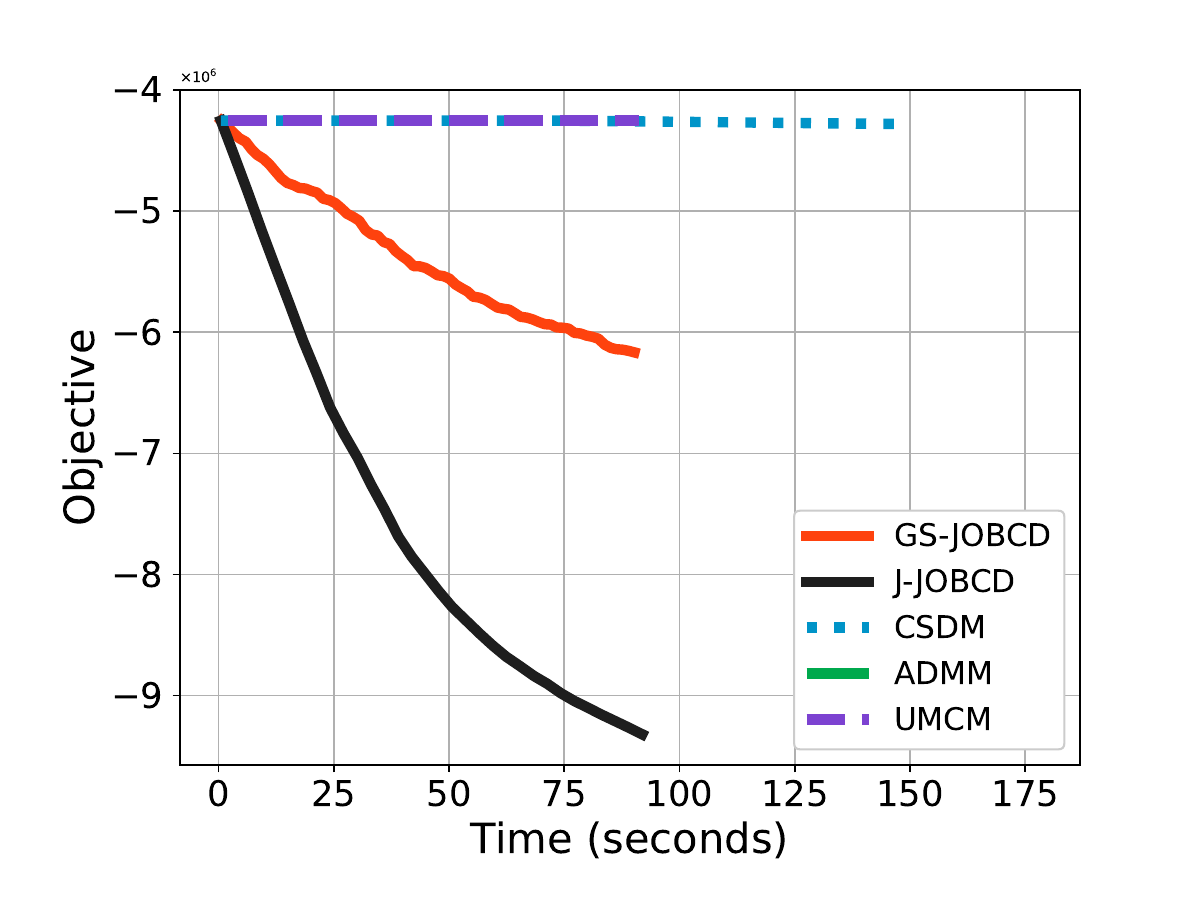} \subcaption{TDT2(1000-1000-500)}
\end{minipage}\\
\begin{minipage}{0.35\linewidth}
\includegraphics[width=0.9\textwidth]{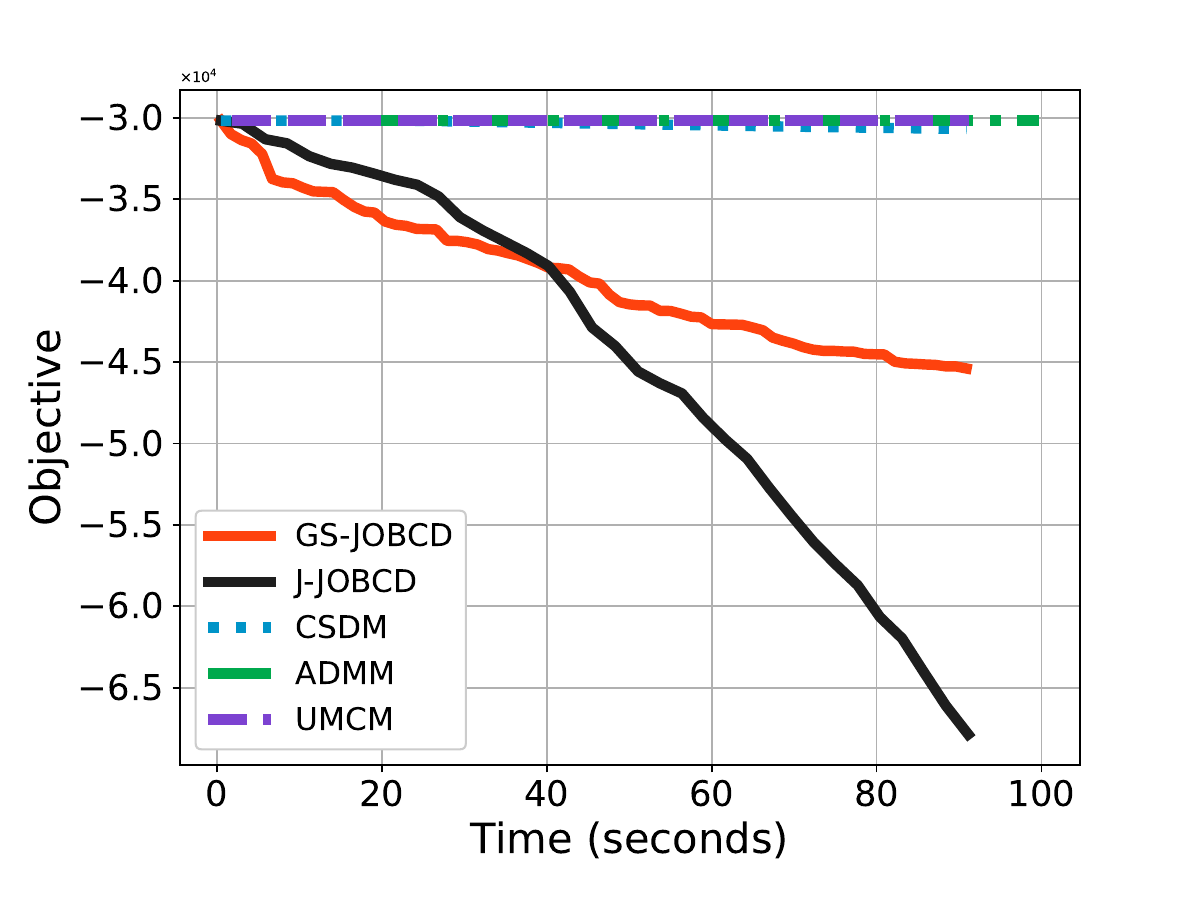} \subcaption{w1a(2470-290-145)}
\end{minipage}
\caption{The convergence curve of the compared methods for solving HEVP with varying $(m,n,p)$.}
\label{experiment:eigen:fig:p1}
\end{figure}

\begin{figure}[htbp]
\centering
\begin{minipage}{0.35\linewidth}
\includegraphics[width=0.9\textwidth]{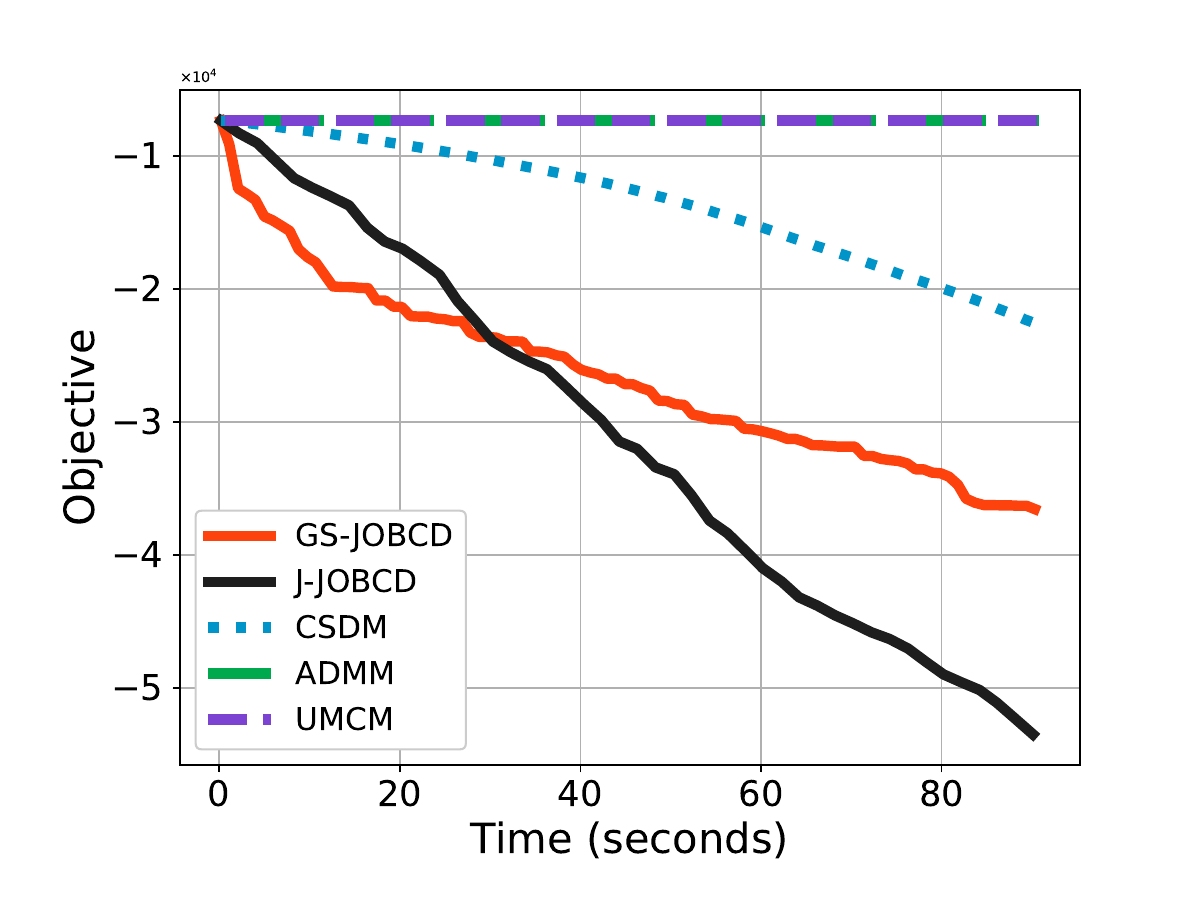} \subcaption{Cifar(1000-100-70)}
\end{minipage}\begin{minipage}{0.35\linewidth}
\includegraphics[width=0.9\textwidth]{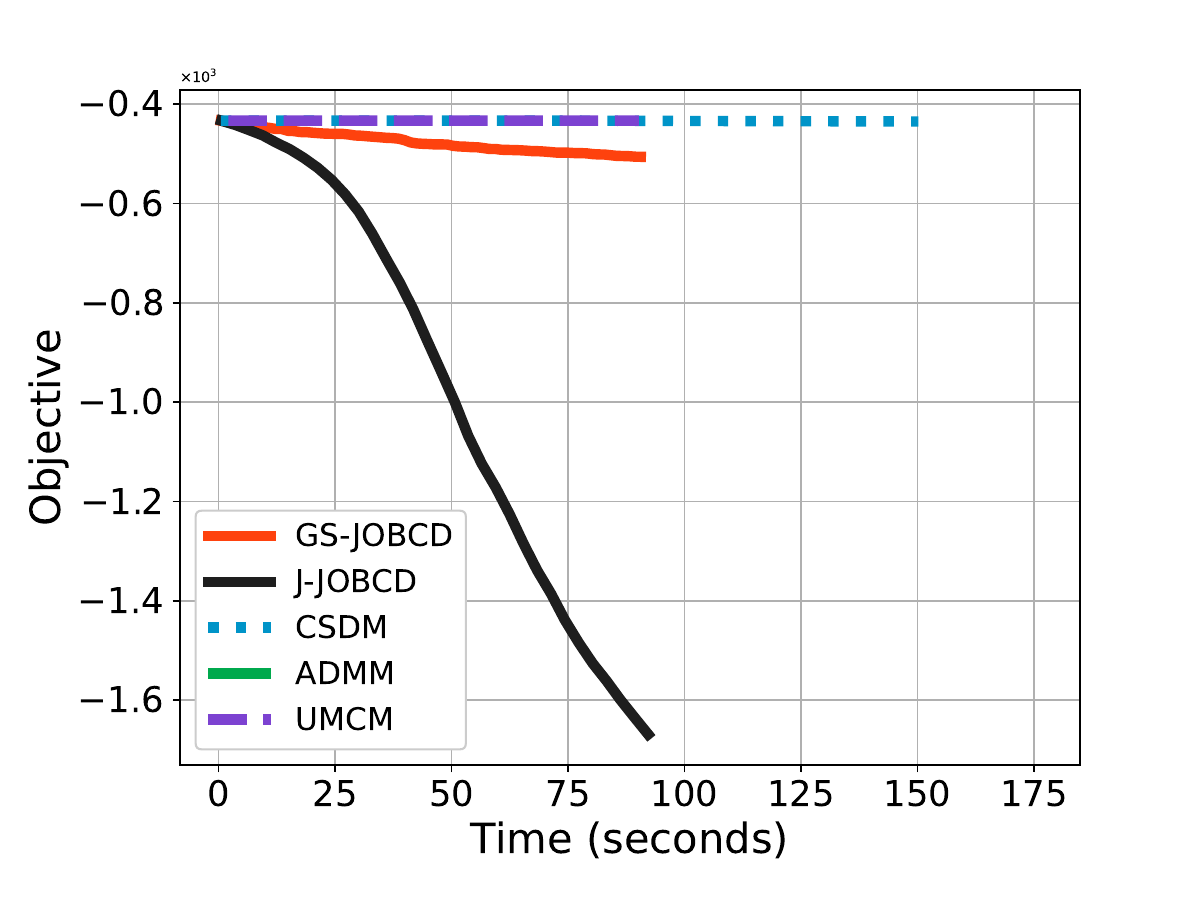} \subcaption{CnnCaltech(2000-1000-700)}
\end{minipage}\begin{minipage}{0.35\linewidth}
\includegraphics[width=0.9\textwidth]{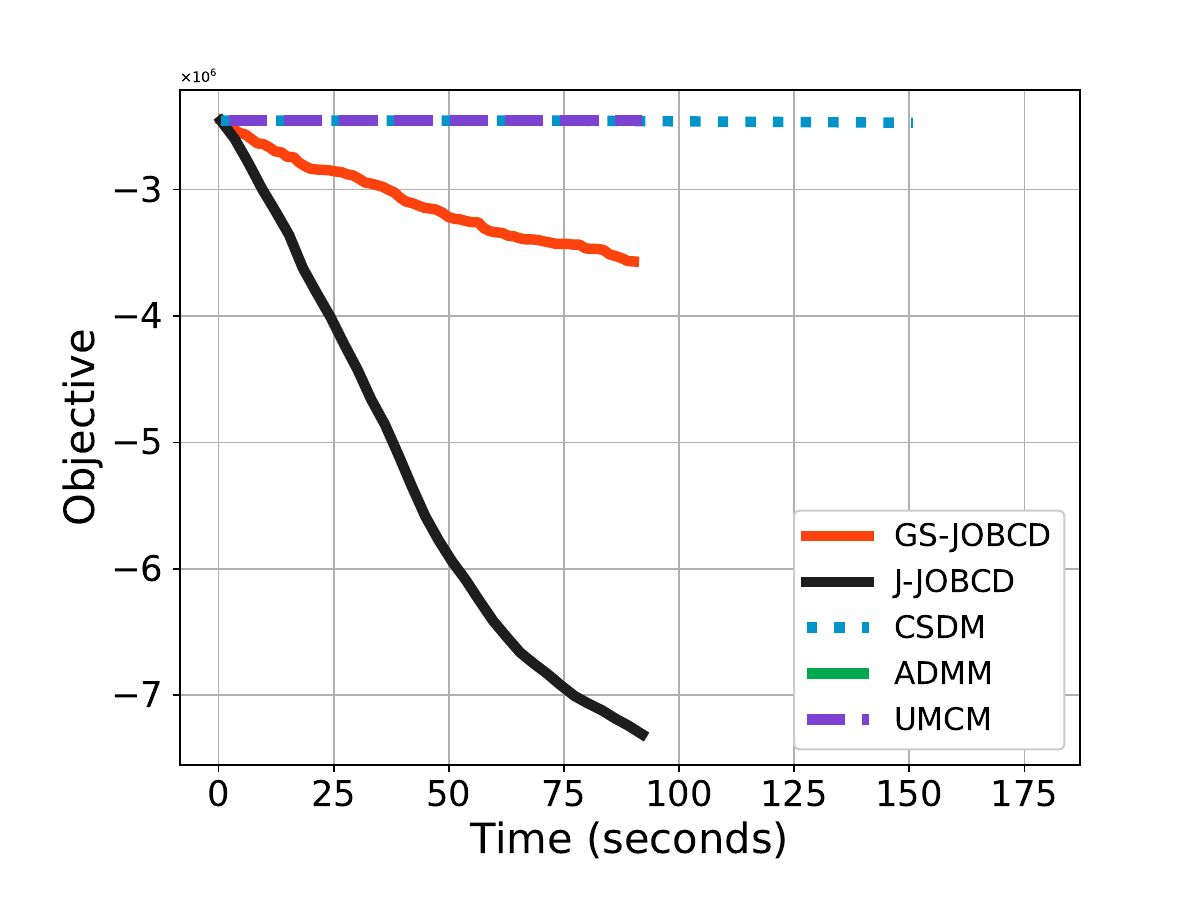} \subcaption{gisette(3000-1000-700)}
\end{minipage}\\\begin{minipage}{0.35\linewidth}
\includegraphics[width=0.9\textwidth]{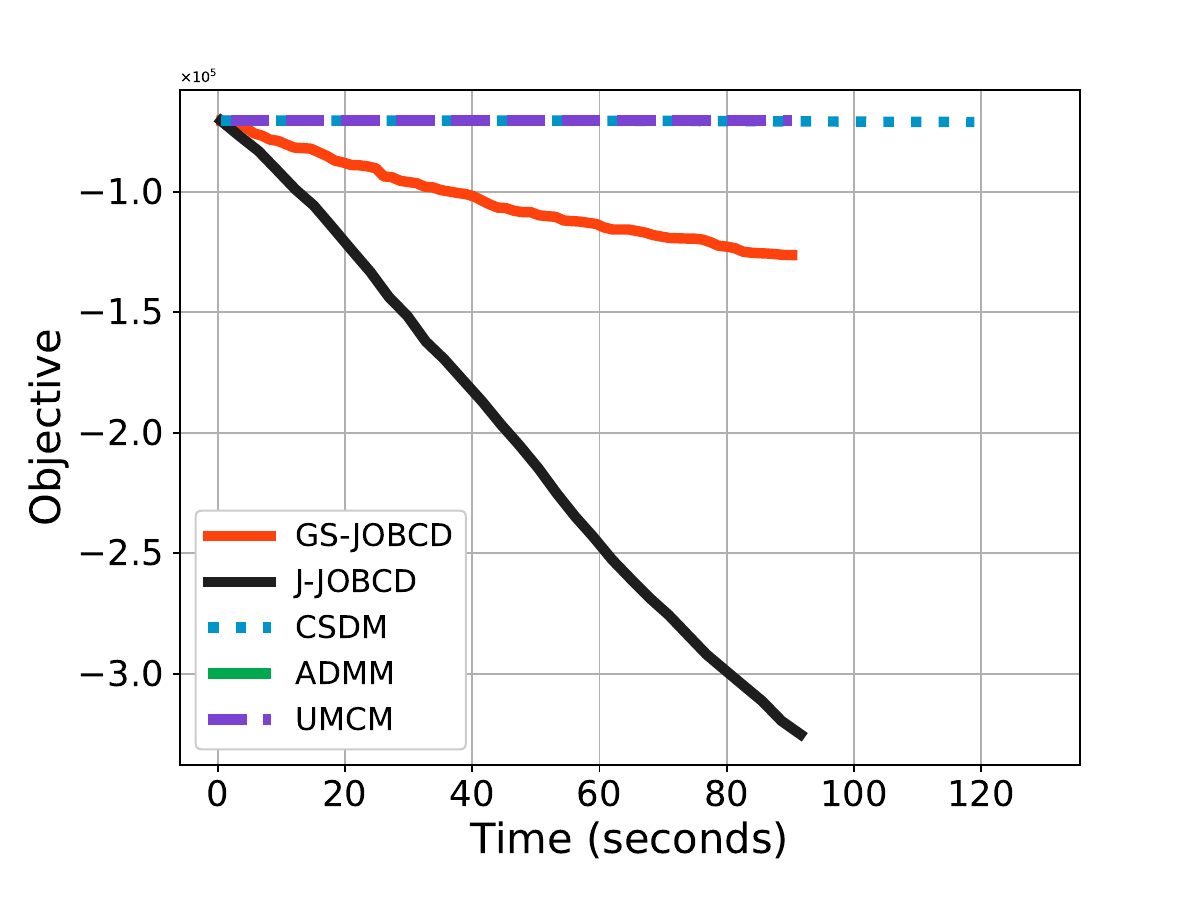} \subcaption{mnist(1000-780-500)}
\end{minipage}\begin{minipage}{0.35\linewidth}
\includegraphics[width=0.9\textwidth]{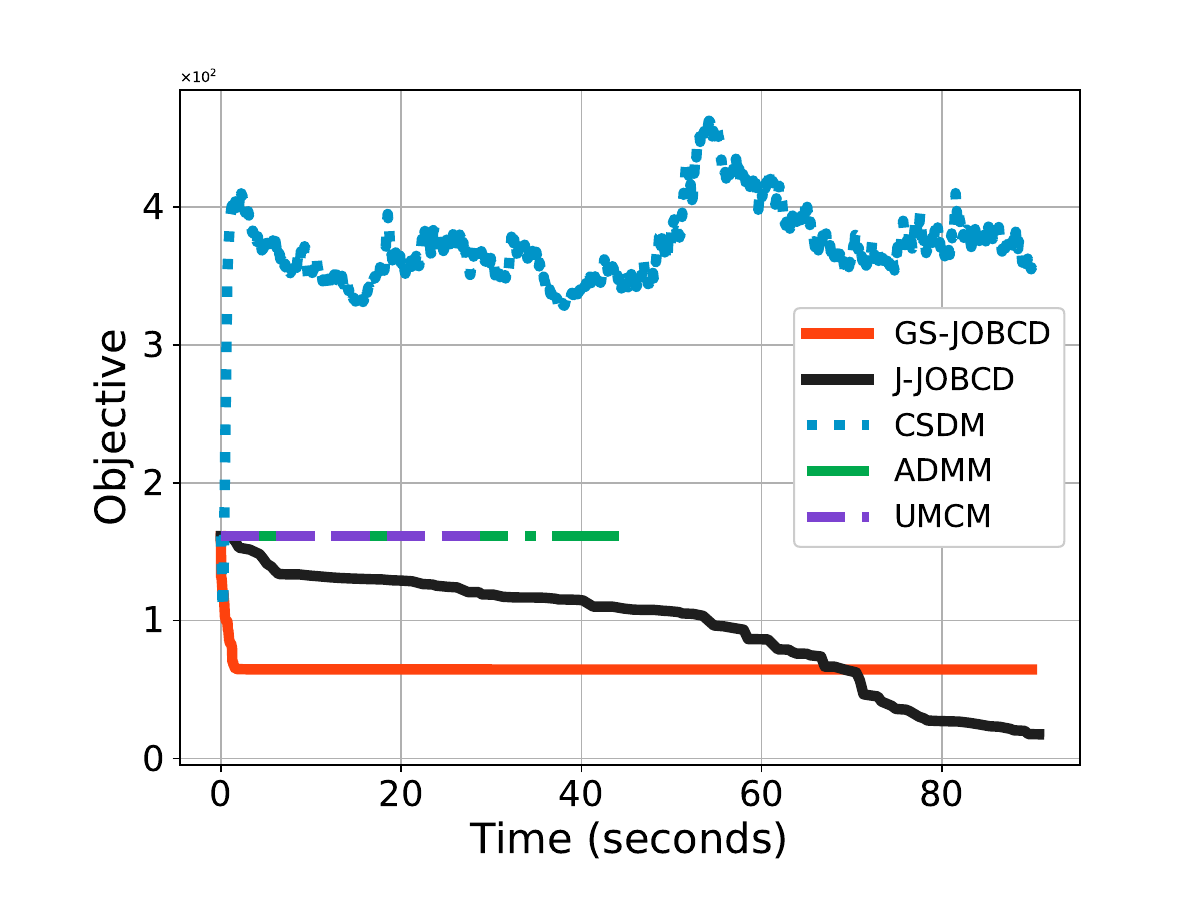} \subcaption{randn(10-10-7)}
\end{minipage}\begin{minipage}{0.35\linewidth}
\includegraphics[width=0.9\textwidth]{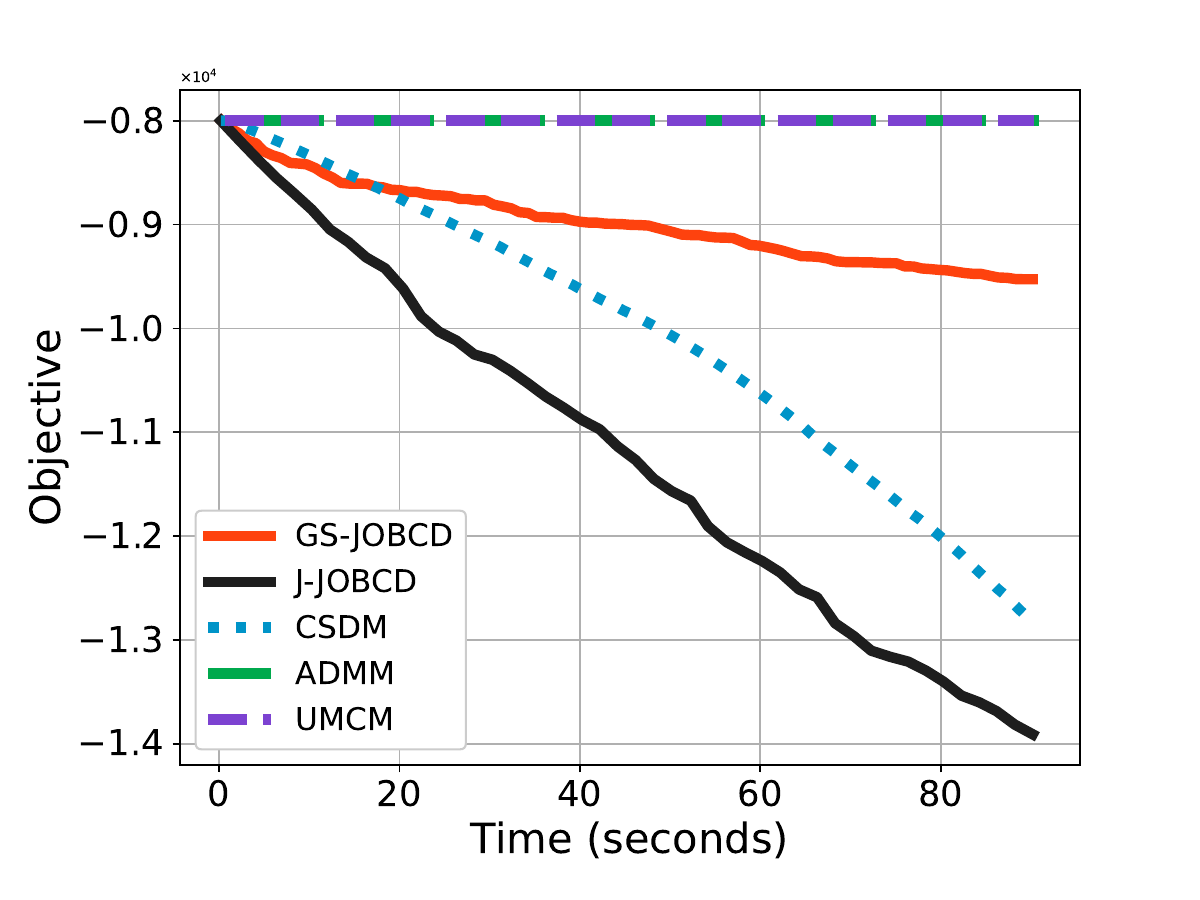} \subcaption{randn(100-100-70)}
\end{minipage}\\\begin{minipage}{0.35\linewidth}
\includegraphics[width=0.9\textwidth]{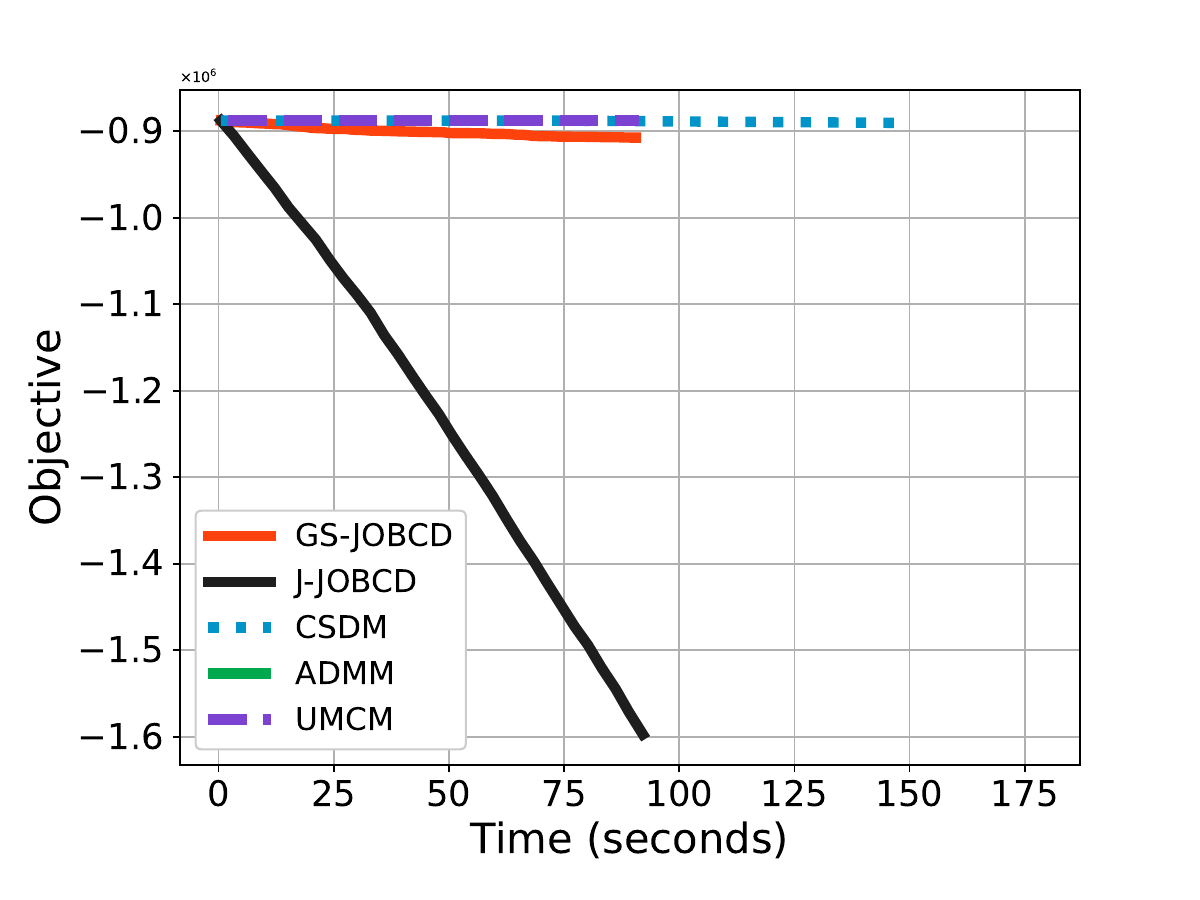} \subcaption{randn(1000-1000-700)}
\end{minipage}\begin{minipage}{0.35\linewidth}
\includegraphics[width=0.9\textwidth]{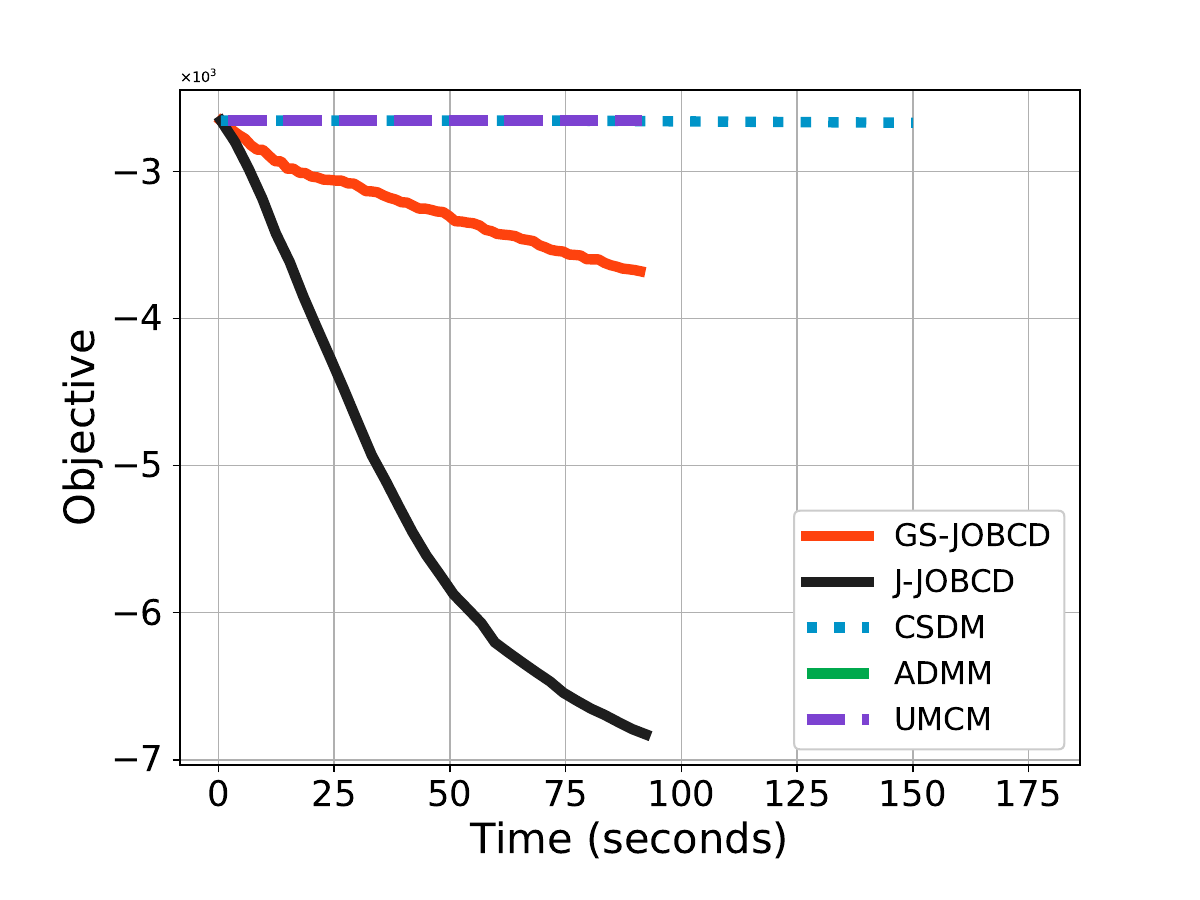} \subcaption{sector(500-1000-700)}
\end{minipage}\begin{minipage}{0.35\linewidth}
\includegraphics[width=0.9\textwidth]{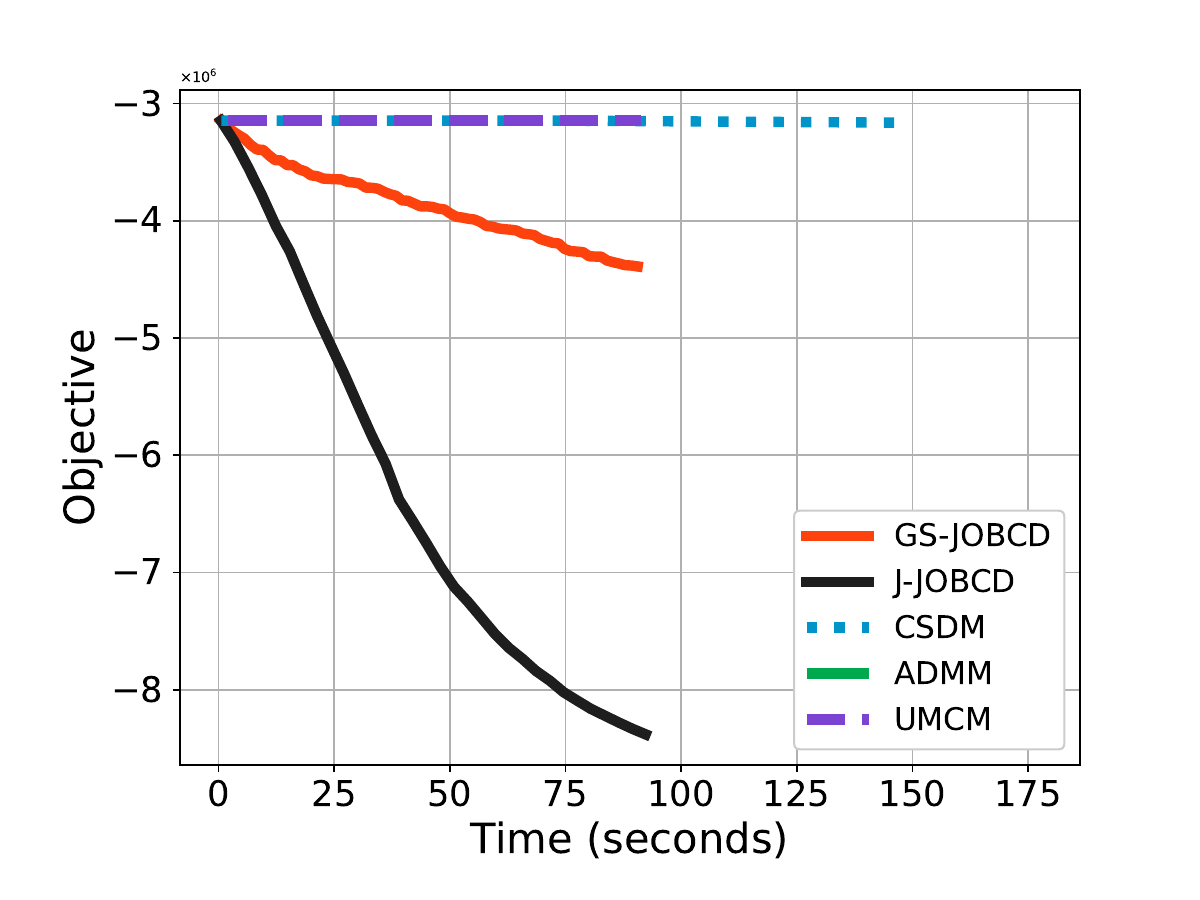} \subcaption{TDT2(1000-1000-700)}
\end{minipage}\\
\begin{minipage}{0.35\linewidth}
\includegraphics[width=0.9\textwidth]{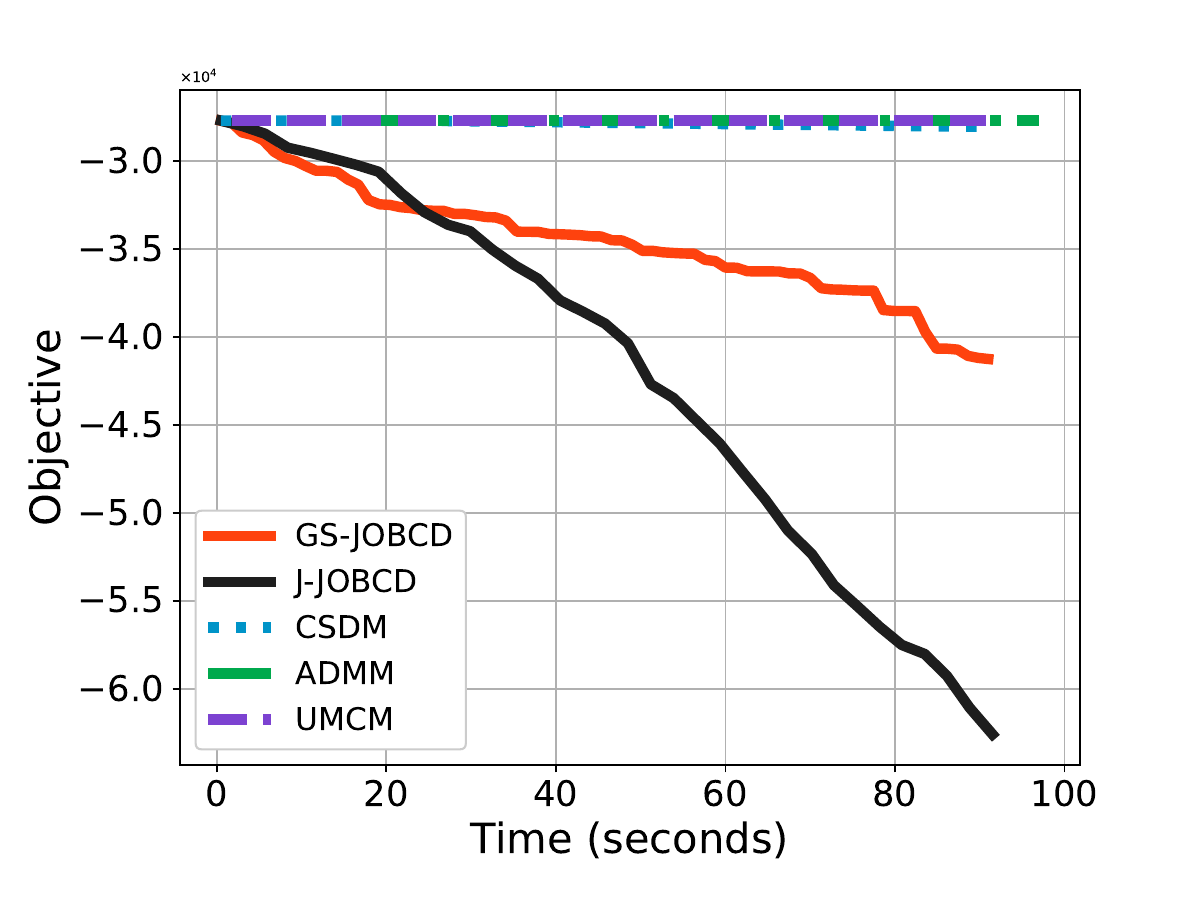} \subcaption{w1a(2470-290-200)}
\end{minipage}
\caption{The convergence curve of the compared methods for solving HEVP with varying $(m,n,p)$.}
\label{experiment:eigen:fig:p2}
\end{figure}

\begin{figure}[htbp]
\centering
\begin{minipage}{0.35\linewidth}
\includegraphics[width=0.9\textwidth]{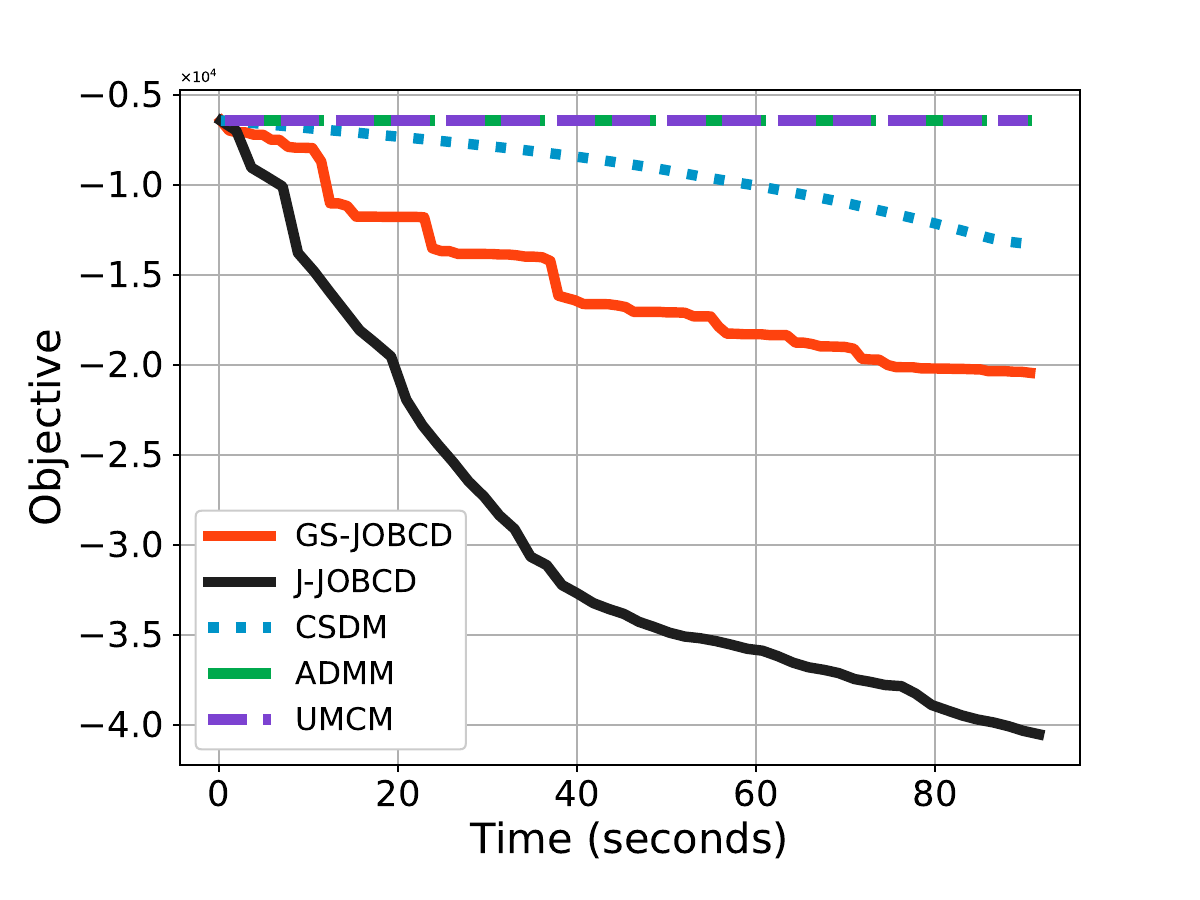} \subcaption{Cifar(1000-100-90)}
\end{minipage}\begin{minipage}{0.35\linewidth}
\includegraphics[width=0.9\textwidth]{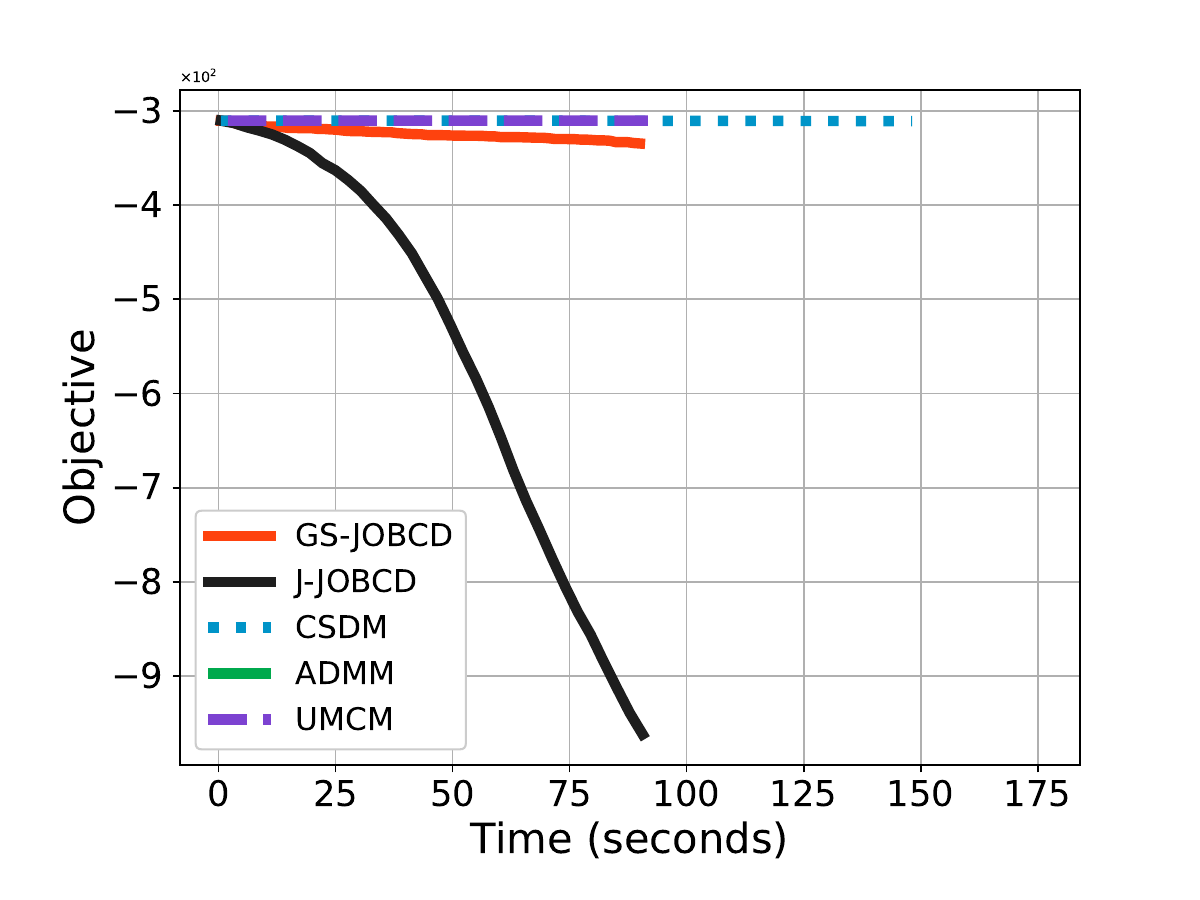} \subcaption{CnnCaltech(2000-1000-900)}
\end{minipage}\begin{minipage}{0.35\linewidth}
\includegraphics[width=0.9\textwidth]{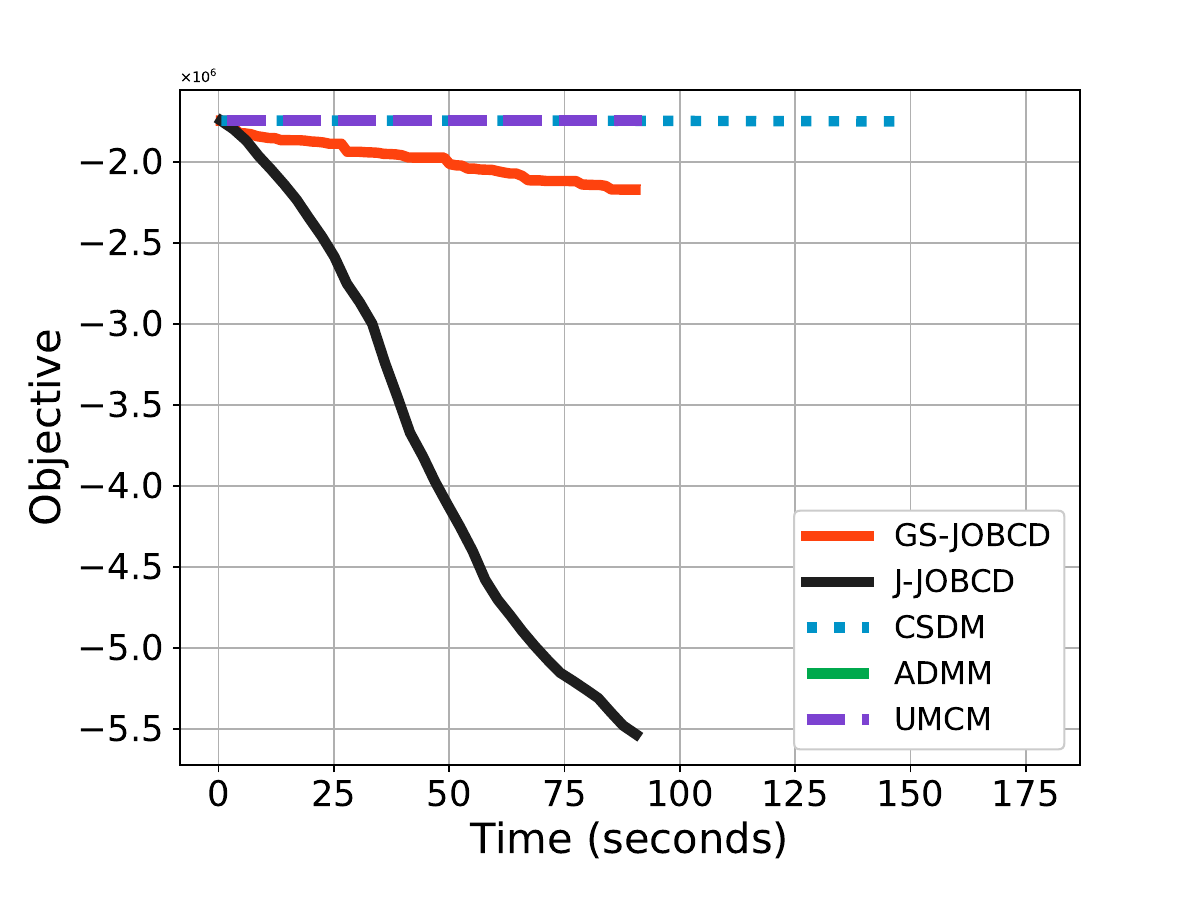} \subcaption{gisette(3000-1000-900)}
\end{minipage}\\\begin{minipage}{0.35\linewidth}
\includegraphics[width=0.9\textwidth]{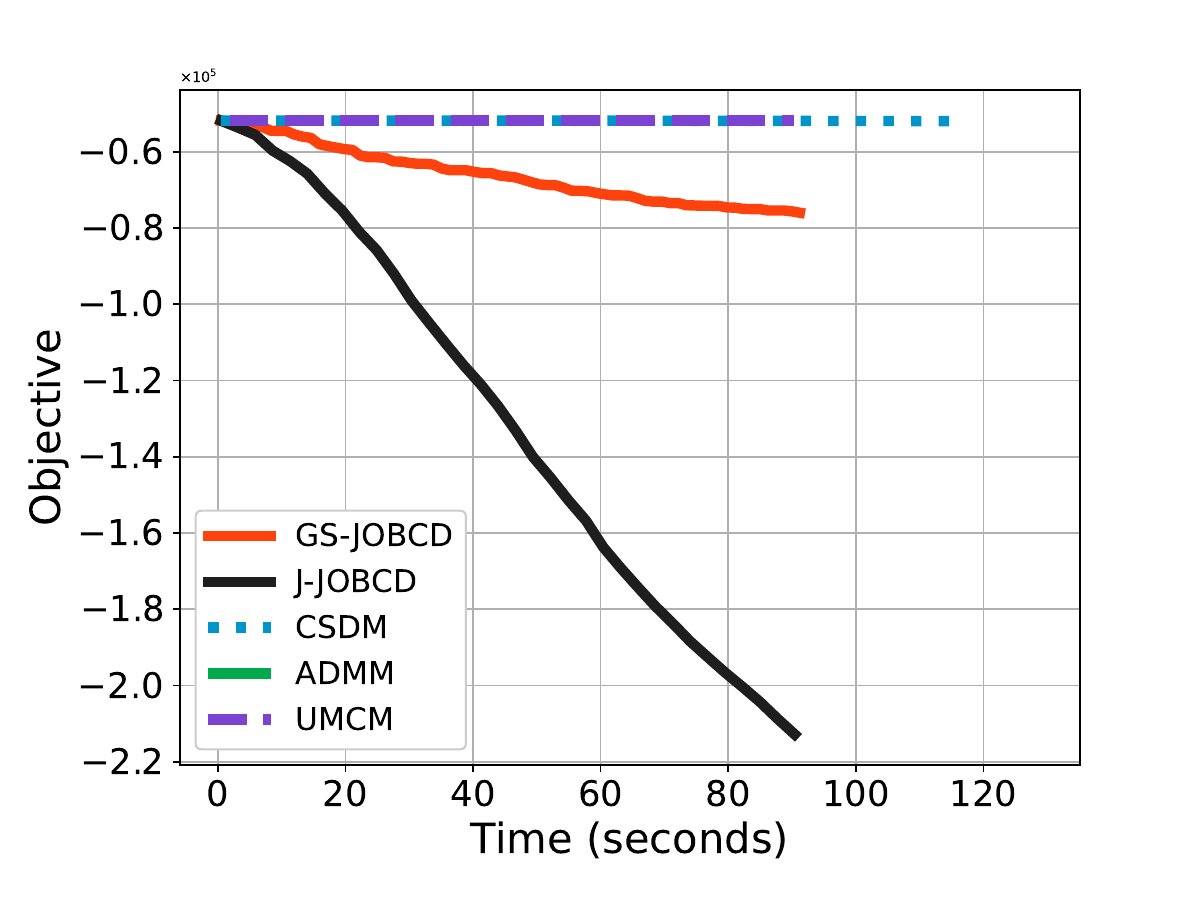} \subcaption{mnist(1000-780-650)}
\end{minipage}\begin{minipage}{0.35\linewidth}
\includegraphics[width=0.9\textwidth]{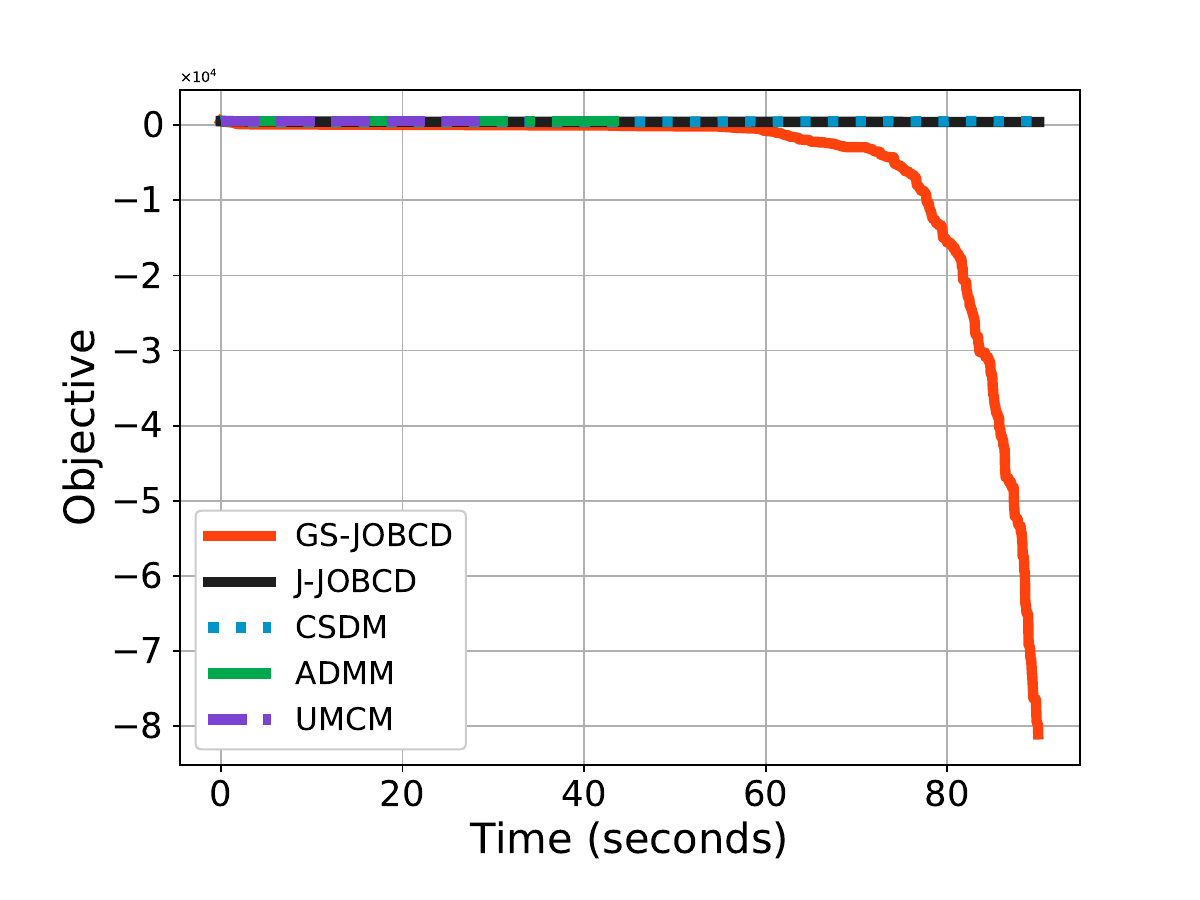} \subcaption{randn(10-10-9)}
\end{minipage}\begin{minipage}{0.35\linewidth}
\includegraphics[width=0.9\textwidth]{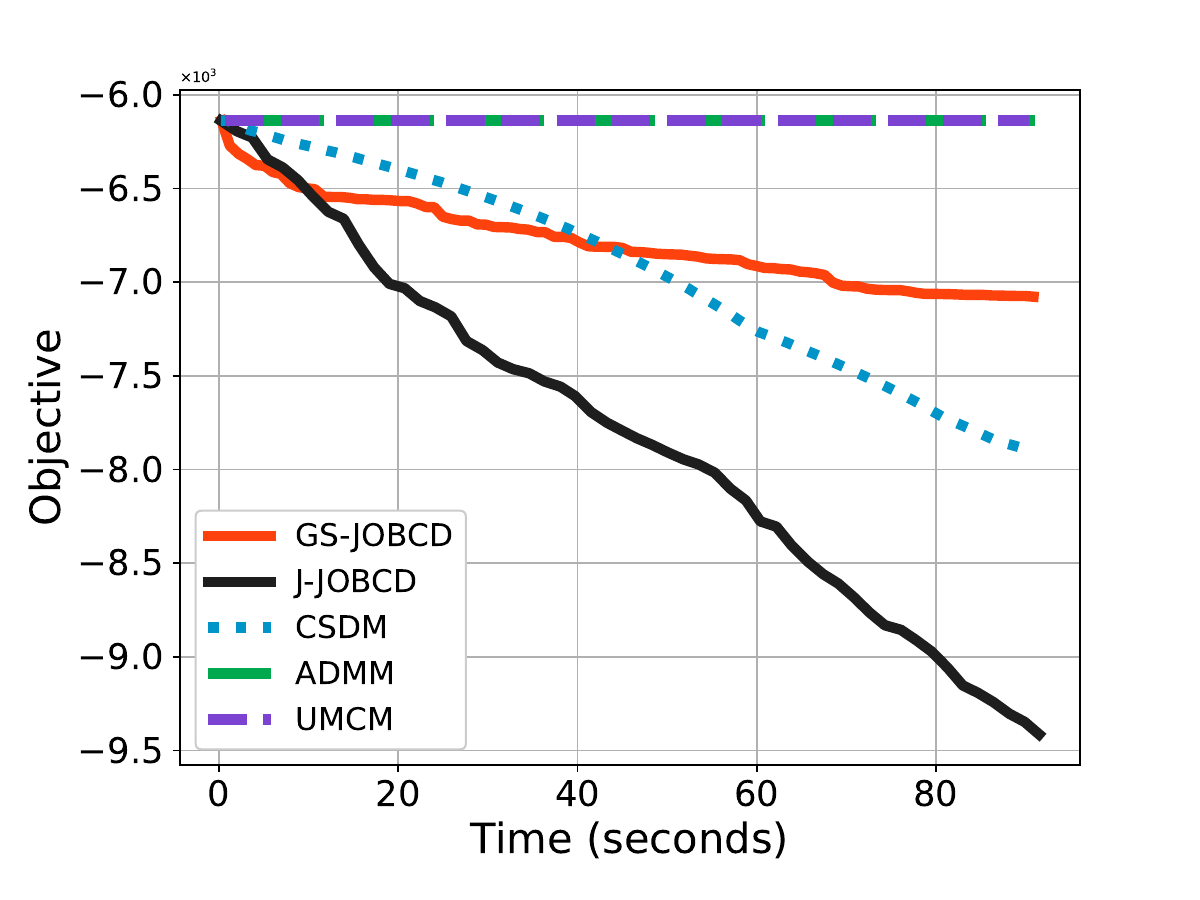} \subcaption{randn(100-100-90)}
\end{minipage}\\\begin{minipage}{0.35\linewidth}
\includegraphics[width=0.9\textwidth]{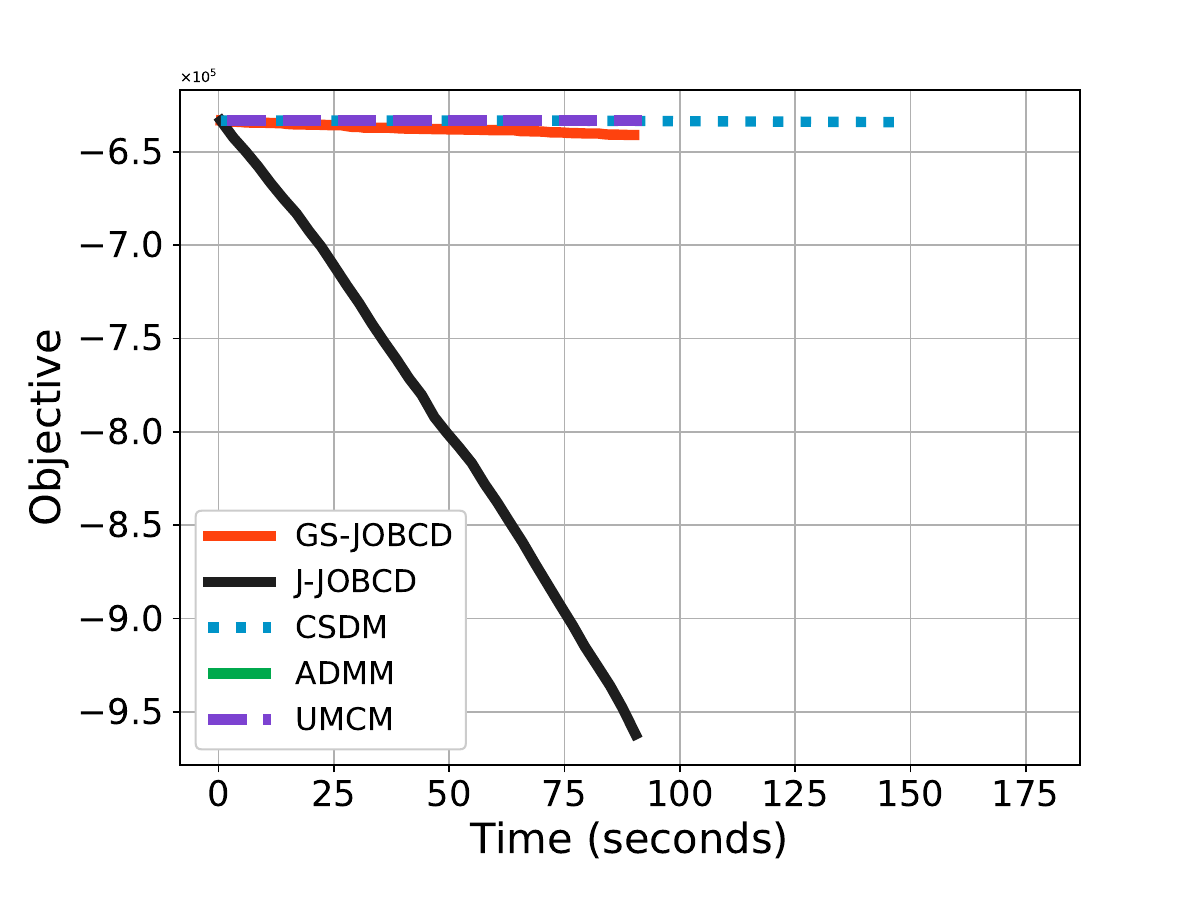} \subcaption{randn(1000-1000-900)}
\end{minipage}\begin{minipage}{0.35\linewidth}
\includegraphics[width=0.9\textwidth]{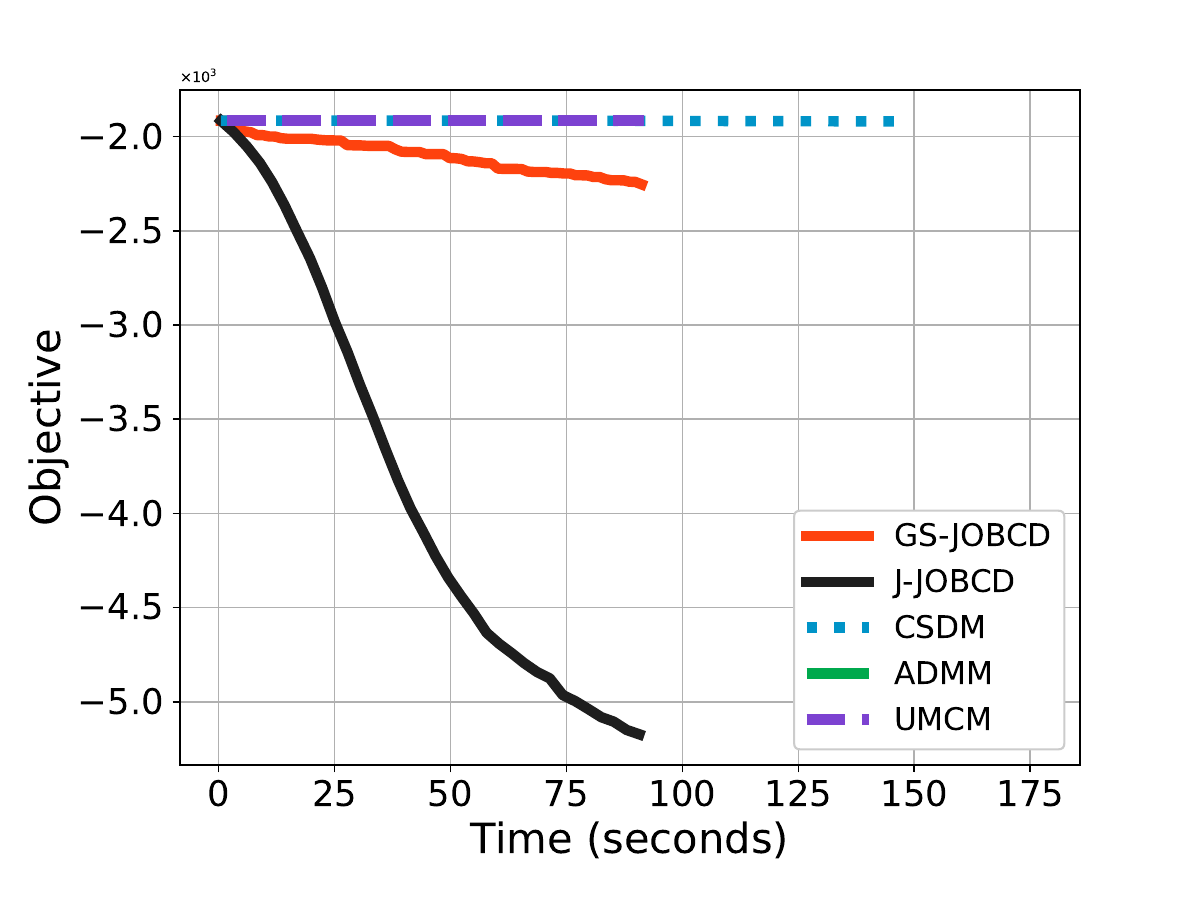} \subcaption{sector(500-1000-900)}
\end{minipage}\begin{minipage}{0.35\linewidth}
\includegraphics[width=0.9\textwidth]{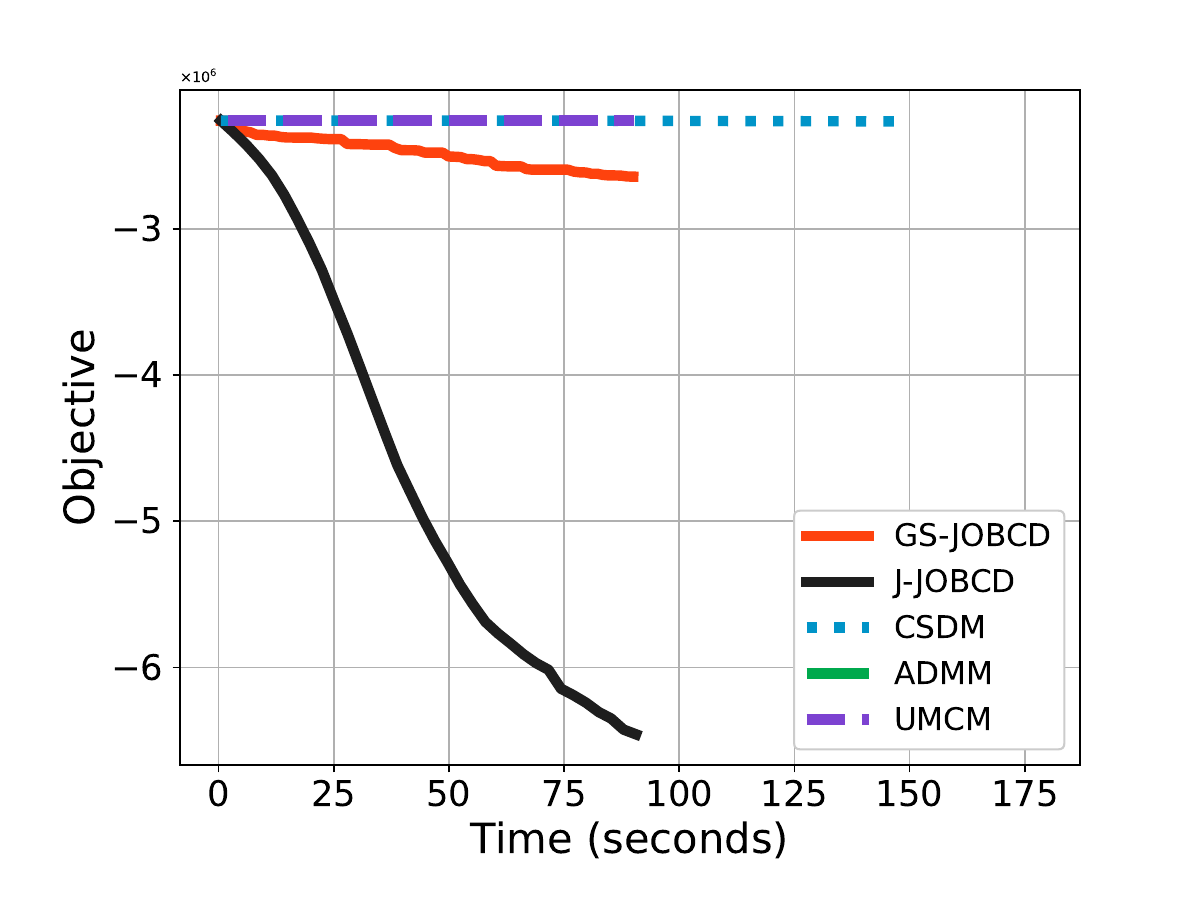} \subcaption{TDT2(1000-1000-900)}
\end{minipage}\\
\begin{minipage}{0.35\linewidth}
\includegraphics[width=0.9\textwidth]{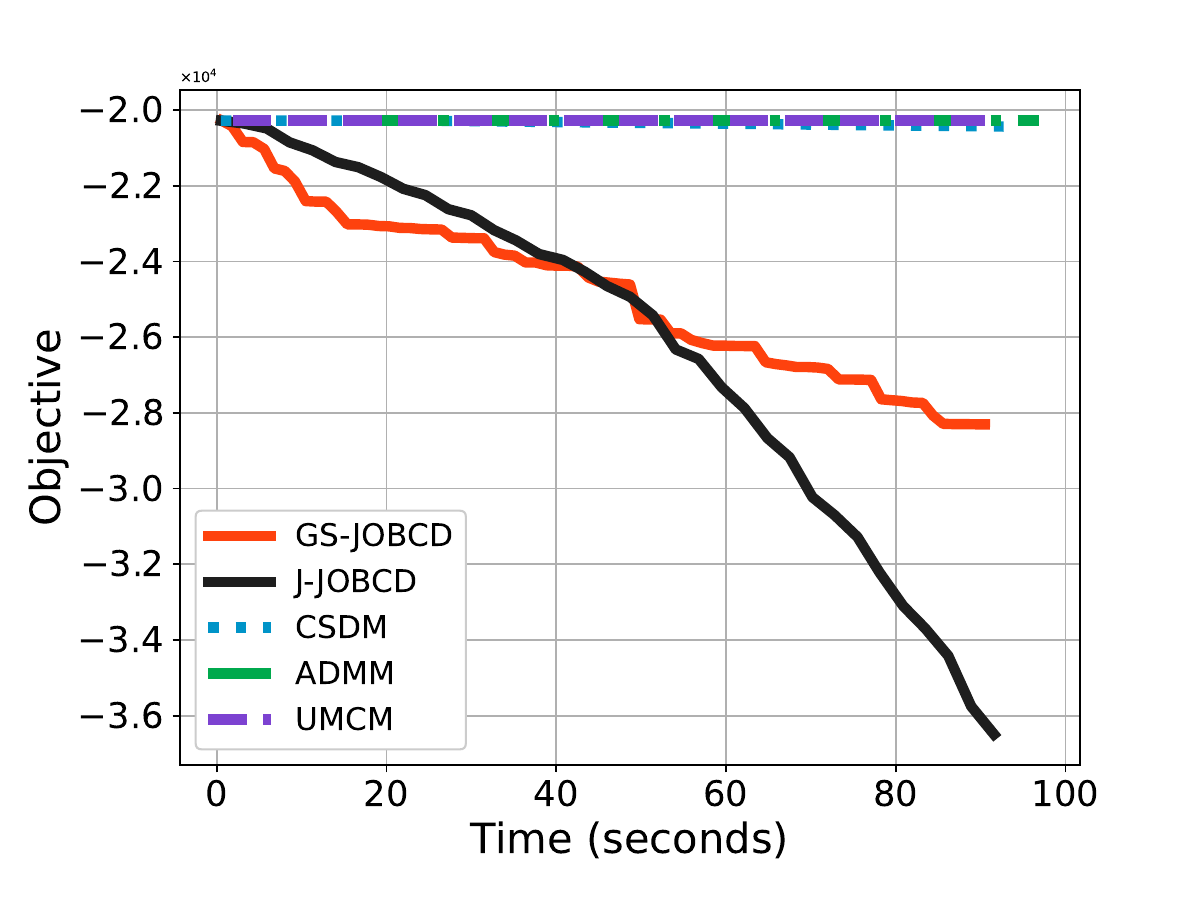} \subcaption{w1a(2470-290-250)}
\end{minipage}
\caption{The convergence curve of the compared methods for solving HEVP with varying $(m,n,p)$.}
\label{experiment:eigen:fig:p3}
\end{figure}

\subsubsection{Hyperbolic Structural Probe Problem}

\begin{table}[H]
\centering
\lowcaption{The convergence curve of the compared methods for solving HSPP. (+) indicates that after the convergence of the \textbf{CSDM}, utilizing the \textbf{J-OBCD} for optimization markedly enhances the objective value. The $1^{s t}, 2^{\text {nd }}$, and $3^{\text {rd }}$ best results are colored with {\color[HTML]{FF0000}red}, {\color[HTML]{70AD47}green} and {\color[HTML]{5B9BD5}blue}, respectively. $(n,p)$ represents the dimension and p-value of the \textbf{J} orthogonal matrix (square matrix). The value in $()$ stands for $\sum_{ij}^n |\mathbf{X}^{\top}\mathbf{J}\mathbf{X}-\mathbf{J}|_{ij}.$}
\label{experiment:metric:table}
\scalebox{0.539}{
}
\end{table}

\begin{figure}[H]
\centering
\begin{minipage}{0.35\linewidth}
\includegraphics[width=0.9\textwidth]{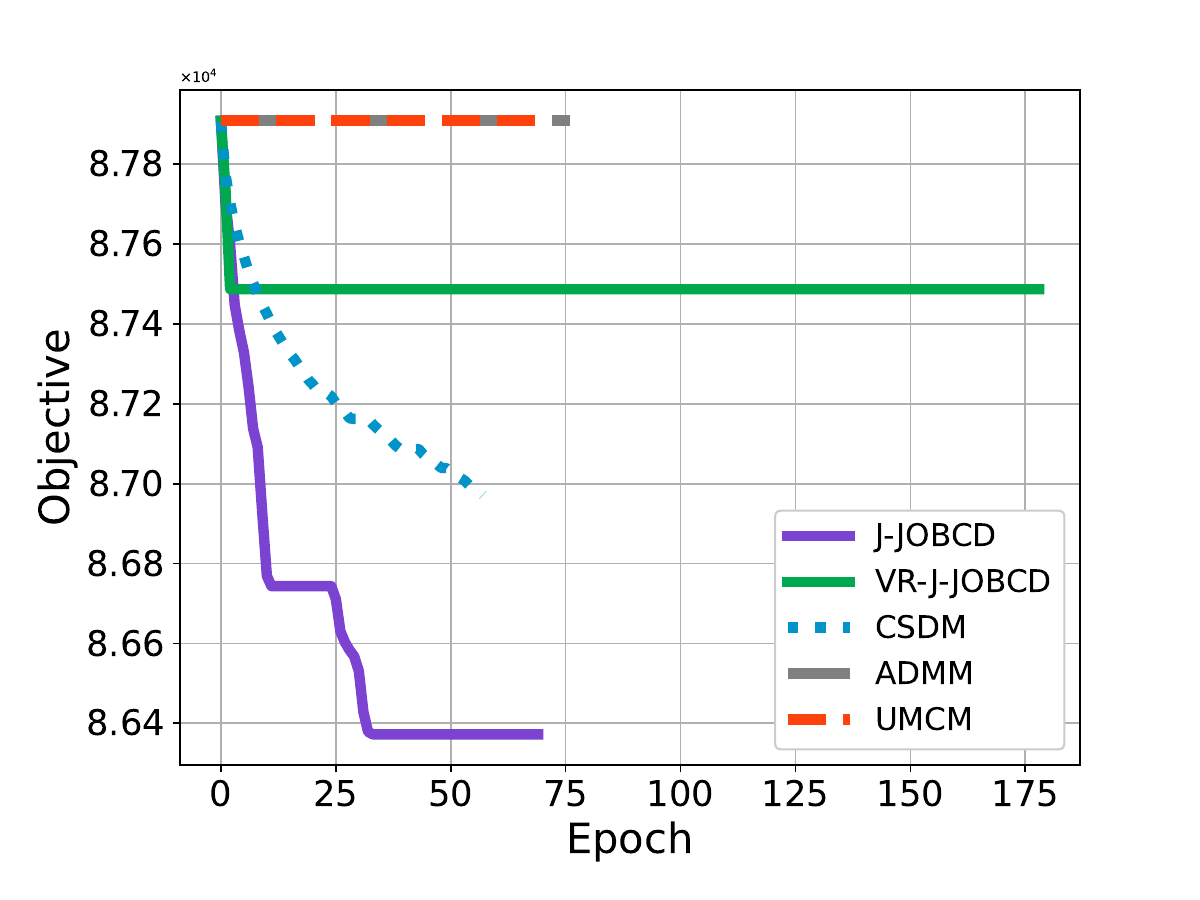} \subcaption{20News(9423-50-45)}
\end{minipage}\begin{minipage}{0.35\linewidth}
\includegraphics[width=0.9\textwidth]{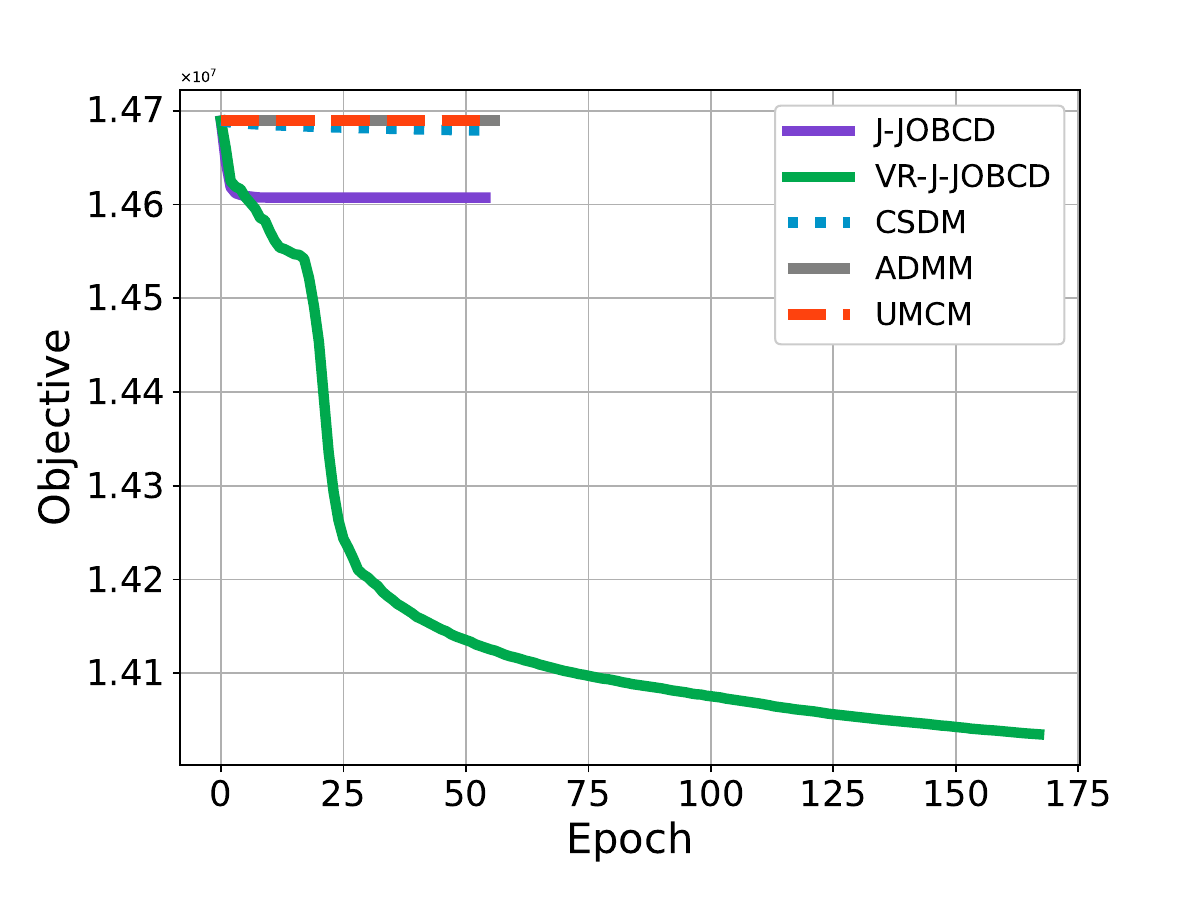} \subcaption{Cifar(10000-50-45)}
\end{minipage}\begin{minipage}{0.35\linewidth}
\includegraphics[width=0.9\textwidth]{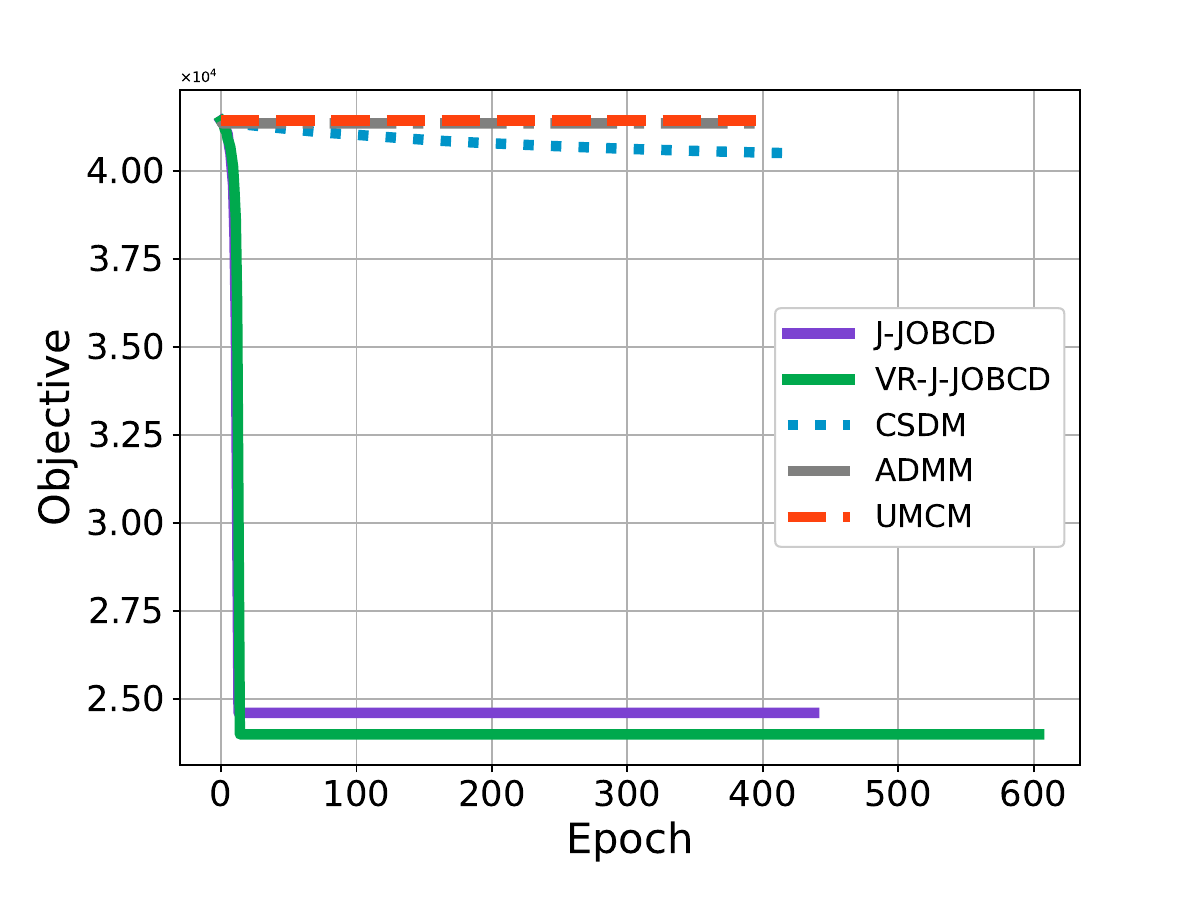} \subcaption{CnnCaltech(3000-96-85)}
\end{minipage}\\\begin{minipage}{0.35\linewidth}
\includegraphics[width=0.9\textwidth]{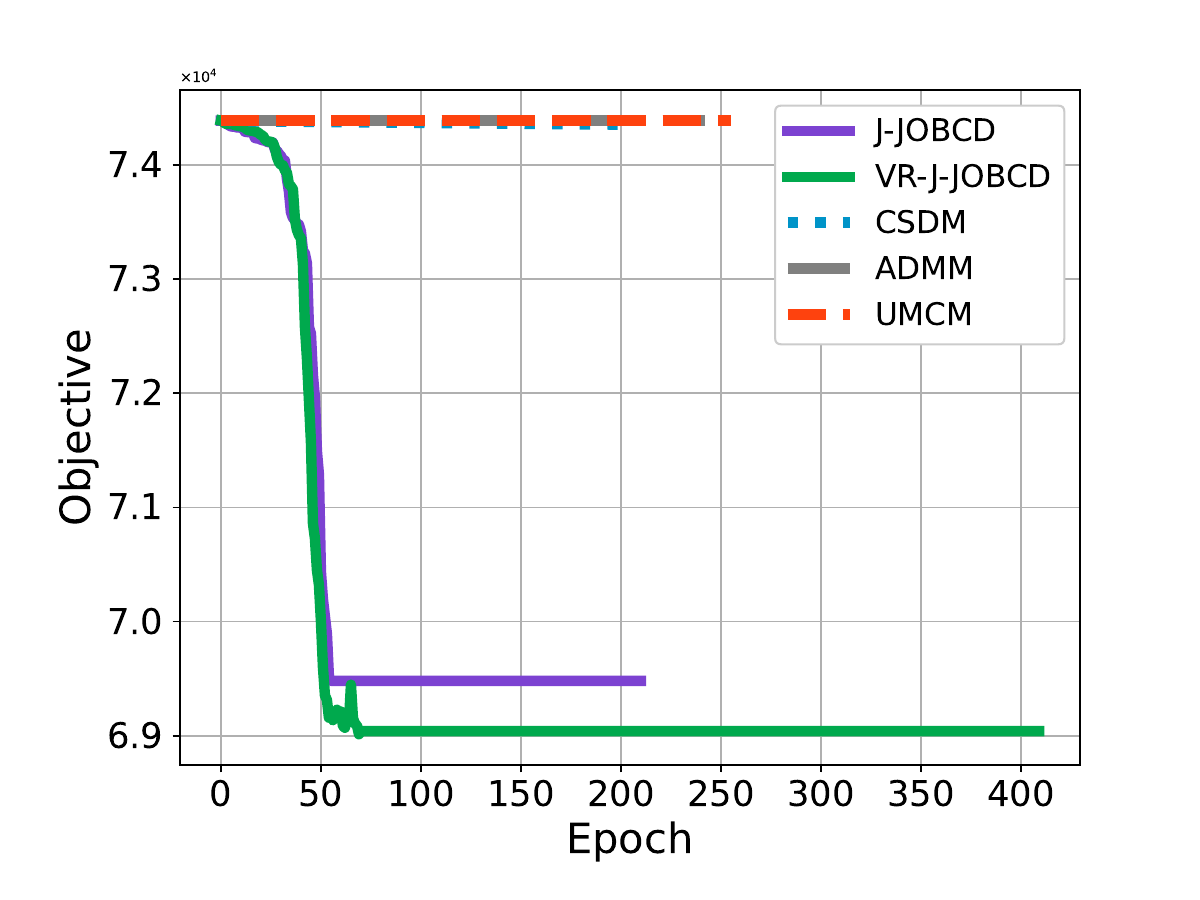} \subcaption{E2006(5000-100-90)}
\end{minipage}\begin{minipage}{0.35\linewidth}
\includegraphics[width=0.9\textwidth]{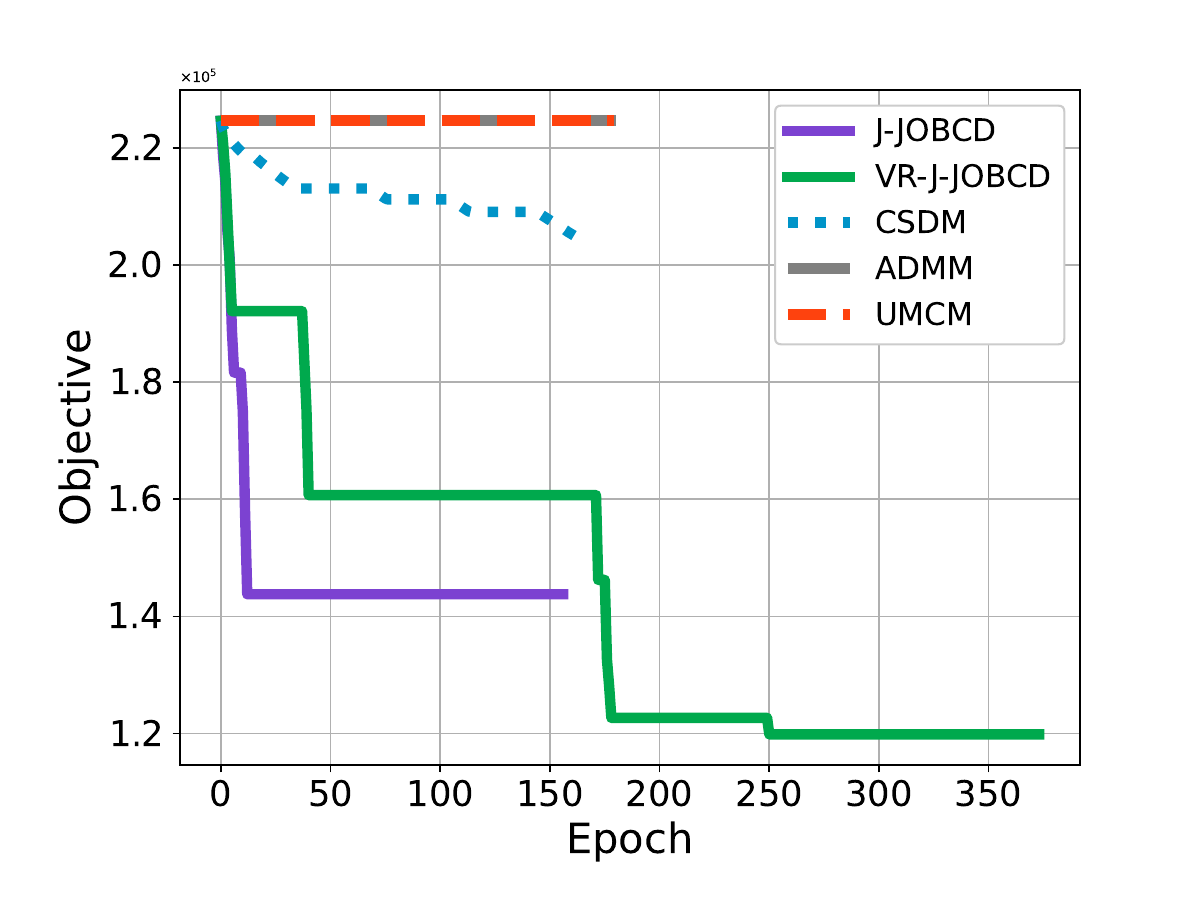} \subcaption{Gisette(6000-50-45)}
\end{minipage}\begin{minipage}{0.35\linewidth}
\includegraphics[width=0.9\textwidth]{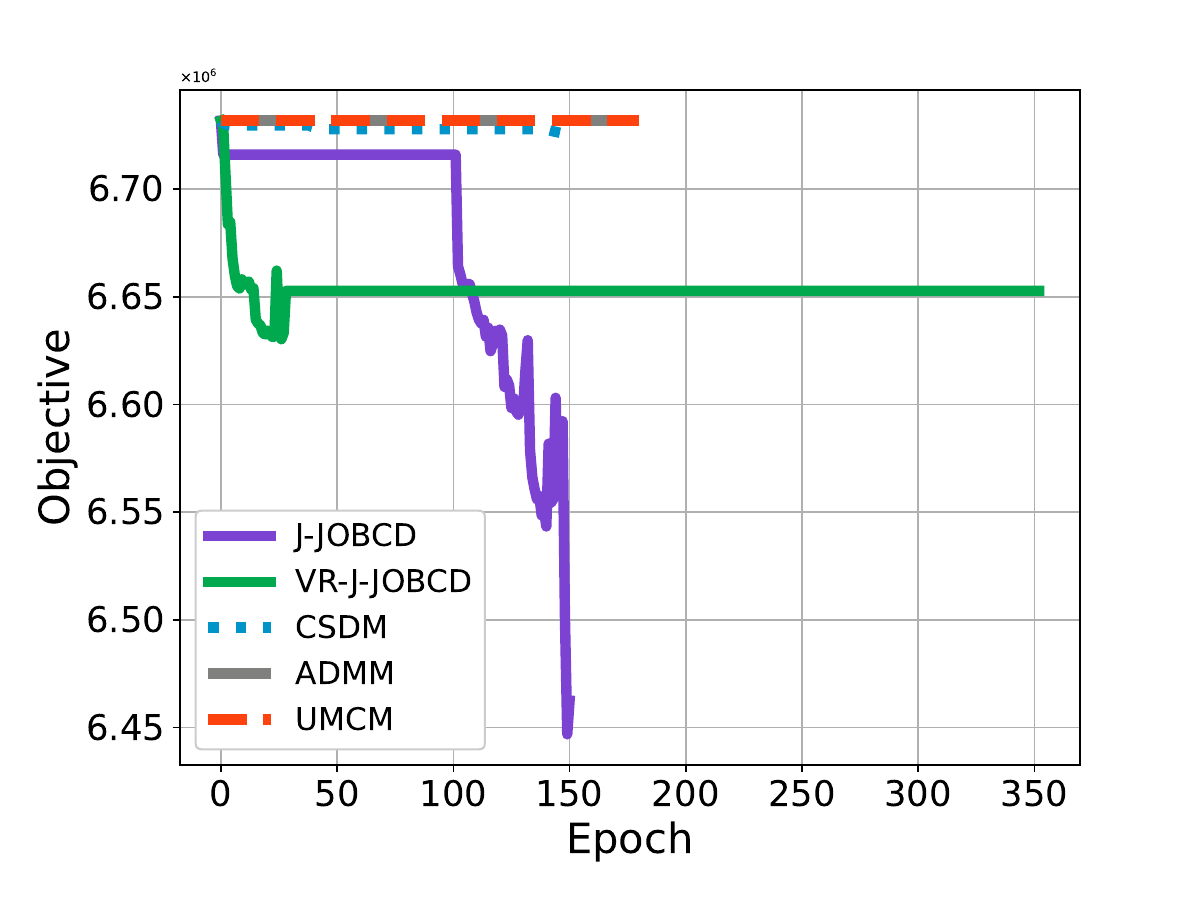} \subcaption{Mnist(6000-92-85)}
\end{minipage}\\\begin{minipage}{0.35\linewidth}
\includegraphics[width=0.9\textwidth]{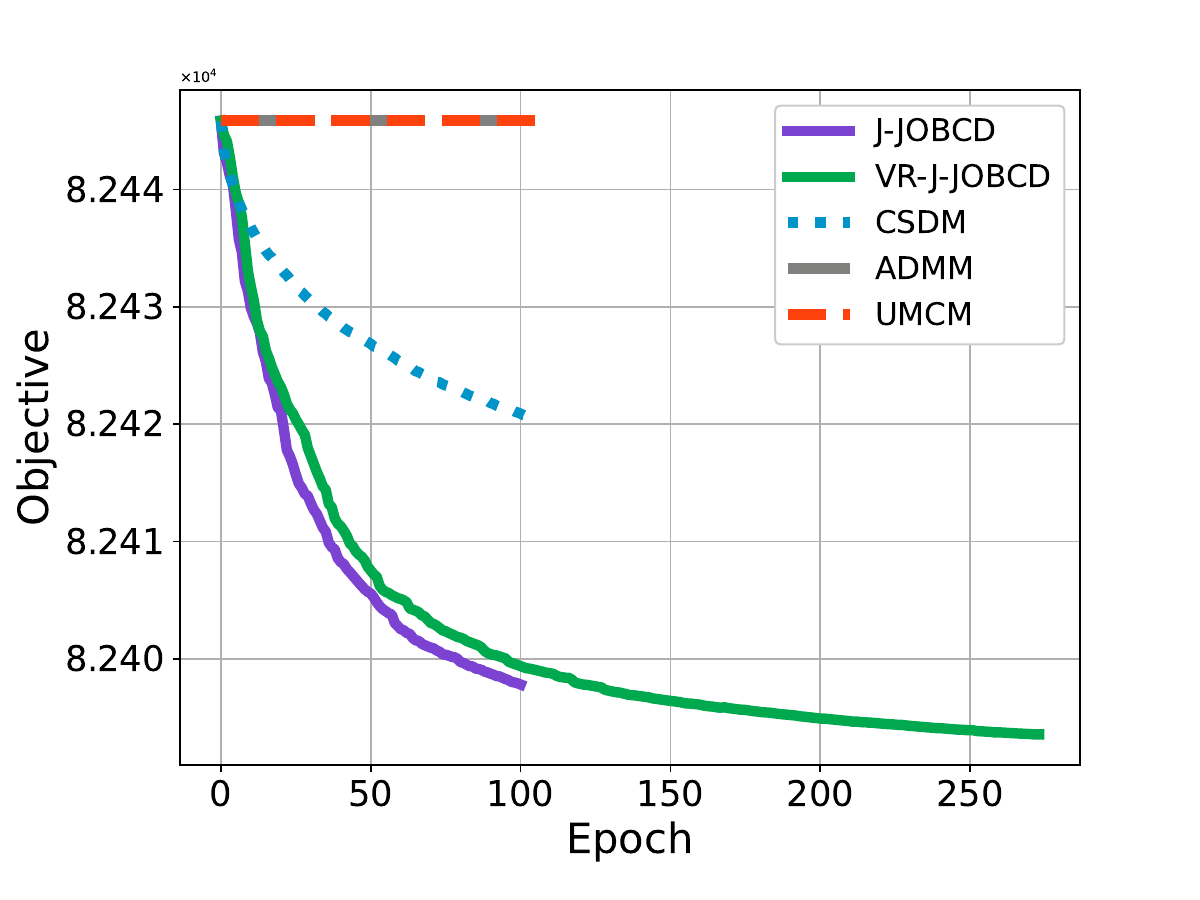} \subcaption{News20(7967-50-45)}
\end{minipage}\begin{minipage}{0.35\linewidth}
\includegraphics[width=0.9\textwidth]{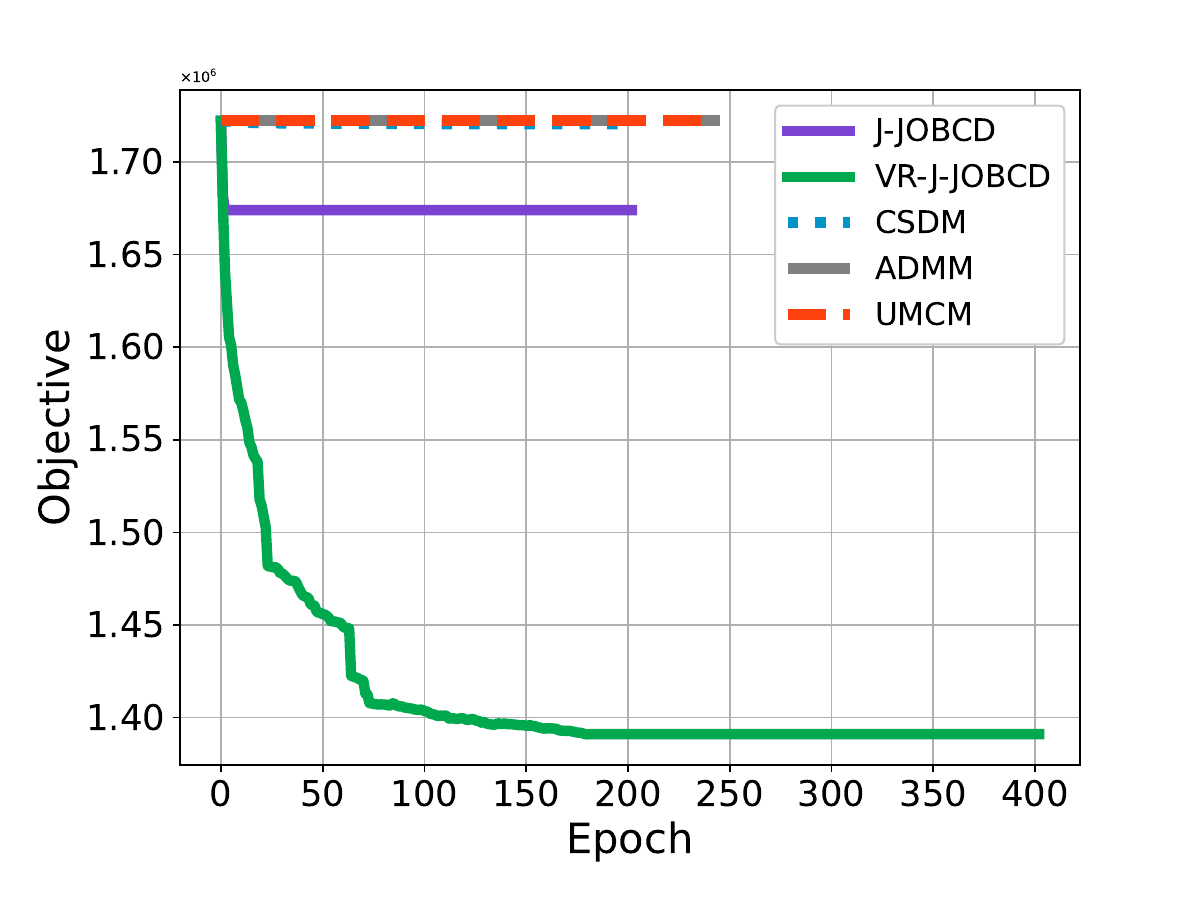} \subcaption{randn(5000-100-85)}
\end{minipage}\begin{minipage}{0.35\linewidth}
\includegraphics[width=0.9\textwidth]{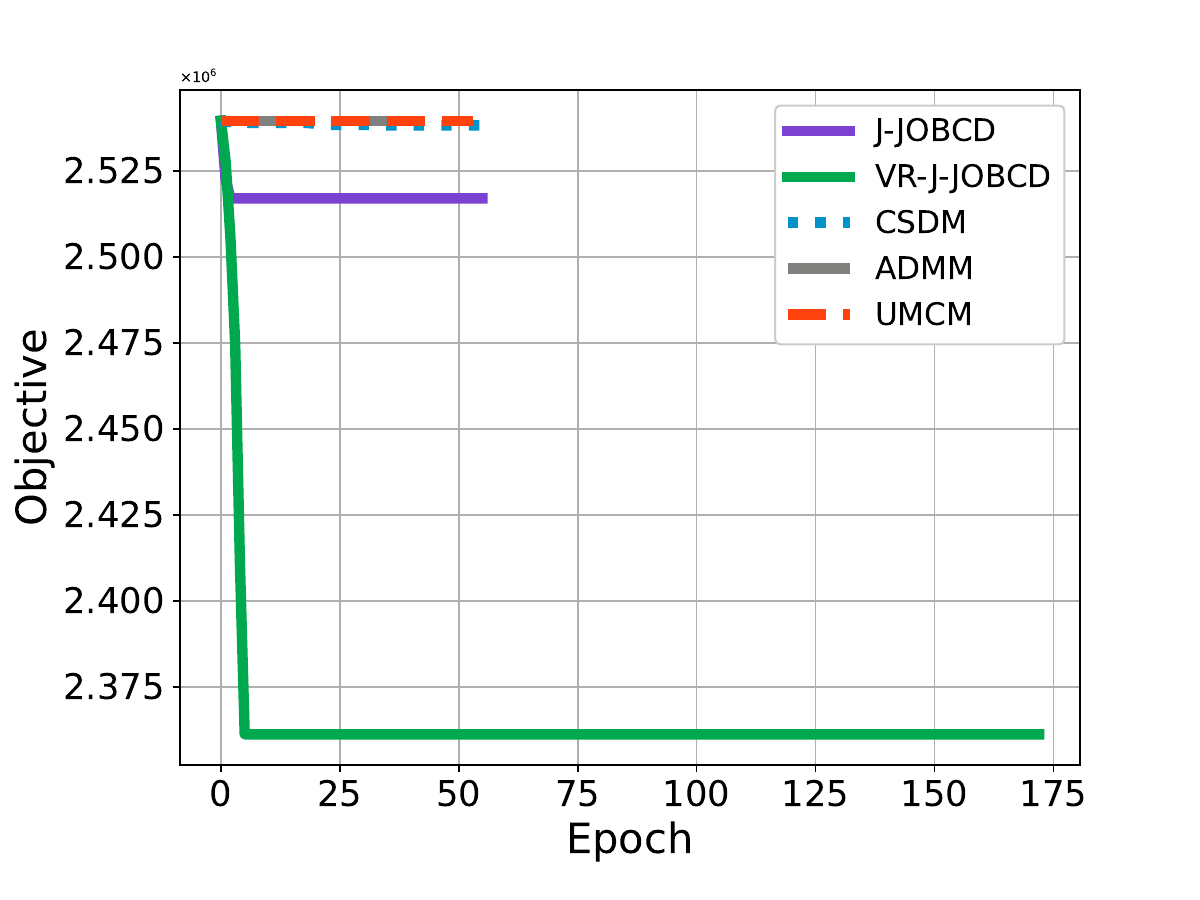} \subcaption{randn(10000-50-45)}
\end{minipage}\\
\begin{minipage}{0.35\linewidth}
\includegraphics[width=0.9\textwidth]{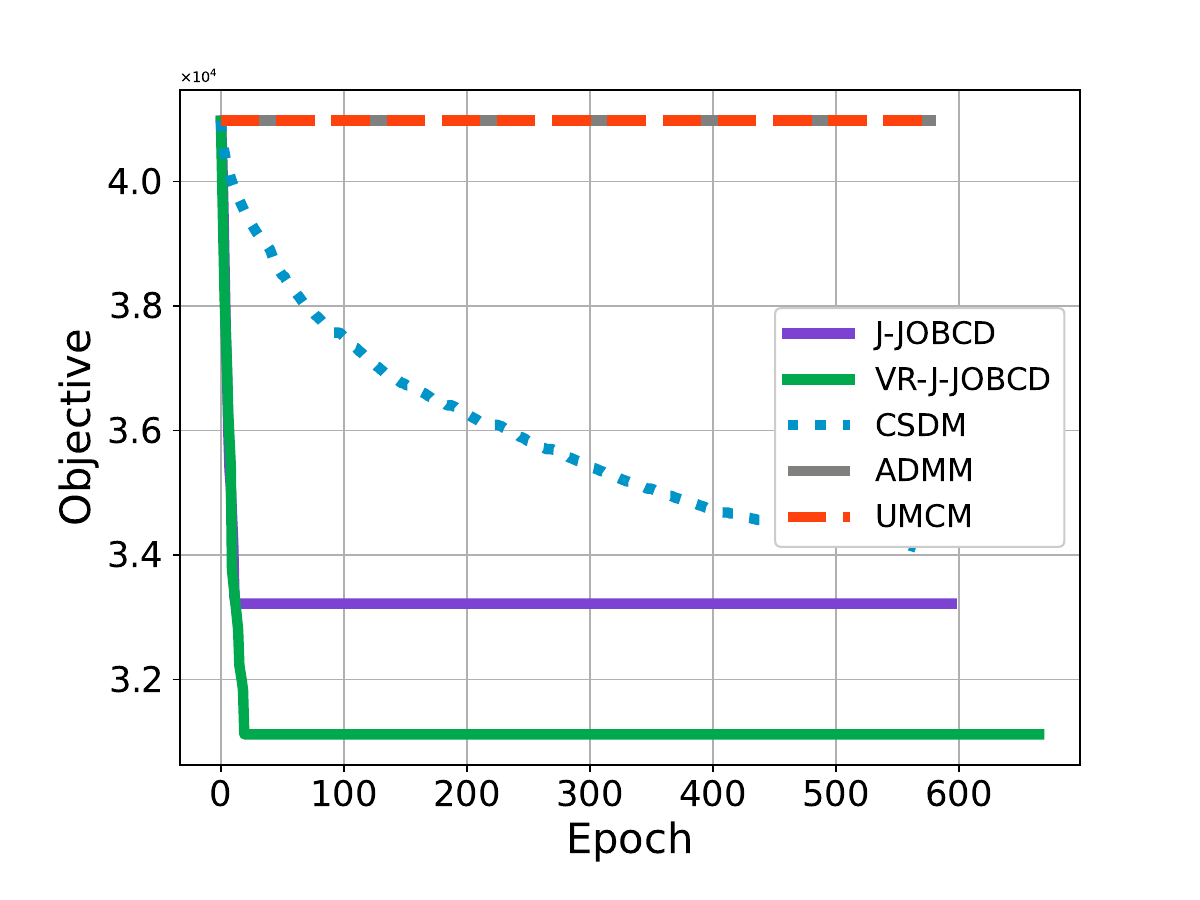} \subcaption{W1a(2477-100-90)}
\end{minipage}
\caption{The convergence curve of the compared methods for solving HSPP by epochs with varying $(m,n,p)$.}
\label{experiment:metric:fig:e}
\end{figure}

\begin{figure}[H]
\centering
\begin{minipage}{0.35\linewidth}
\includegraphics[width=0.9\textwidth]{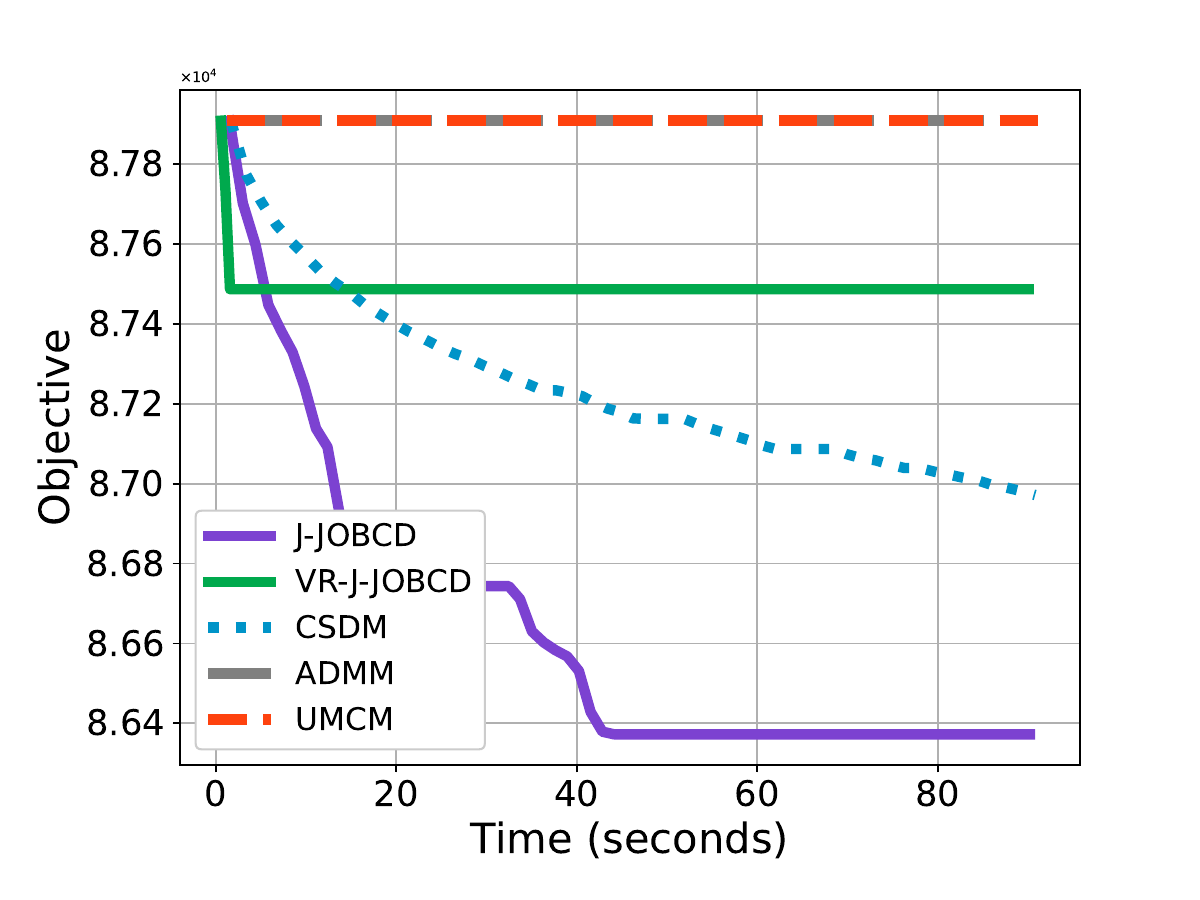} \subcaption{20News(9423-50-45)}
\end{minipage}\begin{minipage}{0.35\linewidth}
\includegraphics[width=0.9\textwidth]{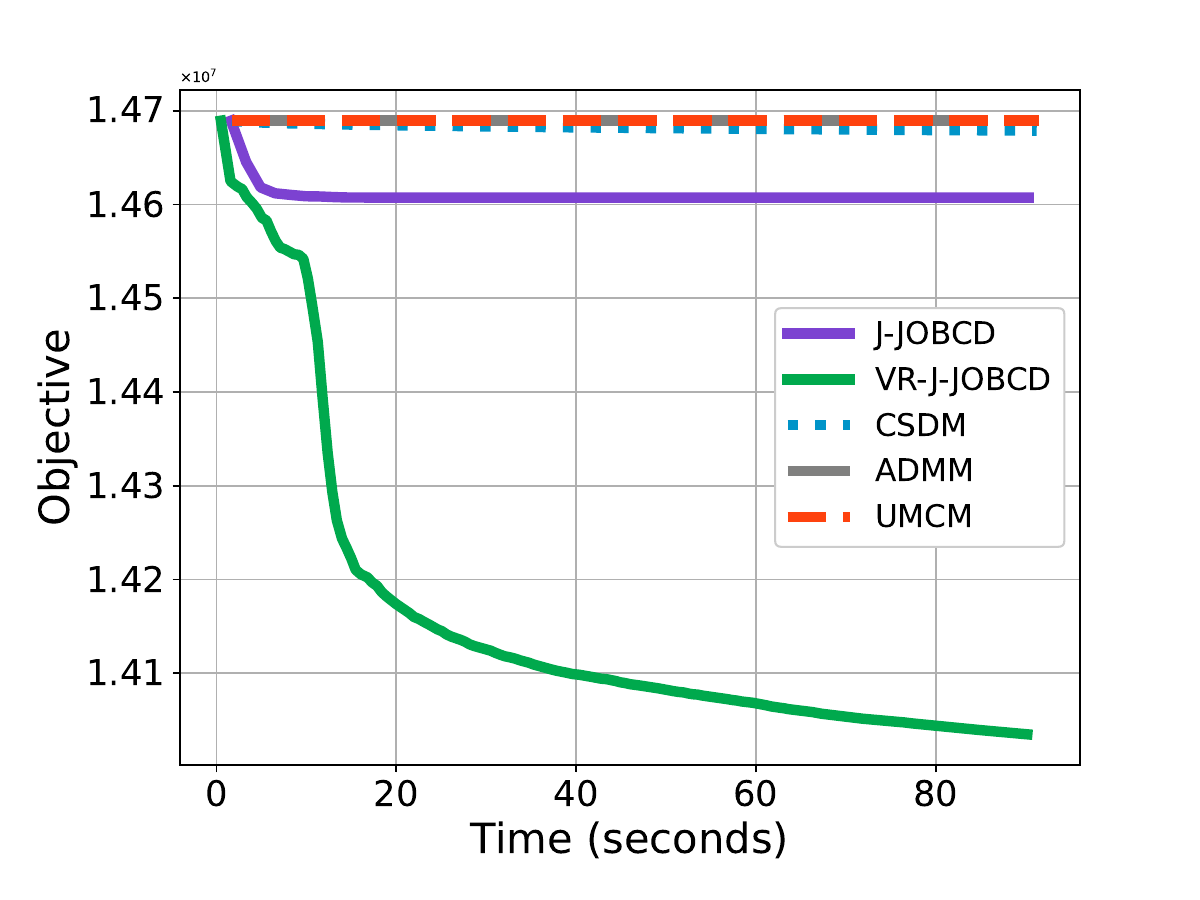} \subcaption{Cifar(10000-50-45)}
\end{minipage}\begin{minipage}{0.35\linewidth}
\includegraphics[width=0.9\textwidth]{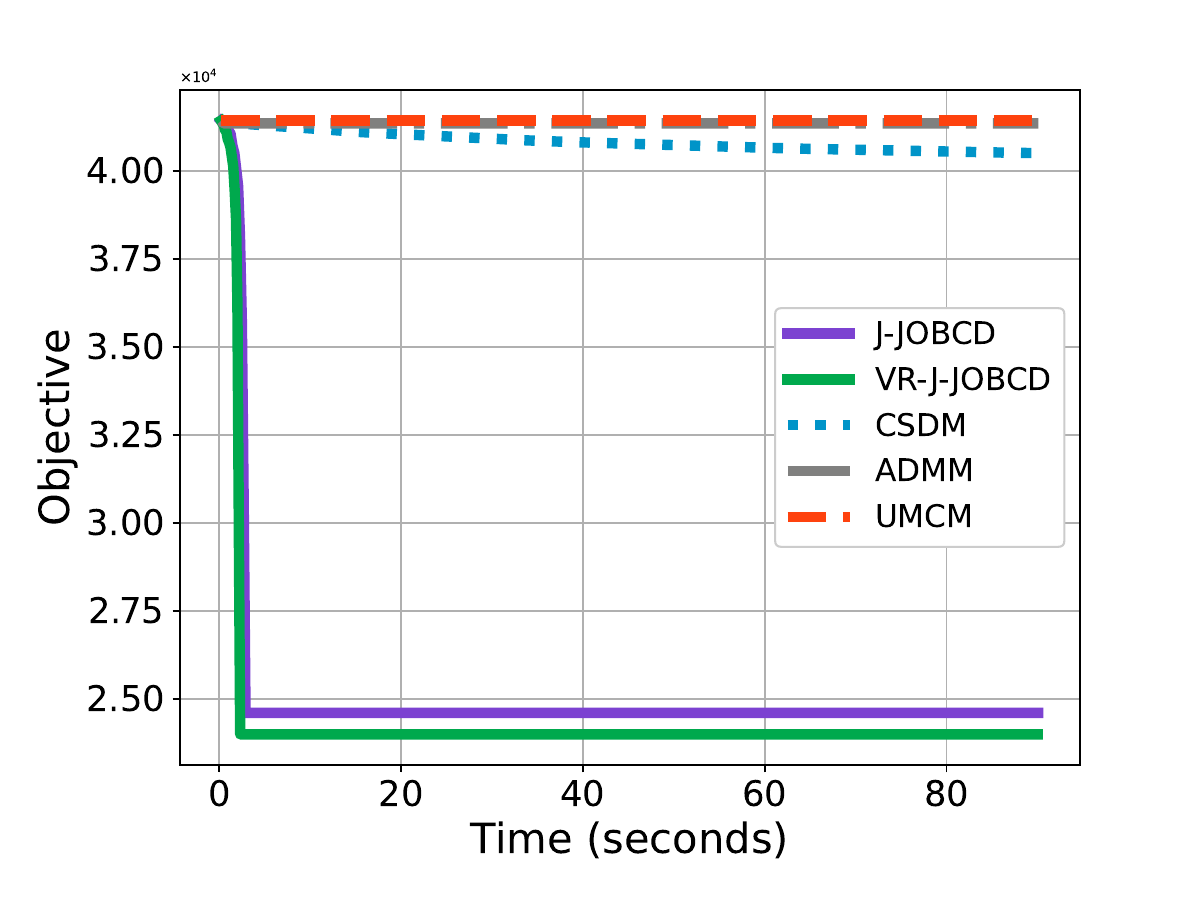} \subcaption{CnnCaltech(3000-96-85)}
\end{minipage}\\\begin{minipage}{0.35\linewidth}
\includegraphics[width=0.9\textwidth]{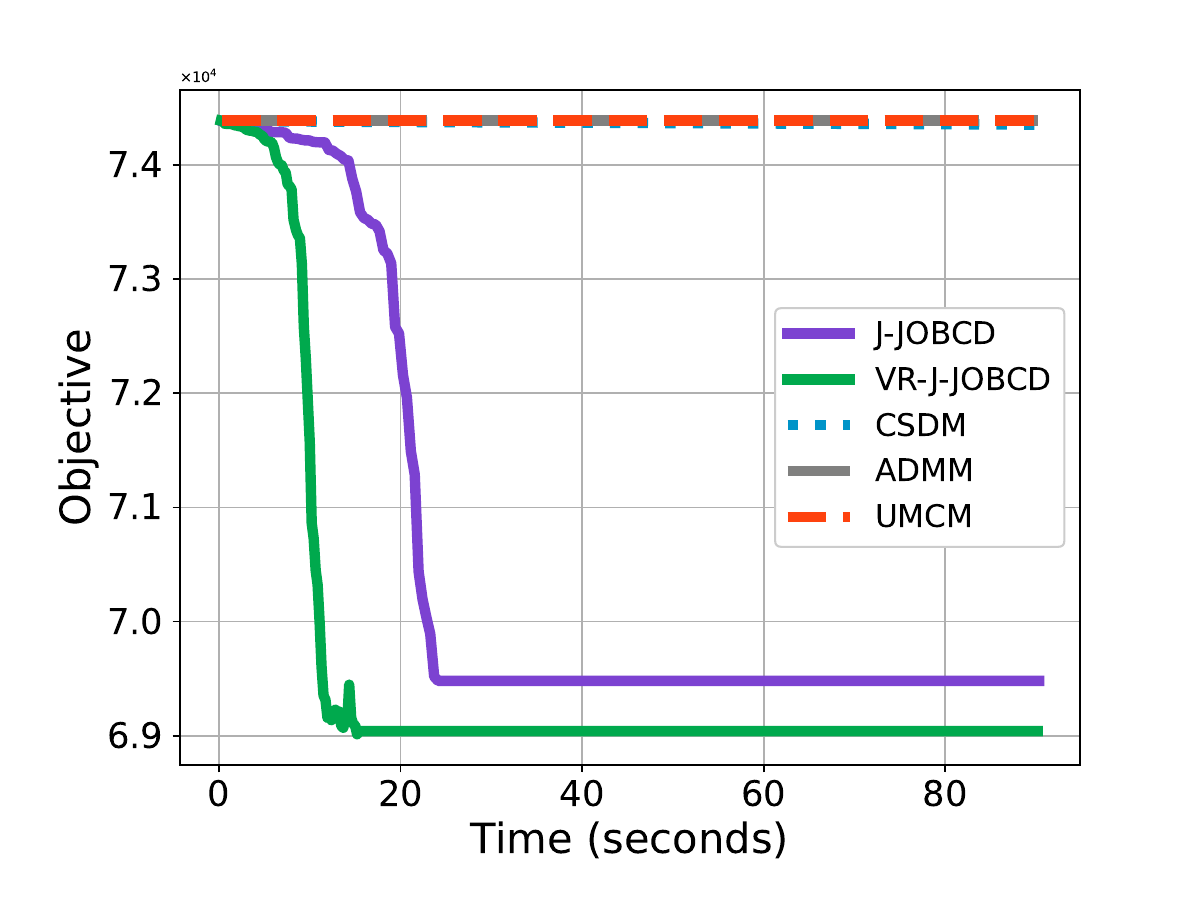} \subcaption{E2006(5000-100-90)}
\end{minipage}\begin{minipage}{0.35\linewidth}
\includegraphics[width=0.9\textwidth]{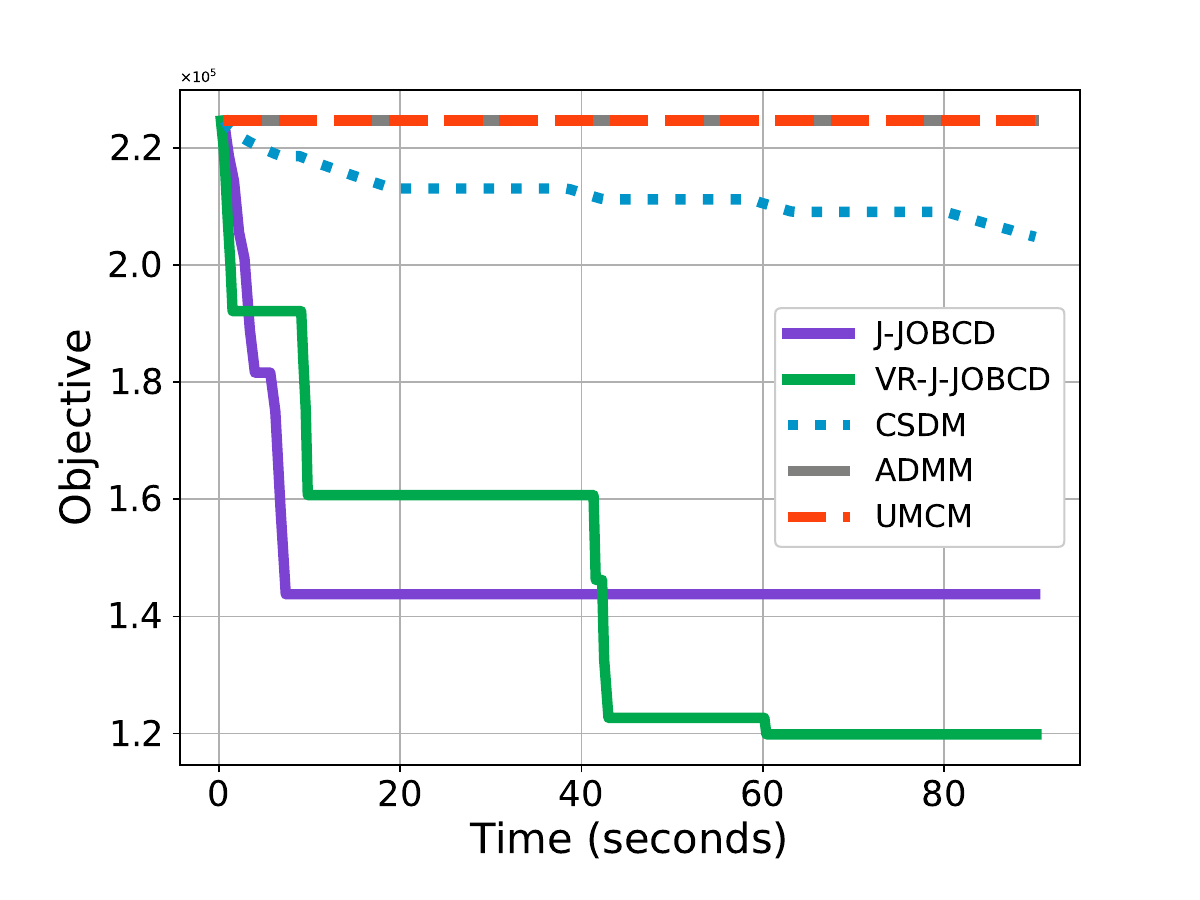} \subcaption{Gisette(6000-50-45)}
\end{minipage}\begin{minipage}{0.35\linewidth}
\includegraphics[width=0.9\textwidth]{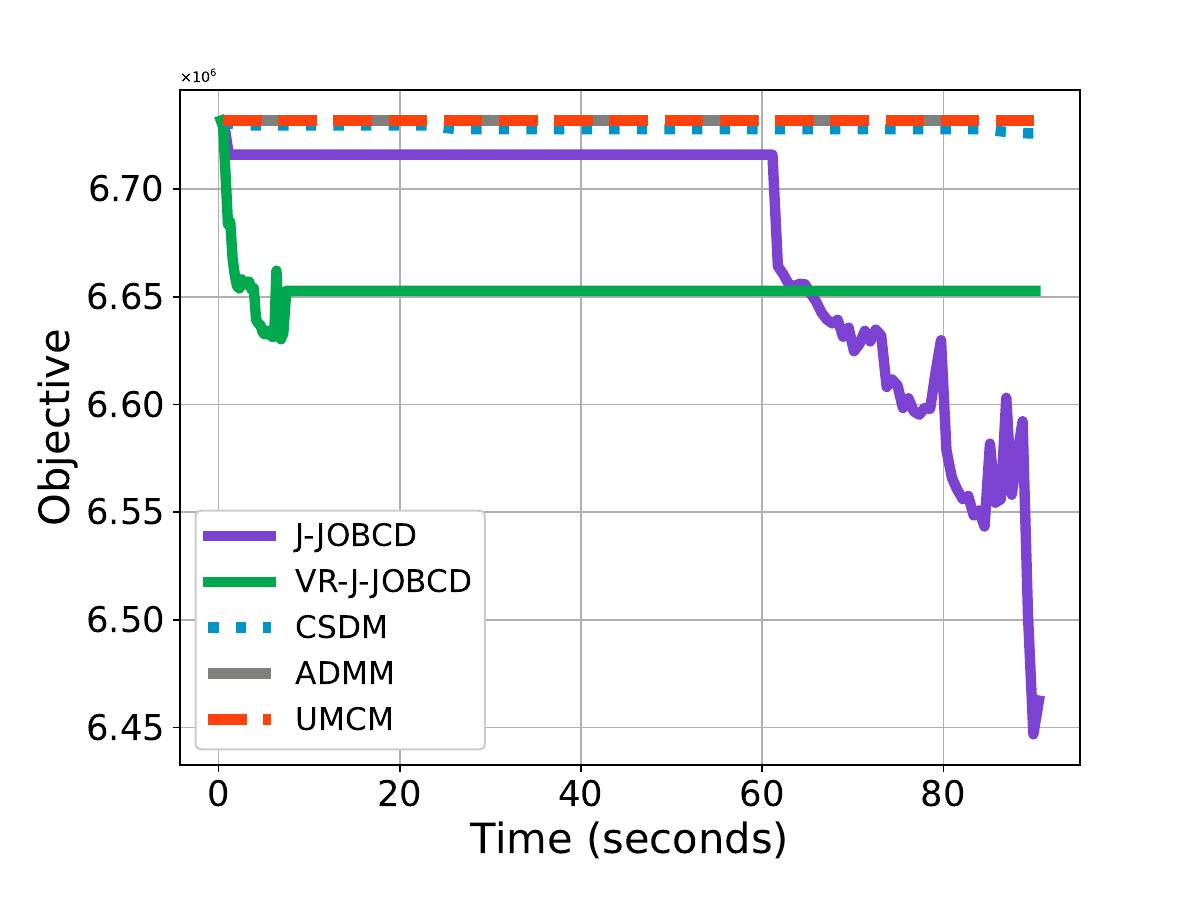} \subcaption{Mnist(6000-92-85)}
\end{minipage}\\\begin{minipage}{0.35\linewidth}
\includegraphics[width=0.9\textwidth]{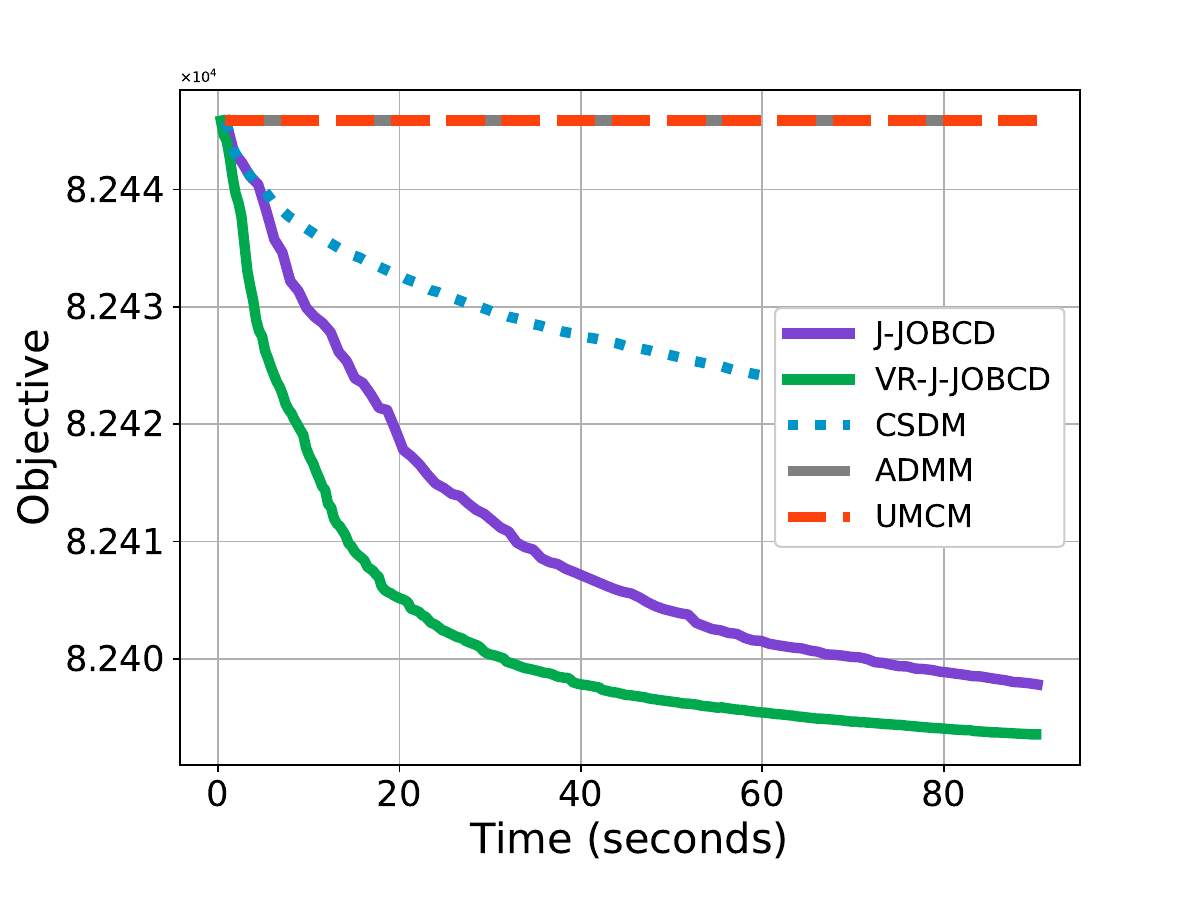} \subcaption{News20(7967-50-45)}
\end{minipage}\begin{minipage}{0.35\linewidth}
\includegraphics[width=0.9\textwidth]{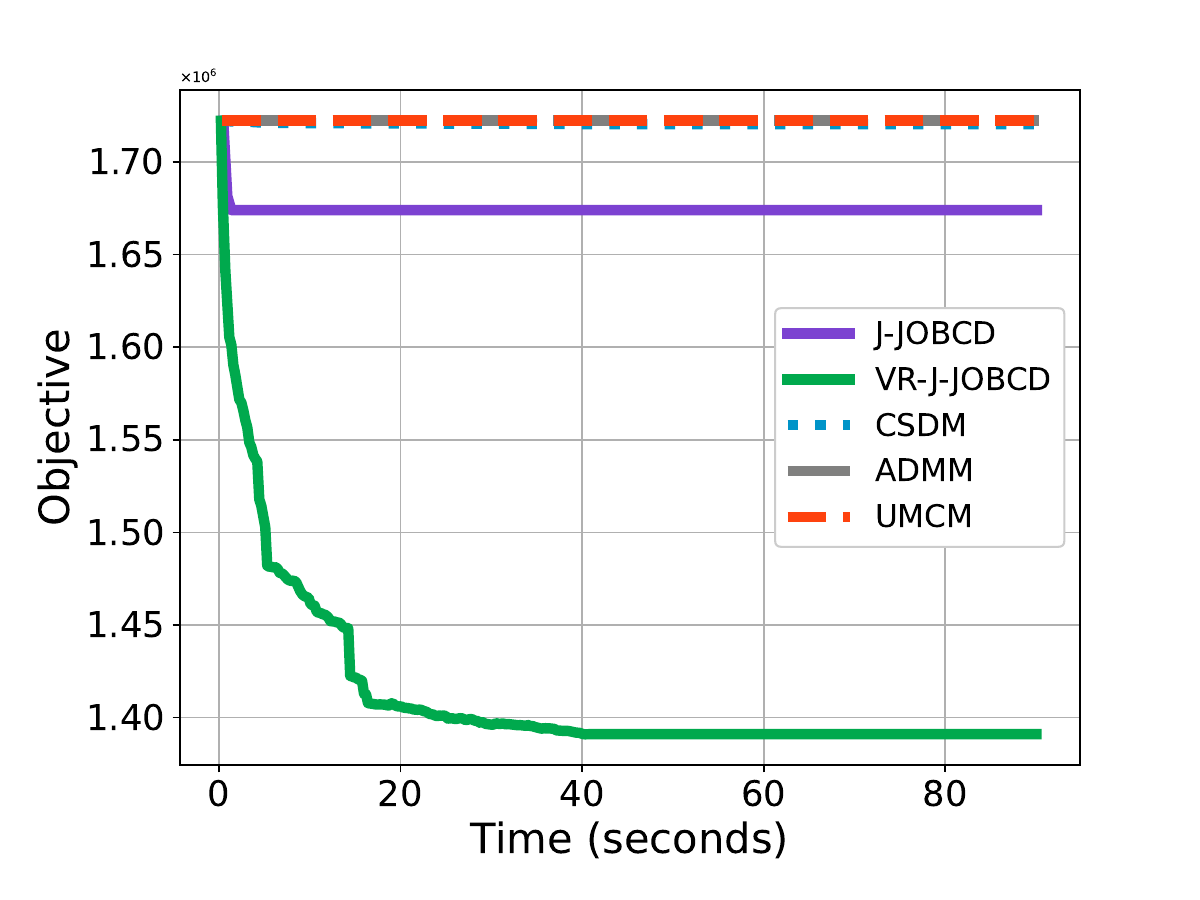} \subcaption{randn(5000-100-85)}
\end{minipage}\begin{minipage}{0.35\linewidth}
\includegraphics[width=0.9\textwidth]{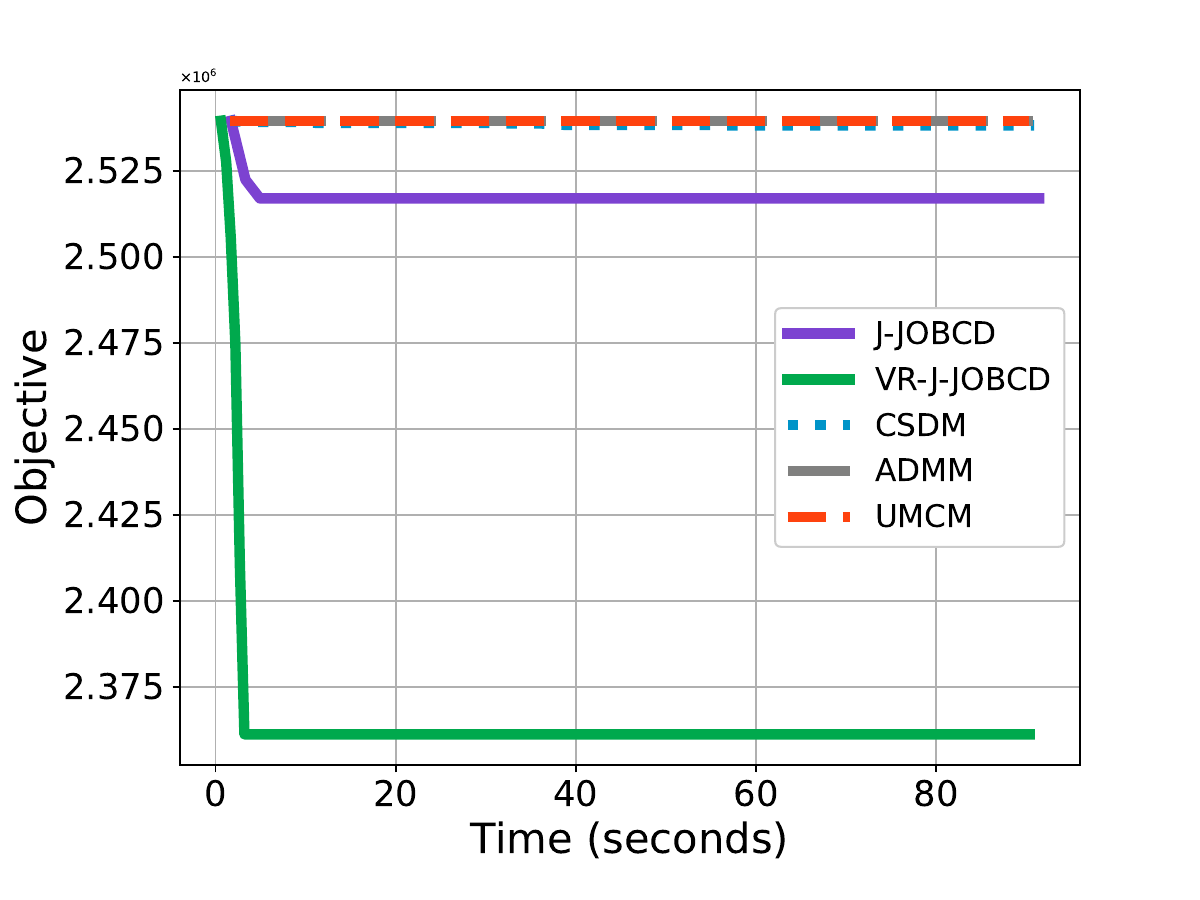} \subcaption{randn(10000-50-45)}
\end{minipage}\\
\begin{minipage}{0.35\linewidth}
\includegraphics[width=0.9\textwidth]{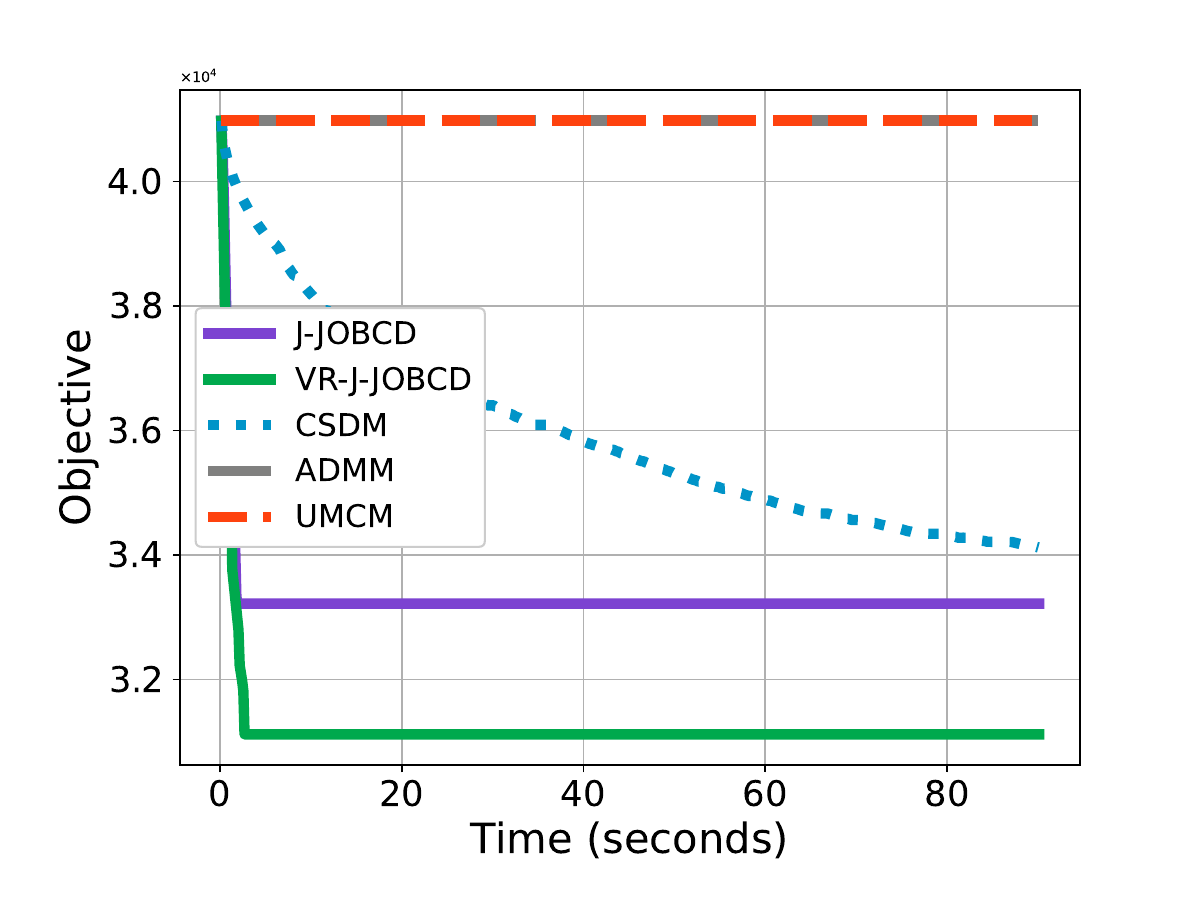} \subcaption{W1a(2477-100-90)}
\end{minipage}
\caption{The convergence curve of the compared methods for solving HSPP by time with varying $(m,n,p)$.}
\label{experiment:metric:fig:t}
\end{figure}

\newpage
\subsubsection{Ultra-hyperbolic Knowledge Graph Embedding}

\begin{figure}[H]
\centering
\begin{minipage}{0.32\linewidth}
\includegraphics[width=1\textwidth]{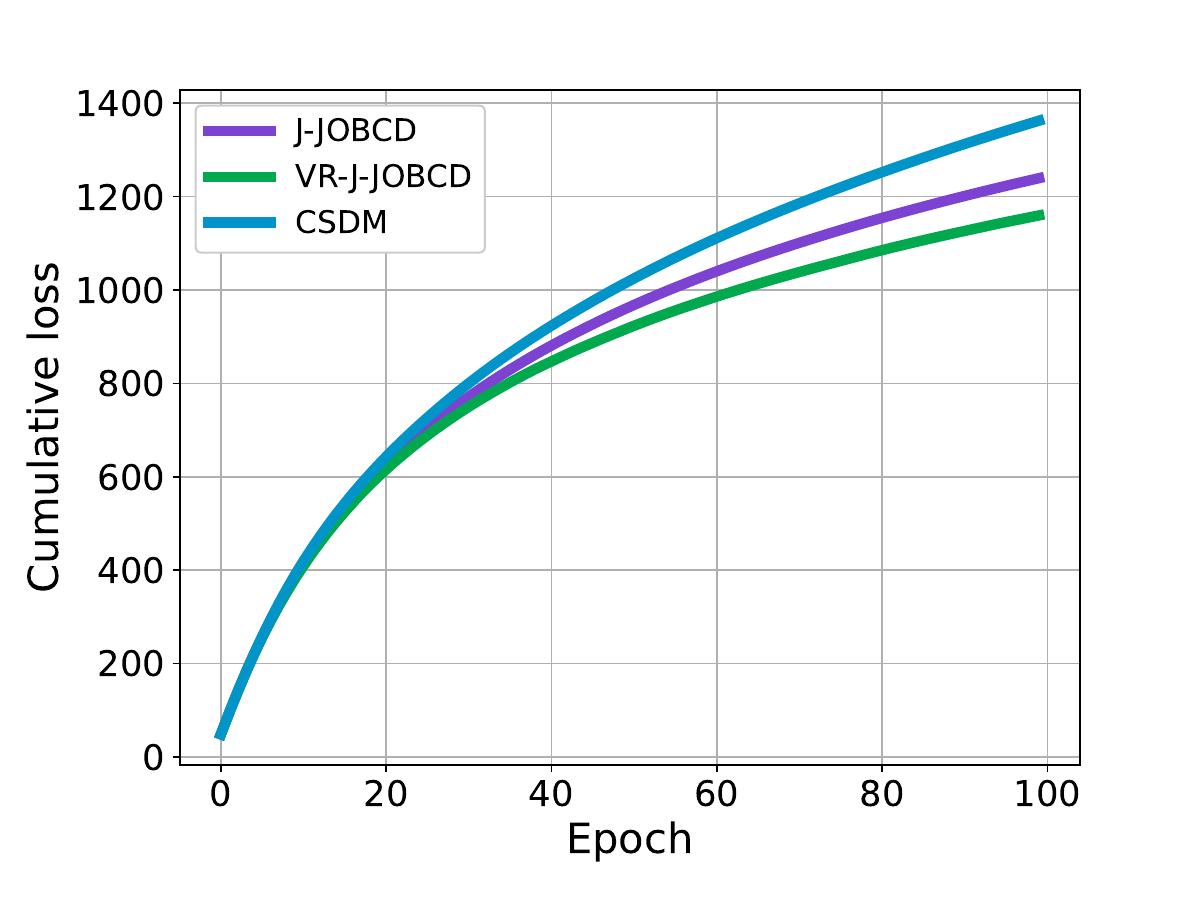} \subcaption{Train accumulated loss}
\end{minipage}
\begin{minipage}{0.32\linewidth}
\includegraphics[width=1\textwidth]{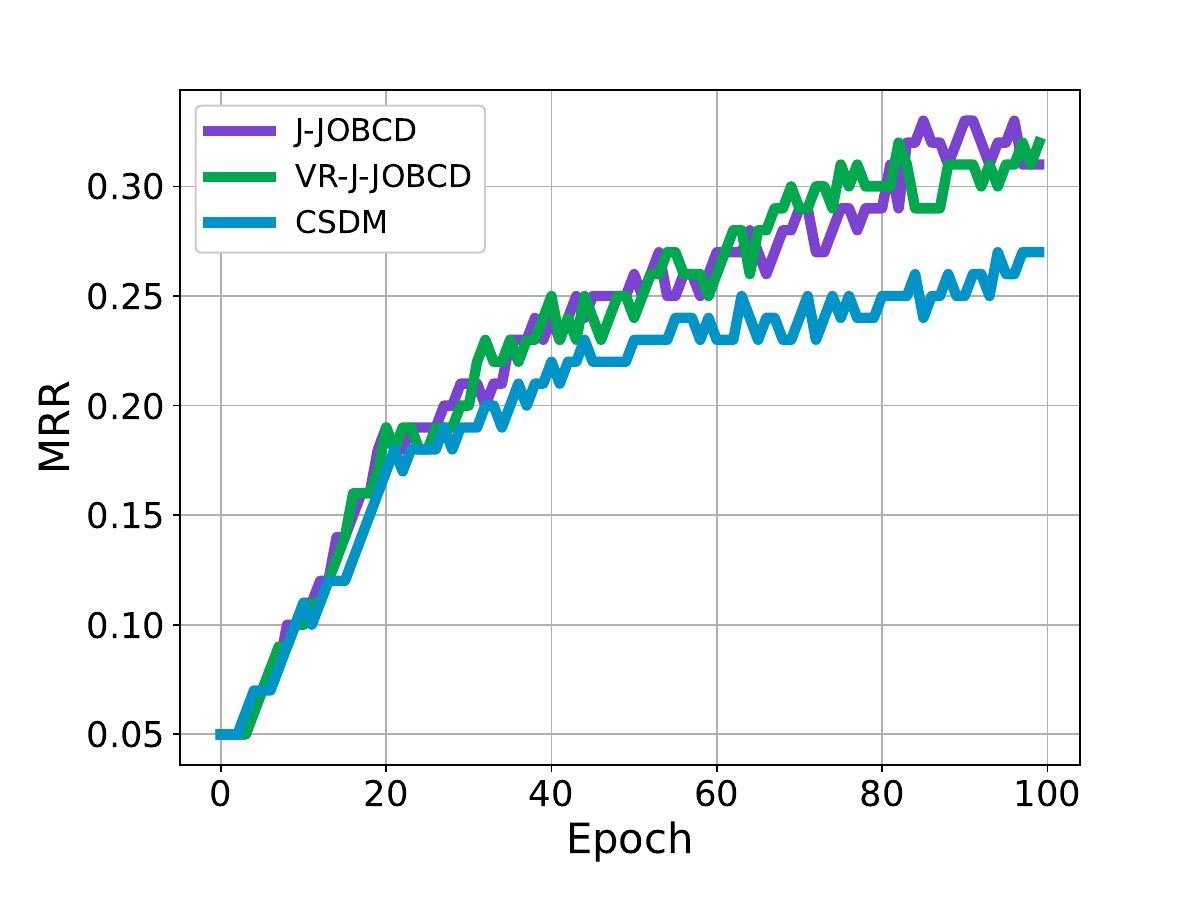} \subcaption{Test MRR}
\end{minipage}
\begin{minipage}{0.32\linewidth}
\includegraphics[width=1\textwidth]{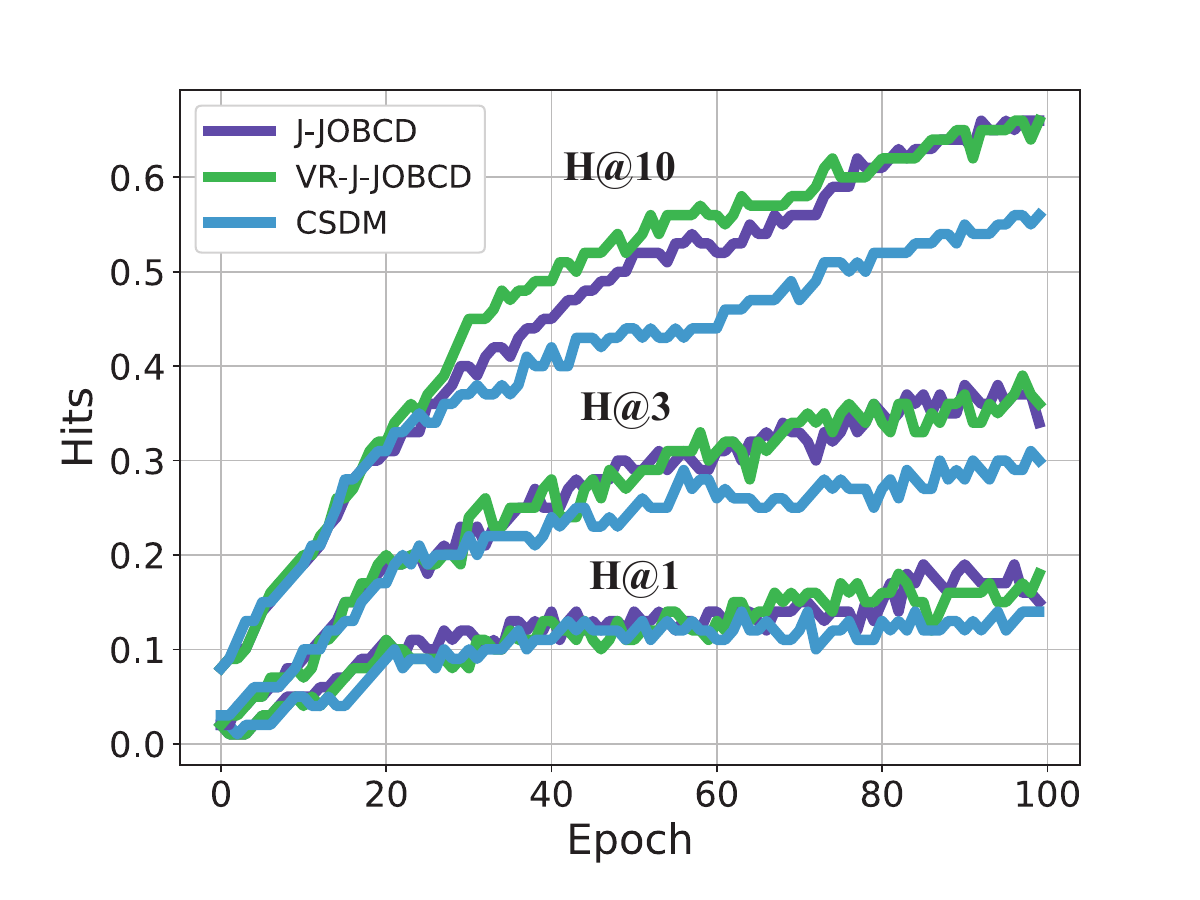} \subcaption{Test Hits}
\end{minipage}
\caption{Epoch performance of CS, J-JOBCD, and VR-J-JOBCD in training UltraE on FB15k.}
\label{fig:HyperbolicWordEmbedding learning by epoch on FB15k}
\end{figure}

\begin{figure}[H]
\centering
\begin{minipage}{0.32\linewidth}
\includegraphics[width=1\textwidth]{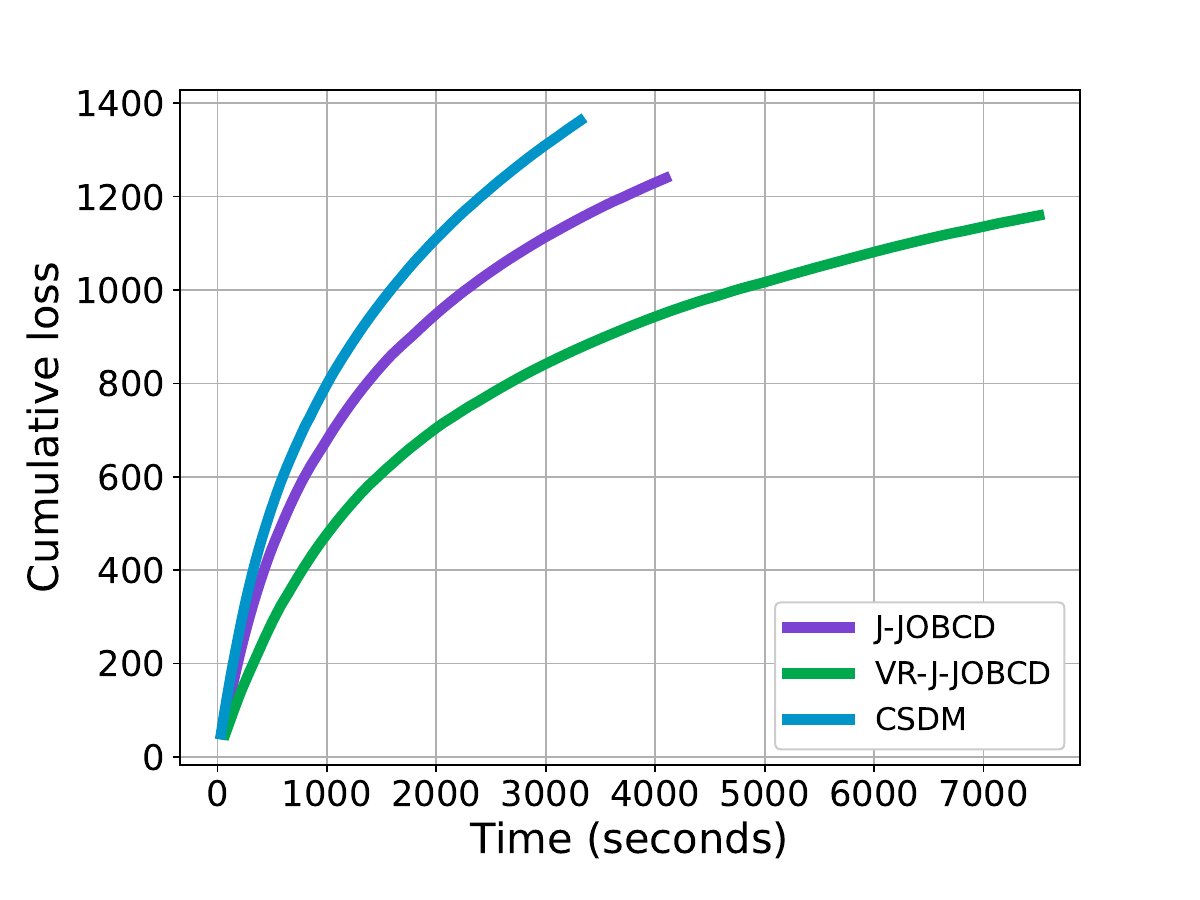} \subcaption{Train accumulated loss}
\end{minipage}
\begin{minipage}{0.32\linewidth}
\includegraphics[width=1\textwidth]{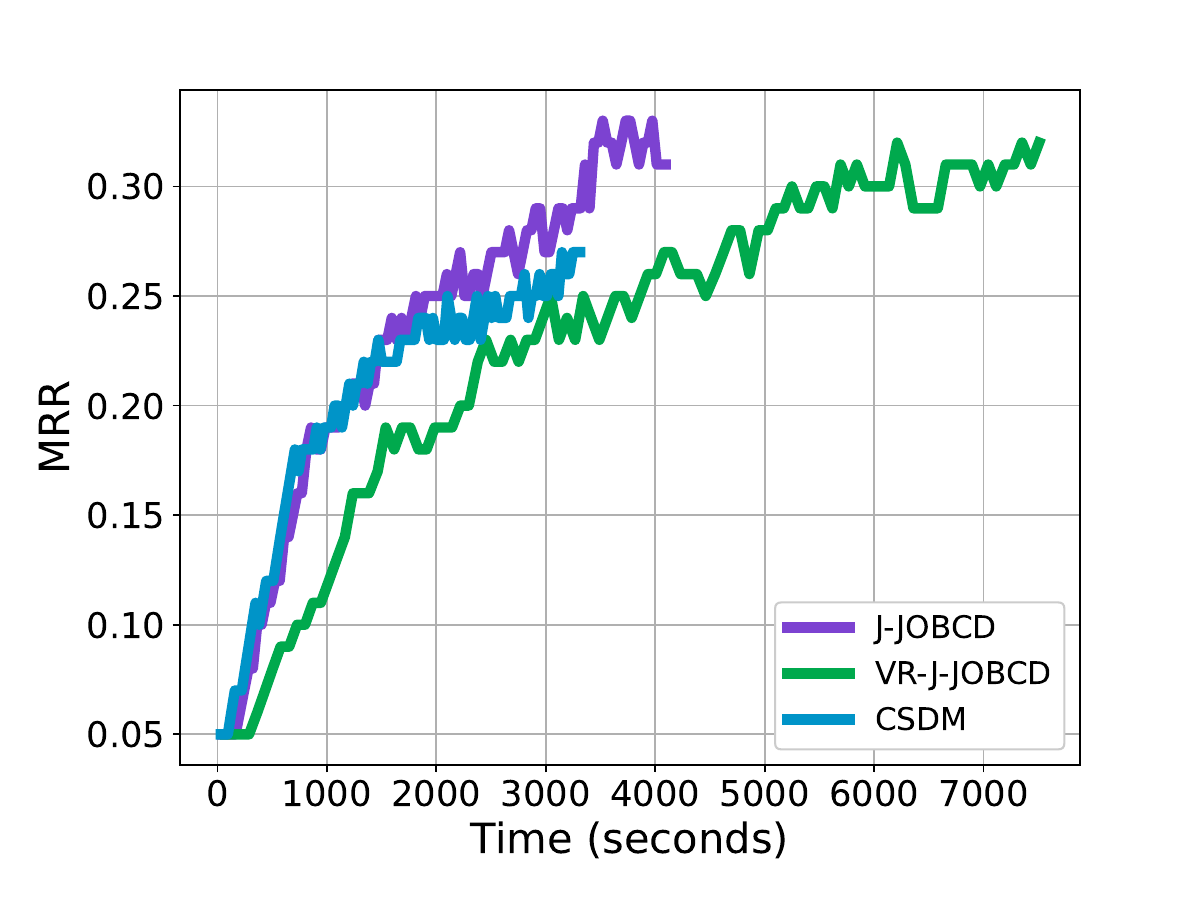} \subcaption{Test MRR}
\end{minipage}
\begin{minipage}{0.32\linewidth}
\includegraphics[width=1\textwidth]{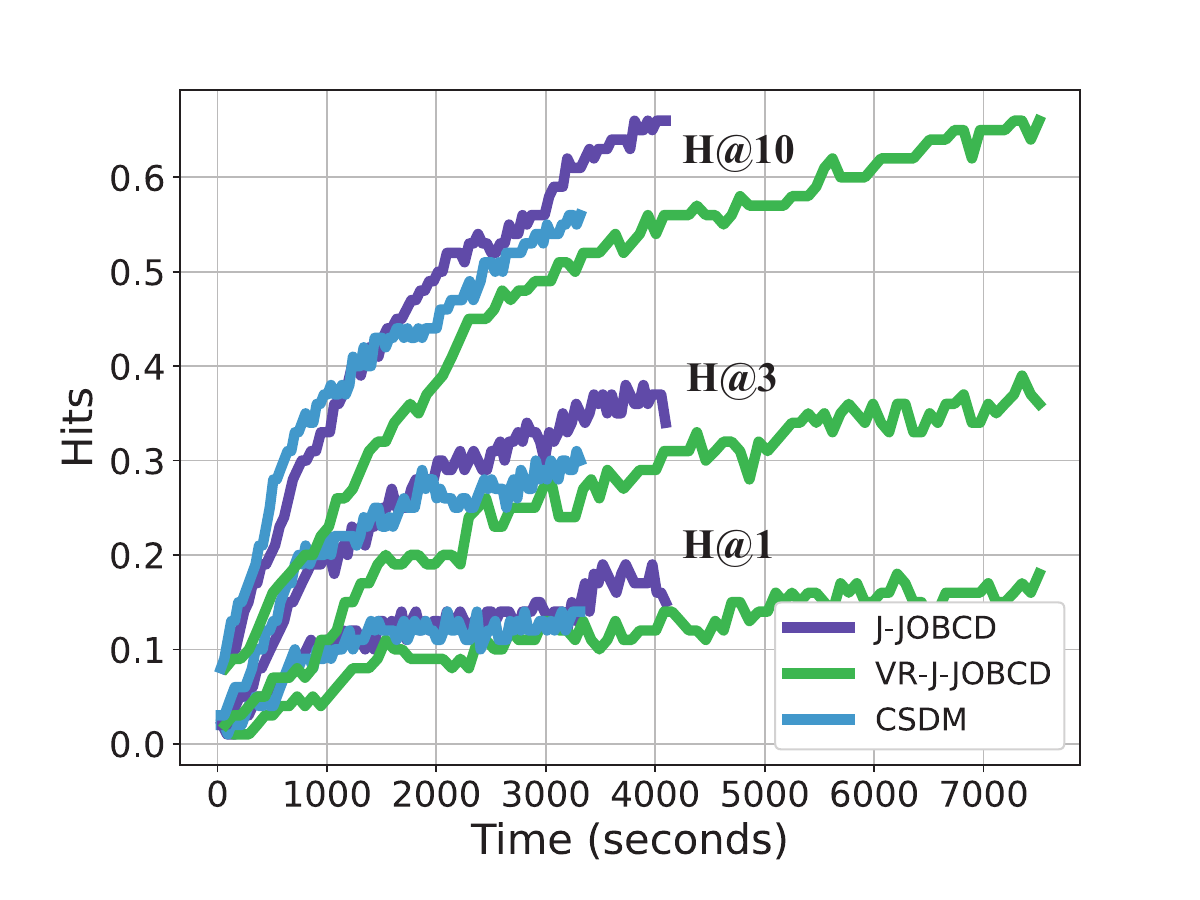} \subcaption{Test Hits}
\end{minipage}
\caption{Time performance of CS, J-JOBCD, and VR-J-JOBCD in training UltraE on FB15k.}
\label{fig:HyperbolicWordEmbedding learning by time on FB15k}
\end{figure}

\begin{figure}[H]
\centering
\begin{minipage}{0.32\linewidth}
\includegraphics[width=1\textwidth]{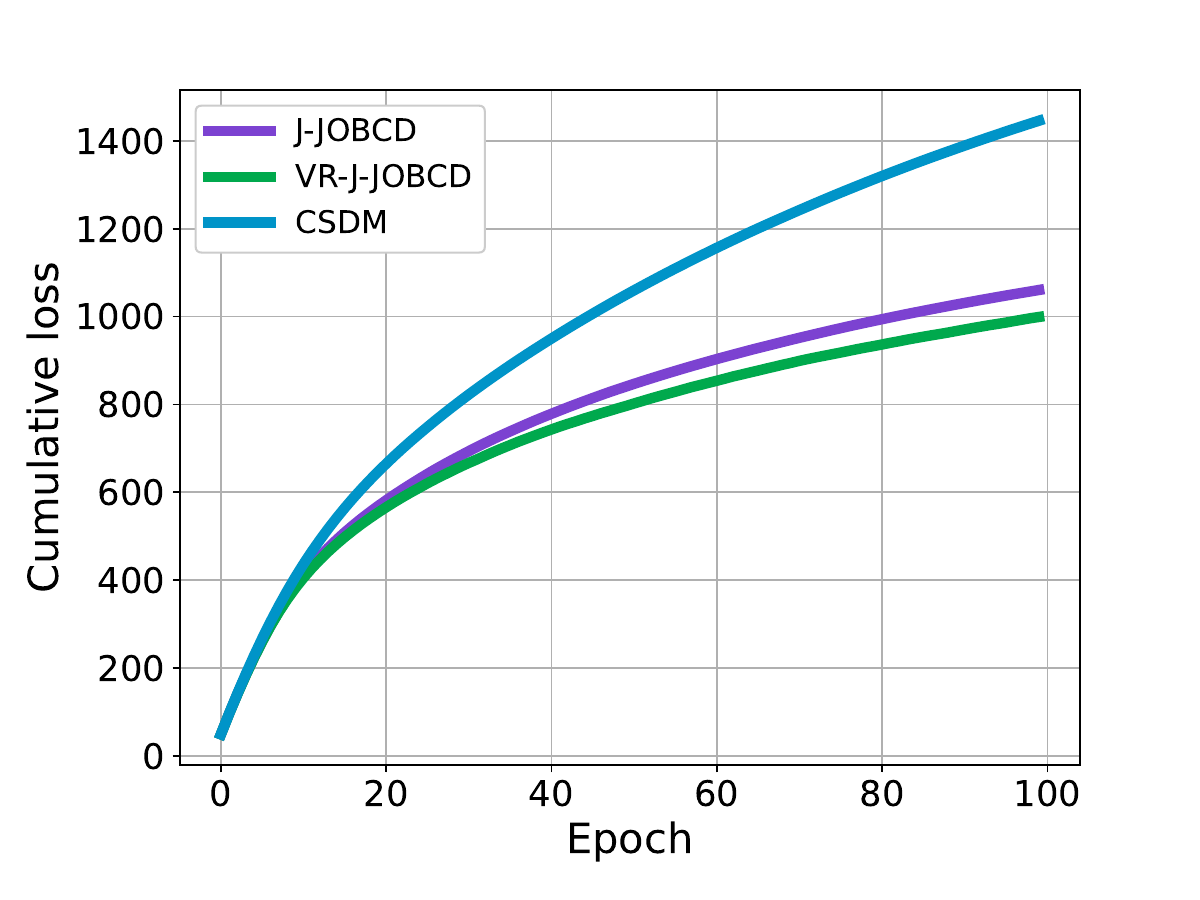} \subcaption{Train accumulated loss}
\end{minipage}\begin{minipage}{0.32\linewidth}
\includegraphics[width=1\textwidth]{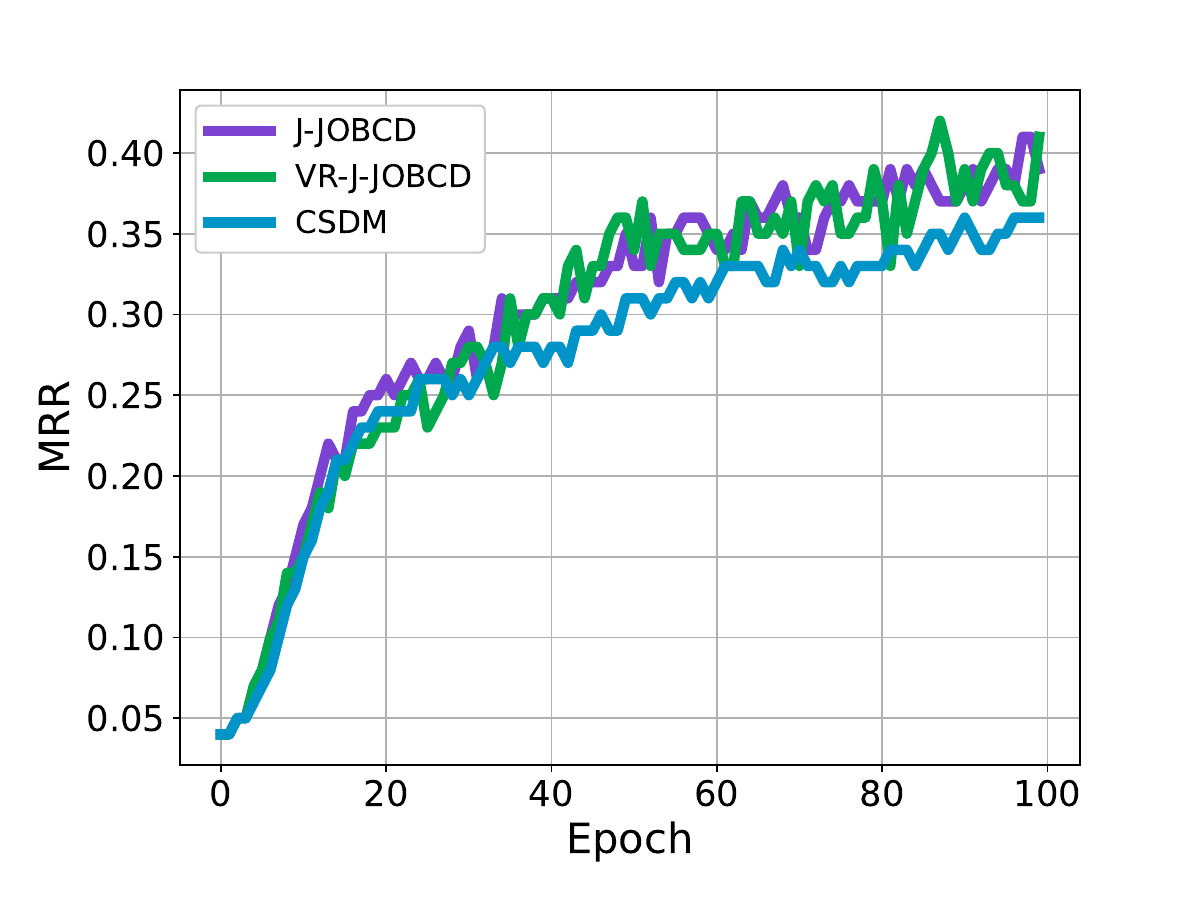} \subcaption{Test MRR}
\end{minipage}
\begin{minipage}{0.32\linewidth}
\includegraphics[width=1\textwidth]{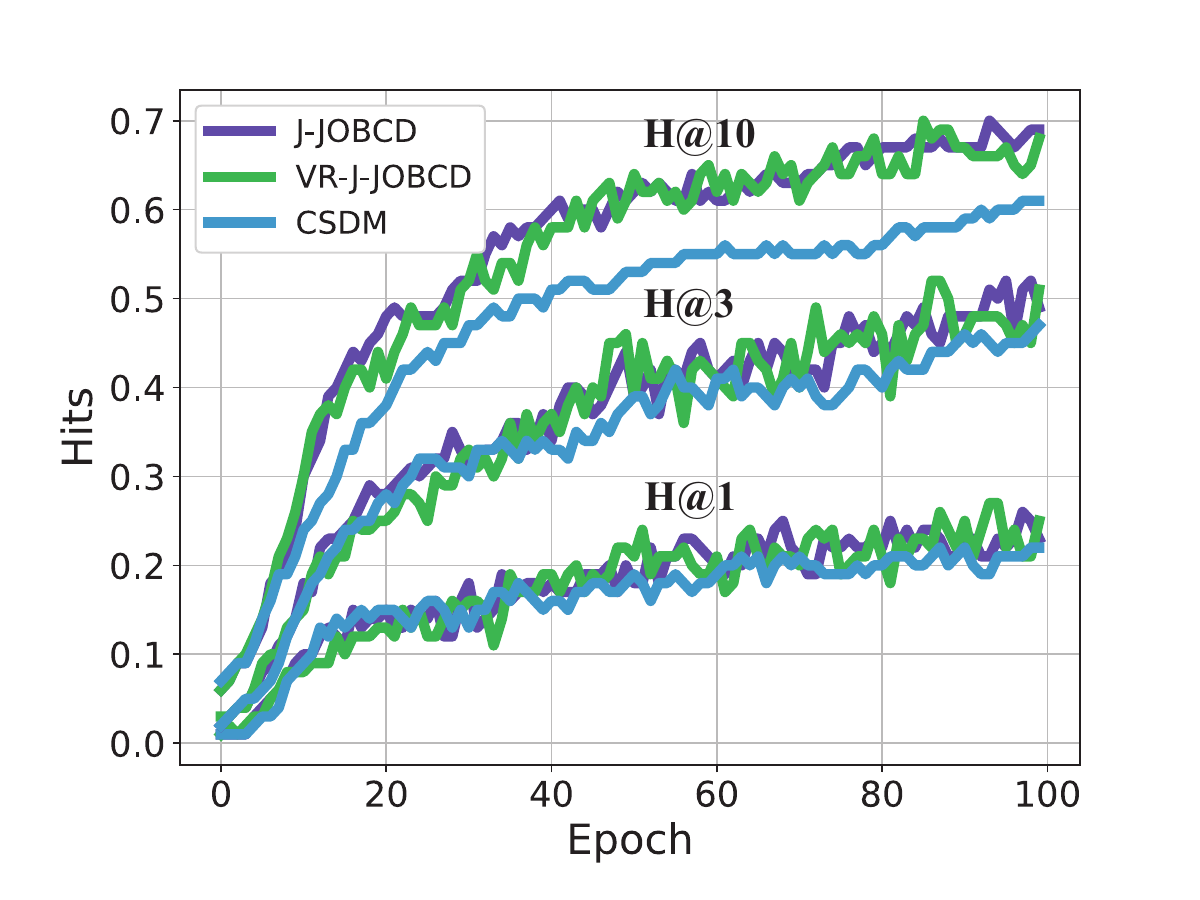} \subcaption{Test Hits}
\end{minipage}
\caption{Epoch performance of CSDM, J-JOBCD, and VR-J-JOBCD in training UltraE on WN18RR.}
\label{fig:HyperbolicWordEmbedding learning by epoch on WN18RR}
\end{figure}

\begin{figure}[H]
\centering
\begin{minipage}{0.32\linewidth}
\includegraphics[width=1\textwidth]{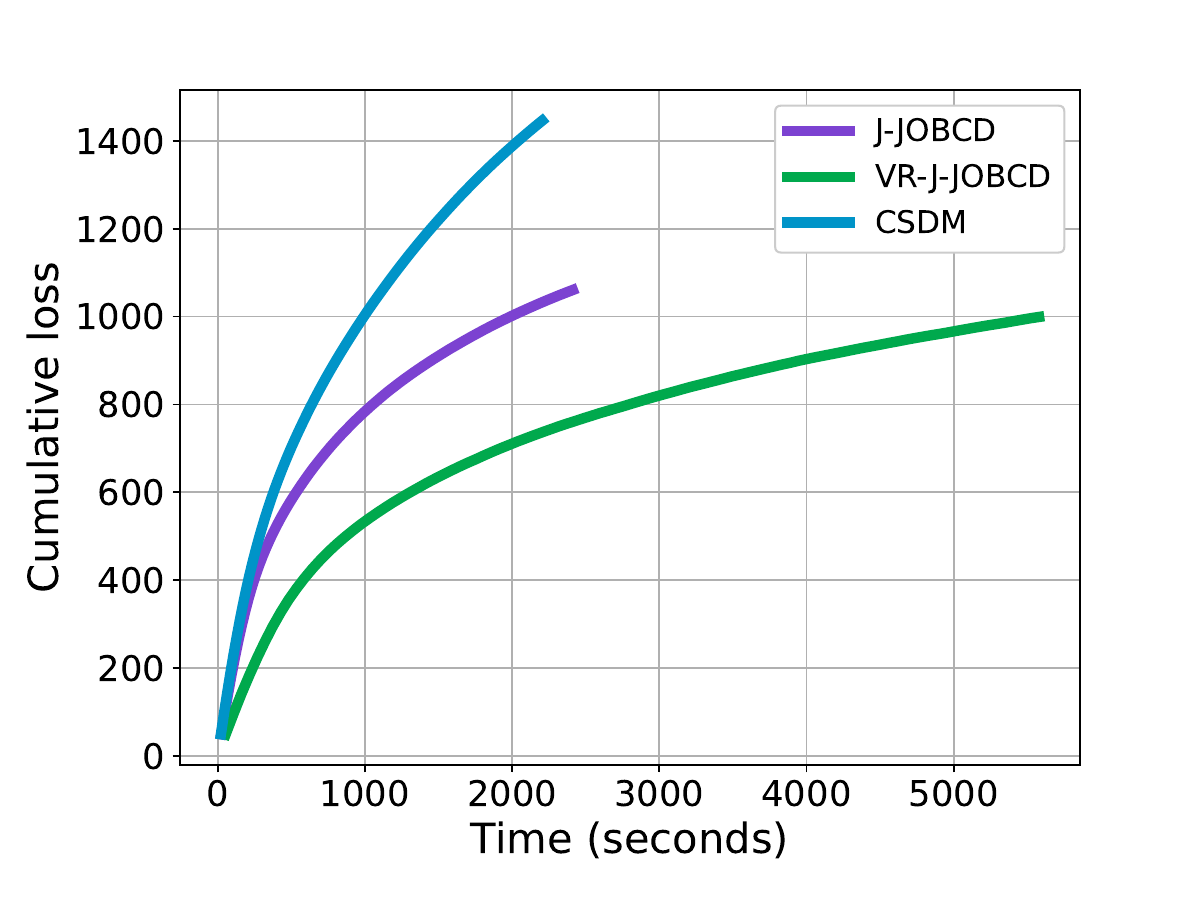} \subcaption{Train accumulated loss}
\end{minipage}
\begin{minipage}{0.32\linewidth}
\includegraphics[width=1\textwidth]{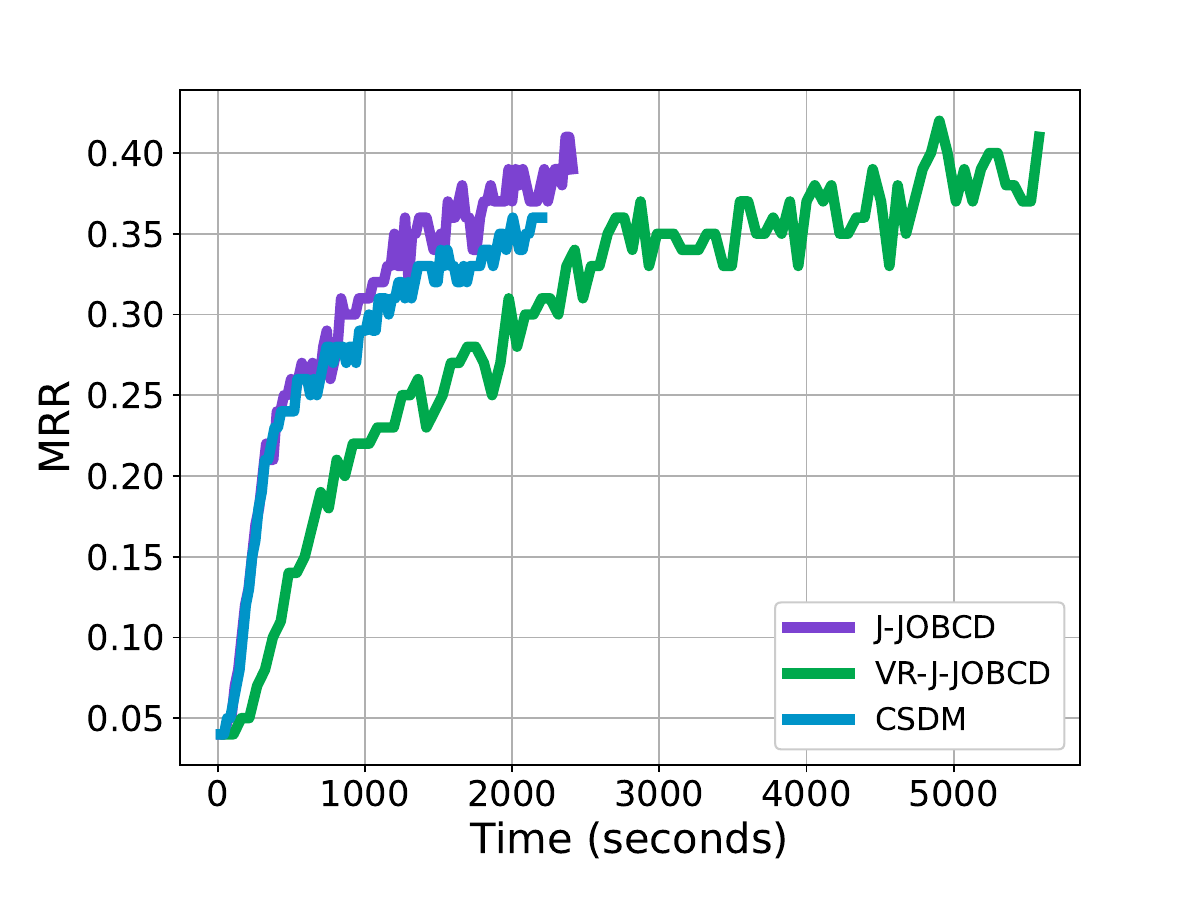} \subcaption{Test MRR}
\end{minipage}
\begin{minipage}{0.32\linewidth}
\includegraphics[width=1\textwidth]{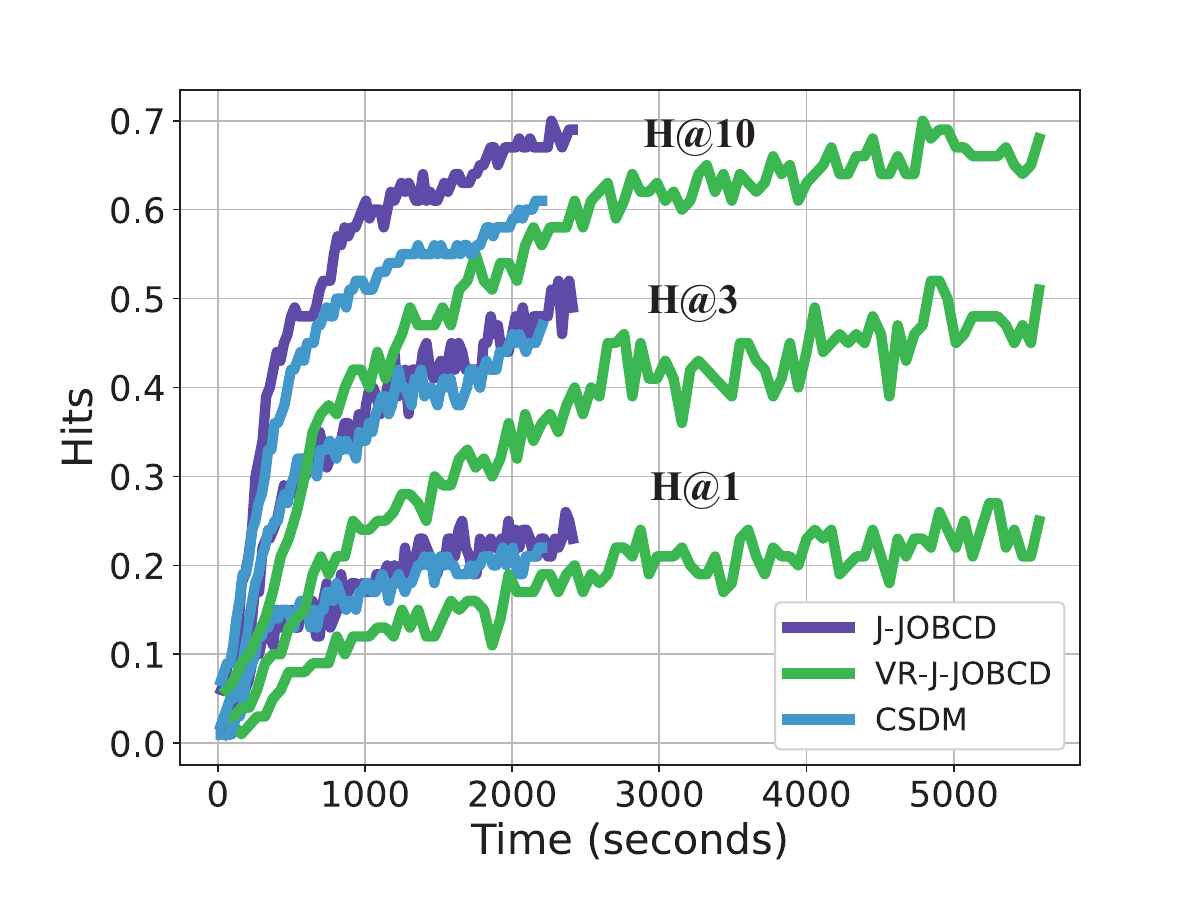} \subcaption{Test Hits}
\end{minipage}
\caption{Time performance of CSDM, J-JOBCD, and VR-J-JOBCD in training UltraE on WN18RR.}
\label{fig:HyperbolicWordEmbedding learning by time on WN18RR}
\end{figure} 


\end{document}